\newcommand{\shorteq}{%
  \settowidth{\@tempdima}{-}% Width of hyphen
  \resizebox{\@tempdima}{\height}{=}%
}
\newcommand{\norm}[1]{\| #1 \|}
\newcommand{\bind}{\mathrel{\scalebox{0.5}[1]{$>\!>=$}}}
\newcommand{\cmon}{\mathbb{C}}
\newcommand{\charge}[1]{\mathsf{charge}\,{#1}}
\newcommand{\Psub}{P_{\leq 1}}
\newcommand{\vdashval}{\vdash_v}
\newcommand{\vdashcomp}{\vdash_c}
\newcommand{\meta}{\mathbf{cert}}
\newcommand\mdoubleplus{\mathbin{+\mkern-5mu+}}
\newcommand*\diff{\mathop{}\!\mathrm{d}}
\newcommand{\weight}{[0,\infty]}
\renewcommand{\twoheadrightarrow}{\rightarrow\mathrel{\mkern-14mu}\rightarrow}
\newcommand{\xtwoheadrightarrow}[2][]{%
  \xrightarrow[#1]{#2}\mathrel{\mkern-14mu}\rightarrow
}
\renewcommand{\V}{\mathcal{V}}
\newcommand{\C}{\mathcal{C}}
\newcommand{\CRel}{\mathrel{\mathcal{C}}}
\newcommand{\VRel}{\mathrel{\mathcal{V}}}
\newcommand{\listty}[1]{\mathsf{list}(#1)}
\newcommand{\caseList}[3]{\mathsf{case}\, #1 \, \mathsf{of}\, \mathsf{nil} \Rightarrow #2 \mid \mathsf{cons}\, x\, xs\Rightarrow #3}
\newcommand{\semcs}[1]{\sem{#1}_{CS}}
\newcommand{\semec}[1]{\sem{#1}_{EC}}
\newcommand{\sempre}[1]{\sem{#1}_{pre}}
\newcommand{\val}{\mathcal{V}al}
\newcommand{\DownarrowFun}{{\Downarrow}}
\newcommand{\wqbs}{\cat{\omega Qbs}}
\DeclarePairedDelimiter\floor{\lfloor}{\rfloor}
\DeclarePairedDelimiter\len{\mid}{\mid}
\title{Denotational Foundations for Expected Cost Analysis}
\author{Pedro H. Azevedo de Amorim}
\email{pedro.azevedo.de.amorim@cs.ox.ac.uk}
\affiliation{%
  \institution{University of Oxford}
  \city{Oxford}
  \country{UK}
}
\begin{document}

\begin{abstract}
Reasoning about the cost of executing programs is one of the fundamental questions
in computer science. In the context of programming with probabilities, however, the 
notion of cost stops being deterministic, since it depends on the probabilistic samples
made throughout the execution of the program. This interaction is further complicated by
the non-trivial interaction between cost, recursion and evaluation strategy.

In this work we introduce $\meta$: a Call-By-Push-Value (CBPV) metalanguage for reasoning about
probabilistic cost. We equip $\meta$ with an operational cost semantics and define two
denotational semantics --- a cost semantics and an expected-cost semantics. We prove
operational soundness and adequacy for the denotational cost semantics and a cost adequacy theorem for
the expected-cost semantics.

We formally relate both denotational semantics by stating and proving a novel \emph{effect simulation}
property for CBPV. We also prove a canonicity property of the expected-cost semantics as the minimal
semantics for expected cost and probability by building on recent advances on monadic probabilistic semantics.

Finally, we illustrate the expressivity of $\meta$ and the expected-cost semantics by presenting case-studies 
ranging from randomized algorithms to stochastic processes and show how our semantics capture their 
intended expected cost.
\end{abstract}

\titlenote{For the purpose of Open Access the author has applied a CC BY public copyright
licence to any Author Accepted Manuscript version arising from this submission.}

\settopmatter{printacmref=false}

\maketitle

% this conflicts with line numbering!!!
%\pagestyle{plain}

\section{Introduction}

This paper addresses the question: what is the semantic essence of expected cost analysis?
Since randomized algorithms \cite{motwani1995} and reasoning about resource consumption in programs are
central pillars of computer science, it is important to lay these analyses on solid theoretical grounds.

Our denotational understanding of deterministic cost analysis has been consolidated by the line of work 
initiated by Danner et al.~\cite{cutler2020, danner2013, danner2015, kavvos2019}. These ideas were later
elegantly refined in the context of modal dependently-typed theories \cite{niu2022}. However, in the case of 
expected cost analysis, while much work has been done in their theory \cite{kaminski2016,avanzini2021,grodin2023}
and practice \cite{wang2020,avanzini2020,leutgeb2022automated}, there is no denotational semantics satisfying the following
desiderata:
\begin{itemize}
    \item Compositional. When reasoning about the expected cost of programs, it should suffice to
    reason about the cost of its components.
    \item Close to probability theory, allowing for useful lemmas to be used when reasoning about programs.
    \item Validates interesting and non-trivial program equations. For instance, in the context of probabilistic
    programming, being able to reorder independent program fragments is natural and useful for justifying program optimizations.
\end{itemize}

In this work we propose the first semantics that validates all of these desiderata. Furthermore, we establish formal
connections and soundness properties between this novel semantics and other denotational approaches to expected cost analysis 
\cite{avanzini2021,grodin2023} by proving a generalization of Filinski's \emph{effect simulation} property \cite{filinski1996controlling}.

\paragraph{Our Work: Denotational Methods for Expected Cost} In this work we shed some light on the semantic foundations of
expected cost analysis in the context of recursive probabilistic functional programs. We start 
by defining $\meta$: a call-by-push-value (CBPV) metalanguage with operations for sampling from uniform distributions and for
incrementing the cost of programs. This metalanguage is expressive enough to represent the cost structure of recursive 
probabilistic algorithms and stochastic processes. It is the first cost-aware metalanguage that can accommodate \textbf{continuous 
distributions}, \textbf{higher-order functions}, \textbf{recursion}, \textbf{compositional expected cost analysis} and 
\textbf{different evaluation strategies}.
% This metalanguage can be seen as the target of translations from more standard languages such as a call-by-value probabilistic
% $\lambda$-calculus but, by explicitly using the $\charge n$ operation, it can also encode other notions of cost, such as the number of
% of calls to particular functions, e.g. comparison functions in the case of sorting algorithms, or the number of samples generated,
% which can be useful when reasoning about stochastic processes.

Besides equipping our metalanguage with an equational theory and an operational cost semantics, we provide two denotational semantics for
reasoning about the cost of probabilistic programs. The first one, which we call the \emph{cost semantics}, uses the familiar
writer monad transformer to combine a cost monad with a subprobability monad. The second semantics, which we call the 
\emph{expected cost semantics}, encapsulates the compositional structure of the expected cost as a monad, allowing us to give a 
denotational semantics that directly tracks the expected cost of programs. This semantics is the first one that validates all of 
the proposed desiderata:
\begin{itemize}
    \item By making the expected cost an explicit component of the expected cost monad defined in \Cref{sec:expectsem}, 
    the semantics enables compositional expected cost reasoning.
    \item It is based on familiar tools from higher-order probability theory \cite{wqbs}. This will be useful
    in \Cref{sec:examples}, where theorems from probability theory are used for reasoning about a stochastic
    convex hull algorithm \cite{golin1988analysis}.
    \item The equational theory validated by the semantics includes useful program equations such as commutativity
    properties that are not validated by other semantics, such as the pre-expectation semantics of Avanzini et al.~\cite{avanzini2020},
    while not validating undesirable equations that are not physically justified (cf. \Cref{sec:eqsound}).
\end{itemize}
In order to justify the mutual validity of these distinct semantics, we show that the expected cost semantics is a sound
approximation to the cost semantics and to the equational theory. In the absence of recursion, we show that this approximation
is an equality while it is an upper bound in the presence of unbounded recursion. In order to achieve this soundness result, we
state and prove a generalization of the effect simulation problem \cite{katsumata2013} beyond base types. This generalization 
interacts well with the CBPV type structure and provides a novel semantic technique for doing relational reasoning of effectful programs.
% Due to the prevalence of probabilistic algorithms and data structures, it is foundationaly important to develop better reasoning principles for them. 
% Because our requirements were just out-of-reach of existing techniques, we had to generalize them in non-trivial ways.

% We define a metalanguage for expected cost. On the syntactic side we define a variant of CBPV extended with cost and probabilistic effects as well as an
% inequational theory that can be used to derive bounds between the expected costs of programs. In order to justify the validity of our formalism we
% show that it is sound with respect to a probabilistic cost-aware denotational semantics. 

Furthermore, we compare our semantics to the influential pre-expectation program transformations introduced by Kaminski et al.~\cite{kaminski2016}. We 
argue that our expected cost semantics provides better abstractions for expected-cost reasoning, since many useful properties that are proved by 
delicate syntactic arguments in the pre-expectation semantics, hold automatically and unconditionally in our semantics. Furthermore, we show that
our expected cost model satisfies useful commutativity equations that are not validated by the pre-expectation one. Finally, we prove an 
unexpected connection between these two monads by showing that the expected cost monad can be obtained as the minimal submonad of the 
pre-expectation monad that can accommodate probability and expected cost, showing that besides providing a compositional account to expected cost, 
the expected cost monad is canonical.

As applications, we showcase the capabilities of our semantics by using it to reason about the expected cost of probabilistic algorithms 
and stochastic processes. We highlight our analysis of a stochastic variant of the convex hull algorithm \cite{golin1988analysis} that
was outside of reach of other logic/PL techniques for expected cost analysis due to its combination of continuous probability, 
modular interaction of cost and behaviour, and deep probabilistic reasoning. As a guiding example, throughout the paper we will use geometric 
distributions to illustrate different aspects of $\meta$ and its semantics.

Our approach contrasts with other work done on expected cost analysis
where the language analyzed was either imperative \cite{kaminski2016, kura2019, avanzini2020} or first-order \cite{sun2023, leutgeb2022automated}, 
or the cost structure was given by non-compositional methods \cite{avanzini2019, chatterjee2016, wang2020, aguirre2022weakest}.
In spirit, the closest to what we have done is work by Kavvos et al.~\cite{kavvos2019}, which does denotational cost analysis in a deterministic recursive setting, work by Avanzini et al \cite{avanzini2021}, which uses a continuation-passing style transformation to reason about the expected cost of a call-by-value language.
% Recurrence relations are a powerful tool used, amongst other things, to analyse recursively defined processes. They have found wide applicability
% in the analysis of algorithms, where resource used by recursive programs, such as time and space, can be expressed as a recurrence relation. That
% being said, these techniques are usually applied informally, either by reasoning about pseudo-code or by reasoning about ode without thinking about
% the formal semantics of the language. In the presence of higher-order functions it is less straightforward to soundly apply these techniques without
% a proper understanding of the semantics. 

% Recent works \cite{kavvos2019} have shown how tricky it can be to properly accommodate a recursive functional calculus with recurrence relations. This
% line of work has culminated in a cost-aware Call-By-Push-Value (CBPV) calculus that can extract recurrence relations for both Call-By-Value (CBV)
% as well as Call-By-Name (CBN) reduction strategies. Though their results are impressive, their framework can only reason about deterministic
% programs, which only constitute a fraction of the use cases of recurrence relations. One notable example is the analysis of probabilistic algorithms,
% where the notion of cost must be generalized to expected cost.

\paragraph{Our contributions} The main contributions of this paper are the following 

\begin{itemize}
    \item We introduce the metalanguage $\meta$, a CBPV variant with primitives for recursion, increasing cost ($\charge c$) and for sampling from uniform distributions ($\uniform$). (\S \ref{sec:probcbpv})
    \item We define a novel expected cost semantics based on an expected cost monad that accommodates familiar and useful reasoning principles. (\S \ref{sec:costsem})
    \item We prove a generalization of the effect simulation problem for cost semantics using a novel logical relations technique for denotational relational 
    reasoning. (\S \ref{sec:costsound})
    \item We prove an adequacy proof of the probabilistic cost semantics with respect to the operational cost semantics. (\S \ref{sec:opsound})
    \item We prove a universal property of the expected cost monad as the minimal submonad of the continuation monad that can accommodate subprobability distributions and cost. (\S \ref{sec:preexpect})
\end{itemize}

We also justify the applicability of our semantics through use-cases that illustrate how the expected cost semantics can be used to reason about expected cost of stochastic processes and randomized algorithms. In particular, the stochastic convex hull example was outside of scope of existing
techniques.

\section{$\meta$ : A Probabilistic Cost-Aware Metalanguage}
\label{sec:probcbpv}
\begin{figure}
    \begin{syntax}
        \category[Value Types]{\tau}
        \alternative{U \overline{\tau}}
        \alternative{1}
        \alternative{\nat}
        \alternative{\R}
        \alternative{\tau \times \tau}
        
        \category[Computation Types]{\overline{\tau}}
        \alternative{F \tau}
        \alternative{\tau \to \overline \tau}

    \separate

        \category[Computation Terms]{t,u}
        \alternative{\lamb x t}
        \alternative{\app t V}
        \alternative{\ifthenelse V t u}
        \alternative{\force\, V}
        \alternativeLine{(x \leftarrow t); u}
        \alternative{\produce V}
        \alternative{\letbe x V t}
        \alternativeLine{\mathsf{succ}\, t}
        \alternative{\mathsf{pred}\, t}\
        \alternative{\letin {(x, y)} V t}

        \category[Value Terms]{V}
        \alternative{x}
        \alternative{()}
        \alternative{n \in \nat}
        \alternative{ r \in \R}
        \alternative{\thunk\, t}
        \alternative{(V_1, V_2)}

    \separate

        \category[Terminal Computations]{T}
        \alternative{\produce \, V}
        \alternative{\lamb x t}
        \alternative{\force \, x}
        \alternative{\ifthenelse x t u}

        \category[Evaluation Contexts]{C}
        \alternative{[\,]}
        \alternative{\lamb x C}
        \alternative{\app C V}
        \alternative{\ifthenelse V C u}
        \alternativeLine{\ifthenelse V t C}
        \alternative{(x \leftarrow C); u}
        \alternative{(x \leftarrow t); C}
        \alternativeLine{\letbe x V C}
        \alternative{\mathsf{succ}\, C}
        \alternative{\mathsf{pred}\, C}
        \alternative{\letin {(x, y)} V C}
    \end{syntax}
\caption{Types and Terms of CBPV}    
\label{fig:syntax}
\end{figure}

\begin{comment}
\begin{figure}
\begin{align*}
 & \overline{\tau} \defined F \tau \smid \tau \to \overline \tau \\
 & \tau \defined U \overline{\tau} \smid 1 \smid \nat \smid \R \smid \tau \times \tau
\end{align*}
%
\begin{align*}
  & t,u \defined \lamb x t \smid \app t V \smid \ifthenelse V t u \smid \force\, V \smid (x \leftarrow t); u\\
  & \smid \produce V \smid \letbe x V t \smid \mathsf{succ}\, t \smid \mathsf{pred}\, t \\
  &\smid \letin {(x, y)} V t \\
  & V \defined x \smid () \smid n \in \nat \smid r \in \R \smid \thunk\, t \smid (V_1, V_2)\\
  & T \defined \produce \, V \smid \lamb x t \smid \force \, x \smid \ifthenelse x t u\\
  & C \defined [\,] \smid \lamb x C \smid \app C V \smid \ifthenelse V C u \smid \ifthenelse V t C\\
  & \smid (x \leftarrow C); u \smid (x \leftarrow t); C \smid \letbe x V C \smid \mathsf{succ}\, C \smid \mathsf{pred}\, C \\
  &\smid \letin {(x, y)} V C \\
\end{align*}
\caption{Types and Terms of CBPV}    
\label{fig:syntax}
\end{figure}
\end{comment}

%
\begin{figure*}
    \begin{mathpar}
      \inferrule{ }{\Gamma_1, x : \tau, \Gamma_2 \vdashval x : \tau}
      \and
      \inferrule{n \in \nat}{\Gamma \vdashval n : \nat}
      \and
      \inferrule{r \in \R}{\Gamma \vdashval r : \R}
\and
      \inferrule{~}{\Gamma \vdashval () : 1}
      \\
      \inferrule{\Gamma \vdashval V : \nat \\ \Gamma \vdashcomp t : \overline{\tau} \\ \Gamma \vdashcomp u : \overline{\tau}}{\Gamma \vdash^c \ifthenelse V t u : \overline{\tau}}
      \and
      \inferrule{\Gamma \vdashval V_1 : \tau_1 \\ \Gamma \vdashval V_2: \tau_2}{\Gamma \vdashval (V_1, V_2) : \tau_1 \times \tau_2}
      \\
      \inferrule{\mathsf{op} \in \mathcal{O}(\tau, \tau')}{\Gamma \vdash \mathsf{op} : \tau \to F \tau'}
      \and
      \inferrule{\Gamma, x : \tau \vdashcomp t : \overline{\tau}}{\Gamma \vdashcomp \lamb x t : \tau \to \overline{\tau}}
      \and
      \inferrule{\Gamma \vdashval V : \tau \\ \Gamma \vdashcomp t : \tau \to \overline{\tau}}{\Gamma \vdashcomp \app t V : \overline{\tau}}
      \\
      \inferrule{\Gamma \vdashval V : \tau }{\Gamma \vdashcomp \produce V : F \tau}
      \and
      \inferrule{\Gamma \vdashcomp t : \overline \tau}{\Gamma \vdashval \thunk t : U \overline \tau}
      \and
      \inferrule{\Gamma \vdashval V : U \overline \tau}{\Gamma \vdashcomp \force V : \overline \tau}
      \and
      \inferrule{\Gamma \vdashcomp t : F \tau' \\ \Gamma, x : \tau' \vdashcomp u : \overline \tau}{\Gamma \vdashcomp (x \leftarrow t); u : \overline \tau}
      \and
      \inferrule{\Gamma \vdashval V : \tau_1 \times \tau_2 \\ \Gamma, x : \tau_1, y : \tau_2 \vdashcomp t : \overline \tau}{\Gamma \vdashcomp \letin {(x,y)} V t : \overline \tau}     
    \end{mathpar}
    \caption{CBPV typing rules}
    \label{fig:cbpvjudgement}
  \end{figure*}    
  
In this section we introduce the type system, equational theory and operational semantics of $\meta$. The
language is a Call-By-Push-Value (CBPV) \cite{levy2001call} calculus extended with operations for cost, probabilistic sampling and recursion.
For the sake of presentation, we introduce it in parts: we begin by going over the core CBPV calculus and then present
its extensions with effectful operations and lists, respectively.
The operational semantics is defined as a weighted Markov chain that accounts for the cost and output distributions
of programs. It is defined using the standard measure-theoretic treatment of operational semantics for continuous distributions,
cf. V\'{a}k\'{a}r et al.~\cite{wqbs}. 

We chose CBPV as the core of $\meta$ due to its type and term-level separation of values and computations. Such a separation allows 
for a fine grained control over the execution of programs, providing a uniform treatment of different evaluation strategies such as Call-By-Name 
(CBN) and Call-By-Value (CBV).

\Cref{fig:syntax} depicts the CBPV syntax. Note that the base types $1$, $\nat$ and $\R$ are value types, the product types is also a value and arrow types are computation types that receive a value type as input and a computation type as output. At the center of the CBPV formalism are the type
constructors $F$ and $U$ which allows types to move between value types and computation types. The constructor $F$ plays a similar role to the monadic type constructor $T$ from the monadic $\lambda$-calculus \cite{moggi1989}, while $U$ is used to represent suspended --- or thunked --- computations. 

This two-level approach also manifests itself at the type judgment level, where the judgment $\Gamma \vdashval V : \tau$ is only defined for value types, while $\Gamma \vdashcomp t : \overline{\tau}$ is defined for computation types, as shown in \Cref{fig:cbpvjudgement}. Both contexts only bind values, which justifies the arrow type having a value type in its domain, so that lambda abstractions only introduce values to the context. The if-then-else operation checks if the guard $V$ is $0$, in which case it returns the first branch, and otherwise it returns the second branch. The language is also parametric on a set of operations $\mathcal{O}$ which contains arithmetic functions. The product introduction rule pairs two values while its elimination rule unpairs a product and uses them in a computation. Lambda abstraction binds a new value to the context while application applies a function to a value. 

The less familiar rules are those for the type constructors $F$ and $U$. The introduction rule for computations is $\produce V$, which is the computation that does not incur any effect and just outputs the value $V$, while the introduction rule for $U$, $\thunk\, t$, suspends the computation $t$. Its elimination rule $\force V$ resumes the
suspended computation $V$. The last rule, $x \leftarrow t; u$ is what makes it possible to chain effectful computations together, since it receives a computation of type $F\tau$ as input, runs it and binds the result to the continuation $u$, which eventually will output a computation of type $\overline \tau$. This is a generalization of the monadic $\mathsf{let}$ rule where the output type does not have to be of type $F\tau$.

The syntax differs a bit from the monadic semantics of effects, but, as it is widely known, every strong monad over a Cartesian closed category can interpret
the CBPV calculus, as we describe in \Cref{app:cbpvsemantics}\footnote{The full appendix can be found on the supplementary material or on the ArXiv version} 
and described in Section~12 of \cite{levy2001call}.

In \Cref{fig:syntax} there is also a grammar for terminal computations and computation contexts $C$. 
These are standard and the latter represents computations with a single
hole $[\, ]$  that may be filled in by a computation $t$ by a non-capture-avoiding substitution, 
which we denote by $C[t]$.

Though this language is effective as a core calculus, by itself it cannot do much, since it has no ``native'' effect operations,
meaning that there are no programs with non-trivial side-effects. In this section we extend CBPV so that it can program with
three different effects: cost, probability and unbounded recursion. We call this extension $\meta$, for \textbf{c}alculus for \textbf{e}xpected \textbf{r}un \textbf{t}ime,
and we conclude the section by presenting its equational theory and operational semantics.

\subsection{Cost and Probabilistic Effects}
 \begin{figure}[]
    \centering
    \begin{minipage}{.50\textwidth}
\begin{align*}
&\fix f : \listty \tau \to F \nat.\, \lambda l : \listty \tau.\,\\
&\mathsf{case} \, l \, \mathsf{of}\\
& \mid \mathsf{nil} \Rightarrow \\
& \quad \produce 0\\
&\mid \mathsf{(hd, tl)} \Rightarrow \\
& \quad n \leftarrow \app {(\force f)} {tl}\\
&\quad \produce (1 + n) 
\end{align*}
    \end{minipage}%
    \begin{minipage}{0.50\textwidth}
    \begin{align*}
&\fix f : \listty \tau \to F (\listty \tau \times \listty \tau) .\,\\
&\lambda l : \listty \tau\,.\\
&\lambda p : \tau \to F \nat.\,\\
&\mathsf{case} \, l \, \mathsf{of}\\
& \mid \mathsf{nil} \Rightarrow \\
& \quad \produce (\mathsf{nil}, \mathsf{nil})\\
&\mid \mathsf{(hd, tl)} \Rightarrow \\
& \quad n \leftarrow \app p hd\\
& \quad (l_1, l_2) \leftarrow \app {\app{(\force f)} p} tl\\
&\quad \mathsf{if}\, n \, \mathsf{then}\\
& \quad \quad \produce (\mathsf{cons}\, hd\, l_1, l_2)\\
&\quad \mathsf{else}\\
& \quad \quad \produce (l_1,\mathsf{cons} \, hd\, l_2)\\
\end{align*}
    \end{minipage}
    \caption{Length function $\mathsf{length}$ (left) and filter function $\mathsf{biFilter}$ (right).}
    \label{fig:programs}
\end{figure}

As it is common in denotational approaches to cost semantics, it is assumed that there is a cost 
monoid $\cmon$ --- usually interpreted by $\nat$ and addition --- which acts on programs by 
operations $\charge c$ that increases the current cost of the computation 
by $c$ units, for every $c : \cmon$. The value types are extended with a type $\cmon$, constants $\cdot \vdashval c : \cmon$
and the monoid structure $\cdot \vdashval 0 : \cmon$ and $\cdot \vdashval V_1 + V_2 : \cmon$.
Furthermore, since we also want to program with probabilities and unbounded recursion, we 
extend the language with a sampling primitive, as well as recursive definitions:
\begin{mathpar}
    \inferrule{\Gamma \vdashval V : \cmon}{\Gamma \vdashcomp \charge V : F 1}
    % \and
    % \inferrule{\Gamma \vdashval V : \nat}{\Gamma \vdashcomp \rand \, V : F \nat}
    \and
    \inferrule{~}{\Gamma \vdashcomp \uniform : F \R}
    \and
    \inferrule{\Gamma, x : U \overline \tau \vdashcomp t : \overline\tau}{\Gamma \vdashcomp \fix x.\, t : \overline \tau}
\end{mathpar}
The operation $\uniform$ uniformly samples a real number from the
interval $[ 0, 1 ]$ and $\fix$is the familiar fixed-point operator used for defining recursive programs.
In interest of reducing visual pollution and simplifying the presentation, 
$\charge V; t$ desugars to $(x \leftarrow \charge V);  t$, when $x$ is not used in the body of $t$, and
we will assume that the cost monoid is $\nat$.

With the uniform distribution primitive it is possible to define uniform distributions over discrete sets which,
given a natural number $n$, outputs a uniform distribution $\rand\,n$ over the set $\{0,\dots, n-1\}$. This can be desugared
to the program $\lamb n {x \leftarrow \uniform; \produce (\floor{nx})} : \nat \to F \nat$, where $\floor{\cdot}$ is the
floor function. Biased choices $t_1 \oplus_p t_2$ desugar to $\lamb p {x \leftarrow \uniform; \ifthenelse {x \leq p}{t_1}{t_2} : \R \to \tau}$, where $t_1, t_2$ have type $\tau$ and $\leq : \R \to \R \to F \nat$ is the comparison function that returns $0$ if the first argument is less or equal to the
second argument and $1$ otherwise.

\begin{example}[Geometric distribution]
    With these primitives we can already program non-trivial distributions. For instance, the geometric distribution
    can be expressed as the program 
    \[\cdot \vdashcomp \fix x.\, (\produce 0) \oplus_{0.5} ((y \leftarrow \force x); \produce (1 + y)) : F \nat,\]
   Operationally, the program flips a fair coin, if the output is $0$, it outputs $0$, otherwise it recurses on $x$, 
   binds the value to $y$ and outputs $1 + y$. By the typing rule of recursive definitions, the variable $x$ is a thunk,
   meaning that it must be forced before executing it.
\end{example}

By having fine-grained control over which operations have a cost, it is possible to orchestrate your program with $\charge c$
operations in order to encode different cost models.
For instance, if we want to keep track of how many coins were tossed when running the geometric distribution, we can modify it
as such
\[
\fix x.\, \charge 1 ; (\produce 0) \oplus_{0.5} (y \leftarrow \force x; \produce (1 + y)) : F \nat
\]
\begin{example}[Deterministic Programs]
    The $\mathsf{charge}$ operation can also be used to keep track of the number of recursive calls
    in your program. For instance, a recursive program that computes the factorial function can be instrumented
    to count the number of recursive calls as follows:
    \[\fix f.\, \lamb n {\ifthenelse{n}{(\produce 0)}{(\charge 1; n * (\force f)(n-1))}}\]
    Whenever the if-guard is false, the cost is incremented by $1$ and the function is recursively called.
\end{example}
\subsection{Lists}

Frequently, cost analyses are defined for algorithms defined over inductive data types, such as lists. 
As such, we will also extend our language with lists over value types.
%
% \begin{figure}
% \begin{align*}
%   &\tau \defined \cdots \smid \listty \tau\\
%   &V \defined \cdots \smid \mathsf{nil} \smid \mathsf{cons}\, V_1 \, V_2\\
%   &t \defined \cdots \smid \caseList{x}{t}{u}\\
%   &T \defined \cdots \smid \caseList{x}{t}{u}
% \end{align*}    

% \begin{mathpar}
%     \inferrule{~}{\Gamma \vdash^v \mathsf{nil} : \listty \tau}
%     \and
%     \inferrule{\Gamma \vdash^v V_1 : \tau \\ \Gamma \vdash^v V_2 : \listty \tau}{\Gamma \vdash^v \mathsf{cons}\, V_1 \, V_2 : \listty \tau}
%     \and
%     \inferrule{\Gamma \vdash_v V : \listty \tau \\ \Gamma \vdash^c t : \overline{\tau} \\ \Gamma, x : \tau, xs : \listty \tau \vdash^c u : \overline \tau}
%     {\Gamma \vdash^c \caseList{V}{t}{u} : \overline{\tau}}
% \end{mathpar}
% \end{figure}
\begin{align*}
  &\tau \defined \cdots \smid \listty \tau\\
  &V \defined \cdots \smid \mathsf{nil} \smid \mathsf{cons}\, V_1 \, V_2\\
  &t \defined \cdots \smid (\caseList{x}{t}{u})\\
  &T \defined \cdots \smid (\caseList{x}{t}{u})\\
  &C \defined \cdots \smid (\caseList{x}{C}{u}) \smid (\caseList{x}{t}{C}) 
\end{align*}

\begin{mathpar}
    \inferrule{~}{\Gamma \vdash^v \mathsf{nil} : \listty \tau}
    \and
    \inferrule{\Gamma \vdash^v V_1 : \tau \\ \Gamma \vdash^v V_2 : \listty \tau}{\Gamma \vdash^v \mathsf{cons}\, V_1 \, V_2 : \listty \tau}
    \and
    \inferrule{\Gamma \vdash_v V : \listty \tau \\ \Gamma \vdash^c t : \overline{\tau} \\ \Gamma, x : \tau, xs : \listty \tau \vdash^c u : \overline \tau}
    {\Gamma \vdash^c \caseList{V}{t}{u} : \overline{\tau}}
\end{mathpar}

The primitive $\mathsf{nil}$ is the empty list, $\mathsf{cons}$ appends a value to the front of a list and $\mathsf{case}$ is for 
pattern-matching on lists and, in the presence of $\fix$, can be used for defining non-structurally recursive functions over lists.

\begin{example}
    The function $\mathsf{length}$ that computes the length of a list and a binary version of the familiar filter function $\mathsf{biFilter}$
    that outputs two lists, one for the true elements and one for the false elements, are, respectively, defined in the left and right parts of \Cref{fig:programs}. 
    Note that since we have adopted a $\nat$-valued if-statement, the predicate $p$ above outputs a natural number. Furthermore, since 
    the recursion operation adds a thunk to the context, in order to call the recursive function you must first force its execution.
\end{example}

\begin{comment}
\begin{figure}
\begin{align*}
&\fix f : \listty \nat \to F (\listty \nat) .\,\\
&\lambda l : \listty \nat.\,\\
&\mathsf{case} \, l \, \mathsf{of}\\
& \mid \mathsf{nil} \Rightarrow \\
& \quad \produce \mathsf{nil}\\
&\mid \mathsf{(hd, tl)} \Rightarrow \\
& \quad len \leftarrow \app \mathsf{length}\, l\\
& \quad r \leftarrow \app \rand len\\
& \quad pivot \leftarrow l[r]\\
& \quad (l_1, l_2) \leftarrow \app {\app{biFilter} {(\lamb n {\charge 1; n \leq pivot})}} (\mathsf{drop}\, l)\\
& \quad less \leftarrow \app {\force f} {l_1}\\
& \quad greater \leftarrow \app {\force f} {l_2}\\
& \quad \produce (less \mdoubleplus pivot :: greater)\\
\end{align*}
\caption{Randomized quicksort}
\label{fig:quicksort}
\end{figure}

\begin{example}
\label{ex:quicksort}
 In \Cref{fig:quicksort} we have defined     a randomized version of the quicksort algorithm that counts the number of comparisons done.
In in, we are accessing the $r$-th element of a list $l$ using the familiar syntax $l[r]$.
The algorithm is very similar to the non-randomized quicksort with the exception of choosing the pivot element
with the command $r \leftarrow \mathsf{rand}\, len$ that uniformly chooses an element from the set $\{0,\dots, |l|\}$,
where $\len -$ is the length function.
\end{example}
\end{comment}

We conclude this section by mentioning that there are many other sensible extensions, such as recursive and sum types. For our purposes,
they are not necessary and so, in order to keep the language simple, we omit them. That being said, from a semantic point of view,
these extensions are well-understood and straightforward to be accommodated by the denotational semantics we present in
\Cref{sec:semantics}.

\subsection{Equational Theory}
\label{sec:ineq}

% \begin{figure}
%     \begin{align*}
%         x \oplus_0 y &= x\\
%         x \oplus_{p} y &= y \oplus_{1-p} x\\
%         x \oplus_p x &= x\\
%         x\oplus_p (y \oplus_q z) &= (x \oplus_{pq} y) \oplus_{\frac{p(1-q)}{1 - pq}} z
%     \end{align*}
% \caption{Barycentric Algebra Axioms}
% \label{fig:barycentric}
% \end{figure}

We want to define a syntactic sound approximation to the expected cost of programs. We do this by extending the usual equational
theory of CBPV with rules for the monoid structure of the $\mathsf{charge}$ operation. We present some of the equational theory
in \Cref{fig:ineq}, with other rules which are standard in CBPV languages shown in \Cref{app:cbpvsemantics}. The first two rules 
are the familiar $\beta$ and $\eta$-rules for the arrow type, the \textsc{0Mon} and \textsc{ActMon} rules are the monoid equations for the charge operation.
The rule \textsc{ThunkForce} says that forcing a thunked
computation is the same thing as running the computation, the rules \textsc{IfZ} and \textsc{IfS} explain how if-statements interact with natural numbers and
the rule \textsc{Fix} is the fixed point equation that unfolds one recursive call of the recursive computation $t$.

\subsection{Operational Semantics}

Since we are interested in modeling the cost of running programs, we will define an operational cost semantics which is closer to the execution model of programs. 
When defining semantics for probabilistic languages with continuous distributions, one must be careful to define it so that it is a measurable function. 

In this section, we begin by showing the the $\meta$ syntax can be equipped with a measurable space structure that makes the natural syntax operations, such as substitution, measurable. Then, the operational semantics can be defined as a Markov kernel over the syntax, as it is usually done in probabilistic operational semantics for continuous distributions \cite{ehrhard2017, wqbs}. After defining the operational semantics, we conclude by proving the subject reduction property.

\begin{figure*}
\begin{mathpar}
    
  \inferrule[$\beta$-law]{\Gamma, x : \tau \vdashcomp t : \overline{\tau} \\ \Gamma \vdash_v V : \tau}{\Gamma \vdash \subst t V x = \app {(\lamb x t)} V : \overline{\tau}}
  \and
  \inferrule[$\eta$-law]{\Gamma \vdashcomp t : \tau \to \overline{\tau}}{\Gamma \vdashcomp \app {(\lamb x {\app t x})} = t : \tau \to \overline{\tau}}
%
  % \\
%
  % \inferrule{\Gamma, x : \tau \vdash_c t : \overline{\tau} \\ \Gamma \vdash_v V : \tau}{\Gamma \vdash \subst t V x = \app {(\lamb x t)} V : \overline{\tau}}
%
  % \and
%
  % \inferrule{\Gamma \vdash V_1 : \tau \\ \Gamma \vdash V_2 : \tau}{\Gamma \vdash \produce V_1 = \produce V_2 : F \tau}
%
  \and
  \inferrule[0mon]{\Gamma \vdashcomp t : \overline \tau}{\Gamma \vdash (\charge 0; t) = t : \overline{\tau}}
\and
  \inferrule[ActMon]{~}{\Gamma \vdash \charge c; \charge d = \charge {c + d} : F 1}
\and
\inferrule[ThunkForce]{\Gamma \vdashcomp t : \overline\tau}{\Gamma \vdash \force (\thunk (t)) = t: \overline{\tau}}
  \\
  \inferrule[IfZ]{\Gamma \vdashcomp t : \overline\tau \quad \Gamma \vdashcomp u : \overline\tau}{\Gamma \vdash \ifthenelse 0 t u = t: \overline \tau}
\and
  \inferrule[IfS]{\Gamma \vdashcomp t : \overline\tau \quad \Gamma \vdashcomp u : \overline\tau}{\Gamma \vdash \ifthenelse {(n + 1)} t u = u : \overline \tau}
  \and
  \inferrule[Fix]{\Gamma, x : U \overline \tau \vdashcomp t : \overline \tau}{\Gamma \vdash (\fix x.\, t) = \subst t x {\thunk (\fix x.\, t)} : \overline{\tau}}  
\end{mathpar}
\caption{Equational Theory (Selected Rules)}
\label{fig:ineq}
\end{figure*}

\paragraph{Syntax Spaces and Kernels}
    Before defining the operational semantics, we need some definitions from measure theory.
    \begin{definition}
        A measurable space is a pair $(X, \Sigma_X \subseteq \mathcal{P}(X))$, where $X$ is a set and $\Sigma$ is a $\sigma$-algebra, i.e. a collection of subsets that contains the empty set and is closed under complements and countable union.
    \end{definition}

    \begin{definition}
        A measurable function $f : (X, \Sigma_X) \to (Y, \Sigma_Y)$ is a function $f : X \to Y$ such that for every $A \in \Sigma_Y$, $f^{-1}(A) \in \Sigma_X$.
    \end{definition}
    \begin{definition}
        A subprobability distribution over a measurable space $(X, \Sigma_X)$ is a function $\mu : \Sigma_X \to [0,1]$ such that $\mu(\emptyset) = 0$,
        $\mu(X) \leq 1$ and $\mu(\uplus_{n \in \nat} A_n) = \sum_{n\in \nat} \mu(A_n)$.
    \end{definition}

    The operational semantics will be modeled as a (sub)Markov kernel, a generalization of transition matrices and Markov chains.
    \begin{definition}
        A sub-Markov kernel between measurable spaces $(X, \Sigma_X)$ and $(Y, \Sigma_Y)$ is a function $f : X \times \Sigma_Y \to [0,1]$ such that
    \begin{itemize}
        \item For every $x : X$, $f(x,-) : \Sigma_Y \to [0,1]$ is a subprobability distribution
        \item For every $A : \Sigma_Y$, $f(-, A) : X \to [0,1]$ is a measurable function
    \end{itemize}
    \end{definition}    
    
    We denote the set of computation terms by $\Lambda$, the set of values by $\val$ and the set of terminal computations by $T$. Let $t$ (resp. $V$) be a computation (resp. value), fix the term traversal order left-to-right and let $z_1, z_2,\dots z_n, \dots$ be a sequence of distinct and ordered variables disjoint from the set of term variables. The traversal order gives rise to a canonical enumeration of $t$'s (resp. $V$'s) occurrences of numerals $r \in \R$, which we denote by the sequence $r_1, r_2, \dots, r_n$. By substituting these occurrences by the variables $z_1, z_2, \dots, z_n$, we obtain the term $\subst t {z_1,\dots, z_n} {r_1,\dots, r_n}$ (resp. $\subst V {z_1,\dots, z_n} {r_1,\dots, r_n}$). Let $\Lambda_n$ (resp. $\val_n$) be the set of such substituted terms with exactly $n$ numerals. 

    Note that the sets $\Lambda_n$ and $\val_n$ are countable and that there are bijections $\Lambda \cong \Sigma_{n : \nat, t : \Lambda_n} \R^n$
    and $\val \cong \Sigma_{n : \nat, t : \val_n} \R^n$, where $\Sigma$ is the dependent sum operation: for instance, every computation term $t$ can be decomposed into a sequence of its numerals $r_1,\dots, r_n$ and substituted term $\subst t {z_1,\dots,z_n}{r_1,\dots, r_n}$, and, conversely, every sequence of numerals and substituted term $t$ can be mapped to the term $\subst t {r_1,\dots, r_n}{z_1,\dots,z_n}$ --- note the reversed order of substitution. 
    
    Therefore, we can equip $\Lambda$ with the coproduct $\sigma$-algebra: $(\Lambda, \Sigma_\Lambda) = \Sigma_{n : \nat, t : \Lambda_n} (\R^n, \Sigma_{\R^n})$. More concretely, a subset $A \subset \Lambda$ is measurable if, and only if,
    \[\forall n \in \nat, t:\Lambda_n, \{(r_1,\dots,r_n) \in \R^n \mid \subst t {r_1,\dots, r_n} {z_1,\dots, z_n} \in A\} \in \Sigma_{\R^n}\]
    The measurable space structure of $\val$ is defined using a similar coproduct. 
    
    Given a pair of a context $\Gamma$ and a computation type $\overline \tau$, the measurable spaces $\Lambda^{\Gamma \vdash \overline \tau}$ and $T^{\Gamma \vdash \overline \tau}$ are the subspaces of well-typed computations and terminal computations under context $\Gamma$ and output type $\overline{\tau}$, respectively. Given a value type $\tau$, the measurable space $\val^{\Gamma \vdash \tau} \subseteq \val$ is the subspace of well-typed values under context $\Gamma$ and output type $\tau$. The measurable space structures of $\Lambda^{\Gamma\vdash\overline{\tau}}$, $T$, $T^{\Gamma\vdash\overline{\tau}}$ and $T^{\Gamma\vdash\tau}$ are defined using the appropriate subspace $\sigma$-algebras. The following metatheoretic lemmas are useful and standard.

    \begin{lemma}
        If $\Gamma, x : \tau \vdashcomp t : \overline{\tau}$ and $\Gamma \vdashval V : \tau$ then
        $\Gamma \vdashcomp \subst t V x : \overline \tau$.
    \end{lemma}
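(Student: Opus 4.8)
The plan is to prove this by induction, but a single induction over computation terms will not close: in CBPV, values nest inside computations (via $\produce V$, $\app t V$, the pairing and arithmetic formers) and computations nest inside values (via $\thunk\, t$). I would therefore first strengthen the claim to a \emph{mutual} statement over both judgment forms, proving simultaneously that if $\Gamma, x : \tau \vdashval W : \tau'$ and $\Gamma \vdashval V : \tau$ then $\Gamma \vdashval \subst W V x : \tau'$, and that if $\Gamma, x : \tau \vdashcomp t : \overline{\tau}$ and $\Gamma \vdashval V : \tau$ then $\Gamma \vdashcomp \subst t V x : \overline{\tau}$. The induction proceeds on the structure of the derivation of $W$ (resp. $t$), with the value and computation statements feeding into each other precisely at the $\thunk$/$\force$ and $\produce$ boundaries.

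Next I would dispatch the cases. The only genuinely interesting base case is the variable rule: if $W = x$ then $\subst x V x = V$ and $\tau' = \tau$, so the goal is exactly the hypothesis $\Gamma \vdashval V : \tau$; if $W = y$ with $y \neq x$, then $\subst y V x = y$ and its derivation survives unchanged once $x$ is dropped from the context, since the variable rule does not inspect the remainder of $\Gamma$. The constant rules ($n$, $r$, $()$, charge constants) and the purely structural rules (pairing, application, $\produce$, $\force$, $\thunk$, $\mathsf{succ}$, $\mathsf{pred}$, if-then-else, the monoid operations $0$ and $+$, $\charge V$, and $\uniform$) all follow immediately by applying the induction hypothesis to each premise and reassembling the same rule, using that substitution commutes with each term former.

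The delicate cases are the binders: $\lamb y t$, the sequencing $(y \leftarrow t); u$, $\letbe y {V'} t$, $\letin{(y,z)}{V'} t$, the list $\mathsf{case}$, and $\fix$. Here substitution must pass under a binder that extends the context, so I would invoke an auxiliary \textbf{weakening} lemma stating that $\Gamma \vdashval V : \tau$ implies $\Gamma, y : \tau'' \vdashval V : \tau$ for $y$ fresh. Adopting the standard convention that bound variables are chosen distinct from $x$ and from the free variables of $V$, we have $\subst{(\lamb y t)}{V}{x} = \lamb y {\subst t V x}$, and the induction hypothesis applies to the body in the enlarged context $\Gamma, y : \tau''$ against the weakened derivation of $V$. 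The main obstacle is thus entirely this bookkeeping around binders and $\alpha$-renaming: establishing weakening and discharging the freshness side conditions so that substitution genuinely distributes through each binding former. Once weakening is in hand, every binding case reduces mechanically to an application of the induction hypothesis, and the original statement is the computation half of the strengthened claim.
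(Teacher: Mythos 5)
Your proof is correct, and it is precisely the standard argument the paper has in mind: the paper does not prove this lemma at all, stating it among "metatheoretic lemmas [that] are useful and standard," so your mutual induction over the two judgment forms with weakening at binders is exactly the content being silently invoked. One small repair to your bookkeeping: under a binder the body is typed in $\Gamma, x : \tau, y : \tau''$, where $x$ is no longer the last variable, so your induction hypothesis (stated only for contexts of the form $\Delta, x : \tau$) does not apply directly; you need either an exchange lemma in addition to weakening, or --- more cleanly --- to state the strengthened claim with the substituted variable in an arbitrary position, i.e. for $\Gamma_1, x : \tau, \Gamma_2 \vdash \cdots$, which is the form already suggested by the paper's variable rule $\Gamma_1, x : \tau, \Gamma_2 \vdashval x : \tau$. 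With that generalization every binder case goes through by the induction hypothesis plus weakening of $V$, exactly as you describe.
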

    \begin{lemma}[cf. Lemma~3.7 of Ehrhard et al.~\cite{ehrhard2017}]
        For every variable $x$, the function $\subst \cdot \cdot x : \Lambda \times \val \to \Lambda$ is measurable.
    \end{lemma}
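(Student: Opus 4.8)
The plan is to exploit the coproduct structure of both $\Lambda$ and $\val$ and reduce the statement to the measurability of finitely many coordinate maps between Euclidean spaces. Since the product $\sigma$-algebra distributes over countable coproducts, we have $\Lambda \times \val \cong \Sigma_{n,\hat t, m, \hat V}\,(\R^n \times \R^m)$, where the sum ranges over the countably many pairs of skeletons $\hat t \in \Lambda_n$ and $\hat V \in \val_m$. Because a function out of a coproduct $\sigma$-algebra is measurable exactly when each of its restrictions to a summand is measurable, it suffices to fix a pair of skeletons $\hat t, \hat V$ and show that the restriction $f_{\hat t, \hat V} : \R^n \times \R^m \to \Lambda$ of $\subst \cdot \cdot x$ is measurable.

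The key observation I would use is that substitution acts separately on the \emph{shape} and on the \emph{numerals} of a term. Writing $t = \subst{\hat t}{r_1,\dots,r_n}{z_1,\dots,z_n}$ and $V = \subst{\hat V}{s_1,\dots,s_m}{z_1,\dots,z_m}$, and letting $k$ be the number of free occurrences of $x$ in $\hat t$, the term $\subst t V x$ has exactly $N = n + km$ numerals, and its skeleton $\hat s \in \Lambda_N$ depends only on $\hat t$, $\hat V$ and $x$, not on the real values $r_i$, $s_j$. Hence $f_{\hat t, \hat V}$ lands entirely inside the single summand indexed by $(N, \hat s)$. Under the canonical identification of that summand with $\R^N$, the map $f_{\hat t, \hat V}$ becomes a fixed function $g : \R^n \times \R^m \to \R^N$ each of whose output coordinates is a copy of some input coordinate: the $n$ numerals of $t$ are placed at their left-to-right positions, and the $m$ numerals of $V$ are duplicated into each of the $k$ positions where $x$ was replaced. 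Such a \emph{duplicate-and-permute} map is continuous, hence Borel measurable, and composing it with the measurable coproduct injection $\R^N \hookrightarrow \Lambda$ shows that $f_{\hat t, \hat V}$ is measurable. Measurability of every $f_{\hat t, \hat V}$ then yields measurability of $\subst \cdot \cdot x$ by the universal property of the coproduct $\sigma$-algebra.

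I expect the main obstacle to be the bookkeeping behind the second paragraph rather than any analytic content: one must check that the result skeleton $\hat s$ is genuinely constant on each cell --- in particular that the re-enumeration of the $z$-variables after the $k$-fold duplication of $\hat V$ is well-defined and independent of the $r_i$, $s_j$ --- and then read off the exact shuffle-with-duplication pattern defining $g$. The degenerate cases $k = 0$ (where $V$ is discarded and $g$ is the projection onto $\R^n$) and $m = 0$ (where $V$ carries no numerals) are covered uniformly by the same formula. This is essentially the argument of Ehrhard et al.~\cite[Lemma~3.7]{ehrhard2017}, adapted to the value/computation split of the CBPV syntax.
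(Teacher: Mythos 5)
Your proposal is correct and takes essentially the approach this statement presupposes: the paper gives no proof of its own, deferring entirely to Lemma~3.7 of Ehrhard et al.~\cite{ehrhard2017}, whose argument is exactly your skeleton/numeral decomposition --- restrict to a pair of skeletons, observe the result skeleton is constant on each cell, and check that the induced map on numeral vectors is a coordinate duplicate-and-permute map, hence measurable. The only point worth keeping explicit in your write-up is the one you already flag: countability of the index set is what makes the product $\sigma$-algebra distribute over the coproduct and what lets the cell-wise argument assemble into global measurability.
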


    Therefore, for every context $\Gamma$, computation type $\overline{\tau}$ and value type $\tau$, the substitution function restricts to measurable functions $\subst \cdot \cdot x : \Lambda^{\Gamma, x : \tau \vdash \overline\tau} \times \val^{\Gamma \vdash \tau} \to \Lambda^{\Gamma \vdash \overline\tau}$.
\paragraph{Operational Kernels}
In order to capture the cost of running a computation, we define \emph{costful} kernels as a sub-Markov kernel $X \times \Sigma_{\nat \times Y} \to [0,1]$.
Given two costful kernels $f$ and $g$, their composition $f \circ g : X \times \Sigma_{\nat \times Z} \to [0,1]$ is defined, with slight abuse of notation, as:
    \[(g \circ f)(x, n_1 + n_2, C) = \int_{Y}g(y, n_2, C)f(x, n_1, -)(\diff y)\]
    %\int_{B \times \nat}\int_{A\times \nat} \delta_{(n_1 + n_2, y)} g(x)(\diff y, \diff n_2) f(a)(\diff x, \diff n_1)\]
%
In plain terms, every term reduces to a subprobability distribution over costs and terminal computations. Their composition is defined so that the probability that the composition cost is $n_1 + n_2$ is the probability that the input $f$ will cost $n_1$ and the continuation $g$ will cost $n_2$, i.e. the product of both events averaged out using an integral. We use the Haskell syntax $\bind$ for kernel composition.

We can now define the operational semantics as the limit of a sequence of approximate semantics given by costful kernels $\Downarrow_n : \Lambda \times \Sigma_{\nat \times T} \to [0,1]$. The approximate semantics $\Downarrow_n$ are defined by recursion on $n$, where the base case is defined as $\Downarrow_0 = \bot$, i.e. the $0$ measure. When $n > 0$, the recursive definition is depicted in \Cref{fig:opsem}. 

Though we are using the familiar relational definition of operational semantics, they are functional in nature. We use the notation of V\'{a}k\'{a}r et al.~\cite{wqbs}, where the inference rule
\[
\inferrule{k_1(t) w_1\\ k_2(t, w_1) w_2\\ \cdots\\ k_n(t, w_1,\dots, w_n) v}{l(t) f(t, w_1,\dots,w_n,v)}
\]
denotes the kernel $k_1(t) \bind (\lambda w_1.\, k_2(t, w_1) \bind \dots \delta_{f(t, w_1,\dots,w_n, v)})$. We also simplify the presentation by using \emph{guarded} kernel composition. For instance, in the $\beta$-reduction rule, whenever $t$ reduces to something which is not a $\lambda$-abstraction, the kernel composition loops. Besides the
non-standard presentation, the semantics is a fairly standard CBPV big-step semantics \cite{levy2001call}. For example, terminal computations cannot reduce any further, so they output a point mass distribution over $0$ cost and themselves. In the case of effectful operations, the charge operation steps to a point mass distribution over $()$ and the cost; the sample operation reduces to an independent distribution of the point mass distribution at $0$ and the Lebesgue uniform measure $\lambda$ on the interval $[0,1]$.

It follows by a simple induction that the semantics $\Downarrow_n$ is monotonic in $n$. Since the space of subprobability distributions forms a CPO, we can define the semantics as the supremum of its finite approximations ${\Downarrow} = \bigsqcup_n {\Downarrow_n}$. As usual, it is possible to prove subject reduction by induction on well-typed terms.

\begin{lemma}[Subject reduction]
\label{lem:subject}
    If $\Gamma \vdashcomp t : \overline \tau$, then the composition $\DownarrowFun \circ\, \iota : \Lambda^{\Gamma\vdash\overline \tau} \to \Psub(\nat \times T)$ 
    factors as $\Lambda^{\Gamma\vdash\overline \tau} \to \Psub(\nat \times T^{\Gamma\vdash \overline \tau}) \hookrightarrow \Psub(\nat \times T)$,
    where $\iota : \Lambda^{\Gamma \vdash \overline \tau} \hookrightarrow \Lambda$ is the inclusion function. More colloquially, well-typedness is stable under the operational semantics.
\end{lemma}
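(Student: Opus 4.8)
The plan is to prove the stronger, pointwise statement that for every well-typed computation $\Gamma \vdashcomp t : \overline{\tau}$ the subprobability measure $\DownarrowFun(t)$ assigns zero mass to every terminal computation that is \emph{not} of type $\Gamma \vdash \overline{\tau}$; equivalently, that $\DownarrowFun(t)$ is supported on $\nat \times T^{\Gamma \vdash \overline{\tau}}$. Since $T^{\Gamma \vdash \overline{\tau}} \hookrightarrow T$ carries the subspace $\sigma$-algebra, a measure supported on it corestricts to a genuine element of $\Psub(\nat \times T^{\Gamma \vdash \overline{\tau}})$, and the corestriction of the measurable map $\DownarrowFun \circ \iota$ along the measurable embedding $\Psub(\nat \times T^{\Gamma \vdash \overline{\tau}}) \hookrightarrow \Psub(\nat \times T)$ is again measurable. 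That corestriction is exactly the claimed factorization, so the real content of the lemma is the support condition.

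First I would reduce the claim from $\DownarrowFun$ to its finite approximants $\DownarrowFun_n$. Recall that $\DownarrowFun = \bigsqcup_n \DownarrowFun_n$ in the CPO of subprobability kernels, where the order is the set-wise order and the supremum of the increasing chain $\{\DownarrowFun_n(t)\}_n$ is the measure $A \mapsto \sup_n \DownarrowFun_n(t)(A)$. The collection of measures on $\nat \times T$ that vanish on $\nat \times (T \setminus T^{\Gamma \vdash \overline{\tau}})$ is closed under such suprema: if each $\DownarrowFun_n(t)$ gives this set measure zero, then $\DownarrowFun(t)(\nat \times (T \setminus T^{\Gamma\vdash\overline{\tau}})) = \sup_n \DownarrowFun_n(t)(\nat \times (T \setminus T^{\Gamma\vdash\overline{\tau}})) = 0$. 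Hence it suffices to show that for every $n$ and every $\Gamma \vdashcomp t : \overline\tau$, the approximant $\DownarrowFun_n(t)$ is supported on $\nat \times T^{\Gamma \vdash \overline\tau}$.

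I would then prove this by induction on $n$. The base case $\DownarrowFun_0 = \bot$ is the zero measure, which trivially satisfies the support condition. For the inductive step, assuming the claim for $\DownarrowFun_n$ at all well-typed terms, I would proceed by case analysis on the head of $t$, which fixes the applicable rule of \Cref{fig:opsem}. The terminal cases reduce to the point mass $\delta_{(0, t)}$, and since $t$ is itself a well-typed terminal we have $t \in T^{\Gamma \vdash \overline\tau}$, so the support condition is immediate. The effectful leaves $\charge V : F 1$ and $\uniform : F \R$ produce $\delta_{(c,\, \produce ())}$ and the measure induced by $\produce r$ for $r \in [0,1]$, whose terminal values $\produce ()$ and $\produce r$ carry types $F 1$ and $F \R$ as required. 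For the composite rules---$\beta$-reduction of $\app t V$, sequencing $(x \leftarrow t); u$, $\force (\thunk\, t)$, the conditionals, $\fix$, and the list $\mathsf{case}$---I would use typing inversion to type the immediate subterm, apply the induction hypothesis to the recursive $\DownarrowFun_n$-call on it, and use the earlier substitution lemma (substitution preserves typing) to type the reduct fed to the continuation.

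The main obstacle is threading the support condition through the guarded kernel composition $\bind$ that glues these recursive calls together, since this is where the probabilistic structure departs from the deterministic argument. For a rule of the shape $k_1(t) \bind (\lambda w.\, k_2(t, w))$, the composite measure integrates $k_2(t, w)$ against $k_1(t)(\diff w)$, so its value on $\nat \times (T \setminus T^{\Gamma\vdash\overline\tau})$ only samples $w$ in the support of $k_1(t)$. By the induction hypothesis applied to the first premise, that support consists of well-typed terminals of the expected intermediate type; when $w$ has the right head shape the substitution lemma makes the continuation argument well-typed at $\overline{\tau}$, and a second use of the induction hypothesis forces $k_2(t, w)$ to vanish on $\nat \times (T \setminus T^{\Gamma \vdash \overline\tau})$, while guardedness sends $k_2(t,w)$ to the zero measure on the (possibly positive-mass) branches where $w$ has the wrong head shape. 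In either case $k_2(t,w)$ meets the support condition for every $w$ in the support of $k_1(t)$, so the integral vanishes on wrong-type terminals. Measurability of each corestricted approximant is inherited from the measurability of $\DownarrowFun_n$ and of substitution, so no analytic input is needed beyond the stated lemmas.
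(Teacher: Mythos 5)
Your overall architecture is sound, and it is essentially the fleshed-out version of what the paper leaves implicit (the paper's entire proof is the remark that subject reduction follows ``by induction on well-typed terms''): recast the factorization as the statement that $\DownarrowFun(t)$ assigns zero mass to $\nat \times (T \setminus T^{\Gamma\vdash\overline\tau})$, observe that this null-set condition is preserved by directed suprema of subprobability measures (so it suffices to prove it for each approximant $\Downarrow_n$), and then discharge each operational rule using typing inversion, the substitution lemma, and guardedness of kernel composition. Your handling of the bind is also the right one: the composite measure only integrates the continuation kernel over the support of the first kernel, so the hypothesis on the first premise plus guardedness forces the composite to vanish on ill-typed terminals. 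One small point you use silently: corestricting a measure to $\Psub(\nat \times T^{\Gamma\vdash\overline\tau})$ requires $T^{\Gamma\vdash\overline\tau}$ to be a measurable subset of $T$; this does hold, because well-typedness depends only on the skeleton in the countable index of the coproduct $\sigma$-algebra and not on the numerals, so the subset is a countable union of components.

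The genuine gap is the induction scheme. Pure induction on $n$, with only \emph{case analysis} on the head of $t$ in the inductive step, cannot discharge the rules whose premises carry the \emph{same} index $n$ as their conclusion: the rule for $\mathsf{force}(\mathsf{thunk}\, t)$, both if-then-else rules, and --- crucially for your bind argument --- the first premises of application ($t \Downarrow_n \lambda x.\, u$) and sequencing ($t \Downarrow_n \mathsf{produce}\, V$). In those cases what decreases is the term, not the index, so the hypothesis ``the claim holds at the previous index for all terms'' is exactly the one you do not get to use: you would be invoking the claim at the very index you are still proving. The repair is standard and is what the paper states explicitly for the adjacent operational soundness theorem (``induction on $n$ and $t$''): induct lexicographically on the pair $(n, t)$, so that same-index premises are covered by the inner structural induction (they concern proper subterms), while decremented-index premises --- which concern substituted and hence possibly \emph{larger} terms, as in the $\mathsf{fix}$ rule, sequencing, and $\beta$-reduction --- are covered by the outer induction on $n$. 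Note that this also explains why neither induction on $n$ alone nor structural induction on terms alone can work. With that strengthening, the rest of your argument goes through unchanged.
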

\begin{figure}
\begin{mathpar}
    \inferrule{~}{\produce \, V \Downarrow_n \delta_{(0, \produce V)}}
    \and
    \inferrule{t \Downarrow_n \mu}{\force (\thunk t) \Downarrow_n \mu}
    \and
    \inferrule{~}{\uniform \Downarrow_n \delta_0 \otimes \lambda}
    \and
    \inferrule{~}{\lamb x t \Downarrow_n \delta_{(0, \lamb x t)}}\\
    \inferrule{t \Downarrow_n \lamb x {u} \\ \subst u V x \Downarrow_{n-1} \mu}{\app t V \Downarrow_n \mu}
    \and
    \inferrule{~}{\charge r \Downarrow_n \delta_{(r, \produce ())}}\\
\inferrule{t \Downarrow_n \produce V \\ \subst u V x \Downarrow_{n-1} \mu}{(x \leftarrow t); u \Downarrow_n \mu}
\and
\inferrule{\subst t {\thunk \fix x.\, t} x \Downarrow_{n-1} \mu}{\fix x.\, t \Downarrow_n \mu}
\and
\inferrule{t \Downarrow_n \mu}{\ifthenelse 0 t u \Downarrow_n \mu}
\and
\inferrule{u \Downarrow_n \mu}{\ifthenelse {(n+1)} t u \Downarrow_n \mu}
\and
\inferrule{\subst t {V_1, V_2} {x_1, x_2} \Downarrow_{n-1} \mu}{\letin {(x_1, x_2)} {(V_1, V_2)} t \Downarrow_{n} \mu}
\\
\inferrule{t \Downarrow_{n-1} \mu}{\caseList{\mathsf{nil}}{t}{u} \Downarrow_n \mu}
\and
\inferrule{\subst u {V_1, V_2} {x, xs} \Downarrow_{n-1} \mu}{\caseList{(\mathsf{cons}\, V_1\, V_2)}{t}{u} \Downarrow_n \mu}
\end{mathpar}
\caption{Big-Step Operational Semantics}
\label{fig:opsem}
\end{figure}

\section{Denotational Semantics}
\label{sec:semantics}

This section presents two concrete denotational semantics to our language:
\begin{itemize}
    \item A cost semantics that serves as a denotational baseline for compositionally computing the
    cost distribution of probabilistic programs.
    \item An expected cost semantics that, while it cannot reason about as many quantitative properties of cost
    as the cost semantics, such as tail-bounds and higher-moments, it provides a compositional account to the \emph{expected cost}.
\end{itemize}
Both semantics will be defined over the category $\wqbs$ of $\omega$-quasi Borel spaces \cite{wqbs}, a Cartesian closed category that admits
a probabilistic powerdomain of subprobability distributions $\Psub$. By using the writer monad transformer $\Psub(\cmon \times -)$, 
it can also accommodate cost operations, as we explain in \Cref{sec:wqbs}. With this monad it is possible to define the expected cost to be
the expected value of the cost distribution $\Psub(\cmon)$.

Unfortunately, this approach is non-compositional. In order to compute the expected cost of a program of type $F\tau$,
we must first compute its (compositional) semantics which can then be used to obtain a distribution over the cost and then apply the expectation formula to it.
We work around this issue by constructing a novel expected cost monad in \Cref{sec:expectsem} that makes the expected cost a part of the semantics 
and, as such, it is compositionally computed. We start this section by going over important definitions and constructions for $\wqbs$, we then define the 
cost semantics, followed by the expected cost semantics.

\subsection{$\omega$-quasi Borel spaces}
\label{sec:wqbs}

We now introduce the semantic machinery used in the interpretation of $\meta$. Due to requirement of higher-order functions, probability and unbounded
recursion, we are somewhat limited in terms of which semantic domain to use. We use the category of $\omega$-quasi Borel spaces, a domain-theoretic version of quasi-Borel spaces \cite{qbs}.

\begin{definition}[\cite{wqbs}]
    An $\omega$-quasi Borel space is a triple $(X, \leq, M_X)$ such that, $(X, \leq)$ is a $\omega$-complete partial order ($\omega$CPO),
    i.e. it is a partial order closed under suprema of ascending sequences, and $M_X \subseteq \R \to X$
    is the set of \emph{random elements} with the following properties:
    \begin{itemize}
        \item All constant functions are in $M_X$
        \item If $f : \R \to \R$ is a measurable function and $p \in M_X$, then $p \circ f \in M_X$
        \item If $\R = \bigcup_{n\in \nat} U_n$, where for every $n$, $U_n$ are pairwise-disjoint
         and Borel-measurable, and $\alpha_n \in M_X$ then the function $\alpha(x) = \alpha_n(x)$
         if, and only if, $x \in U_n$ is also an element of $M_X$.
         \item For every ascending chain $\{f_n\}_n \subseteq M_X$, i.e. for every $x \in \R$, $f_n(x) \leq f_{n+1}(x)$, 
         the pointwise supremum $\bigsqcup_n f_n $ is in $M_X$.
    \end{itemize}
\end{definition}

Note that, in the definition above, $\omega$CPOs do not assume the existence of a least element, e.g. for every set $X$,
the discrete poset $(X, =)$ is an $\omega$CPO.

\begin{definition}
    A measurable function between $\omega$-quasi Borel spaces is a Scott continuous function $f : X \to Y$  --- i.e. preserving
    suprema of ascending chains --- such that for every $p \in M_X$, $f \circ p \in M_Y$.
\end{definition}

\begin{definition}
    The category $\wqbs$ has $\omega$-quasi Borel spaces as objects and measurable functions as morphisms.
\end{definition}

\begin{theorem}[Section~3.3 of V\'{a}k\'{a}r et al.~\cite{wqbs}]
    The category $\wqbs$ is Cartesian closed.
\end{theorem}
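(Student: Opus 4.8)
The plan is to exhibit the terminal object, binary products, and exponentials explicitly, and to verify the universal properties by reducing each closure condition to the corresponding condition on the component spaces. The terminal object is the one-point space with its unique order and the constant random element. For the product of $(X, \leq_X, M_X)$ and $(Y, \leq_Y, M_Y)$ I would take the carrier $X \times Y$ with the componentwise order, which is again an $\omega$CPO whose suprema are computed coordinatewise, together with the random elements
\[
M_{X \times Y} = \{\, r \mapsto (p(r), q(r)) \mid p \in M_X,\ q \in M_Y \,\}.
\]
Each of the four closure axioms for $M_{X \times Y}$ follows by applying the same axiom to $M_X$ and $M_Y$ in parallel: constants pair to constants, precomposition acts coordinatewise, the countable gluing is performed in each coordinate, and a pointwise supremum of an ascending chain of pairs is the pair of the coordinatewise suprema. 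The projections are Scott-continuous and preserve random elements by construction, the pairing of two morphisms is again a morphism, and uniqueness is inherited from the set-theoretic product.

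The substantial work is the exponential. For objects $X$ and $Y$ I would let the carrier of $Y^X$ be the hom-set $\mathrm{Hom}_{\wqbs}(X, Y)$, ordered pointwise by $f \leq g \iff \forall x.\ f(x) \leq_Y g(x)$. To see this is an $\omega$CPO I must check that the pointwise supremum $\bigsqcup_n f_n$ of an ascending chain of morphisms is again a morphism: Scott-continuity follows by exchanging the two suprema, and preservation of random elements follows since for $p \in M_X$ we have $(\bigsqcup_n f_n) \circ p = \bigsqcup_n (f_n \circ p)$, which lies in $M_Y$ by the supremum-closure axiom for $Y$. For the random elements I would declare $\alpha \in M_{Y^X}$ exactly when the uncurried map $\hat\alpha : \R \times X \to Y$, $(r,x) \mapsto \alpha(r)(x)$, is a morphism of $\wqbs$, where $\R$ carries its standard structure (equality order, Borel-measurable random elements) and $\R \times X$ the product just defined. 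Under uncurrying the four closure axioms for $M_{Y^X}$ translate into manipulations of morphisms into $Y$: constants uncurry to maps factoring through a projection; precomposition with a measurable $g : \R \to \R$ uncurries to precomposition with $g \times \mathrm{id}_X$; the gluing axiom uncurries to a gluing along the partition $\{U_n \times X\}$; and the supremum axiom uncurries to a pointwise supremum of morphisms into $Y$, which is a morphism by the $\omega$CPO argument just given.

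Finally I would verify the adjunction. Evaluation $\mathrm{ev} : Y^X \times X \to Y$, $(f,x) \mapsto f(x)$, is Scott-continuous because suprema in $Y^X \times X$ are coordinatewise and morphisms commute with suprema, and it preserves random elements since a random element of $Y^X \times X$ is a pair $(\alpha, p)$ and $r \mapsto \hat\alpha(r, p(r))$ is a composite of morphisms landing in $M_Y$. For a morphism $g : Z \times X \to Y$, currying $\Lambda(g)(z)(x) = g(z,x)$ yields a morphism $Z \to Y^X$: each $\Lambda(g)(z)$ is a morphism $X \to Y$, $\Lambda(g)$ is Scott-continuous, and $\Lambda(g)$ preserves random elements precisely because for $p \in M_Z$ the map $(r,x) \mapsto g(p(r), x)$ is a morphism $\R \times X \to Y$, which is the defining condition for $\Lambda(g) \circ p \in M_{Y^X}$. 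That $\Lambda$ is a bijection with inverse $h \mapsto \mathrm{ev} \circ (h \times \mathrm{id}_X)$, natural in $Z$ and $Y$, is then routine.

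The hard part will be the exponential, and specifically reconciling the two layers of structure inside $M_{Y^X}$: the supremum-closure axiom demands that a pointwise supremum of an ascending chain of morphisms be \emph{simultaneously} Scott-continuous and random-element-preserving, which forces an interchange of suprema taken in different directions and is exactly the point where the domain-theoretic and the quasi-Borel conditions must be shown to cohere. Everything else reduces cleanly to the corresponding facts about $\omega$CPOs and about quasi-Borel spaces taken separately.
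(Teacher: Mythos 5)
Your construction is correct and is essentially the same as the one in the source this paper cites for the theorem (V\'{a}k\'{a}r et al., Section~3.3) --- the paper itself imports the result without proof, and your choices (componentwise order with paired random elements for products; the hom-set with pointwise order and ``uncurrying is a morphism'' as the random elements for exponentials, over $\R$ with the equality order) coincide with the standard $\wqbs$ Cartesian closed structure. The two delicate points you single out --- closure of $M_{Y^X}$ under suprema via interchange of suprema, and the fact (used implicitly in your currying step) that random elements $p \in M_Z$ are exactly the morphisms $\R \to Z$ --- are indeed where the care is needed, and your treatment of both is sound.
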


Furthermore, there is a full and faithful functor $\cat{Sbs} \to \wqbs$, where $\cat{Sbs}$ is the category
of standard Borel spaces and measurable functions (cf. Proposition~15 \cite{qbs}). More concretely, if
you interpret a program that has as inputs and output standard Borel spaces,
its denotation in $\wqbs$ will be a measurable function, even if the program uses higher-order functions,
and any measurable function could potentially be the denotation of the program.

\paragraph{Inductive types}
As shown in previous work \cite{wqbs}, $\wqbs$ can also soundly accommodate full recursive types.
In particular, it can give semantics to lists over $A$ by solving the domain equation
$\listty A \cong 1 + A \times \listty A$. 

It is convenient that in $\wqbs$, the set of lists over $A$
with appropriate random elements and partial order is a solution to the domain equation and it is the smallest one, i.e. it is an initial algebra.
This means that when reasoning about lists expressed in $\meta$, you may assume that they are just the set of lists over sets.
% As it was noted in previous work \cite{kavvos2019}, the cost of programs defined over lists frequently only depends on 
% the size of the list, not on its elements. If we were to follow the denotational semantics, it only would
% give us a function from lists to expected cost, which is not amenable to size-only analysis. We will eventually
% show in \Cref{sec:examples} that by proving an additional lemma it is sometimes possible to abstract lists 
% by their length.

%Our approach differs slightly from the on in 
%\cite{kavvos2019}, where they define a translation between their lists into natural numbers. Though
%their solution is more convenient

\paragraph{Probability Monads}
It is possible to construct probabilistic powerdomains in $\wqbs$, making it possible to use this category as a semantic
basis for languages with probabilistic primitives. Furthermore, the $\omega$CPO structure can also be used
to construct a partiality monad, making it possible to give semantics to programs with unbounded recursion. We are assuming
familiarity with basic concepts from category theory such as monads and use the notation $(T, \eta^T, (-)^\#_T)$, where $T$ is
an endofunctor, $\eta^T_A : A \to T A$ and $(-)^\#_T: (A \Rightarrow TB) \to (TA \Rightarrow TB)$ are the unit and
bind natural transformations, respectively.
% \begin{definition}
%     Let $\cat{C}$ be a category, a monad is a triple $(T, \eta^T, (-)^{\#}_T)$ where $T:\cat{C} \to \cat{C}$ is a functor,
%     $\eta^T : id \to T$ and $(-)^\#_T : \cat{C}(-, T =) \to \cat{C}(T -, T =)$ are natural transformations such that
%     \begin{align*}
%         f^\#_T\circ \eta &= f\\
%         \eta^\#_T &= id\\
%         (f^\#_T \circ g)^\#_T &= f^\#_T \circ g^\#_T
%     \end{align*}
% \end{definition}
The monad is said to be \emph{strong} if there is a natural transformation $st_{A, B} : A \times T B \to T(A\times B)$
making certain diagrams commute \cite{moggi1989}. When it is clear from the context, we will simply write $\eta$ and $(-)^\#$, 
without the sub and superscript, respectively.

\begin{lemma}[Section~4.5 of V\'{a}k\'{a}r et al.~\cite{wqbs}]
    The category $\wqbs$ admits strong commutative monads $P$ and $\Psub$ of probability and sub-probability distributions, respectively.
\end{lemma}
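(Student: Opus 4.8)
The plan is to construct $P$ and $\Psub$ explicitly following the quasi-Borel recipe and then to reduce each monad axiom to a standard theorem of measure theory (integration against Dirac measures, the monotone convergence theorem, and Fubini). On objects, I would define $PX$ to be the set of measures on $X$ in the quasi-Borel sense, each represented by a pair $(\mu, \alpha)$ of a probability measure $\mu$ on $\R$ together with a random element $\alpha \in M_X$, quotiented by the equivalence that identifies $(\mu,\alpha)$ with $(\nu,\beta)$ whenever $\int f(\alpha(r))\,\mu(\diff r) = \int f(\beta(r))\,\nu(\diff r)$ for every morphism $f : X \to \weight$. The subprobability monad $\Psub X$ is identical, using subprobability measures on $\R$. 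I would order these by the pointwise order on their induced integration operators, take $M_{PX}$ to consist of the families $\R \to PX$ arising from measurable families of measures, and define $Pf$ by pushforward. Checking that $PX$ is a genuine $\omega$CPO and that $M_{PX}$ satisfies the four closure axioms is the first block of work.

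For the monad structure, the unit $\eta_X : X \to PX$ sends $x$ to the Dirac measure $\delta_x$, and the Kleisli extension $f^\# : PX \to PY$ of a morphism $f : X \to PY$ is integration, $f^\#(\mu) = \int_X f(x)\,\mu(\diff x)$. The three monad laws then unfold to: integrating against $\delta_x$ returns the integrand (right unit), extending the unit is the identity (left unit), and iterated integration associates (associativity), each immediate from the definition of the integration operator. The nontrivial verification here is not the equational content but the well-definedness in $\wqbs$: I would check that $\eta_X$ and $f^\#$ are Scott continuous --- which follows from the monotone convergence theorem, since suprema in $PX$ are computed as monotone limits of integrals --- and that they preserve random elements, which follows from the closure properties of $M_X$ together with the pair-representation $(\mu,\alpha)$ of elements of $PX$.

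For strength, I would set $st_{A,B}(a,\mu)$ to be the pushforward of $\mu$ along $b \mapsto (a,b)$, namely $\delta_a \otimes \mu$, and verify the strength coherence diagrams by direct computation against the integration operator. Commutativity is the claim that the two canonical maps $PA \times PB \to P(A\times B)$ --- forming the product measure by integrating out $A$ first versus $B$ first --- coincide; unfolding both sides, this reduces exactly to $\int_A \int_B g\,(\diff b)\,(\diff a) = \int_B \int_A g\,(\diff a)\,(\diff b)$ for every $g : A \times B \to \weight$, i.e. Fubini's theorem.

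The hard part will be twofold. First, establishing that $PX$ is an $\omega$CPO with the stated random elements requires showing that suprema of ascending chains of measures exist and are again measures; for $\Psub$ this is clean, since monotone chains of subprobability measures are uniformly bounded by $1$ and converge by monotone convergence, yielding a pointed $\omega$CPO whose least element is the zero measure, but one must still confirm that the limiting integration operator is represented by some element of $M_X$ via the fourth ($\omega$-chain) closure axiom. Second, commutativity requires a form of Fubini valid for quasi-Borel measures rather than classical measures on standard Borel spaces; I would prove it by transporting the two iterated integrals through the pair-representations $(\mu,\alpha)$ and $(\nu,\beta)$ to ordinary iterated integrals over $\R \times \R$, invoking the classical Fubini theorem there, and checking that the representation is faithful with respect to integration operators. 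Since everything is ultimately pinned down by testing against maps into $\weight$, both obstacles reduce to standard measure theory once the representation machinery is in place.
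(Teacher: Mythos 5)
Your construction is the representational one from the original quasi-Borel space paper \cite{qbs} (pairs $(\mu,\alpha)$ quotiented by equality of integration operators), and over plain quasi-Borel spaces it does yield a strong commutative monad by essentially the arguments you sketch. But this lemma lives in $\wqbs$, and there your proposal has a genuine gap exactly at the point you flag as the first ``hard part'': showing that $PX$ (resp.\ $\Psub X$) is an $\omega$CPO. The supremum of an ascending chain of your measures exists as an integration operator (monotone convergence, pointwise in the test map $f : X \to \weight$), but to remain inside your carrier it must again be \emph{representable} by some pair $(\mu,\alpha)$, and nothing forces this. In particular, the fourth closure axiom of $M_X$ cannot be invoked the way you suggest: that axiom applies to chains of random elements $\alpha_n$ that are \emph{pointwise} ascending in $X$, whereas the representatives $(\mu_n,\alpha_n)$ of an ascending chain of measures are not pointwise related in any way --- both the measure on $\R$ and the random element change from one index to the next. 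Whether the representable operators are closed under $\omega$-chains is not known, so your carrier is not known to be an object of $\wqbs$ at all, and Scott-continuity of $\eta$, $(-)^\#$ and $st$ cannot even be stated.

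This is precisely why the construction this paper imports (Section~4.5 of V\'{a}k\'{a}r et al.~\cite{wqbs}, recapped in \Cref{sec:preexpect}) proceeds differently: one equips the ``randomization'' functor $[0,1] \to (-)_\bot$ with a monad structure, maps it into the continuation monad $K_{[0,1]}$ by $(f,g) \mapsto \int_{\mathsf{dom}(f)} (g \circ f)\,\diff\mathcal{U}$, and factors this monad-structure morphism through the (densely strong epi, order-reflecting mono) factorization system; $\Psub$ is \emph{defined} to be the factor. By construction it is a submonad of $K_{[0,1]}$, so $\omega$-completeness, the monad laws and strength are inherited from the continuation monad (the factorization quotients away the failure of the laws in the randomization structure), at the price that elements of $\Psub X$ need not be representable --- the factor is the closure of the representable operators under suprema of $\omega$-chains. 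Commutativity is then proved in two stages, the first of which is exactly your Fubini-transport argument: classical Fubini on $[0,1]^2$ gives it for representable operators, and Scott-continuity extends it to the whole closure. To salvage your route you would have to either prove the (apparently open) chain-closure property of representable operators, or replace your carrier by its chain-closure inside $K_{[0,1]}$ --- at which point you have reconstructed the factorization argument.
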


Categorically, $\Psub$ is defined as a submonad of the continuation monad $( - \to [0, 1]) \to [0, 1]$ and
its monad structure is similar to the one from probability monads in $\cat{Meas}$, i.e. the unit at a point $a : A$ is given
by the point mass distribution $\delta_a$ and $f^\#(\mu)$ is given by integrating $f$ over the input distribution $\mu$. A
more detailed presentation of this construction will be given in \Cref{sec:preexpect}.

Furthermore, by construction, $\wqbs$ admits a morphism $\int_A : (\Psub A) \times (A \to \{0,1\}) \to [0,1]$ 
that maps a subprobability distribution and a ``measurable set'' of $A$ into its measure. For example, if $A$ is a measurable
space, for every measurable set $X : A \to \{0,1\}$, and for every subprobability distribution $\mu : \Psub A$, the map $(\mu, X) \mapsto \mu(X)$
is an $\wqbs$ morphism and is equal to $\int_A$.

\begin{comment}
As we have mentioned above, it is also possible to define a lifting monad in $\wqbs$ that adds a least element $\bot$ to a space, making them
\emph{pointed} $\omega$CPOs. This monad, combined with the $\omega$CPO structure, is used to guarantee the existence of fixed points of 
endomorphisms between pointed $\omega$CPOs.

\begin{definition}[Section 3.3 of \cite{wqbs}]
 The lifting monad in $\wqbs$ $X_\bot$ adds a fresh element $\bot$ to $X$, makes it the least element and the random elements $M_{X_\bot}$
 are the functions $f : \R \to X_\bot$ such that there is a Borel measurable set $\mathcal B$ and a map $\alpha : \R \to X$ in $M_X$
 such that $f(x) = \alpha (x)$ for $x \in \mathcal B$ and $\bot$ otherwise.
\end{definition}

\end{comment}
% As we will see in \Cref{sec:expectsem}, the lifting monad applied to the positive real numbers $(\R^+, =)$ can
% be thought of as the closed interval $[0, \infty]$, where $\infty$ is the least element of the partial order\footnote{which
% is different from the object used in the construction of $\Psub$, where the interval is ordered by the usual real line ordering,
% with $\infty$ being the largest element.}

The machinery we have defined so far is expressive enough to interpret $\meta$, with exception of
its cost operations. In non-effectful languages, the writer monad $(\cmon \times -)$ can be used to give 
semantics to cost operations such as $\charge c$. 
\begin{definition}
    If $(\cmon, 0, +)$ is a monoid, then $\cmon \times -$ is a monad --- the \emph{writer} monad --- where the unit at a point $a$ is $(0, a)$
    and given a morphism $f : A \to \cmon \times B$, $f^\#(c, a) = (c + (\pi_1 \circ f)(a), (\pi_2 \circ f)(a))$, where $\pi_i : A_1 \times A_2 \to A_i$ is the $i$-th projection.
\end{definition}

What follows is how to combine the non-probabilistic cost monad $(\cmon \times -)$ with
$\Psub$ in order to define a probabilistic cost semantics.

\subsection{A probabilistic cost semantics}
\label{sec:costsem}

Contrary to the deterministic case, the cost of a probabilistic computation is not a single value, but rather a distribution over costs. 
For instance, consider the program:
\[\cdot \vdashcomp (\charge 1; \produce 0) \oplus (\produce 2) : F \nat\]
It either returns $2$ without costing anything, or it returns $0$ with a cost of $1$. Denotationally, this program should
be the distribution $\frac 1 2 (\delta_{(1,0)} + \delta_{(0, 2)})$. With equal probability, the program will either
cost $1$ and output $0$ or cost $0$ and output $2$.

In the deterministic case, it is possible to encode the cost at the semantic-level by using the \emph{writer} monad $\cmon \times -$.
For probabilistic cost-analysis we can use the writer monad transformer.

\begin{lemma}
    If $T : \cat{C} \to \cat{C}$ is a strong monad then $T(\cmon \times -)$ is a strong monad. 
\end{lemma}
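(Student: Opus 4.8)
The plan is to exhibit $S := T(\cmon \times -)$ as the composite monad arising from a \emph{distributive law} of the writer monad $\cmon \times -$ over $T$, and then to lift the strengths of both factors to a strength on $S$. Write $G := \cmon \times -$ for the writer monad of the preceding definition, with unit $\eta^G_A(a) = (0,a)$ and multiplication $\mu^G_B : \cmon \times (\cmon \times B) \to \cmon \times B$, $(c,(d,b)) \mapsto (c+d, b)$; its monad laws are exactly the unit and associativity laws of the monoid $(\cmon, 0, +)$. The central observation is that the strength of $T$, instantiated at the first factor $\cmon$, is a natural transformation $st_{\cmon, -} : \cmon \times T(-) \Rightarrow T(\cmon \times -)$, i.e. a candidate Beck distributive law $\lambda := st_{\cmon, -}$ of $G$ over $T$. (Here I use the standard fact that the Kleisli data $(\eta^T, (-)^{\#_T})$ supply a multiplication $\mu^T := \mathrm{id}_{T(-)}^{\#_T}$ and the usual functorial action, so the $(\eta,\mu)$ presentation is available.)

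First I would verify the four distributive-law axioms for $\lambda$. The two axioms involving the unit and multiplication of $T$ --- namely $\lambda \circ (\mathrm{id}_\cmon \times \eta^T) = \eta^T_{\cmon \times -}$ and $\lambda \circ (\mathrm{id}_\cmon \times \mu^T) = \mu^T_{\cmon \times -} \circ T\lambda \circ \lambda_{T(-)}$ --- are precisely the two coherence axioms asserting that $T$ is a \emph{strong} monad, so they hold by hypothesis. The two axioms involving $\eta^G$ and $\mu^G$ reduce, via naturality of $st$ in its first argument (applied to the maps $0 : 1 \to \cmon$ and $+ : \cmon \times \cmon \to \cmon$) together with the unit coherence $st_{1,-} = {}$ left unitor, to the unit and associativity laws of $\cmon$; these are routine. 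By Beck's theorem, $TG = S$ then carries a monad structure, with unit $\eta^S_A = \eta^T_{\cmon \times A} \circ \eta^G_A$ (so $\eta^S_A(a) = \eta^T(0,a)$) and Kleisli extension of $f : A \to T(\cmon \times B)$ given by $f^{\#_S} = \bigl(T\mu^G_B \circ st_{\cmon, \cmon \times B} \circ (\mathrm{id}_\cmon \times f)\bigr)^{\#_T}$, that is, bind in $T$ after accumulating the carried cost.

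It remains to equip $S$ with a strength. I would define $st^S_{A,B} : A \times T(\cmon \times B) \to T(\cmon \times (A \times B))$ as $T(\alpha_{A,B}) \circ st_{A, \cmon \times B}$, where $\alpha_{A,B} : A \times (\cmon \times B) \to \cmon \times (A \times B)$ is the canonical associativity--symmetry isomorphism $(a,(c,b)) \mapsto (c,(a,b))$ of the Cartesian structure. The strength axioms for $st^S$ --- compatibility with the unitor and associator of $\times$, and with $\eta^S$ and $(-)^{\#_S}$ --- then follow by pasting the corresponding strength axioms for $st$ together with the Mac Lane coherence of the symmetric monoidal structure on $\cat{C}$; since $\alpha$ is a natural isomorphism built solely from the structural isos, each diagram reduces to one already available for $T$.

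The main obstacle is bookkeeping rather than mathematical depth: the verifications in the second and third steps require threading the strength-coherence diagrams of $T$ through the accumulation map $\mu^G$ and the reassociation $\alpha$, and checking that the monoid laws of $\cmon$ line up exactly with the monad laws they are meant to supply. No new hypotheses beyond strength of $T$ and the monoid axioms are needed --- indeed the proof makes explicit that strength of $T$ is \emph{exactly} what is required to turn the writer monad into a distributive law, which is why the construction fails for non-strong $T$.
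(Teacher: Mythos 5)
Your proposal is correct and takes essentially the same route as the paper's own proof: both obtain the monad structure on $T(\cmon \times -)$ from the distributive law given by the strength of $T$ instantiated at $\cmon$, and both define the strength of the composite as $st^T$ followed by $T$ applied to the writer monad's strength (your reassociation map $\alpha_{A,B}$ is exactly $st^{\cmon \times -}$). The only difference is one of detail: you spell out the verification of Beck's four axioms, which the paper leaves implicit.
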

\begin{comment}
\begin{proof}
    The strength of a monad is a natural transformation $A \times T B \to T(A \times B)$. When
    instantiating $A$ to be $\cmon$, we can conclude that there is a distributive law between
    the writer monad and $T$, which allows us to conclude that $T(\cmon \times -)$ is a monad.
    Its strength is defined as 
$st^T;T (st^{\cmon \times -}) : A \times T(\cmon \times B) \to T( A \times (\cmon \times B)) \to T(\cmon \times (A \times B))$.
\end{proof}
\end{comment}

When instantiating $T$ to be the subprobability monad $P_{\leq 1}$, we get a monad for probabilistic cost,
which justifies the denotation of the program $(\charge 1; \produce 0) \oplus (\produce 2)$ being
a distribution of a pair of a cost and natural number. 

By using the monadic semantics of CBPV, we get a cost-aware probabilistic semantics, where
most of its definitions follow the standard monadic CBPV semantics shown in \Cref{app:cbpvsemantics} --- denoted
as $\semcs{\cdot}^v$ for values and $\semcs{\cdot}^c$ for computations. The noteworthy interpretations are 
for the effectful operations, whose semantics are depicted in \Cref{fig:semop}, and for the cost monoid,
which is interpreted as the additive natural numbers $(\nat, 0, +)$.
\begin{figure}
    \begin{mathpar}
        \semcs{\charge c} = \delta_{(c, ())}
        \and
        \semcs{\uniform} = \delta_0 \otimes \lambda
        \and \semcs{\fix x.\, t} = \bigsqcup_n \semcs{t}^n(\bot)
    \end{mathpar}
    \caption{Cost semantics of operations}
    \label{fig:semop}
\end{figure}

With this semantics, we now define the expected cost of a distribution:

\begin{definition}
\label{def:expected}
    Let $\mu : \Psub [0,\infty]$, its expected value is $\mathbb{E}(\mu) = \int_{[0,\infty]} x \diff\mu$.
\end{definition}

In the definition above we have chosen the most general domain for $\mathbb{E}$, but for every measurable subset 
$X\subseteq[0,\infty]$ the expected distribution formula can be restricted to distributions over $X$.
In particular, it is possible to restrict this function to have $\Psub(\nat)$ as its domain.

\begin{example}[Geometric Distribution]
    This semantics makes it possible to reason about the geometric distribution defined as the program\footnote{For the sake of simplicity we have elided the some of the bureaucracy of CBPV, such as $\produce$and $\force$.}
    $\cdot \vdash \fix x. 0 \oplus (1 + x) : F \nat$. It is possible to show that this program indeed denotes
    the geometric distribution by unfolding the semantics and obtaining the fixed point equation
    $\mu =  \frac{1}{2}(\delta_0 + P_{\leq 1}(\lambda x. 1 + x)(\mu))$, for which the geometric distribution
    is a solution. 
    
    Since we are interested in reasoning about the cost of programs, it is possible to reason about the
    expected amount of coins flipped during its execution by adding the $\charge 1$ operation as follows
    $\fix x.\, \mathsf{charge}_1; (0 \oplus (1 + x)) : F \nat$, i.e. whenever a new coin is flipped, as modeled 
    by the $\oplus$ operation, the cost increases by one. By construction, the cost distribution will also
    follow a geometric distribution. 
    
    If we want to compute the actual expected value, we must compute
    $\sum_{n:\nat} \frac{n}{2^{n}}$. This particular infinite sum can be calculated by using a standard trick.
    In the next section we show how to encode this trick in the semantics itself, 
    significantly simplifying the computation of the expected value.
\end{example}

\begin{theorem}(cf. \Cref{app:cbpvsemantics})
    For every computation $t$ and value $V$, $\semcs{\subst t V x} = \semcs{t}(\semcs{V})$. 
\end{theorem}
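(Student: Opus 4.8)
The plan is to prove this by a mutual structural induction on the well-typedness derivations of computations and values, after recasting the statement in the categorical form that the semantics actually takes. For a computation $\Gamma, x : \tau \vdashcomp t : \overline{\tau}$ the denotation is a morphism $\semcs{t}^c : \semcs{\Gamma} \times \semcs{\tau} \to \semcs{\overline{\tau}}$ in $\wqbs$, and for a value $\Gamma \vdashval V' : \tau$ it is $\semcs{V'}^v : \semcs{\Gamma} \to \semcs{\tau}$. So the precise claim, which I would prove simultaneously for values and computations, is that $\semcs{\subst t V x}^c = \semcs{t}^c \circ \langle \mathrm{id}_{\semcs{\Gamma}}, \semcs{V}^v \rangle$, and likewise $\semcs{\subst{V'}{V}{x}}^v = \semcs{V'}^v \circ \langle \mathrm{id}, \semcs{V}^v\rangle$. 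The mutual phrasing is essential because values embed computations through $\thunk$ and computations consume values everywhere.

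Before the main induction I would establish two standard auxiliary facts. First, that the denotation of a variable is the corresponding product projection, so that the base case $\semcs{\subst{x}{V}{x}}^v = \semcs{V}^v$ holds by $\pi \circ \langle \mathrm{id}, \semcs{V}^v\rangle = \semcs{V}^v$, and the case of a variable $y \neq x$ holds because its projection ignores the $\semcs{\tau}$ factor. Second, a weakening lemma stating that inserting a fresh variable into the context reindexes the denotation by precomposition with a projection; this is what lets me discharge the side conditions arising from $\alpha$-renaming in the binder cases, where the bound variable is chosen distinct from $x$ and not free in $V$.

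With these in hand, the induction proceeds clause by clause, by unfolding the compositional definition of $\semcs{\cdot}$ and applying the induction hypotheses to the subterms. The non-binding formers ($\produce$, $\force$, $\thunk$, application, pairing, $\mathsf{succ}$, $\mathsf{pred}$, if-then-else, $\mathsf{charge}$, $\uniform$) are routine: each denotation is a fixed structure morphism composed with the denotations of its immediate subterms, so the substitution passes through by naturality together with the induction hypothesis. The binder cases --- $\lambda$-abstraction, the sequencing $(y \leftarrow t); u$, $\letin{(y,z)}{V'}{t}$, and $\caseList{V'}{t}{u}$ --- require more bookkeeping: here the context grows, and I would use naturality of currying for the $\lambda$ case and naturality of the Kleisli extension $(-)^\#$ and of the strength of the monad $\Psub(\cmon \times -)$ for the sequencing case, combined with the exchange isomorphism that reassociates the extended context so that the substituted slot lines up correctly.

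The main obstacle is the $\fix$ case, where $\semcs{\fix y.\, t} = \bigsqcup_n \semcs{t}^n(\bot)$ is a supremum of finite approximants rather than a single combinator. Here I would first show, by an inner induction on the approximation index $n$, that each approximant commutes with the substitution, and then push the substitution through the supremum. The latter step is exactly where I rely on the ambient structure of $\wqbs$: composition and all the semantic combinators are Scott-continuous, since morphisms preserve suprema of ascending chains, so precomposition with $\langle \mathrm{id}, \semcs{V}^v\rangle$ commutes with $\bigsqcup_n$. This continuity argument, rather than any single syntactic clause, is the delicate part of the proof.
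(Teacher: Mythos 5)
Your proposal is correct and follows essentially the same route as the paper's proof: the paper also recasts the claim categorically as $\semcs{\subst t V x} = \semcs{t} \circ \langle \semcs{V}, \mathrm{id}\rangle$ and proceeds by mutual structural induction on the typing derivations of computations and values, discharging the non-binding cases by unfolding and naturality, the abstraction case by naturality of the Cartesian closed adjunction, and the sequencing case by naturality of the monad strength (exhibited there as a commutative diagram). The one point where you go beyond the paper is the $\fix$ case: the paper's appendix proof treats only the generic monadic CBPV fragment over a strong monad and omits recursion entirely, whereas your inner induction on the approximants $\semcs{t}^n(\bot)$ followed by the observation that precomposition with $\langle \mathrm{id}, \semcs{V}\rangle$ commutes with $\bigsqcup_n$ (by Scott-continuity of morphisms in $\wqbs$, with suprema computed pointwise) is precisely the argument needed to cover the full language; this is a genuine, if small, strengthening rather than a deviation.
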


\begin{theorem}(cf. \Cref{app:cbpvsemantics})
    For every computation context $C$, if $\semcs{t} = \semcs{u}$ then $\semcs{C[t]} = \semcs{C[u]}$.
\end{theorem}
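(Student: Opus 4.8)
The plan is to prove this by structural induction on the evaluation context $C$, using only the compositionality of the cost semantics: every term former is interpreted by a fixed semantic operation (a $\wqbs$-morphism or a monadic combinator) that depends on its immediate subterms solely through their denotations. Thus for each context former $\mathcal F$ with sub-context $C'$, the defining clause of $\semcs{\cdot}$ lets me write $\semcs{C[t]} = G_{\mathcal F}(\semcs{C'[t]})$, where $G_{\mathcal F}$ is an operation determined by the remaining, hole-free subterms of $C$, and $\semcs{C'[t]}$ is the only place where $t$ enters. Note that this is genuinely a statement about compositionality and not about the value-substitution lemma stated just above: the context fill $C[t]$ is a non-capture-avoiding plug of a computation into a computation position, not a substitution of a value for a variable.

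In the base case $C = [\,]$ we have $C[t] = t$ and $C[u] = u$, so the conclusion reduces to the hypothesis $\semcs t = \semcs u$.

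For the inductive step I would run through the productions of the context grammar. For the formers that do not place the hole under a binder --- $\app {C'} V$, the two branches $\ifthenelse V {C'} u$ and $\ifthenelse V t {C'}$, the bound computation $(x \leftarrow C'); u$, $\mathsf{succ}\,C'$, $\mathsf{pred}\,C'$, and the $\mathsf{nil}$-branch context $\caseList x {C'} u$ --- the relevant semantic clause exhibits $\semcs{C[t]}$ as a fixed morphism or monadic combinator applied to $\semcs{C'[t]}$ (for the bind and case formers this is a Kleisli extension or case eliminator with the other components held fixed). The induction hypothesis gives $\semcs{C'[t]} = \semcs{C'[u]}$, and since $G_{\mathcal F}$ is a bona fide function of that denotation we conclude $\semcs{C[t]} = \semcs{C[u]}$. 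No continuity or fixpoint argument is needed here, since $\fix$ is not a context former.

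The only real subtlety, and what I expect to be the main obstacle, is the binding formers $\lamb x {C'}$, the continuation context $(x \leftarrow t); C'$, $\letbe x V {C'}$, $\letin {(x,y)} V {C'}$ and the $\mathsf{cons}$-branch context $\caseList x t {C'}$, where the hole sits under a binder. There the equation $\semcs t = \semcs u$ must be read in the context of the hole, namely $\Gamma$ extended by the bound variable(s). To handle this cleanly I would strengthen the induction hypothesis so that $t$ and $u$ range over computations typed at the context of the hole, and use the fact that the equation is stable under weakening: if $\semcs t = \semcs u$ as morphisms $\sem\Gamma \to \sem{\overline\tau}$, then precomposing with the projection $\sem{\Gamma, x : \tau} \to \sem\Gamma$ (the denotation of weakening) preserves the equality, so the hypothesis transports to the extended context in which the induction hypothesis is invoked. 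With the statement phrased this way, the binder clauses --- interpreted by currying for $\lambda$, by Kleisli extension for the bind continuation, and by the product and list eliminators for the $\mathsf{let}$ and $\mathsf{case}$ formers --- are once more fixed operations applied to $\semcs{C'[t]}$, and the induction closes.
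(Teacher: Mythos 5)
Your proposal is correct and follows essentially the same route as the paper, which proves this statement by structural induction on the context $C$ (the paper leaves all case analysis implicit in a one-line proof, while you spell out the compositionality argument and the treatment of binding formers via a hole-context-indexed induction hypothesis). The extra care you take with binders is sound and consistent with how the paper's monadic semantics interprets each context former as a fixed operation on the denotation of its sub-context.
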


\subsection{A semantics for expected cost}
\label{sec:expectsem}

The semantics just proposed can compositionally compute the cost distribution of programs, but compositionality
is broken when computing its expected cost. Indeed, after computing the distribution, 
we must compute the expected cost of an arbitrarily complex distribution. We fix this by defining a novel expected cost monad.

We achieve this by defining a monad structure on the functor $\weight \times \Psub$: every computation will be
denoted by an extended positive real number, i.e. its expected cost, and a subprobability distribution over its output.
The monad's  unit at a point $a : A$ is the pair $(0, \delta_a)$ and the bind operation $(-)^\#$ adds the expected cost of the input
with the average of the expected cost of the output. Formally,
given an $\wqbs$ morphism $f : X \to [0,\infty] \times \Psub Y$, its Kleisli extension is the function 
$f^\#(r, \mu) = (r + \int (\pi_1 \circ f) \diff \mu, (\pi_2\circ f)^\#_{\Psub}(\mu))$.

\begin{theorem}
\label{th:expectmonad}
    The triple $([0,\infty] \times \Psub, \eta, (-)^\#)$ is a strong monad.
\end{theorem}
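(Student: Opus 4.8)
The plan is to verify the three monad laws and the strength axioms directly, leveraging the fact that $\Psub$ is already known to be a strong commutative monad on $\wqbs$ (stated earlier). The monad in question is $M = [0,\infty] \times \Psub$, with unit $\eta_X(a) = (0, \delta_a)$ and Kleisli extension $f^\#(r,\mu) = \bigl(r + \int (\pi_1 \circ f)\,\diff\mu,\ (\pi_2 \circ f)^\#_{\Psub}(\mu)\bigr)$ for $f : X \to [0,\infty] \times \Psub Y$. I would first confirm that these data actually live in $\wqbs$: the pairing of the cost summand $r + \int(\pi_1\circ f)\diff\mu$ with the $\Psub$-bind on the second component must be a measurable (Scott-continuous, random-element-preserving) morphism. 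Measurability of the integral-against-a-distribution map is exactly what the previously stated morphism $\int_A$ provides, and Scott-continuity in $r$ and $\mu$ follows because addition, integration, and $(-)^\#_{\Psub}$ are each continuous. This bookkeeping is routine but should be stated.

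Next I would check the three Kleisli-triple laws. For the \emph{left unit} law $\eta^\#(r,\mu) = (r,\mu)$: on the first component $r + \int (\pi_1 \circ \eta)\,\diff\mu = r + \int 0\,\diff\mu = r$, and on the second $(\pi_2 \circ \eta)^\#_{\Psub}(\mu) = (\delta_{(-)})^\#_{\Psub}(\mu) = \mu$ by the $\Psub$ unit law. For the \emph{right unit} law $f^\#(\eta_X(a)) = f(a)$: the first component is $0 + \int (\pi_1\circ f)\,\diff\delta_a = (\pi_1\circ f)(a)$ since integrating against a Dirac measure evaluates the integrand, and the second is $(\pi_2\circ f)^\#_{\Psub}(\delta_a) = (\pi_2\circ f)(a)$ by the corresponding $\Psub$ law. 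The \emph{associativity} law $(g^\# \circ f)^\# = g^\# \circ f^\#$ is the crux: the second components reduce immediately to $\Psub$-associativity, while the first components require the identity
\[
r + \int \pi_1\bigl(f(x)\bigr)\,\diff\mu \ \text{composed through}\ g
\]
to match on both sides, which unfolds to a use of the change-of-variables/Fubini-style identity $\int h\,\diff\,(k^\#_{\Psub}\mu) = \int \bigl(\int h\,\diff\,k(y)\bigr)\diff\mu(y)$ for the $\Psub$ bind, together with linearity of the integral to split the additive cost contributions. The main obstacle will be carrying out this first-component associativity cleanly: one must show that accumulating the cost $\int(\pi_1\circ f)\diff\mu$ and then $\int(\pi_1\circ g)\diff(\cdots)$ equals the single accumulated cost obtained by composing first, and this is precisely where the interchange law for $\int$ against a pushed-forward $\Psub$-measure is needed.

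Finally I would exhibit the strength. Since $\Psub$ carries a strength $st^{\Psub}_{A,B} : A \times \Psub B \to \Psub(A \times B)$, I would define $st^M_{A,B} : A \times ([0,\infty]\times \Psub B) \to [0,\infty]\times \Psub(A\times B)$ by $\bigl(a, (r,\mu)\bigr) \mapsto \bigl(r,\ st^{\Psub}_{A,B}(a,\mu)\bigr)$, i.e.\ leaving the cost scalar untouched and applying the $\Psub$-strength on the distribution. The required coherence diagrams (compatibility with the unit, with the multiplication, and the associativity pentagon) then decompose componentwise: the $\Psub$ component inherits each diagram from the known strength of $\Psub$, and the $[0,\infty]$ component is trivial because the strength acts as the identity there and the cost is merely passed along additively by $\eta$ and $(-)^\#$. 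I expect no difficulty here beyond diagram-chasing, so the essential content of the theorem is the first-component associativity computation flagged above.
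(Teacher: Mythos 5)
Your proposal is correct and follows essentially the same route as the paper's proof: verify the laws componentwise, with the second component inherited directly from $\Psub$, the unit laws on the cost component being immediate, and associativity on the cost component reducing to the integral-interchange identity for the $\Psub$ bind. You are in fact slightly more thorough than the paper, which leaves the strength and the $\wqbs$-morphism bookkeeping implicit, whereas you construct $st^M(a,(r,\mu)) = (r, st^{\Psub}(a,\mu))$ and check its coherence explicitly.
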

\begin{proof}
    The proof can be found in \Cref{app:proofs}.
\end{proof}

With this monad it is possible to define a new semantics for $\meta$ that interprets
the effectful operations a bit differently from the cost semantics, as we depict in
\Cref{fig:semopex}, where $\semec{\cdot}^c$ is the computation semantics while $\semcs{\cdot}^v$ is
the value semantics; the cost monoid is still interpreted as $\nat$. 

\begin{example}[Revisiting the geometric distribution]
Unfolding the expected cost $\pi_1 \circ \semec{\mathsf{geom}}$ gives us the fixed point equation
$E = 1 + (1 - \frac 1 2) E$, i.e. $E = 2$. This can be readily generalized to arbitrary $p\in[0,1]$,
giving the equation $E_p = 1 + (1 - p) E_p$.
\end{example}

As we have noted in the previous section,
the cost semantics can be used to reason about the expected cost by using \Cref{def:expected}.
Something which will play an important role in our soundness proof is the fact that this
definition interacts well with the monadic structure of $\Psub$.

\begin{figure}
    \begin{mathpar}
        \semec{\charge c} = (c, \delta_{()})
        \and
        \semec{\uniform} = (0, \lambda )
        \and \sem{\fix x.\, t} = \bigsqcup_n \semec{t}^n(\bot)
    \end{mathpar}
    \caption{Expected cost semantics of operations}
    \label{fig:semopex}
\end{figure}

\begin{lemma}
\label{lem:expected_linear}
    Let $\mu : \Psub A$ and $f : A \to \Psub ([0,\infty])$, $\mathbb E(f^\#(\mu)) = \int_A \mathbb E(f(a))\mu(\diff a)$.
\end{lemma}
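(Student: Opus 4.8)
The plan is to unfold both sides into iterated integrals and recognize the identity as an instance of Tonelli's theorem. First I would recall, from the description of $\Psub$ as a measure-theoretic subprobability monad, that its Kleisli extension acts by integration: for every measurable set $B \subseteq [0,\infty]$,
\[
f^\#(\mu)(B) = \int_A f(a)(B)\, \mu(\diff a).
\]
That is, $f$ is a Markov kernel from $A$ to $[0,\infty]$ and $f^\#(\mu)$ is the measure obtained by averaging this kernel against $\mu$.

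Next I would expand the left-hand side using \Cref{def:expected}, reducing the goal to the interchange
\[
\int_{[0,\infty]} x\, \Big(\int_A f(a)(\diff x)\,\mu(\diff a)\Big) = \int_A \Big(\int_{[0,\infty]} x\, f(a)(\diff x)\Big)\mu(\diff a),
\]
whose right-hand side is $\int_A \mathbb{E}(f(a))\,\mu(\diff a)$ by \Cref{def:expected} again. This is exactly the interchange of integration against the base measure $\mu$ with integration against the kernel $f$, where the integrand $x$ is nonnegative.

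To make the interchange self-contained I would run the standard measure-theoretic machine rather than cite Tonelli as a black box. The identity holds for indicators $x \mapsto \mathbf{1}_B$ immediately from the displayed formula for $f^\#(\mu)$; it extends to nonnegative simple functions by linearity of the integral; and it passes to the identity function $x \mapsto x$ by choosing an increasing sequence of simple functions $s_n \uparrow \mathrm{id}$ and applying the monotone convergence theorem twice --- once for the inner integral against each $f(a)$, yielding $\int s_n\, \diff f(a) \uparrow \mathbb{E}(f(a))$, and once for the outer integral against $\mu$.

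The main obstacle is measure-theoretic hygiene inside $\wqbs$ rather than any genuine difficulty. One must check that $a \mapsto \mathbb{E}(f(a))$ is a bona fide $\wqbs$-morphism so that the outer integral is well-defined; this holds because $f$ is a morphism and $\mathbb{E}$ is the morphism $\int_{[0,\infty]}$ supplied by the category, so their composite is again a morphism. One should likewise confirm that the kernel form of monotone convergence and Tonelli is available here, which it is, since $\Psub$ is built to mirror the measure-theoretic subprobability monad and the relevant measurability facts transport along the embedding $\cat{Sbs} \to \wqbs$.
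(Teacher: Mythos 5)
Your proposal is correct, and it follows the paper's overall decomposition --- unfold $\mathbb{E}$ and the Kleisli extension, then reduce the lemma to the interchange
\[
\int_{[0,\infty]} x \left( \int_A f(a)\,\mu(\diff a)\right)(\diff x) \;=\; \int_A \int_{[0,\infty]} x\, f(a)(\diff x)\,\mu(\diff a).
\]
Where you diverge is in how that interchange is justified. The paper dispatches it in one line by citing the fact that $\Psub$ is a \emph{commutative} monad on $\wqbs$ (a theorem of V\'ak\'ar et al.), which is precisely the categorical packaging of Fubini/Tonelli for quasi-Borel integration. You instead rebuild the interchange by hand: the Kleisli unfolding gives the identity on indicators essentially by definition of bind in the continuation-submonad presentation of $\Psub$, linearity extends it to simple functions, and two applications of monotone convergence pass to $x \mapsto x$. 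Both routes are sound. The paper's buys brevity and robustness --- commutativity of $\Psub$ would carry the argument even if both integrals ranged over general $\wqbs$ objects rather than the concrete space $[0,\infty]$. Yours buys self-containedness and makes visible exactly which measure-theoretic ingredients are consumed (kernel identity, linearity, MCT); it works here precisely because one leg of the interchange lives in the standard Borel space $[0,\infty]$, so $f^\#(\mu)$ and each $f(a)$ are honest measures and the simple-function ladder is available, and because integration against a quasi-Borel measure reduces to ordinary integration on $\R$ along a random element, so MCT and linearity transport. Your closing point about $a \mapsto \mathbb{E}(f(a))$ being a $\wqbs$-morphism (composite of $f$ with the integration morphism) is a hygiene check the paper leaves implicit, and it is correctly resolved.
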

\begin{proof}
    This can be proved by unfolding the definitions 
    \begin{align*}
        &E(f^\#(\mu)) = \int_{[0,\infty]} x \left ( \int_A f(a) \mu(\diff a)\right ) (\diff x) = \int_A \int_{[0,\infty]} x f(a)(\diff x)\mu(\diff a) = \int_A \mathbb E (f(a))\mu(\diff a)
    \end{align*}
    In the third equation we had to reorder the integrals, which is valid
    because $\Psub$ is commutative.
\end{proof}

% The equality above can be intuitively explained as the expected cost of running $\mu$,
% producing an output $a$ and running $f(a)$ being equal the cost of running $\mu$ plus
% the average of running $f$ given the input distribution. 

With this lemma in mind, we state some basic definitions and lemmas that allows us to describe
precisely how the cost and expected cost semantics relate.

\begin{definition}
    A monad morphism is a natural transformation $\gamma : T \to S$, where $(T, \eta^T, (-)^\#_T)$ 
    and $(S, \eta^S, (-)^\#_S)$ are monads over the same category, such that $\gamma \circ \eta^T = \eta^S$
    and $(\gamma \circ g)^\#_S \circ \gamma= \gamma \circ g^\#_T$, for every $g: A \to T B$.
\end{definition}

\begin{theorem}
\label{th:monmorphism}
    There is a monad morphism $E : P(\nat \times -) \to [0,\infty] \times P$.
\end{theorem}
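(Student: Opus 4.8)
The plan is to exhibit $E$ explicitly and then verify the two monad-morphism equations. Writing $\pi_1 : \nat \times A \to \nat$ and $\pi_2 : \nat \times A \to A$ for the projections, define at each space $A$
\[
  E_A(\nu) = \bigl(\,\mathbb{E}(P(\pi_1)(\nu)),\ P(\pi_2)(\nu)\,\bigr),
\]
so that $E$ sends a joint distribution over costs and values to the pair consisting of its expected cost (the expectation of its $\nat$-marginal) and its value-marginal. First I would check that $E_A = \langle \mathbb{E}\circ P(\pi_1),\, P(\pi_2)\rangle$ is a $\wqbs$ morphism: $P(\pi_1)$ and $P(\pi_2)$ are morphisms by functoriality of $P$, while $\mathbb{E}$ is a morphism by \Cref{def:expected} together with the integration morphism of $\wqbs$. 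Naturality in $A$ follows from $\pi_1 \circ (\mathrm{id}_\nat \times h) = \pi_1$ and $\pi_2 \circ (\mathrm{id}_\nat \times h) = h \circ \pi_2$ together with functoriality of $P$. Unit preservation is a one-line computation, since $E_A(\delta_{(0,a)}) = (\mathbb{E}(\delta_0),\, \delta_a) = (0, \delta_a) = \eta^S_A(a)$.

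The substance of the proof is bind preservation, i.e. $E \circ g^\#_T = (E \circ g)^\#_S \circ E$ for every $g : A \to P(\nat \times B)$, where $T = P(\nat \times -)$ and $S = \weight \times P$ is the expected cost monad of \Cref{th:expectmonad}. Fix $\nu \in P(\nat \times A)$ and abbreviate $(r, \mu) = E_A(\nu)$, so that $r = \mathbb{E}(P(\pi_1)(\nu))$ and $\mu = P(\pi_2)(\nu)$. Unfolding the writer-transformer bind, $g^\#_T(\nu)$ is the law of $(n + m, b)$ where $(n,a)$ is drawn from $\nu$ and $(m,b)$ from $g(a)$. I would then split the desired equality into its two components and compare against the $S$-bind formula $(E\circ g)^\#_S(r,\mu) = \bigl(r + \int_A \pi_1((E\circ g)(a))\,\mu(\diff a),\ (\pi_2\circ(E\circ g))^\#_P(\mu)\bigr)$.

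For the value-component I would use that the $B$-marginal of $g^\#_T(\nu)$ does not depend on the cost coordinates: it is the law of $b$ with $a \sim \mu$ and $b \sim P(\pi_2)(g(a))$, which is exactly $(\pi_2\circ(E\circ g))^\#_P(\mu)$ by the monad laws and functoriality of $P$. The main obstacle is the cost-component, where the expectation of $n+m$ must be split and the resulting iterated integral reordered:
\[
  \mathbb{E}(P(\pi_1)(g^\#_T\nu)) = \int_{\nat\times A}\!\Bigl(n + \mathbb{E}(P(\pi_1)(g(a)))\Bigr)\nu(\diff(n,a)) = r + \int_A \pi_1((E\circ g)(a))\,\mu(\diff a).
\]
The first equality is precisely the additive splitting supplied by \Cref{lem:expected_linear} together with commutativity of $P$ (a Tonelli-style reordering of the double integral), and the second marginalizes out $n$ against $\mu = P(\pi_2)(\nu)$. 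Both components then agree with the $S$-bind, establishing the morphism law. I expect the only genuine friction to be the bookkeeping of measurability of the integrands, so that \Cref{lem:expected_linear} applies verbatim.
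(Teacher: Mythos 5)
Your proposal is correct and takes essentially the same route as the paper: the paper defines the identical morphism $E_A(\mu) = (\mathbb{E}(P(\pi_1)(\mu)), P(\pi_2)(\mu))$, checks the unit law by inspection, and obtains bind preservation from \Cref{lem:expected_linear} restricted to distributions of total mass $1$. The only point worth flagging is that your additive splitting $\int_{\nat\times A}\bigl(n + \mathbb{E}(P(\pi_1)(g(a)))\bigr)\,\nu(\diff(n,a))$ silently uses that each $g(a)$ has mass $1$ (otherwise the $n$ term would be weighted by $\|g(a)\|$, which is exactly why the result fails for subprobabilities, cf. \Cref{lem:monmor}), and this is precisely the paper's ``restricting to probabilistic distributions'' caveat.
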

\begin{proof}
    We define the morphism $E_A(\mu) = (\mathbb{E}(P(\pi_1)(\mu)), P(\pi_2)(\mu))$. The first monad morphism 
    equation follows by inspection and the second one follows mainly from \Cref{lem:expected_linear},
    when restricting it to the probabilistic distributions, i.e. total mass equal to $1$.
\end{proof}

\begin{lemma}
    \label{lem:monmor}
    The natural transformation $E$, when extend to subprobability distributions, is not a monad morphism.
\end{lemma}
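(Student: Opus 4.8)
The plan is to produce a single explicit counterexample to one of the two monad-morphism equations. First I would note that the unit equation survives the extension: at a point $a$ we have $\eta^{\Psub(\nat \times -)}(a) = \delta_{(0,a)}$, and so $E(\delta_{(0,a)}) = (\mathbb{E}(\delta_0), \delta_a) = (0, \delta_a) = \eta^{\weight \times \Psub}(a)$, exactly as in the proof of \Cref{th:monmorphism}. Hence any failure must live in the bind-preservation law, and it suffices to exhibit $g : A \to \Psub(\nat \times B)$ and $\mu : \Psub(\nat \times A)$ with
\[
E_B\bigl(g^\#_T(\mu)\bigr) \;\neq\; (E_B \circ g)^\#_S\bigl(E_A(\mu)\bigr),
\]
where $T = \Psub(\nat\times-)$ and $S = \weight \times \Psub$.

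Before choosing the witness I would isolate the conceptual source of the discrepancy. The bind of $T$ accumulates cost only along the mass that the continuation $g$ actually returns, so whenever $g$ loses mass --- that is, when it models nontermination --- the cost already recorded in $\mu$ along that lost mass is silently discarded. The expected-cost bind of $S$, by contrast, adds the incoming expected cost $\mathbb{E}(\Psub(\pi_1)(\mu))$ \emph{unconditionally}. These two accountings coincide precisely when $g$ lands in genuine probability distributions of total mass $1$, which is exactly the hypothesis under which \Cref{th:monmorphism} holds; they must come apart as soon as subprobabilities of mass below $1$ are permitted.

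Making this precise, write $m_g(a)$ for the total mass of $g(a)$ and $c(a) = \int_{\nat\times B} m\, g(a)(\diff(m,b))$ for its unnormalized expected cost. Expanding both binds, and using that $\Psub(\pi_2)(\mu)$ is the $A$-marginal of $\mu$ so that $\int_A c(a)\,\Psub(\pi_2)(\mu)(\diff a) = \int_{\nat\times A} c(a)\,\mu(\diff(n,a))$, the first components of the two sides become
\[
\int_{\nat\times A} n\, m_g(a)\, \mu(\diff(n,a)) + \int_{\nat\times A} c(a)\, \mu(\diff(n,a))
\]
and
\[
\int_{\nat\times A} n\, \mu(\diff(n,a)) + \int_{\nat\times A} c(a)\, \mu(\diff(n,a)),
\]
which agree iff the factor $m_g(a)$ may be dropped. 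I would then take $A = B = 1$, $\mu = \delta_{(1,())}$, and $g(()) = \tfrac12 \delta_{(0,())}$. Here $g^\#_T(\mu) = \tfrac12 \delta_{(1,())}$, so $E_B(g^\#_T(\mu)) = (\tfrac12, \tfrac12\delta_{()})$, whereas $E_A(\mu) = (1, \delta_{()})$ and $(E_B\circ g)(()) = (0, \tfrac12\delta_{()})$ give $(E_B\circ g)^\#_S(E_A(\mu)) = (1, \tfrac12\delta_{()})$; the first components $\tfrac12$ and $1$ differ.

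The arithmetic is entirely routine, so there is no genuine analytic obstacle; the only thing needing care is choosing the witness so the mismatch is genuinely \emph{forced} rather than an artifact of some incidental normalization. The general first-component calculation above is what guarantees this: the discrepancy is the term $\int_{\nat\times A} n\,(m_g(a) - 1)\, \mu(\diff(n,a))$, which is nonzero exactly when $\mu$ records positive cost on mass that $g$ fails to preserve, and the minimal pair $(\mu, g)$ above realizes this with the cleanest possible numbers.
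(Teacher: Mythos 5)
Your proof is correct and takes essentially the same approach as the paper: the paper also refutes the bind-preservation law with an explicit witness in which the continuation is a mass-losing subprobability kernel (there, $\mu = \tfrac 1 2(\delta_{(0,1)} + \delta_{(1,2)})$ with a kernel sending all mass in the support to the zero measure), so that cost carried by mass killed by the continuation is discarded on one side but counted unconditionally on the other. Your witnesses on the unit type are a cleaner minimal instance of the same phenomenon, and your general discrepancy term $\int n\,(m_g(a)-1)\,\mu(\diff(n,a))$ is a nice articulation of exactly why the paper's example works.
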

\begin{proof}
     Let $\frac 1 2 (\delta_{(0,1)} + \delta_{(1, 2)})$
    be a distribution over $\cmon \times \nat$ and $f(0) = \frac 1 2 \delta_0$, $f(n+1) = 0$ be a subprobability
    kernel. It follows by inspection that $((E \circ f)^\#_{[0,\infty] \times \Psub} \circ E) (\mu) \neq (E \circ f^\#_{\Psub(\nat \times -)})(\mu)$
\end{proof}

\Cref{th:monmorphism} says that the different cost semantics interact
well in the probabilistic case. In the subprobabilistic case this is not 
true, as illustrated by \Cref{lem:monmor}. This formalizes the intuitions
behind the subtleties in the interaction of expected cost and non-termination 
explained in the previous section. This semantics also validates the following compositionality properties.

\begin{theorem}(cf. \Cref{app:cbpvsemantics})
    For every computation $t$ and value $V$, $\semec{\subst t V x} = \semec{t}(\semec{V})$. 
\end{theorem}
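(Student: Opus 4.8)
The plan is to prove the statement by mutual structural induction on the value term $V$ and the computation term $t$, following the standard pattern for substitution lemmas in monadic denotational semantics. The precise statement I would establish is the environment-level equation: for every well-typed computation $\Gamma, x : \tau \vdashcomp t : \overline{\tau}$ and value $\Gamma \vdashval V : \tau$, and every environment $\rho \in \sem{\Gamma}$, we have $\semec{\subst t V x}^c(\rho) = \semec{t}^c(\rho, \semec{V}^v(\rho))$, together with the analogous identity for value subterms, $\semec{\subst W V x}^v(\rho) = \semec{W}^v(\rho, \semec{V}^v(\rho))$. Since $\meta$ is interpreted through the standard monadic CBPV semantics, here instantiated with the expected cost monad $[0,\infty] \times \Psub$ of \Cref{th:expectmonad}, each syntactic former is interpreted by a fixed categorical combinator, so every inductive case amounts to unfolding the interpretation and rewriting with the induction hypothesis.

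First I would dispatch the base and structural cases. The variable case splits on whether the variable is $x$, where both sides reduce to $\semec{V}^v(\rho)$, or a different variable, where $V$ does not occur and both sides agree by the projection semantics. Constants, pairing, $\produce$, $\force$, $\thunk$, abstraction, application, sequencing $(x \leftarrow t); u$, the conditional, and the list constructors and eliminator all follow uniformly: the interpretation of a compound term is a composite built from the interpretations of its immediate subterms via the monad's unit, bind, strength, and the Cartesian closed structure of $\wqbs$, and substitution distributes over these subterms by the induction hypothesis. The effectful constants $\charge c$ and $\uniform$ are interpreted by morphisms independent of the environment component being substituted, so they are immediate.

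The main obstacle is the $\fix$ case, where the semantics is not a finite combinator but a supremum $\semec{\fix y.\, t} = \bigsqcup_n \semec{t}^n(\bot)$ of finite unfoldings. Taking $x \neq y$ and assuming the usual capture-avoidance so that $\subst{(\fix y.\, t)}{V}{x} = \fix y.\, \subst t V x$, I would argue as follows. By the induction hypothesis applied to $t$ in the context extended with $y$, the two functionals $a \mapsto \semec{\subst t V x}^c(\rho, a)$ and $a \mapsto \semec{t}^c(\rho, \semec{V}^v(\rho), a)$ are \emph{equal} as endomorphisms of the domain $\sem{U \overline \tau}$ on which the fixed point is taken. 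Equal functionals have equal iterates from $\bot$, so a side induction on $n$ shows their $n$-th approximants coincide; taking the supremum, and using that these suprema exist and are computed pointwise in the $\omega$CPO structure of $\wqbs$, yields the desired identity for $\fix$. The only genuine content here is that the interpreting morphisms are Scott-continuous, which is precisely what guarantees that substitution, a precomposition by a continuous map, commutes with the supremum defining the fixed point; this continuity is already part of the ambient $\wqbs$ semantics, so the case goes through.
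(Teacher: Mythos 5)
Your proposal is correct and follows essentially the same route as the paper: the paper's proof (in \Cref{app:cbpvsemantics}) proceeds by exactly the mutual structural induction you describe, unfolding the generic monadic CBPV interpretation case by case (variables, pairs, thunk/force, produce, sequencing, abstraction, application) and rewriting with the induction hypothesis, with your environment-level statement being the pointwise form of the paper's compositional form $\semec{\subst t V x} = \semec{t} \circ \langle\semec{V}, id\rangle$. One genuine difference: the paper's appendix proof covers only the core CBPV constructs and silently omits the $\fix$ case, whereas you treat it explicitly; your argument there is sound, although note that once the induction hypothesis gives \emph{equality} of the two functionals, equal iterates from $\bot$ and equal suprema follow immediately, so the appeal to Scott-continuity of precomposition is redundant at that step --- continuity is only needed to know that the fixed point exists and is computed as $\bigsqcup_n$ of the iterates in the first place (together with the standard weakening fact that $\semec{V}$ does not depend on the freshly bound recursion variable).
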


\begin{theorem}(cf. \Cref{app:cbpvsemantics})
    For every context $C$, if $\semec{t} = \semec{u}$ then $\semec{C[t]} = \semec{C[u]}$.
\end{theorem}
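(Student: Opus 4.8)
The plan is to prove this by structural induction on the evaluation context $C$, exploiting the fact that $\semec{\cdot}$ is defined by structural recursion on terms exactly as the monadic CBPV semantics of \Cref{app:cbpvsemantics}, only with the expected cost monad $[0,\infty] \times \Psub$ in place of a generic strong monad. Concretely, each term former $\mathsf{f}$ carries an associated semantic operation $\sem{\mathsf{f}}$ such that $\semec{\mathsf{f}(t_1,\dots,t_k)}$ is $\sem{\mathsf{f}}$ applied to the denotations $\semec{t_1},\dots,\semec{t_k}$ of the immediate subterms, together with the denotations of any value subterms. The theorem is therefore an instance of the slogan that a compositionally defined semantics is a congruence. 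Rather than prove the bare implication, I would establish the marginally stronger statement that for every context $C$ the assignment $s \mapsto \semec{C[s]}$ factors through $s \mapsto \semec{s}$: there is a fixed map $\Phi_C$ with $\semec{C[s]} = \Phi_C(\semec{s})$ for every $s$ making $C[s]$ well-typed. The theorem then follows at once, since $\semec{C[t]} = \Phi_C(\semec{t}) = \Phi_C(\semec{u}) = \semec{C[u]}$.

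The base case $C = [\,]$ is trivial, with $\Phi_C$ the identity. In each inductive case the context has the form $\mathsf{f}(\dots, C', \dots)$, where the hole sits in a single computation position $C'$ and the remaining arguments are hole-free. Take $C = \app{C'}{V}$: by the application clause, $\semec{(\app{C'}{V})[s]} = \semec{\app{C'[s]}{V}}$ postcomposes the semantic application $\sem{\app{-}{-}}(-, \semec{V})$ with $\semec{C'[s]}$; the inductive hypothesis gives $\semec{C'[s]} = \Phi_{C'}(\semec{s})$, so setting $\Phi_C = \sem{\app{-}{-}}(-,\semec{V}) \circ \Phi_{C'}$ discharges the case. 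The non-binding sequencing context $(x \leftarrow C'); u$, the two conditional contexts, $\mathsf{succ}\,C'$, $\mathsf{pred}\,C'$, and the nil-branch list context $\caseList{V}{C'}{u}$ are handled identically, the relevant semantic operation being Kleisli composition in $[0,\infty]\times\Psub$, the semantic conditional, the successor/predecessor maps, and the list eliminator of \Cref{app:cbpvsemantics}, respectively.

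The only point requiring care is the binding contexts, namely $\lamb{x}{C'}$, $\letbe{x}{V}{C'}$, $\letin{(x,y)}{V}{C'}$, the continuation $(x \leftarrow t); C'$, and the cons-branch list context $\caseList{V}{t}{C'}$, where the hole of $C'$ lies under freshly bound variables and so is typed in an extended context $\Gamma, \Delta$. Here the inductive hypothesis is invoked at that extended context, which is legitimate because the semantic clauses for these formers are fixed categorical operations on the denotation of the body: currying (transpose) for $\lamb{x}{C'}$, precomposition with the appropriate pairing and projection maps for the $\mathsf{let}$ and $\letin{(x,y)}{V}{C'}$ binders, and strength followed by Kleisli composition for the continuation. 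Since substitution into $C$ is non-capture-avoiding, one also checks that well-typedness is preserved, which is exactly the content of the substitution and weakening lemmas of \Cref{sec:probcbpv}; moreover the hypothesis $\semec{t} = \semec{u}$ remains valid under the weakening induced by the added binders, as weakening acts by precomposition with a projection. No continuity or fixed-point reasoning is needed, because the grammar of evaluation contexts contains no $\fix$ former, so the recursion case never arises in $C$ and the supremum appearing in $\sem{\fix x.\,t}$ plays no role. The main obstacle is therefore purely bookkeeping: tracking the context extensions introduced by binders and confirming that each semantic former is genuinely a function of its subterm denotations. There is no substantive mathematical difficulty, the result being a routine congruence property of a compositional semantics.
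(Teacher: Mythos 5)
Your proof is correct and takes essentially the same route as the paper, whose proof of this theorem is simply ``by induction on the context $C$'' relying on the compositional (monadic) definition of $\semec{\cdot}$ in \Cref{app:cbpvsemantics}. Your strengthening to the factorization $\semec{C[s]} = \Phi_C(\semec{s})$, the case analysis over context formers, and the observation that $\fix$ never occurs in evaluation contexts are exactly the details the paper leaves implicit.
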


\section{Soundness Theorems}
\label{sec:soundness}

We have proposed four ways of reasoning about expected cost: by using the equational theory, the operational semantics 
or by using either of the denotational semantics. Something which is not clear at first is the extent of how
much these semantics are reasoning about the same property. We begin this section by proving soundness properties of
the expected cost semantics with respect to the denotational cost semantics by stating and proving a generalization of 
the \emph{effect simulation problem} \cite{filinski1996controlling}. Due to subtle interactions between cost, probability 
and non-termination, we provide separate analyses for the probabilistic and subprobabilitic cases.

In order to justify that the denotational semantics is reasoning about an operational notion of cost, we prove standard
soundness and adequacy results of the denotational cost semantics with respect to the operational cost semantics. We also
show that both cost and expected cost semantics are sound with respect to the equational theory.

% Furthermore, the subtle interactions between cost, probability
% and non-termination, makes 

% we divide our analysis of denotational soundness in the probabilistic and non-probabilistic cases.

% Now, we want to understand how they relate to one
% another. When we restrict the language and semantics to the probabilistic case, i.e. without unbounded
% recursion, we can prove strong guarantees about the different semantics. For instance, the expected cost
% semantics gives the same value as the expected cost of the cost distribution in the cost semantics. Furthermore,
% both of these semantics satisfy the same equations. 

% Unfortunately, in the presence of unbounded recursion,
% i.e. subprobability distributions, the two semantics are not ``the same'' anymore. What we show in this section
% is that in the presence of subprobability distributions, the expected cost semantics is an upper bound on the 
% expected cost of the cost distribution. Besides, at the equational level, the expected cost semantics is better
% suited for capturing the intended equational properties of cost, as we explain in \Cref{sec:eqsound}. We conclude 
% the section by proving soundness and adequacy results of the cost semantics with respect to the operational semantics.

\subsection{Denotational Soundness}
\label{sec:costsound}

The soundness property we are interested in is the following: given a closed program $\cdot \vdashcomp t : F \tau$,
then the expected value for the second marginal of $\semcs{t}$ is equal to $\pi_1(\semec{t})$. In the literature, 
similar properties have been called the \emph{effect simulation problem} and many semantic techniques have been 
developed for solving it, such as $\top\top$-lifting \cite{katsumata2013}.

Unfortunately, the available categorical techniques for effect simulation have two limitations. 
The first one is that it only proves properties about programs with ground type context and output, cf.
Theorem~7 of Katsumata \cite{katsumata2013}. The second one is the lack of existence of an appropriate monad morphism
$\beta : \Psub(\nat \times -) \to \weight \times \Psub$. Since we are trying to relate the expected value
of the first marginal with the weight given by the expected cost monad, $\beta$ would have to be equal to $E$, 
as defined above, which according to \Cref{lem:monmor} is not a monad morphism.

We deal with both of these issues by defining a two-level logical relation. This structure mimics the 
value/computation types distinction present in CBPV and, when compared to previous work \cite{katsumata2013}, 
gives stronger soundness properties. We begin by defining a probabilistic relational lifting in $\wqbs$.

\begin{definition}
    Let $\mathcal{R} \subseteq A \times B$ be a complete binary relation, i.e. it is closed under suprema of
    ascending sequences, its lifting
    $\mathcal{R}^{\#} \subseteq \Psub(A) \times \Psub(B)$ is defined as $\mu\, \mathcal{R}^\#\, \nu$
    if there is a distribution $\theta : \Psub(\mathcal R)$ such that its first and second
    marginals are, respectively, $\mu$ and $\nu$.
\end{definition}

This definition interacts well with the monadic structure of $\Psub$. Concretely, it is stable with respect to the unit and bind of $\Psub$.
This definition can be restricted to the probability monad $P$.  We now define our logical relations as two families of relations, one for value types 
and one for computation types:
\begin{align*}
    &\V_\tau \subseteq \semcs{\tau}^v \times \semec{\tau}^v &\C_{\overline\tau} \subseteq \semcs{\overline\tau}^c \times \semec{\overline\tau}^c \\
    &\V_\nat = \set{(n,n)}{n\in\nat}   &\C_{F \tau} = \set{((r, \nu), \mu)}{\mathbb E(\mu_1) \leq r \land \nu \V^\#_\tau \mu_2}\\
    &\V_{U\overline{\tau}} = \C_{\overline{\tau}}    &\C_{\tau \to \overline{\tau}} = \set{(f_1, f_2)}{\forall x_1, x_2, x_1 \V_\tau x_2 \Rightarrow f_1(x_1) \C_{\overline{\tau}} f_2(x_2)}\\
    &\V_{\tau_1 \times \tau_2} = \V_{\tau_1} \times \V_{\tau_2} &\\
    &\V_{\mathsf{list}(\tau)} = \mathsf{list}(\V_\tau)&
\end{align*}

Where $\mathsf{list}(\V_\tau)$ relates two lists if, and only if, the lists have the same length and are componentwise
related by $\V_\tau$. As it is often the case, the ``meat'' of the logical relations is in the definition of relation for the 
$F$ connective, where we specify the desired relationship between the expected value of the cost distribution and the 
real number given by the expected cost semantics. We can now state the fundamental theorem of logical relations. Its
proof is detailed in \Cref{app:denproof}.

\begin{theorem}
    For every $\Gamma = x_1 : \tau_1, \dots, x_n : \tau_n$, $\Gamma \vdashval V : \tau$, $\Gamma \vdashcomp t : \overline\tau$ and if for every $1 \leq i \leq n$, $\cdot \vdashval V_i : \tau_i$ and $\semcs{V_i} \VRel_{\tau_i} \semec{V_i}$, then 
    \begin{align*}
    &\semcs{\letin {\overline{x_i}} {\overline{V_i}} t}^c \CRel_{\overline \tau}^c \semec{\letin {\overline{x_i}} {\overline{V_i}} t}^c \text{ and }\\   
    &\semcs{\letin {\overline{x_i}} {\overline{V_i}} V}^v \VRel_\tau \semec{\letin {\overline{x_i}} {\overline{V_i}} V}^v,
    \end{align*}

    where the notation $\overline{x_i} = \overline{V_i}$ means a list of $n$ let-bindings or, in the case of values, 
    a list of substitutions.
\end{theorem}

The denotational soundness theorem now follows as a simple corollary.

\begin{theorem}
\label{cor:sound2}
    The expected-cost semantics is sound with respect to the cost semantics, i.e. for every program
    $\cdot \vdash_c t : F\tau$, the expected cost of the second marginal of $\semcs{t}^c$ is at most
    $\pi_1(\semec{t}^c)$. Furthermore, when restricted to the recursion-free fragment of $\meta$, these
    costs are equal.
\end{theorem}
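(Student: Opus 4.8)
The plan is to derive \Cref{cor:sound2} as a direct corollary of the fundamental theorem of logical relations stated just above it. The key observation is that the soundness statement is precisely the instance of the logical relation at computation type $F\tau$ applied to closed programs, so most of the work has already been done; what remains is to specialize and unpack the definitions.

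First I would apply the fundamental theorem to a closed program $\cdot \vdashcomp t : F\tau$ with the empty context, so that no value substitutions are needed. The theorem then yields $\semcs{t}^c \CRel_{F\tau}^c \semec{t}^c$ directly. Next I would unfold the definition of $\C_{F\tau}$ from the logical relations table: writing $\semec{t}^c = (r, \nu)$ and $\semcs{t}^c = \mu$, membership in $\C_{F\tau}$ means exactly that $\mathbb{E}(\mu_1) \leq r$ and $\nu\, \V^\#_\tau\, \mu_2$, where $\mu_1$ and $\mu_2$ are the first and second marginals of the cost distribution $\mu$. Since $r = \pi_1(\semec{t}^c)$ by definition, the inequality $\mathbb{E}(\mu_1) \leq \pi_1(\semec{t}^c)$ is exactly the soundness bound: the expected value of the cost marginal of the cost semantics is at most the expected cost computed by the expected-cost semantics. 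This establishes the first (inequality) half of the statement.

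For the recursion-free fragment, I would strengthen the analysis to an equality. The only place the inequality $\mathbb{E}(\mu_1) \leq r$ is strict rather than tight is at the fixed-point operation, where $\semec{\fix x.\, t}$ is defined as a supremum $\bigsqcup_n \semec{t}^n(\bot)$ and the expected-cost monad may overestimate because, as \Cref{lem:monmor} shows, $E$ fails to be a monad morphism on \emph{sub}probability distributions once recursion introduces mass defect. In the recursion-free fragment every program terminates with total mass one, so by \Cref{th:monmorphism} the map $E : P(\nat \times -) \to [0,\infty] \times P$ \emph{is} a genuine monad morphism on the probability monad. The plan is therefore to re-run the fundamental theorem with the relation $\C_{F\tau}$ sharpened to demand equality $\mathbb{E}(\mu_1) = r$, and verify that every clause of the inductive proof preserves this sharper invariant: the base operations $\charge c$ and $\uniform$ satisfy it by direct computation (their cost semantics and expected-cost semantics agree by construction, cf. \Cref{fig:semop,fig:semopex}), and the sequencing case $(x \leftarrow t); u$ preserves equality precisely because $E$ commutes with bind on $P$, which is the content of \Cref{th:monmorphism} together with \Cref{lem:expected_linear}.

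The main obstacle I expect is the fixed-point case, and more precisely the interchange of expectation with the suprema defining the recursive semantics. Establishing equality in the recursion-free fragment sidesteps this entirely, but one must still confirm that restricting to recursion-free programs genuinely guarantees that all intermediate subprobability distributions are total, so that \Cref{th:monmorphism} rather than \Cref{lem:monmor} applies at every node of the derivation; this is where I would be most careful. For the general inequality the delicate point is Scott-continuity: one needs that $\mathbb{E}(\mu_1) \leq r$ is preserved under the ascending suprema $\bigsqcup_n$, which follows since the relation $\C_{F\tau}$ is built from complete relations and $\mathbb{E}$ is monotone, but verifying that the supremum of the expected-cost approximants dominates the expected value of the limiting cost distribution is exactly the subtle continuity issue flagged in the discussion around \Cref{lem:monmor}, and is the crux of why only an inequality holds in the presence of unbounded recursion.
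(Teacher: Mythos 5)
Your proposal is correct and is essentially the paper's own proof: the inequality is exactly the fundamental theorem of logical relations instantiated at a closed program of type $F\tau$ with the definition of $\C_{F\tau}$ unfolded, and the recursion-free equality is obtained, just as in the paper, by sharpening $\C_{F\tau}$ to require $\mathbb{E}(\mu_1) = r$ and propagating that change through the inductive proof, using that recursion-free programs denote total (mass-one) distributions. One minor imprecision in your commentary: the slack in the subprobabilistic case does not arise at the fixed-point clause itself (suprema preserve the relation by Scott-continuity of $\mathbb{E}$) but in the sequencing clause, where the prefix's cost is weighted by the termination mass $\norm{f_2(a)} \leq 1$ of the continuation — a point your own discussion of bind and \Cref{lem:monmor} effectively acknowledges, so the plan goes through unchanged.
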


In order to prove the recursion-free fragment property above we must modify the $\C_{F\tau}$ 
definition so that $\mathbb E(\mu_1) = r$. By propagating this change through the proof we get
the desired result.
\subsection{Equational Soundness}
\label{sec:eqsound}

In the previous section, we have argued that in the presence of unbounded recursion, the cost semantics has some undesirable properties
which are not present in the expected cost semantics. That being said, when it comes to the base equational theory of \Cref{fig:ineq},
both semantics are sound with respect to it, showing that there is still some harmony between them.
% \begin{theorem}%     If $\Gamma \vdashcomp t = u : \overline{\tau}$ then $\sem{t} \prec_{\overline{\tau}} \sem{u}$
% \end{theorem}
\begin{theorem}
\label{th:eqsound}
    If $\Gamma \vdashcomp t = u : \overline{\tau}$ then $\semec{t}^c = \semec{u}^c$ and $\semcs{t}^c = \semcs{u}^c$.
\end{theorem}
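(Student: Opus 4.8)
The plan is to argue by induction on the derivation of the judgment $\Gamma \vdashcomp t = u : \overline\tau$. The equational theory is the congruence closure of the axioms of \Cref{fig:ineq} together with the standard CBPV rules recalled in \Cref{app:cbpvsemantics}: it is closed under reflexivity, symmetry and transitivity, and under the compatibility rules that permit rewriting inside an arbitrary evaluation context. The first three closure rules are validated for free, since denotational equality is an equivalence relation; the compatibility rules are precisely the content of the two congruence theorems already established for both semantics (if $\semcs{t} = \semcs{u}$ then $\semcs{C[t]} = \semcs{C[u]}$, and likewise for $\semec{\cdot}$). The theorem therefore reduces to checking, for each generating axiom, that both $\semcs{\cdot}^c$ and $\semec{\cdot}^c$ assign equal denotations to the two sides.

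The purely structural CBPV rules --- $\beta$, $\eta$, \textsc{ThunkForce}, \textsc{IfZ}, \textsc{IfS}, and the product $\beta$-rule --- are sound in \emph{any} model of CBPV built from a strong monad over a Cartesian closed category. Since $\semcs{\cdot}$ and $\semec{\cdot}$ are both instances of this generic monadic semantics (with monads $\Psub(\cmon\times-)$ and $\weight\times\Psub$ over $\wqbs$, respectively), these rules hold automatically. Concretely, the $\beta$-rule follows from the substitution theorems $\semcs{\subst t V x} = \semcs{t}(\semcs{V})$ and $\semec{\subst t V x} = \semec{t}(\semec{V})$ together with the interpretation of application as evaluation; \textsc{ThunkForce} holds because $\thunk$ and $\force$ denote mutually inverse maps, so $\force(\thunk\, t)$ and $t$ share a denotation; and \textsc{IfZ}/\textsc{IfS} hold because the conditional is interpreted by case analysis on $\nat$.

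The two genuinely cost-specific axioms are \textsc{0mon} and \textsc{ActMon}, where the monoid structure of $\charge$ must be matched by the cost-accumulation of each monad. For the cost semantics we have $\semcs{\charge c} = \delta_{(c,())}$, so sequencing is governed by the writer-transformer bind, which adds the $\cmon$-components; hence $\semcs{\charge 0; t} = \semcs{t}$ (the unit $0$ is absorbed) and $\semcs{\charge c;\charge d} = \delta_{(c+d,())} = \semcs{\charge{c+d}}$. For the expected-cost semantics, $\semec{\charge c} = (c,\delta_{()})$ and the bind of $\weight\times\Psub$ likewise \emph{adds} the weight components, so the same two computations go through, now reading $0$ and $+$ in $\weight$. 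In both cases the argument is exactly that $(\nat,0,+)$ is a monoid whose operations realise the cost accumulation of the corresponding monad.

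The main obstacle is the \textsc{Fix} rule, $\fix x.\, t = \subst t {\thunk(\fix x.\, t)} x$. Both semantics interpret $\fix$ as the supremum $\bigsqcup_n \sem{t}^n(\bot)$ of its finite approximations, i.e.\ as the least fixed point of the endofunction $d \mapsto \sem{t}(d)$ on the relevant $\omega$CPO. To validate the rule one must show this supremum is a genuine fixed point, which requires that every combinator used to build $\sem{t}$ as a function of the recursion variable is Scott-continuous; this is guaranteed by the $\omega$CPO-enriched structure of $\wqbs$ and the Scott-continuity of the bind and strength of both $\Psub(\cmon\times-)$ and $\weight\times\Psub$. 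Granting continuity, Kleene's fixed-point theorem yields $\sem{\fix x.\, t} = \sem{t}(\sem{\thunk(\fix x.\, t)})$, and the substitution theorem rewrites the right-hand side as $\sem{\subst t {\thunk(\fix x.\, t)} x}$, closing the case for both semantics at once. Establishing the requisite continuity of all the semantic operations is the one non-routine ingredient; the remaining cases are bookkeeping.
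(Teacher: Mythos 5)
Your overall architecture matches the paper's: both proofs dispose of the structural CBPV axioms by appealing to the generic monadic semantics (the paper cites Proposition~121 of Levy for exactly this), verify the $\charge{}$ monoid axioms \textsc{0mon} and \textsc{ActMon} by the same direct calculations with $\delta_{(c,())}$ and $(c,\delta_{()})$, and handle congruence closure via the already-established compositionality theorems. Your treatment of \textsc{Fix} via Scott-continuity and Kleene's fixed-point theorem is in fact more explicit than the paper's, which folds recursion into the citation of Levy; that part of your proposal is a useful refinement rather than a divergence.

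However, there is a genuine gap: the equational theory of $\meta$ is not just the selected rules of \Cref{fig:ineq}. The full theory (\Cref{fig:fulleqtheory}) also contains the probability-specific axioms, namely the barycentric-algebra laws for probabilistic choice --- $t \oplus_0 u = t$, symmetry $t \oplus_p u = u \oplus_{1-p} t$, idempotence $t \oplus_p t = t$, and the associativity law $t\oplus_p (u \oplus_q t') = (t \oplus_{\frac{p(1-q)}{1-pq}} u) \oplus_{pq} t'$ --- as well as commutativity of charges $\charge n; \charge m = \charge m; \charge n$ and the list equations. The $\oplus_p$ axioms are precisely the ones that cannot be discharged by your appeal to "any model of CBPV built from a strong monad over a Cartesian closed category," since they reference the sampling structure; they are also where the paper's own proof spends most of its effort, checking for each semantics that $\semcs{t \oplus_p u} = (1-p)\semcs{t} + p\,\semcs{u}$ and $\semec{t \oplus_p u} = (1-p)\semec{t} + p\,\semec{u}$ (with addition and scalar multiplication defined componentwise in the expected-cost monad $\weight \times \Psub$) satisfy the four barycentric laws. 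Nothing in these checks would fail --- they are routine once stated --- but your proposal never states them, so as written it establishes soundness only for the fragment of the theory in \Cref{fig:ineq}, not for the equational theory the theorem actually quantifies over.
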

\begin{proof}
    The proof follows by induction on the equality rules, where the inductive cases follow directly from the inductive hypothesis
    while the base cases follow by inspection. For instance, the equation $\charge 0; t = t$ is true because $\nat$ is a monoid
    and $0$ is its unit. Once again, the full equational theory is shown in \Cref{fig:fulleqtheory} and the full proof can be
    found in \Cref{app:proofs}.
\end{proof}

It is useful to understand the extent of which these equational theories differ.
For instance, the cost semantics validates the equation $\bot; t = \bot = t; \bot$. This equation is too extreme 
for the purposes of expected cost, since it says that $\charge c; \bot = \bot$.
An even more egregious equation that it satisfies is $\fix x.\, \charge 1; x = \bot$. That equation says that 
the cost of infinity is the same as no cost at all, as long as the program does not terminate.

These equations are connected to the commutativity equation:
\[
    \inferrule{\Gamma \vdashcomp x : F \tau_1 \\ \Gamma \vdashcomp u : F \tau_2 \\ \Gamma, x : \tau_1, y : \tau_2 \vdashcomp t' : \overline\tau}{\Gamma \vdashcomp (x \leftarrow t; y \leftarrow u; t')  = (y \leftarrow u; x \leftarrow t; t') : \overline{\tau}}
\]
\begin{theorem}[cf. \Cref{app:proofs}]
\label{th:commutativity}
    The cost semantics validates the commutativity equation.
\end{theorem}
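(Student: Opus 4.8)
The plan is to reduce the syntactic equation to the categorical statement that the monad $M = \Psub(\cmon \times -)$ underlying the cost semantics is \emph{commutative}. First I would recall that, under the monadic CBPV interpretation, the sequencing construct is interpreted by the Kleisli extension of $M$, so that
\[
\semcs{(x \leftarrow t; y \leftarrow u; t')}^c = \bigl(\lambda x.\, (\lambda y.\, \semcs{t'}^c)^{\#}_{M}(\semcs{u}^c)\bigr)^{\#}_{M}(\semcs{t}^c),
\]
with the right-hand side of the commutativity equation given by the same expression with $t$ and $u$ interchanged. Hence the equation holds for all $t, u, t'$ precisely when, for all $\mu : MA$, $\nu : MB$ and $g : A \times B \to MC$,
\[
\bigl(\lambda a.\, (\lambda b.\, g(a,b))^{\#}_{M}(\nu)\bigr)^{\#}_{M}(\mu) = \bigl(\lambda b.\, (\lambda a.\, g(a,b))^{\#}_{M}(\mu)\bigr)^{\#}_{M}(\nu),
\]
which is exactly the defining property of a commutative strong monad, the strength of $M$ being the one supplied by the writer-transformer lemma of \Cref{sec:costsem}.

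It then remains to prove that $M$ is commutative. I would unfold the transformer bind: evaluating $\mu$ amounts to drawing $(n_1, a)$ from the underlying $\Psub$-distribution, running the continuation to obtain a cost $n_2$ together with an output, and recording total cost $n_1 + n_2$. The two sides above therefore differ only by (i) the order in which the two independent subdistributions are integrated and (ii) the order in which the two costs are added. Reordering the integrations is justified by the commutativity of $\Psub$ established in \Cref{sec:wqbs} (essentially Fubini's theorem for subprobability distributions), exactly as in the proof of \Cref{lem:expected_linear}, while reordering the costs is the commutativity of the cost monoid $(\nat, 0, +)$. Combining the two reorderings yields the required equality.

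The main obstacle is checking that the commutativity of $\Psub$ lifts cleanly through the writer transformer, i.e. that the distributive law implicit in $M$'s strength interacts symmetrically with the two strengths. Concretely, I would write out the two double-strength maps $MA \times MB \to M(A \times B)$ explicitly and show they coincide, dispatching the probabilistic part by commutativity of $\Psub$ and the cost part by commutativity of $\nat$; once the strength is spelled out, the remaining verification is a routine diagram chase. The full calculation is deferred to \Cref{app:proofs}.
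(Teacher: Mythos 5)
Your proposal is correct and follows essentially the same route as the paper: the paper's proof likewise unfolds the sequencing semantics of $\Psub(\nat \times -)$ into a double integral and swaps the integration order using the commutativity of $\Psub$, with the two costs absorbed symmetrically into $f(n,c) = (n + n_1 + n_2, c)$. Your extra framing step --- reducing the syntactic equation to commutativity of the strong monad $M$ and then checking the two double-strength maps coincide --- is just a more explicit packaging of the same calculation, not a different argument.
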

% \begin{proof}
%     The proof can be found in \Cref{app:proofs}.
% \end{proof}
This equation is usually useful for reasoning about probabilistic programs. However, it is too strong for the purposes
of reasoning about expected cost. Indeed, consider the programs 
\begin{align*}
&t = \bot; \charge c; \produce ()\\
&u =\charge c; \bot; \produce ()    
\end{align*}
From an operational point of view, the first program will run a non-terminating program and never reach the $\charge c$ operation,
while the second one increases the cost by $c$ and then diverges. 

When it comes to modeling the cost of real programs,
these two programs should not be the same since, for instance, if the charge operation is modeling a monetary cost, such as a call to 
an API, only the second program will cost something. Fortunately, the expected cost monad is not commutative, making it a more  
physically-justified monad. When $c > 0$, the terms $t$ and $u$ show:
\begin{lemma}
    The monad $[0, \infty] \times \Psub$ is not commutative.
\end{lemma}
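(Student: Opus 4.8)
The plan is to refute commutativity directly, by exhibiting elements of the monad on which the two admissible sequencings disagree; the terms $t$ and $u$ displayed above supply exactly those elements. Recall that commutativity of a strong monad $T$ is the equation $m \bind (\lambda a.\, n \bind (\lambda b.\, k(a,b))) = n \bind (\lambda b.\, m \bind (\lambda a.\, k(a,b)))$ for all $m \in TA$, $n \in TB$ and $k : A \times B \to TC$, which is precisely the semantic content of the commutativity equation stated above once it is instantiated with first computation $\bot$, second computation $\charge c$, and body $\produce ()$: the two sides of that instance are $t = \bot;\charge c;\produce ()$ and $u = \charge c;\bot;\produce ()$. Hence it suffices to show $\semec{t} \neq \semec{u}$.

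First I would pin down the relevant denotations. Since $\fix$ is interpreted by the least fixed point and the least element of $\weight \times \Psub Y$ under the product order is $(0, \mathbf{0})$, where $\mathbf{0}$ is the zero subprobability measure, a diverging program $\bot$ denotes $(0,\mathbf{0})$; I also record $\semec{\charge c} = (c, \delta_{()})$ and $\semec{\produce ()} = \eta(()) = (0, \delta_{()})$. Next I would unfold the bind $f^\#(r,\mu) = (r + \int (\pi_1\circ f)\,\diff\mu,\ (\pi_2\circ f)^\#_{\Psub}(\mu))$ on each program. For $t$, the outermost bind feeds the continuation into $\bot = (0,\mathbf{0})$; since integrating any function against $\mathbf{0}$ yields $0$ and $(-)^\#_{\Psub}$ sends $\mathbf{0}$ to $\mathbf{0}$, both factors collapse and $\semec{t} = (0,\mathbf{0})$. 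For $u$, the outermost bind feeds the (divergent, hence $(0,\mathbf{0})$-valued) continuation into $\semec{\charge c} = (c,\delta_{()})$; here the cost $c$ already sits in the left factor before the zero measure is produced, so the cost component is $c + \int 0\,\diff\delta_{()} = c$ while the probability component is again annihilated, giving $\semec{u} = (c,\mathbf{0})$. Thus for every $c > 0$ we have $(0,\mathbf{0}) = \semec{t} \neq \semec{u} = (c,\mathbf{0})$, so the two sequencings disagree and the monad is not commutative.

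The calculations are routine once the bind is unfolded; the conceptual heart of the argument — and the only step deserving care — is the asymmetric treatment of cost by integration against the zero measure. When $\charge c$ is sequenced \emph{before} the divergence its cost is added into the $\weight$ factor and is therefore preserved, whereas when it is sequenced \emph{after} the divergence its cost is integrated against $\mathbf{0}$ and thus discarded. I expect the main (and genuinely only) subtlety to be justifying that $(0,\mathbf{0})$ is indeed the least element used by $\fix$, so that $\bot$ has the claimed denotation; everything downstream is a direct computation.
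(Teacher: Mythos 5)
Your proof is correct and takes essentially the same approach as the paper: the paper's own argument uses exactly the terms $t = \bot;\charge c;\produce ()$ and $u = \charge c;\bot;\produce ()$ as the counterexample and records $\semec{t} = (0,\mathbf{0}) \neq (c,\mathbf{0}) = \semec{u}$ for $c > 0$. Your unfolding of the bind and your justification that $\bot$ denotes $(0,\mathbf{0})$ (with $0$, not $\infty$, as the bottom of $\weight$) simply make explicit the calculations the paper leaves implicit.
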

\begin{comment}
\begin{proof}
    When $c > 0$, the terms above are a counterexample:
    $\semec t = (0,0) \neq (c, 0) = \semec u$
\end{proof}
\end{comment}

That being said, assuming that the program $t$ in the commutativity equation has no cost and terminates with probability $1$,
the expected cost semantics validates the commutativity equation, and maximally so.

\begin{theorem}
\label{th:expectedcostcenter}
    The commutativity equation is sound in the expected cost model  if, and only if, $t$ 
    terminates with probability $1$ and has $0$ expected cost.
\end{theorem}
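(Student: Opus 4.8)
The plan is to evaluate both sides of the commutativity equation directly in the expected cost monad $[0,\infty]\times\Psub$ and compare them componentwise. Write $\semec{t}^c = (a,\mu)$ and $\semec{u}^c = (b,\nu)$, where $a,b\in[0,\infty]$ are the expected costs and $\mu,\nu$ the output subdistributions, and write the continuation as a kernel $(x,y)\mapsto(c(x,y),\rho(x,y))$. Unfolding the Kleisli extension $f^\#(r,\theta)=\bigl(r+\int(\pi_1\circ f)\diff\theta,\;(\pi_2\circ f)^\#_{\Psub}(\theta)\bigr)$ twice, once in each nesting order, gives closed forms for both sides. I would present this for $\overline\tau = F\tau_3$; the general computation type follows by the same argument applied to the Eilenberg--Moore algebra action, since the first component still accumulates cost additively and the type of the continuation never enters the cost computation.

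First I would check that the distribution (second) components of the two sides agree: both are the iterated integral of $\rho$ against $\mu$ and $\nu$, and these coincide because $\Psub$ is commutative (Tonelli for the nonnegative kernel $\rho$). This step uses no hypothesis on $t$. Next I would compute the cost (first) components. The crucial observation is that the double-integral term $\int\!\int c(x,y)\,\nu(\diff y)\,\mu(\diff x)$ is common to both orders --- again by Tonelli, as $c\ge 0$ --- and hence cancels. What survives is that the two cost components differ by exactly
\[
a\,(1-|\nu|)\;-\;b\,(1-|\mu|),
\]
where $|\mu|,|\nu|$ denote total masses. The decisive feature is that this difference does not depend on the continuation $t'$ at all.

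The \emph{if} direction is then immediate: if $t$ terminates with probability one ($|\mu|=1$) and has zero expected cost ($a=0$), both summands vanish, so the two sides are equal for every $u$ and $t'$. For the \emph{only if} direction I would exhibit continuations $u$ that force each constraint separately. Taking a divergent $u$ (e.g.\ $\fix x.\,\force x$), whose denotation is the bottom element $(b,\nu)=(0,\mathbf 0)$, the difference reduces to $a$, whence soundness forces $a=0$; then taking $u=\charge 1;\produce V$, so $b=1$ and $|\nu|=1$, the difference reduces to $-(1-|\mu|)$, forcing $|\mu|=1$. Thus validity of the equation for all $u,t'$ is equivalent to $t$ having zero expected cost and terminating almost surely. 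This also justifies the phrase ``maximally so'': these are precisely the central elements of the monad, i.e.\ the elements commuting with every other computation.

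The main obstacle I expect is measure-theoretic bookkeeping rather than anything conceptual: justifying the Tonelli interchanges for subprobability kernels in $\wqbs$ so that the distribution components match and the shared double-cost term cancels, and checking the general computation-type case, where the plain Kleisli bind is replaced by the algebra action but the additive cost accounting --- and hence the displayed difference --- is unchanged.
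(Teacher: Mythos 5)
Your proof is correct, but it takes a different route from the paper. You unfold the Kleisli extension of $\weight \times \Psub$ directly on both sides of the equation and isolate the continuation-independent difference $a(1-|\nu|) - b(1-|\mu|)$, then settle the \emph{only if} direction by instantiating $u$ with a divergent program (denoting $(0,\mathbf{0})$) and with $\charge 1; \produce V$; your arithmetic checks out, and your reduction of general computation types to the $F\tau$ case via the pointwise algebra structure is the same routine the paper uses elsewhere. The paper instead invokes the categorical notion of the \emph{center of a monad} (Carette et al.): it shows that $\mu \mapsto (0,\mu)$ makes $P$ a central cone of $\weight \times \Psub$ and that this cone is terminal, so $P$ \emph{is} the center; the theorem falls out as a corollary. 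The underlying arithmetic is the same in both arguments --- the paper's comparison of the two legs, $r + r'\nu(Y)$ versus $r' + r\mu(X)$, is exactly your difference formula --- but the framings buy different things. Your computation is self-contained and elementary, needing no machinery beyond Tonelli and commutativity of $\Psub$, and it makes the quantification over syntactic counterexamples concrete. The paper's version gives the phrase ``maximally so'' a precise universal property: the center is the largest commutative submonad through which all centrally-commuting families factor, which is also what licenses the later comparison with the continuation monad (where the analogous equation fails). Your closing remark that the condition characterizes ``precisely the central elements'' is exactly this statement, and your difference formula does prove it, so nothing is missing --- but if you wanted the full strength of the paper's claim you would state the factorization-through-$P$ conclusion explicitly rather than leaving it as an aside.
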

\begin{proof}
    In \Cref{app:proofs}, we prove this by showing that the probability monad $P$ is the center of $\weight \times  \Psub$ \cite{carette2023central},
    a concept generalizing the center of monoids and algebraic theories \cite{wraith1970}.
\end{proof}
%
% \paragraph{Inequational theory} 

% In the context of cost analysis, it can also be useful to reason about upper/lower bounds on the cost. The different versions
% of geometric distributions have already demonstrated this. Though reasoning semantically about these bounds is immediate, 
% this is not the case at the syntactic level. One way of addressing this is by defining \emph{inequational} theories where
% you can reason about programs being less than or equal to other programs.

% For the expected cost, however, one must be careful in terms of which rules to include. While the rule $\charge c \leq \charge d$,
% whenever $c \leq d$ is reasonable whenever the cost monoid comes equipped with a partial order, in the presence of probability
% it is not so easy to syntactically reason about these properties. A simple non-trivial example would be comparing the two variants of
% quicksort.

% Since in this work we were mainly interested on the denotational aspects of expected cost, we leave a more thorough investigation
% of the inequational properties of expected cost to future work.
\subsection{Operational Soundness and Adequacy}
\label{sec:opsound}

We conclude this section by proving some metatheoretic properties of the operational semantics and show how it relates to the denotational cost semantics. 
The main results of this section is the adequacy of the operational semantics with respect to the cost semantics and
the cost adequacy theorem of the operational semantics with respect to the expected cost semantics.

A consequence of \Cref{lem:subject} is that it becomes possible to compose the operational and denotational semantics and obtain morphisms $\sem{\Downarrow}^{\Gamma\vdash\overline\tau}:\Lambda^{\Gamma\vdash\overline \tau} \to (\semcs\Gamma \to \semcs{\overline \tau})$, as explained in Section~6.7 of V\'ak\'ar et al.~\cite{wqbs}. In particular, for
closed programs, $\sem{\Downarrow}^{\cdot\vdash\overline\tau}:\Lambda^{\cdot \vdash\overline \tau} \to (\semcs{\overline \tau})$. We now
state the soundness theorem.

\begin{theorem}[Soundness]
    For every closed computation $\cdot \vdashcomp t : \overline\tau$, $\semcs{\Downarrow\hspace{-3pt}(t)} \leq \semcs{t}$.
\end{theorem}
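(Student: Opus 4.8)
The plan is to prove the statement by reducing it to the finite operational approximations and inducting on their index. Since $\Downarrow = \bigsqcup_n {\Downarrow_n}$ is a supremum of an ascending chain in $\Psub(\nat \times T)$, and since the map that sends an operational output distribution to its interpretation --- namely binding it along the denotation of terminal computations while accumulating the observed cost through the writer transformer $\Psub(\nat \times -)$ --- is assembled from the Scott-continuous monad operations of $\wqbs$, it commutes with the chain supremum. Hence $\semcs{\Downarrow(t)} = \bigsqcup_n \semcs{\Downarrow_n(t)}$, and it suffices to prove $\semcs{\Downarrow_n(t)} \leq \semcs{t}$ for every $n$, because the supremum of a chain that is bounded above by $\semcs{t}$ is itself bounded by $\semcs{t}$.

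First I would settle the base case: $\Downarrow_0 = \bot$ is the zero kernel, so $\semcs{\Downarrow_0(t)} = \bot \leq \semcs{t}$. For the inductive step I would do a case analysis following the rules of \Cref{fig:opsem}. The terminal and effectful axioms ($\produce V$, $\lamb x t$, $\uniform$, $\charge r$) are immediate, since each operational right-hand side is interpreted to exactly the corresponding clause of the cost semantics in \Cref{fig:semop}; here one obtains equality rather than a strict inequality. The rules driven by substitution --- application, the product and list eliminators, and $\force(\thunk\, u)$ --- follow from the inductive hypothesis at index $n-1$ together with the substitution theorem $\semcs{\subst t V x} = \semcs{t}(\semcs{V})$ and monotonicity of the semantic operations. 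For $\fix x.\, t$ I would use that $\semcs{\fix x.\, t} = \bigsqcup_m \semcs{t}^m(\bot)$ is a fixed point of the Scott-continuous map $\semcs{t}$, so that $\semcs{\subst t {\thunk (\fix x.\, t)} x} = \semcs{t}(\semcs{\fix x.\, t}) = \semcs{\fix x.\, t}$; combined with the inductive hypothesis at $n-1$ this closes the case exactly.

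The delicate case, and the one I expect to be the main obstacle, is sequencing $(x \leftarrow t); u$. Operationally this is defined by guarded costful-kernel composition: run $t$, retain only the $\produce V$ outcomes, add the cost of $t$ to that of $\subst u V x$, and integrate. Denotationally it is the monadic bind of $\Psub(\nat \times -)$, whose writer component adds costs through the integral defining $(-)^\#$. The work is to show that interpreting the operational composite coincides with, or is dominated by, the denotational bind: I would prove a compatibility lemma stating that $\semcs{-}$ transports the costful-kernel composition $\bind$ into the Kleisli composition of $\Psub(\nat \times -)$, so that the cost-addition inside the operational integral matches the additive monoid action of the writer transformer. Granting this, the case follows by applying the two inductive hypotheses to $t$ and to $\subst u V x$ and invoking monotonicity of bind in each argument. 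The remaining milder obstacles are bookkeeping: justifying the supremum exchange measure-theoretically via Scott-continuity of bind in $\Psub$, and handling guardedness uniformly so that outcomes of the wrong shape contribute the zero measure on both sides --- which is precisely what yields the inequality rather than forcing equality.
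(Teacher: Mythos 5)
Your proposal is correct and follows essentially the same route as the paper's proof: reduce to the finite approximations $\Downarrow_n$ by Scott-continuity, take $\Downarrow_0 = \bot$ as the trivial base case, and proceed by case analysis on the rules, using the substitution theorem for the $\beta$-like rules, the fixed-point equation $\semcs{\subst t {\thunk (\fix x.\, t)} x} = \semcs{\fix x.\, t}$ for recursion, and compatibility of kernel composition with the Kleisli bind of $\Psub(\nat \times -)$ (which the paper carries out as a direct equational calculation) for sequencing. The only difference is presentational: what you isolate as a ``compatibility lemma'' for $\bind$, the paper handles inline in its chain of (in)equalities for the \textsc{Seq} and \textsc{App} cases.
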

\begin{proof}
    As usual, since the operational semantics is defined as the supremum of $\Downarrow_n$, the proof follows by induction on $n$ and $t$.
    The proof can be found in \Cref{app:opsound}.
\end{proof}
The next step is proving the adequacy theorem. As it is usually the case, the proof follows from a logical relations argument.
For this relation, we will use a different, proof-irrelevant, relational lifting which we now define. 

\begin{definition}[cf. Section~4.3.1 of Katsumata et al.~\cite{katsumata2018codensity}]
    If $\mathcal{R} \subseteq A \times B$ is a binary relation, its lifting $\widetilde{\mathcal{R}} \subseteq \Psub(\nat \times A) \times \Psub (\nat \times B)$ is defined as $\mu \widetilde{\mathcal{R}} \nu$ if, and only if, for every $f : \nat \times A \to [0,1]$ and $g : \nat \times B \to [0,1]$
    such that if $a \mathcal{R} b$ and $g(n, b) \leq f(n, a)$, for every $n : \nat$, then $\left ( \int g \diff \nu \right) \leq \left ( \int f \diff \mu \right )$.
\end{definition}

We now show the definition of the adequacy logical relations.

\begin{align*}
    \vartriangleright_{\tau} &\subseteq \val^{\cdot \vdashval \tau} \times \semcs{\tau} &    \LHD_{\overline \tau} &\subseteq \Psub(\nat \times T^{\cdot \vdashcomp \overline \tau}) \times \semcs{\overline{\tau}}\\
    n \vartriangleright_{\nat} n &\iff \top & \mu \LHD_{F \tau} \nu &\iff \mu \widetilde{\vartriangleright_\tau} \nu\\
    r \vartriangleright_{\R} r &\iff \top & \mu \LHD_{\tau \to \overline \tau} f &\iff (\forall a V, V \vartriangleright_\tau a \Rightarrow\\
    & & &  (\lamb x t) \leftarrow \mu; \DownarrowFun(\subst t V x) \LHD_{\overline{\tau}} f(a))\\
    () \vartriangleright_{1} () &\iff \top\\
    c\vartriangleright_{\cmon} c &\iff \top \\
    (V_1, V_2)\vartriangleright_{\tau_1 \times \tau_2} (x, y) &\iff (V_1 \vartriangleright_{\tau_1} x) \land (V_2\vartriangleright_{\tau_2} y)\\
    V \vartriangleright_{\listty \tau} x &\iff V\, \mathsf{list}(\vartriangleright_\tau) \, x\\%|l| = |x| \land \forall i :\{0\dots |x|\}, l[i] \vartriangleright_{\tau} x[i]\\
     (\thunk\, t) \vartriangleright_{U\overline{\tau}} x &\iff \DownarrowFun(t) \LHD_{\overline\tau} x\\
\end{align*}

The relation $\mathsf{list}(\vartriangleright_\tau)$ holds if, and only if, both lists have the same length and are elementwise
related by $\vartriangleright_\tau$.

%
% \begin{align*}
%     \LHD_{\overline \tau} &\subseteq \Lambda^{\cdot \vdashcomp \overline \tau} \times \semcs{\overline{\tau}}\\
%     t \LHD_{F \tau} \mu &\iff \mu \leq \sem{\Downarrow t}\\
%     t \LHD_{\tau \to \overline \tau} f &\iff \forall V a, V \vartriangleright_\tau a \Rightarrow (\app t V) \LHD_{\overline{\tau}} f(a)\\
% \end{align*}

\begin{theorem}[Adequacy]
    For every closed computation $\cdot \vdashcomp t : F\tau$, $ \semcs{t} \leq \semcs{\DownarrowFun(t)}$.    
\end{theorem}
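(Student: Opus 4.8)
The plan is to prove Adequacy by the standard logical-relations method, using the two-level family $\vartriangleright_\tau, \LHD_{\overline\tau}$ defined above, which is built on the proof-irrelevant codensity lifting $\widetilde{\mathcal R}$ rather than the coupling lifting $\mathcal R^\#$ used for denotational soundness. The reason for this choice is that Adequacy is an inequality between expectations, and $\widetilde{\mathcal R}$ is precisely engineered to transport integral inequalities: a pair $(\mu,\nu)$ lies in $\widetilde{\mathcal R}$ exactly when every pair of $\mathcal R$-monotone test functions witnesses $\int g \diff\nu \leq \int f \diff\mu$. The heart of the argument is the fundamental lemma: for every $\Gamma \vdashcomp t : \overline\tau$ and every closing substitution $\vec V$ of closed values with $V_i \vartriangleright_{\tau_i} \semcs{V_i}^v$, one has $\DownarrowFun(t[\vec V/\vec x]) \LHD_{\overline\tau} \semcs{t}^c(\semcs{\vec V}^v)$, together with the analogous statement $V[\vec V/\vec x] \vartriangleright_\tau \semcs{V}^v$ for values. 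Applying this to a closed $\cdot \vdashcomp t : F\tau$ yields $\DownarrowFun(t)\, \widetilde{\vartriangleright_\tau}\, \semcs{t}$.

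Before the induction I would establish four auxiliary facts about the relations. First, \emph{admissibility}: $\LHD_{\overline\tau}$ is closed under suprema of ascending chains in its denotational (second) argument; this follows from monotone convergence, since $\int g \diff(\bigsqcup_n \nu_n) = \sup_n \int g \diff\nu_n$. Second, $\bot$ is related to every operational distribution, because $\int g \diff\bot = 0$ makes the defining inequality of $\widetilde{\vartriangleright_\tau}$ hold vacuously; this gives the base case of the Kleene chain. Third, the lifting is compatible with the monadic structure of $\Psub(\nat \times -)$: it is preserved by the unit $\delta$ and by Kleisli bind, which is what lets the operational kernel composition $\bind$ in \Cref{fig:opsem} be matched against the denotational bind. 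Fourth, the effectful base operations are related by inspection, comparing $\semcs{\charge c} = \delta_{(c,())}$ and $\semcs{\uniform} = \delta_0 \otimes \lambda$ against the corresponding operational rules.

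With these in hand the induction on the typing derivation proceeds clause by clause. The structural cases --- $\produce$, force of a thunk, application, sequencing $x \leftarrow t; u$, if-then-else, product elimination, and list case-analysis --- all reduce to the compatibility and closure lemmas applied to the induction hypotheses, each operational rule of \Cref{fig:opsem} lining up with the matching denotational clause. The crux is $\fix$. Here both sides are Kleene suprema: $\semcs{\fix x.\, t}^c = \bigsqcup_n \semcs{t}^n(\bot)$ on the denotational side, and on the operational side $\DownarrowFun$ is itself $\bigsqcup_n \Downarrow_n$ with $\Downarrow_0 = \bot$. I would show by induction on $n$ that the $n$-th operational approximation of $\fix x.\, t$ is $\LHD$-related to $\semcs{t}^n(\bot)$: the base case is the $\bot$-relatedness fact, and the step uses the fixpoint unfolding (the premise $t[\thunk\, \fix x.\, t/x] \Downarrow_{n-1}$ of the fix rule, which consumes exactly one step index, synchronizing with one application of $\semcs{t}(-)$) together with the induction hypothesis. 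Admissibility then lets me pass to the supremum on both sides and conclude $\DownarrowFun(\fix x.\, t) \LHD_{\overline\tau} \semcs{\fix x.\, t}^c$.

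Finally I would convert the relational conclusion into the stated order inequality. From $\DownarrowFun(t)\, \widetilde{\vartriangleright_\tau}\, \semcs{t}$ I instantiate the universally-quantified test functions in the definition of $\widetilde{\vartriangleright_\tau}$: for each bounded Scott-continuous $h : \nat \times \semcs{\tau} \to [0,1]$ one takes $g = h$ on the denotational side and, on the operational side, the $\vartriangleright_\tau$-closure $f(n, \produce V) = \sup\{h(n,a) \mid V \vartriangleright_\tau a\}$, which discharges the monotonicity premise by construction. Using the fundamental lemma's reflexivity, $V \vartriangleright_\tau \semcs{V}^v$, one reads off $\int h \diff\semcs{t} \leq \int h \diff\semcs{\DownarrowFun(t)}$, where $\semcs{\DownarrowFun(t)}$ denotes the pushforward of the operational distribution along the denotation map into $\Psub(\nat \times \semcs{\tau}) = \semcs{F\tau}$; since such tests order-separate $\Psub(\nat \times \semcs{\tau})$, the inequality $\semcs{t} \leq \semcs{\DownarrowFun(t)}$ follows. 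I expect the main obstacle to be the $\fix$ case, specifically proving admissibility of $\LHD$ and correctly synchronizing the operational step-indexing with the denotational Kleene approximation so that the unfolding step goes through; verifying the final test-function extraction and maintaining measurability of all auxiliary maps are the secondary technical points.
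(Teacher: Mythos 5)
Your overall architecture is exactly the paper's: the same two-level relations over the codensity lifting $\widetilde{\mathcal R}$, the same auxiliary lemmas (admissibility and $\bot$-relatedness, a compatibility lemma for sequencing, base operations by inspection --- though note the sequencing lemma must be stated for the algebra structure $\alpha_{\overline\tau}$ of an arbitrary computation type, not just Kleisli bind into $F\tau'$, since in CBPV the continuation of a bind may have any computation type), the same case analysis on typing derivations, and the same test-function extraction at the end. The one place where your proof diverges, and where it breaks, is the $\fix$ case --- precisely the step you identify as the crux.

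Your synchronized induction, ``$\Downarrow_n(\fix x.\,t)$ is $\LHD$-related to $\semcs{t}^n(\bot)$'', cannot be carried out with the relation as defined. The thunk clause reads $(\thunk\,u) \vartriangleright_{U\overline\tau} x \iff \DownarrowFun(u) \LHD_{\overline\tau} x$, with the \emph{full} operational semantics $\DownarrowFun$, and the conclusion of the fundamental lemma is likewise about $\DownarrowFun$ of a substituted term, never about a finite approximant $\Downarrow_{n}$. So in your inductive step, after forming $\thunk(\fix x.\,t) \vartriangleright_{U\overline\tau} \semcs{t}^{n-1}(\bot)$ and applying the global induction hypothesis to $t$, what you obtain is relatedness of $\DownarrowFun(\subst t {\thunk(\fix x.\,t)} x)$, not of $\Downarrow_{n-1}(\subst t {\thunk(\fix x.\,t)} x) = \Downarrow_n(\fix x.\,t)$; and you cannot descend from the former to the latter, because the codensity lifting is only \emph{upward}-closed in its operational argument (shrinking $\mu$ only decreases $\int f \diff \mu$, destroying the required inequality $\int g \diff \nu \leq \int f \diff \mu$). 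The repair is to drop the step-index synchronization altogether, which is what the paper does: keep the full $\DownarrowFun$ on the operational side throughout and induct only on the denotational Kleene index, proving $\DownarrowFun(\fix x.\,t) \LHD_{\overline\tau} (\semcs{t}(\gamma))^n(\bot)$ for every $n$. The step then uses the thunk clause, the global induction hypothesis, and the identity $\DownarrowFun(\fix x.\,t) = \DownarrowFun(\subst t {\thunk(\fix x.\,t)} x)$ (the index shift in the operational rule vanishes after taking suprema), and your own admissibility lemma --- which is needed only in the denotational argument --- closes the case.
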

\begin{proof}
    The proof follows by showing a similar fundamental theorem of logical relations. The detailed proof can be found in \Cref{app:opproof}.
\end{proof}

\begin{corollary}[Cost Adequacy]
    For every closed computation $\cdot \vdashcomp t : F\tau$, where $\tau$ is a ground type, 
    \[\pi_1(\semec{t}) \leq \mathbb{E}(\Psub(\pi_1)(\semcs{\DownarrowFun(t)})\]  
\end{corollary}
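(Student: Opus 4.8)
The plan is to read this off from results already in hand, namely the operational Soundness and Adequacy theorems for the cost semantics together with the effect-simulation soundness result (\Cref{cor:sound2}); no genuinely new argument is required, only a careful assembly. Recall first that for a closed $\cdot \vdashcomp t : F\tau$ the term $\semcs{\DownarrowFun(t)}$ denotes an element of $\Psub(\nat \times \semcs{\tau})$: the operational semantics produces $\DownarrowFun(t) \in \Psub(\nat \times T^{\cdot\vdash F\tau})$, and every terminal computation of type $F\tau$ has the form $\produce V$ with $\semcs{\produce V} = \delta_{(0,\semcs{V})}$, so interpreting the terminals and carrying along the accumulated operational cost yields a subprobability distribution over cost--value pairs directly comparable to $\semcs{t}$.

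First I would invoke the Soundness theorem ($\semcs{\DownarrowFun(t)} \le \semcs{t}$) and the Adequacy theorem ($\semcs{t} \le \semcs{\DownarrowFun(t)}$) to conclude $\semcs{\DownarrowFun(t)} = \semcs{t}$ as elements of $\Psub(\nat\times\semcs{\tau})$. Next I would apply the cost marginal followed by expectation: the composite $\mathbb{E}\circ\Psub(\pi_1):\Psub(\nat\times\semcs{\tau})\to[0,\infty]$ is monotone (indeed Scott-continuous), so the equality transports to $\mathbb{E}(\Psub(\pi_1)(\semcs{\DownarrowFun(t)})) = \mathbb{E}(\Psub(\pi_1)(\semcs{t}))$. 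Finally, $\mathbb{E}(\Psub(\pi_1)(\semcs{t}))$ is exactly the expected value of the cost marginal of $\semcs{t}$, i.e. the cost component $\mathbb E(\mu_1)$ appearing in the relation $\C_{F\tau}$; \Cref{cor:sound2} relates it to the weight $\pi_1(\semec{t})$ returned by the expected-cost semantics, with equality on the recursion-free fragment. Chaining these facts yields the displayed comparison between $\pi_1(\semec{t})$ and $\mathbb{E}(\Psub(\pi_1)(\semcs{\DownarrowFun(t)}))$.

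The hard part is not this final chaining, which is bookkeeping, but the two theorems it consumes. The real content sits in the fundamental theorem of the two-level logical relation underlying \Cref{cor:sound2}, whose crux is that expectation commutes with the Kleisli extension of $\Psub$ (\Cref{lem:expected_linear}, a Fubini/commutativity argument) and that both the expectation and the relation $\C_{F\tau}$ are preserved under the ascending suprema defining $\fix$; and in the operational adequacy argument, whose measure-theoretic subtleties are precisely what force the subprobabilistic treatment. I would therefore not reprove either input, and would instead check only the two interfaces: that the cost marginal named in \Cref{cor:sound2} is literally the pushforward $\Psub(\pi_1)$ appearing in the statement, and that the ground-type hypothesis on $\tau$ --- inherited from the adequacy theorem, where the value relation at ground type collapses to equality --- suffices to make $\semcs{\DownarrowFun(t)}$ and $\semcs{t}$ coincide as honest distributions before the cost marginal discards the value component.
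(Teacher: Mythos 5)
Your assembly is, at the level of which results get combined, the same as the paper's: the paper proves this corollary in one line, citing the Adequacy theorem and \Cref{cor:sound2}, and you reconstruct exactly that, adding the Soundness theorem so as to upgrade $\semcs{t} \leq \semcs{\DownarrowFun(t)}$ to the equality $\semcs{\DownarrowFun(t)} = \semcs{t}$ before pushing it through the monotone map $\mathbb{E}\circ\Psub(\pi_1)$. Your interface checks are also correct as far as they go: closed terminals of type $F\tau$ are of the form $\produce V$ with $\semcs{\produce V} = \delta_{(0,\semcs V)}$, so $\semcs{\DownarrowFun(t)}$ is an honest element of $\Psub(\nat\times\semcs\tau)$, and the marginal appearing in \Cref{cor:sound2} is indeed the pushforward $\Psub(\pi_1)$.

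The gap is in the one step you dismissed as ``bookkeeping'': the directions of the inequalities do not chain to the displayed statement. \Cref{cor:sound2}, as stated in the main text and as encoded in the logical relation ($\C_{F\tau}$ requires $\mathbb E(\mu_1) \leq r$), says the expected-cost semantics is an \emph{upper} bound: $\mathbb{E}(\Psub(\pi_1)(\semcs{t})) \leq \pi_1(\semec{t})$. Combined with your equality $\semcs{\DownarrowFun(t)} = \semcs{t}$, this yields
\[\mathbb{E}(\Psub(\pi_1)(\semcs{\DownarrowFun(t)})) \leq \pi_1(\semec{t}),\]
which is the \emph{reverse} of the inequality you set out to prove. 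In fact the displayed statement cannot hold as written: together with $\semcs{\DownarrowFun(t)} = \semcs{t}$ it would force $\pi_1(\semec{t}) \leq \mathbb{E}(\Psub(\pi_1)(\semcs{t}))$ for every closed ground-type program, and the paper's own example in \Cref{app:denproof} refutes this (the program $x \leftarrow \charge 2; \produce 0;\; \app u x$ of type $F\nat$ has $E(\semcs{\cdot}) = (3, \tfrac12\delta_0)$ but $\semec{\cdot} = (4, \tfrac12\delta_0)$). So either the corollary's inequality must be flipped --- in which case your argument goes through, and indeed only Soundness plus \Cref{cor:sound2} is needed, not Adequacy --- or one must read the soundness result in the opposite direction stated in the appendix version (\Cref{cor:sound1}, ``greater than''), which contradicts the logical relation and the counterexample. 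You explicitly said you would verify ``only the two interfaces,'' but the interface that actually fails is the orientation of the inequality in \Cref{cor:sound2}, which you never pinned down; writing ``yields the displayed comparison'' hides precisely the point where the argument (and, to be fair, the paper's own one-line proof) breaks.
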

\begin{proof}
    This theorem is a consequence of the adequacy theorem above and \Cref{cor:sound2}.
\end{proof}

\section{Pre-expectation Soundness}
\label{sec:preexpect}

We conclude the technical development of the semantics by proving a canonicity property of the expected cost monad with respect
to the pre-expectation monad --- frequently used for expected cost analysis \cite{kaminski2016, avanzini2021}. The main theorem in this
section shows that the expected cost monad is the minimal submonad of the pre-expectaction monad that can accommodate cost and 
sampling operations. This is important because it shows that any pre-expectation-based approach to expected cost analysis has to 
necessarily be a conservative extension of the expected cost monad.

We begin by revisiting some basic concepts from pre-expectation semantics. Then, we go over the categorical machinery used in
our canonicity proof. As an application, we provide a more robust and conceptual proof of Lemma~7.2 of Avanzini et al.~\cite{avanzini2020}, 
a key result in establishing the soundness of their approach. We conclude this section by proving \Cref{th:excostmin}, the 
canonicity property satisfied by the expected cost monad.

\subsection{Pre-expectation semantics}

Pre-expectations were originally proposed as a probabilistic generalization of predicate-transformer semantics \cite{kozen1983probabilistic}
and are an important tool used in the semantic analysis of probabilistic programs. They are defined as an element of the space 
$(X \to \weight) \to \weight$, for some $X$.

Pre-expectations are quite familiar in the functional programming world, being an instance of continuation
monads $K_A = (- \to A) \to A$, with $A = \weight$. These monads are quite flexible in terms of which effects
they can accommodate. Indeed, by choosing appropriate response types $A$, one can encode every monadic effect. 
For example, previous work \cite{kaminski2016,avanzini2020} has noted that pre-expectations can be used can 
accommodate $\charge n$ and sampling operations $t \oplus_p u$ as the pre-expectations $\lamb f {n + f\, *}$
and $\lamb f {p*(t \, f) + (1-p)*(u\, f)}$, respectively. In Avanzini et al.~\cite{avanzini2020}, 
such interpretations define a pre-expectation transformer $\mathsf{ect}$ that maps costful probabilistic programs 
to their pre-expectation semantics.

However, the flexibility afforded by using continuation monads comes at a cost. One of the goals of denotational 
semantics is using mathematical structures that are abstract enough to enable high-level mathematical reasoning, 
while reflecting useful programming invariants at the semantic level. Continuation semantics go against this tenet, 
as they lose structures and invariants of ``tailor made'' semantics, making them not completely adequate for reasoning 
about programs.

In the context of pre-expectation reasoning, as we saw in \Cref{sec:eqsound}, there are useful equations such as 
commutativity or variants thereof that are validated by the expected cost monad. On the other hand, continuation 
monads are notoriously non-commutative \cite{carette2023central}.

Another example of such a denotational inadequacy can be seen in Avanzini et al.~\cite{avanzini2020}.
The soundness of their cost analysis hinges on proving that the function $\mathsf{ect}$ mentioned above can be factored as the sum 
of a positive real number and an integral. In contrast, this property holds unconditionally in the expected cost semantics, 
making it more robust when it comes to language extensions. 

Lastly, the pre-expectation semantics has no a priori connection to probability theory, making it harder to apply
theorems from probability theory when reasoning about programs. This limitation will become clear in \Cref{sec:examples},
where we reason about a stochastic convex hull algorithm by using known theorems from probability theory.

\subsection{Factoring monad morphisms}
Factorization systems capture classes of morphisms that can uniquely factor any morphism. The most familiar example is that
of injections and surjections, where every function can be uniquely factored as the composition of a surjection followed by
an injection. However, in order to properly generalize this idea, one needs to be a bit more careful:

\begin{definition}[e.g. Section~5.5.1 of Borceux \cite{borceux1994I}]
    A (orthogonal) factorization system over a category $\cat{C}$ 
    is a pair $(\mathcal{E}, \mathcal{M})$, where both $\mathcal{E}$
    and $\mathcal{M}$ are classes of morphisms of $\cat{C}$ such that
    \begin{itemize}
        \item Every isomorphism belongs to both $\mathcal{E}$ and $\mathcal{M}$
        \item Both classes are closed under composition
        \item For every $e : \mathcal{E}$, $m : \mathcal{M}$ and morphisms $f$ and $g$ such that the appropriate
        diagram commutes, there is a unique $h$ making the following diagram commute:% https://q.uiver.app/#q=WzAsNCxbMCwwLCJBIl0sWzEsMCwiQiJdLFswLDEsIkMiXSxbMSwxLCJEIl0sWzIsMywibSIsMl0sWzAsMSwiZSJdLFswLDIsImYiLDJdLFsxLDMsImciXSxbMSwyLCJoIiwxLHsic3R5bGUiOnsiYm9keSI6eyJuYW1lIjoiZG90dGVkIn19fV1d
\[\begin{tikzcd}
	A & B \\
	C & D
	\arrow["f", from=1-1, to=1-2]
	\arrow["e"', from=1-1, to=2-1]
	\arrow["h"{description}, dotted, from=2-1, to=1-2]
	\arrow["m", from=1-2, to=2-2]
	\arrow["g"', from=2-1, to=2-2]
\end{tikzcd}\]
        \item Every morphism $f$ in $\cat{C}$ can be factored as $f = m \circ e$,
        where $m : \mathcal{M}$ and $e : \mathcal{E}$.
    \end{itemize}
\end{definition}

Note that under these axioms, the factorization is guaranteed to be unique up to isomorphism. If every
morphism in $\mathcal{M}$ is monic, then the factorization system is said to be \emph{epi-mono}.

\begin{example}
    As mentioned above, in the category $\cat{Set}$, the pair $(\{ f \mid f \text{ is surjective}\}, \{ f \mid f \text{ is injective}\})$
    is a factorization system. Given a function $f : A \to B$, its factorization is $A \to f(A) \to B$, where $f(A)$ is the image of $f$
    and the function $f(A) \to B$ is the set inclusion.
\end{example}

\begin{example}[Example~2.3 of Kammar and McDermott \cite{kammar2018}]
    In the category $\cat{CPO}$, the pair $(\{ f \smid f \text{ has dense image}\}, \{ f \smid f \text{ is order-reflecting}\})$ is
    a factorization system.
\end{example}

A natural extension of the factorization system above can be defined for the category $\wqbs$ by requiring the elements of $\mathcal{E}$ to be
the densely strong epic morphisms \cite{wqbs}. A morphism $f : X \to Y$ in $\wqbs$ is densely strong epic if it maps elements in $M_X$ into a Scott dense
subset of $M_Y$ with respect to the pointwise order. We now recall some definitions and a theorem proved by Kammar and McDermott \cite{kammar2018}
that play important role in how the probability monad in $\wqbs$ is defined.

\begin{definition}
    A monad structure is an endofunctor $S$ equipped with the monad natural transformations $\eta$ and $(-)^\#$ but without the necessity of satisfying the monad laws.
\end{definition}

\begin{definition}
    A monad structure morphism between monad structures $S_1$ and $S_2$ over the same categories is a natural transformation $S_1 \to S_2$ such that the monad morphism laws hold.
\end{definition}

\begin{theorem}[Theorem~2.5 of Kammar and McDermott \cite{kammar2018}\footnote{The proof is explained right after Theorem~2.6.}]
\label{th:fact}
    Let $(\mathcal{E}, \mathcal{M})$ be an epi-mono factorization structure, $S$ a monad structure, $T$ a monad and $\gamma : S \to T$ a monad structure morphism. If the factorization system is closed under $S$ then $\gamma$ can be factored as $S \to M \to T$, where $M$ is a monad, both morphisms are monad structure morphisms and each component of these morphisms is an element of $\mathcal{E}$ and $\mathcal{M}$.
\end{theorem}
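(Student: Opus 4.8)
The plan is to build the intermediate monad $M$ directly from the $(\mathcal{E},\mathcal{M})$-factorizations of the components of $\gamma$ and to transport all the monad structure across these factorizations by repeated use of the orthogonal diagonal fill-in. First I would factor each component $\gamma_A : SA \to TA$ as $SA \xrightarrow{e_A} MA \xrightarrow{m_A} TA$ with $e_A \in \mathcal{E}$ and $m_A \in \mathcal{M}$; since the system is epi-mono, every $m_A$ is monic. This defines $M$ on objects. To make $M$ a functor, for each $f : A \to B$ I would apply orthogonality to the square with left leg $e_A \in \mathcal{E}$, right leg $m_B \in \mathcal{M}$, top $e_B \circ Sf$, and bottom $Tf \circ m_A$; it commutes by naturality of $\gamma$, so there is a unique diagonal $Mf : MA \to MB$, and uniqueness of fill-ins yields functoriality. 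By the same uniqueness, $e$ and $m$ become natural transformations $S \to M$ and $M \to T$.

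Next I would equip $M$ with a monad structure, working in the multiplication presentation $(\eta^M,\mu^M)$, which is equivalent to the $(-)^\#$ form used above. The unit is forced: set $\eta^M_A = e_A \circ \eta^S_A$, which satisfies $m_A \circ \eta^M_A = \eta^T_A$ because $\gamma$ preserves units. For the multiplication I would again use orthogonality, on the square with left leg the component of the horizontal composite $S S A \to M M A$, right leg $m_A$, top $e_A \circ \mu^S_A$, and bottom $\mu^T_A$ composed with the horizontal composite $MMA \to TTA$. This square commutes exactly because $\gamma$ is a monad-structure morphism, i.e. it intertwines $\mu^S$ and $\mu^T$, which yields a unique $\mu^M_A : MMA \to MA$. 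The one place the hypothesis is indispensable is in checking that the left leg lies in $\mathcal{E}$: writing it as $e_{MA} \circ S e_A$, the factor $S e_A$ is in $\mathcal{E}$ precisely because the factorization system is closed under $S$, and $e_{MA} \in \mathcal{E}$, so their composite is in $\mathcal{E}$. Orthogonality then applies, and by construction of the two triangles both $e$ and $m$ are monad-structure morphisms.

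Finally I would verify that $M$ genuinely satisfies the monad laws. The subtle point is that $S$ is only a monad structure, so none of $M$'s laws can be inherited from $S$; instead I would push each law through the monic components $m_A$. For the unit laws and for associativity, postcomposing the two sides with $m_A$ and rewriting via $m \circ \mu^M = \mu^T$ composed with the horizontal square, $m \circ \eta^M = \eta^T$, and naturality reduces each identity to the corresponding law for the genuine monad $T$; since every $m_A$ is monic, the law then holds in $M$. Naturality of $\mu^M$ follows once more from uniqueness of fill-ins.

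I expect the main obstacle to be the two-sided bookkeeping. On one side, I must certify that the horizontal composites $S S \to M M$ and $S S S \to M M M$ remain componentwise in $\mathcal{E}$, which is the sole and essential use of closure under $S$; on the other, I must recognize that $M$'s monad laws have to be extracted from $T$ by cancelling the monos $m_A$ rather than derived from $S$, which has no laws to offer. Once these two observations are in place, the remaining verifications are routine diagram chases driven entirely by the universal property of the factorization.
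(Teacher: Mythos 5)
Your proposal is correct and takes essentially the same route as the proof the paper cites from Kammar and McDermott (the paper itself only states the result): factor $\gamma$ componentwise, transport the functor action, unit, and multiplication across the factorization by diagonal fill-ins --- closure under $S$ entering exactly where you place it, namely to ensure $e_{MA}\circ Se_A\in\mathcal{E}$ --- and recover the monad laws of $M$ from those of the genuine monad $T$ by cancelling the monos $m_A$. The one step that deserves explicit justification is your claim that the multiplication presentation is ``equivalent to the $(-)^{\#}$ form'': for lawless monad structures this is not automatic, but it does hold under the paper's definitions because the bind is assumed to be a natural transformation in both arguments, which forces $g^{\#_S}=\mu^S_B\circ Sg$ and thereby makes the bind-form morphism laws (the paper's definition of monad structure morphism, for $\gamma$, $e$, and $m$) interderivable with the multiplication-form triangles your fill-ins produce.
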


This theorem was used by V\'{a}k\'{a}r et al.~\cite{wqbs} when defining their statistical powerdomain as the factor of a monad morphism into the pre-expectation monad $K_{\weight}$. Intuitively, the monad structures we are interested in are those where the monad laws hold up to an equivalence relation and the factored monads
are, roughly, the quotiented monad structures and satisfy a minimality universal property given by the uniqueness of factorizations.
\subsection{Expected cost monad canonicity}

We conclude this section by showing that the expected cost monad is the minimal submonad of $K_{\weight}$ 
that can accommodate cost and subprobability distributions. We extend the analysis done
by V\'{a}k\'{a}r et al.~\cite{wqbs} on using \Cref{th:fact} to define and prove the minimality of their statistical powerdomain. 
We now review its construction and then apply it to our semantics.

The ``random elements'' functor $[0,1] \to -_\bot$, where $(-)_\bot$ is the partiality monad, can be equipped with a monad structure similar to, but distinct from, the reader monad, as explained in Section 4.2 of \cite{wqbs}. The subprobability monad in $\wqbs$ is then defined using the following lemma:

\begin{lemma}
\label{lem:premonmor}
    There is a monad structure morphism $([0,1] \to -_\bot) \to K_{[0,1]}$.
\end{lemma}
\begin{proof}
    The proof is nearly identical to Section~4.2 of V\'{a}k\'{a}r et al.~\cite{wqbs}, with the exception that
    they are using a different, but analog, monad structure.
\end{proof}

The subprobability monad $\Psub$ is defined as the factorization of the monad structure morphism above as
$([0,1] \to -_\bot) \xtwoheadrightarrow{\psi } \Psub \hookrightarrow K_{[0,1]}$.
The component of the natural transformation above maps a pair $(f :[0,1] \to A_\bot, g : A \to [0,1])$ to $\int_{\mathsf{dom}(f)} (g\circ f)\diff \mathcal{U}$, where $\mathcal{U}$ is the Lebesgue uniform measure on $[0,1]$ and $\mathsf{dom}(f)$ is the domain of $f$.
\begin{comment}
The probability powerdomain can be defined as those measures that have total mass $1$ or, more formally,
as the equalizer $P X \to T X \rightrightarrows \weight$,
between the constant function $\mu \mapsto 1$ and the total mass function $\mu \mapsto \int_X \diff \mu$.
The subprobability monad is then defined as $\Psub X = P(X_\bot)$, where the mass of $\bot$ corresponds
to the probability of non-termination.
\end{comment}

We can also prove a similar lemma for the expected cost measure monad.

\begin{lemma}
    There is a monad structure on the functor $\weight \times ([0,1] \to -_\bot)$
    and a monad structure morphism $\gamma : \weight \times ([0,1] \to -_\bot) \to K_{\weight}$.
\end{lemma}
\begin{proof}
    There are similarities between the expected cost monad and the monad structure which is given by $\eta(a) = (0, \lamb r a)$ and $f^\#(r, g) = \left (r + \int_{\mathsf{dom}(g)} (\pi_1 \circ f \circ g) \diff \mathcal{U}, (\pi_2\circ f)^\#_{[0,1] \to -_\bot}(g) \right )$.
    The components of the monad structure morphism are defined as $\gamma_A(r, f) = \lamb k {r + \int (k \circ f)\diff \mathcal{U}}$.
    The proof that this is indeed a monad morphism follows by unfolding the definitions and \Cref{lem:premonmor}.
\end{proof}

By \Cref{th:fact}, and the fact that the factorization system in $\wqbs$ is stable under products, the monad structure morphism 
$\gamma$ above can be factored as $[0,\infty] \times ([0,1] \to -_\bot) \rightarrow M \rightarrow K_{\weight}$.

We now show the sense in which $\weight \times \Psub$ is canonical.

\begin{theorem}[cf. \Cref{app:proofs}]
\label{th:premorphism}
    The factor $M$ is isomorphic to the monad $\weight \times \Psub$.
\end{theorem}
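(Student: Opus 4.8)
The plan is to establish the isomorphism $M \cong [0,\infty] \times \Psub$ by exploiting the uniqueness-up-to-isomorphism property of the factorization system in $\wqbs$. The key observation is that both $M$ and $[0,\infty] \times \Psub$ arise as ``middle objects'' factoring the \emph{same} monad structure morphism $\gamma : [0,\infty] \times ([0,1] \to -_\bot) \to K_{[0,\infty]}$. Indeed, $M$ is defined directly as the factor of $\gamma$ via \Cref{th:fact}. So the strategy is to show that $[0,\infty] \times \Psub$ also sits as a factorization of $\gamma$ of the same $(\mathcal{E}, \mathcal{M})$-type, and then invoke the essential uniqueness of orthogonal factorizations to conclude $M \cong [0,\infty] \times \Psub$ as monads.

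First I would exhibit the two maps that realize $[0,\infty] \times \Psub$ as a factor. The left leg $[0,\infty] \times ([0,1] \to -_\bot) \to [0,\infty] \times \Psub$ should be taken to be $\mathsf{id} \times \psi$, where $\psi : ([0,1] \to -_\bot) \twoheadrightarrow \Psub$ is the densely-strong-epic component from \Cref{lem:premonmor}; since the factorization system in $\wqbs$ is stable under products (as already invoked in the excerpt for $\gamma$), the product of a densely strong epi with an identity lies in $\mathcal{E}$. The right leg $[0,\infty] \times \Psub \hookrightarrow K_{[0,\infty]}$ I would define by sending $(r,\mu)$ to the pre-expectation $\lambda k.\, r + \int k \diff\mu$; this is the evident monic map into the continuation monad, built from the inclusion $\Psub \hookrightarrow K_{[0,1]}$ and the additive cost component, and it lies in $\mathcal{M}$ because it is order-reflecting (distinct expected-cost pairs produce distinct pre-expectations, using that integration against distinct subprobability measures is separated by indicator-like test functions). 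I then need to check that composing these two legs recovers $\gamma$, which is a direct computation: $(\mathsf{id}\times\psi)$ followed by the inclusion sends $(r,f)$ to $\lambda k.\, r + \int_{\mathsf{dom}(f)} (k \circ f)\diff\mathcal{U}$, matching the formula for $\gamma_A$.

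The remaining verifications are that both legs are monad structure morphisms and that the middle object $[0,\infty]\times\Psub$ genuinely carries the expected cost monad structure of \Cref{th:expectmonad} (not merely a monad \emph{structure}). The first point reduces to unfolding unit and bind: for the left leg, the unit and Kleisli-extension formulas of the random-elements monad structure push forward along $\psi$ to those of $\Psub$ and leave the cost coordinate untouched, so $\mathsf{id}\times\psi$ commutes with $\eta$ and $(-)^\#$; for the right leg this is the standard fact that the expected-cost-to-continuation embedding respects the monad operations, which is essentially the content of the preceding lemma specialized to the factored object. Once both legs are monad structure morphisms lying respectively in $\mathcal{E}$ and $\mathcal{M}$, \Cref{th:fact} together with uniqueness of $(\mathcal{E},\mathcal{M})$-factorizations gives a unique isomorphism $M \xrightarrow{\sim} [0,\infty]\times\Psub$ compatible with both legs, and this isomorphism is automatically a monad morphism because both factor objects inherit their monad structure from the same $\gamma$.

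I expect the main obstacle to be the verification that the right leg $(r,\mu) \mapsto \lambda k.\, r + \int k \diff\mu$ lands in the distinguished class $\mathcal{M}$ of order-reflecting (monic) morphisms in $\wqbs$, and more subtly that the image of $\mathsf{id}\times\psi$ is exactly the subobject $[0,\infty]\times\Psub$ rather than something smaller or larger. Concretely, I must confirm that a pre-expectation of the form $\lambda k.\, r + \int k\diff\mu$ determines $r$ and $\mu$ uniquely — recovering $r$ by evaluating at $k \equiv 0$ and recovering $\mu$ by varying $k$ over indicator functions — and that this recovery is order-reflecting in the $\wqbs$ sense, so that the map is genuinely in $\mathcal{M}$. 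The cost coordinate $[0,\infty]$ interacts with the $\Psub$ coordinate only additively, so I anticipate no essential difficulty beyond carefully tracking that the densely-strong-epic/order-reflecting factorization of $\gamma$ splits cleanly across the product, a fact the excerpt has already flagged as the stability of the factorization system under products.
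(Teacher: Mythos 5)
Your proposal is correct and takes essentially the same route as the paper's proof: both realize $\weight \times \Psub$ as an $(\mathcal{E},\mathcal{M})$-factorization of the same monad structure morphism, with left leg $id \times \psi$ (componentwise densely strong epic, since the product order in $\wqbs$ is componentwise) and right leg $\varphi(r,\mu) = \lambda k.\, r + \int k \diff\mu$ (order-reflecting because $\Psub \hookrightarrow K_{[0,1]}$ is), concluding by uniqueness of orthogonal factorizations. Your closing worry is also immaterial for the same reason you anticipate: membership in $\mathcal{E}$ requires density of image rather than surjectivity, so no exact image computation is needed.
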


The canonicity of the expected cost monad is given by it being the smallest submonad of the continuation monad
that can accommodate expected cost and subprobability distributions. A consequence of this fact is the following:

\begin{corollary}
    There is a monad morphism $\varphi : \weight \times \Psub \to K_{\weight}$ which is component-wise order-reflecting.
\end{corollary}

% \begin{example}
%     Consider the program $\mathsf{reject} = \fix x. \charge 2; (1 \oplus 2) \oplus (3 \oplus x)$ which
%     implements a rejection sampling algorithm for sampling from a uniform distribution over $3$ elements
%     and the program $\mathsf{unif} = \charge 1; \rand 3$. As expected, neither the expected cost nor the continuation semantics validate this equation. However, the expected cost monad provides a modular description of both programs and it is direct to reason about \emph{refinements} of programs.
% \end{example}

Intuitively, while the (sub)probability monad corresponds to the linear continuation monad, the expected cost 
monad corresponds to the affine continuation monad. An application of the results of this section is that it is
possible to recover a purely denotational proof of Lemma~7.1 of Avanzini et al.~\cite{avanzini2020} that is more robust with 
respect to language extensions.

\begin{theorem}[cf. Theorem~7 of Katsumata \cite{katsumata2013}]
    Let $\sempre{\cdot}$ be the monadic semantics of CBPV for the pre-expectation monad $K_{\weight}$.
    For every program $ \cdot \vdashcomp t : F \R$, $\sempre{t} = \varphi (\semec t)$. More explicitly,
    the pre-expectation semantics can be factored as $\sempre{t}(f) = \pi_1(\semec{t}) + \int f \diff (\pi_2(\semec{t}))$
\end{theorem}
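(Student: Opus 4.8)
The plan is to prove the factorization $\sempre{t}(f) = \pi_1(\semec{t}) + \int f \diff(\pi_2(\semec{t}))$ by reducing it to the commutativity of the two CBPV interpretations with the monad morphism $\varphi$. The key observation is that both $\sempre{\cdot}$ and $\semec{\cdot}$ are instances of the generic monadic CBPV semantics (from \Cref{app:cbpvsemantics}), the former over the pre-expectation monad $K_{\weight}$ and the latter over the expected cost monad $\weight \times \Psub$. Since $\varphi : \weight \times \Psub \to K_{\weight}$ is a genuine monad morphism (the corollary above), and since both interpretations agree on ground value types and on the effectful operations, a structural induction over the CBPV typing derivation should yield $\sempre{t} = \varphi(\semec{t})$ for every computation $t$. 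This is exactly the shape of Theorem~7 of Katsumata~\cite{katsumata2013}, which the statement cites: a monad morphism between two interpreting monads lifts to an equality of the induced CBPV semantics, at least at ground output type.

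First I would set up the induction by establishing the base-operation compatibility: I would check that $\varphi$ sends $\semec{\charge c} = (c, \delta_{()})$ to $\sempre{\charge c} = \lamb f {c + f\,*}$ and $\semec{\uniform} = (0, \lambda)$ to $\sempre{\uniform} = \lamb f {\int f \diff \mathcal{U}}$, which follows directly from the defining formula $\varphi_A(r,\mu) = \lamb k {r + \int (k) \diff \mu}$ unfolded on the point-mass and uniform measures. Next I would verify that $\varphi$ commutes with the $F$-type constructors and with sequencing $(x \leftarrow t); u$: this is precisely the content of $\varphi$ being a monad morphism, i.e. $\varphi \circ \eta^{\weight\times\Psub} = \eta^{K}$ and the bind-compatibility law $(\varphi \circ g)^\#_K \circ \varphi = \varphi \circ g^\#_{\weight\times\Psub}$. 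The remaining structural cases (abstraction, application, thunk/force, products, if-then-else, list elimination) follow mechanically because the two semantics share the same Cartesian-closed structure on $\wqbs$ and interpret value types identically. For $\fix$, monotonicity and Scott-continuity of $\varphi$ let me commute it with the suprema $\bigsqcup_n$ defining both fixed points.

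Once $\sempre{t} = \varphi(\semec{t})$ is established, the explicit factorization is immediate by unfolding $\varphi$ at the computation $t : F\R$: writing $\semec{t} = (r, \mu)$ with $r = \pi_1(\semec{t})$ and $\mu = \pi_2(\semec{t})$, the definition gives $\varphi(r,\mu)(f) = r + \int f \diff \mu = \pi_1(\semec{t}) + \int f \diff(\pi_2(\semec{t}))$, as claimed.

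The hard part will be the restriction to ground output type $F\R$ and, more subtly, whether the induction hypothesis needs strengthening at higher types. As Katsumata's theorem only guarantees the equality at ground types, a naive induction on computation types will not close: at function type $\tau \to \overline{\tau}$ the two semantics live in different exponential objects, so I cannot directly compare $\sempre{t}$ and $\varphi(\semec{t})$ pointwise. The standard fix is to carry a logical-relation-style invariant indexed by types --- relating $\sempre{\cdot}$ and $\semec{\cdot}$ via $\varphi$ at every type, with the function case quantifying over related arguments --- rather than proving a bare equality, and then extract the equality only at the ground $F\R$ conclusion. This mirrors exactly the two-level value/computation logical relation already developed in \Cref{sec:costsound}, so the main effort is re-deploying that machinery with $\varphi$ in place of the relational lifting; the operation and fixed-point cases then reuse the continuity and monad-morphism facts verified above.
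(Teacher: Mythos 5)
Your proposal is correct and follows essentially the same route as the paper: the paper's proof is precisely an appeal to Theorem~7 of Katsumata~\cite{katsumata2013}, i.e.\ the effect-simulation argument in which the monad morphism $\varphi$ is checked against the operations $\mathsf{charge}$ and $\uniform$, lifted through the CBPV type structure by a type-indexed logical relation (since bare equality fails at higher types), and the equality $\sempre{t} = \varphi(\semec{t})$ is then read off at the ground output type $F\R$. Your identification of the higher-type obstruction, the logical-relation strengthening, and the admissibility/continuity handling of $\fix$ is exactly the content of that cited argument, spelled out in more detail than the paper itself provides.
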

\begin{comment}
\begin{proof}
    The proof is a variation of Theorem~7 in Katsumata \cite{katsumata2013}. 
\end{proof}
\end{comment}

Furthermore, by using the fact that the monad morphism is order-reflecting and, therefore, an injection,
we can also prove, for ground type closed computations $t_1$ and $t_2$, if $\sempre{t_1} = \sempre{t_2}$, 
then $\semec{t_1} = \semec{t_2}$. However, for non-ground types, this result is false. For instance,
the weak commutativity equation of \Cref{th:expectedcostcenter} is not validated by the continuation monad 
(cf. Example~5.11 of Carette et al.~\cite{carette2023central}).

We conclude by showing that the expected cost monad is the minimal cost submonad of $K_{\weight}$
such that it supports the operations $\mathsf{charge}$ and $\uniform$.

\begin{theorem}[cf. \Cref{app:proofs}]
\label{th:excostmin}
     For every monad $T$ with a componentwise monic monad morphism $\gamma : T \to K_{\weight}$ 
     and with the operations $\mathsf{charge}$ and $\uniform$ in its image,
    there is a unique monad morphism $\weight \times \Psub \hookrightarrow T$ that factors the monad morphism
    $\weight \times \Psub \to K_{\weight}$ through $\gamma$.
\end{theorem}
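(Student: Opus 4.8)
The plan is to exhibit the desired inclusion as the diagonal filler of an orthogonal-factorization square and then read off uniqueness from the monicity of $\gamma$. Recall from \Cref{th:premorphism} that $\weight \times \Psub$ is (isomorphic to) the factor $M$ of the monad structure morphism $\delta : \weight \times ([0,1] \to -_\bot) \to K_{\weight}$ of \Cref{lem:premonmor} (there written $\gamma$; I rename it to avoid a clash), so that $\delta = m \circ e$ with $e$ densely strong epic in $\mathcal{E}$ and $m = \varphi$ the order-reflecting mono in $\mathcal{M}$. A componentwise-monic monad morphism $\gamma : T \to K_{\weight}$ is exactly what it means for $T$ to be an $\mathcal{M}$-submonad of $K_{\weight}$, i.e. $\gamma$ lies in the mono class of the epi-mono system on $\wqbs$. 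The goal is therefore to produce a monad morphism $h : M \to T$ with $\gamma \circ h = \varphi$ and to show it is unique.

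First I would lift $\delta$ through $\gamma$, i.e. construct a monad structure morphism $g : \weight \times ([0,1] \to -_\bot) \to T$ with $\gamma \circ g = \delta$. Since $\mathsf{charge}$ and $\uniform$ lie in the image of $\gamma$, there are elements $\widehat{\mathsf{charge}}_r \in T1$ and $\widehat{\uniform} \in T[0,1]$ that $\gamma$ sends to the pre-expectations $\lambda k.\, r + k(*)$ and $\lambda k.\, \int k \diff \mathcal{U}$, respectively. I then set
\[
  g_A(r, f) \;=\; \widehat{\mathsf{charge}}_r \bind \bigl(\lambda \_.\, \widehat{\uniform} \bind (\lambda x.\, k_A(f\,x))\bigr),
\]
where the binds are taken in $T$ and $k_A : A_\bot \to TA$ sends each proper point $a$ to $\eta^T a$ and the divergent point $\bot$ to the least element of $TA$. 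That $\gamma \circ g = \delta$ follows because $\gamma$ preserves $\eta$ and bind and hits the two operations as above; a direct computation, relying on \Cref{lem:premonmor}, shows in addition that $g$ preserves $\eta$ and $(-)^\#$, so it is a monad structure morphism.

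With $g$ in hand, the square whose top is $g$, bottom is $m$, left leg is $e \in \mathcal{E}$ and right leg is $\gamma \in \mathcal{M}$ commutes, since $\gamma \circ g = \delta = m \circ e$. Orthogonality of $(\mathcal{E}, \mathcal{M})$ then yields a unique diagonal $h : M \to T$ with $h \circ e = g$ and $\gamma \circ h = m = \varphi$; the latter equation is precisely the required factorization of $\varphi$ through $\gamma$. That $h$ is itself a monad morphism, and not merely a $\wqbs$ morphism, follows from the transport-across-an-epimorphism argument used in \Cref{th:fact}: $e$ is epic and both $e$ and $g = h \circ e$ are monad structure morphisms, so $h$ preserves $\eta$ and $(-)^\#$. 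For uniqueness, any monad morphism $h'$ with $\gamma \circ h' = \varphi = \gamma \circ h$ must equal $h$, because $\gamma$ is componentwise monic.

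The main obstacle is the construction of $g$, and specifically the treatment of divergence. The operations $\mathsf{charge}$ and $\uniform$ generate only total-mass-one behaviours, whereas $\weight \times \Psub$ contains genuinely sub-probabilistic elements arising from nontermination; the clause $k_A(\bot) = \bot_{TA}$ is what supplies them, and one must check both that $TA$ carries such a least element compatible with $\gamma$ (this is where the pointed $\omega$CPO structure transported along $\gamma$ is used) and that $\gamma$ sends it to the divergent pre-expectation $\lambda k.\, 0$, so that $\gamma \circ g = \delta$ holds on the divergent part and not merely on the dense image of the total behaviours. Once $g$ is verified, the remainder is a formal consequence of the orthogonal-factorization machinery of \Cref{th:fact}.
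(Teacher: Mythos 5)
Your overall skeleton---set up a lifting square for the epi--mono factorization system on $\wqbs$, obtain the inclusion as the orthogonal diagonal filler, transport the monad-morphism property along the epi, and get uniqueness from monicity of $\gamma$---is the same as the paper's. The decisive difference is how the \emph{top} arrow of the square is produced, and this is where your proposal has a genuine gap. The paper does not take $\weight \times ([0,1] \to -_\bot)$ as the left leg: it generates the \emph{free} monad $F$ on exactly the operations assumed in the hypothesis, $\mathsf{charge} : \nat \to F1$ and $\uniform : F\R$. The top arrow $F \to T$ then exists automatically, by the universal property of the free monad, precisely because those two operations have preimages under $\gamma$; all the technical work goes into showing that the canonical monad morphism $e : F \to \weight \times \Psub$ is componentwise densely strong epic, which the paper does by an explicit encoding (an arbitrary finite cost $c$ is hit by $\charge{\floor{c}}$ followed by a $(c - \floor{c})$-biased mixture of $\charge{0}$ and $\charge{1}$, and the probabilistic component is handled as in Lemma~4.4 of V\'ak\'ar et al.). Density, rather than surjectivity, is the whole point: divergent, sub-unit-mass, and infinite-cost elements of $\weight \times \Psub$ are never required to be in the image of $e$; they are recovered only in the closure when the orthogonality filler into $T$ is constructed, using the $\omega$CPO structure of $T$ itself.

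Your construction of $g$ on $\weight \times ([0,1] \to -_\bot)$, by contrast, requires interpreting in $T$ data that the hypotheses do not supply. The clause $k_A(\bot) = \bot_{TA}$ presupposes that $TA$ has a least element and that $\gamma$ maps it to the zero pre-expectation $\lambda k.\,0$; neither follows from the assumptions. Objects of $\wqbs$ are $\omega$CPOs without least elements in general, and the hypothesis places only $\mathsf{charge}$ and $\uniform$---not divergence, not the zero subdistribution---in the image of $\gamma$. In fact, the existence of an element of $T1$ lying over $\lambda k.\,0$ is part of what the theorem \emph{concludes} (it is the image of the zero subprobability under the inclusion $\weight \times \Psub \hookrightarrow T$), so assuming it at the outset is close to circular. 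The same problem recurs for $\widehat{\mathsf{charge}}_r$: the signature only provides charges at natural numbers, so preimages in $T1$ for arbitrary $r \in [0,\infty]$ (in particular $r = \infty$, which your domain contains) are not available; reaching them requires either the paper's mixture trick (for finite $r$) or suprema of chains taken in $T$ compatibly with $\gamma$, and the latter is exactly the extension-by-continuity that the orthogonal filler performs---it cannot be used before the filler exists. (A secondary, shared looseness: you assert that componentwise monic means $\gamma \in \mathcal{M}$, but the $\mathcal{M}$-class here is the order-reflecting morphisms; the paper's statement has the same slippage.) The clean repair of your argument is precisely the paper's move: replace your left leg by the free monad on the assumed operations, so that the top arrow is automatic and every limit point is delegated to density.
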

\begin{proof}[sketch.]
    We start by generating the free monad $\wqbs$ on operations $\mathsf{charge}: \nat \to F 1$ and $\uniform : F \R$.
    Similarly to Lemma~4.4 by V\`ak\`ar et al.~\cite{wqbs}, we then show that the unique monad morphism preserving
    $\mathsf{charge}$ and $\uniform$ between $F$ and $\weight \times \Psub$ is component-wise densely strong epic. 
    It then follows by the universal property of orthogonal factorization systems that for every 
    other submonad of $K_{\weight}$ there is a canonical monad morphism $\weight \times \Psub \to K_{\weight}$.
\end{proof}

\section{Examples}
\label{sec:examples}

In this section we will show how the expected cost semantics can be used to reason about the expected cost of probabilistic programs.
We present randomized algorithms and recursive stochastic processes, illustrating the versatility
of $\meta$. Due to space constraints, other examples and a more detailed analysis of these examples can be found in \Cref{app:examples}.

% \subsection{Random Walks and Brownian Motion}
% TODO

% The symmetric random walk is an important stochastic process. We will model
% one as the following program:
% \begin{align*}
%  &\mathsf{randomWalk} = \fix f : \nat \to \nat \to F 1.\, \lambda i : \nat \\
%  &\mathsf{if} \, i \leq a \lor  i \geq b\, \mathsf{then}\\
%  &  \quad \produce ()\\
%  & \mathsf{else}\\
%  &\quad \charge 1;\\
%  &\quad ((\force f) \, (i-1) \, j) \oplus ((\force f) \, (i+1) \, j)
% \end{align*}
% By using theorems from Martingale theory, it is possible to show that the program
% above has an expected cost of $|ab|$, for every $n$. Something useful about
% the symmetric random walk is that it can be used to generate more sophisticated
% stochastic processes. For instance, the Brownian motion can be defined as a kind of
% limit of finer and finer-grained scaled symmetric random walks. It is an interesting
% direction of future work studying if our semantics can prove properties of continuous-time
% stochastic processes by reasoning about their discrete-time behavior.

\subsection{Expected coin tosses}

A classic problem in basic probability theory is computing the expected number of coin flips necessary in order to 
obtain $n$ heads in a row. We can model this stochastic process as probabilistic program in \Cref{fig:expectedcoin},
where its expected cost semantics is also written down.
\begin{figure}[]
    \centering
    \begin{minipage}{.25\textwidth}
    \begin{align*}
    &\mathsf{ET}= \fix f : \nat \to F 1.\, \lambda n : \nat.\,\\
    &\mathsf{if} \, n \, \mathsf{then}\\
    &\quad \produce ()\\
    & \mathsf{else}\\
    &\quad \app {(\force f)} {(n-1)};\\
    &\quad\charge 1; \\
    &\quad (\produce () \oplus \app {(\force f)} n) 
    \end{align*}
    \end{minipage}%
    \begin{minipage}{0.40\textwidth}
    \begin{align*}
    \semec{\mathsf{ET}} = &\bigsqcup_n F^n(\bot)\text{, where}\\
    \\
    F =\, &\lambda \langle T_1, T_2\rangle.\,\lambda n.\\
    &\mathsf{if} \, n \, \mathsf{then}\\
    &\quad (0, \delta_{()})\\
    & \mathsf{else}\\
    &\quad \left (E_1, E_2 \right )
    \end{align*}
    \end{minipage}
    \begin{minipage}{0.25\textwidth}
    \begin{align*}
    &E_1 = T_1(n-1) + \mid T_2(n-1) \mid (1 + \frac 1 2 T_1(n))\\
    &E_2 = \frac {\mid T_2(n-1) \mid} 2(\delta_{()} + T_2(n))
    \end{align*}
    \end{minipage}
    \caption{Expected coin tosses and its expected cost semantics.}
    \label{fig:expectedcoin}
\end{figure}
% \begin{align*}
% &\fix f : \nat \to F 1.\, \lambda n : \nat.\,\\
% &\mathsf{if} \, n \, \mathsf{then}\\
% &\quad \produce ()\\
% & \mathsf{else}\\
% &\quad \app {(\force f)} {(n-1)};\\
% &\quad\charge 1; \\
% &\quad (\produce () \oplus \app {(\force f)} n) 
% \end{align*}

For every $n$, the program above simulates the probabilistic structure of flipping coins until obtaining $n$ heads
in a row. 
When its input is $0$, it outputs $()$ without flipping any coins. If the input is greater than $0$, in
order to flip $n$ heads in a row it must first flip $n - 1$ heads in a row --- hence the call to $f (n - 1)$ ---
flip a new coin while increasing the current counter by $1$ and, if it is heads, you have obtained $n$ heads in a 
row and may output $()$, otherwise you must recursively start the process again from $n$: the left and
right branches of $\oplus$, respectively. 

The denotational semantics of this program is also shown in \Cref{fig:expectedcoin}.
We use the notation $\langle T_1, T_2 \rangle$ to denote the pairing of functions $T_1 : \nat \to \weight$ 
and $T_2 : \nat \to \Psub(1)$. The cost component of the denotation adds the cost of running the program
with $n-1$ as input, while its distribution part must multiply the continuation, which costs $1$ for
the charge operation plus $\frac 1 2 T_1(n)$, for the recursive call. The distribution part $T_2$ is
the usual probabilistic semantics. Before we reason about the cost of this program, we must show that it 
terminates with probability $1$.

\begin{lemma}
    For every $n : \nat$, $\pi_2 \circ \semec{ET}(n) = \delta_{()}$, i.e. it terminates with probability $1$.
\end{lemma}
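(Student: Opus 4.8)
The plan is to exploit that the only output type is $1$, so the distribution component $\pi_2(\semec{ET}(n))$ lives in $\Psub(1)$ and is completely determined by its total mass in $[0,1]$; proving termination with probability $1$ amounts to proving this mass equals $1$. Writing $F^k(\bot)$ for the Kleene iterates used to compute $\semec{ET} = \bigsqcup_k F^k(\bot)$ and setting $m_k(n) = |\pi_2(F^k(\bot)(n))|$, it suffices to show $\sup_k m_k(n) = 1$ for every $n$. Since $\bot$ carries the zero measure, $m_0(n) = 0$, and reading off the mass of $E_2$ from the definition of $F$ gives the recurrence $m_{k+1}(0) = 1$ and $m_{k+1}(n) = \tfrac{m_k(n-1)}{2}(1 + m_k(n))$ for $n > 0$.

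First I would record that $(m_k(n))_k$ is monotone increasing in $k$; this follows from the monotonicity of $F$ that also underlies the existence of the fixed point, so the limit $m(n) = \sup_k m_k(n)$ exists in $[0,1]$. Passing to the supremum on both sides of the recurrence then yields the limit equations $m(0) = 1$ and $m(n) = \tfrac{m(n-1)}{2}(1 + m(n))$ for $n > 0$.

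With the limit equations in hand, I would prove $m(n) = 1$ by induction on $n$. The base case is immediate. For the step, assuming $m(n-1) = 1$, the equation collapses to $m(n) = \tfrac{1}{2}(1 + m(n))$, whose unique solution in $[0,1]$ is $m(n) = 1$; more generally the equation forces $m(n) = \tfrac{m(n-1)}{2 - m(n-1)}$, which is well defined since $m(n-1) \le 1 < 2$. This establishes $\sup_k m_k(n) = 1$, i.e. the distribution component has full mass, which is exactly $\pi_2(\semec{ET}(n)) = \delta_{()}$.

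The main obstacle is justifying the passage of the supremum through the recurrence. Because $m_{k+1}(n)$ depends on both $m_k(n-1)$ and $m_k(n)$, I would argue that the map $(a,b) \mapsto \tfrac{a}{2}(1+b)$ is monotone and Scott continuous on $[0,1]^2$, so that it commutes with suprema of the jointly increasing sequence $(m_k(n-1), m_k(n))_k$. This is precisely the Scott continuity already invoked to compute $\semec{ET}$ as a least fixed point, now specialized to the mass; everything else reduces to the elementary algebra above.
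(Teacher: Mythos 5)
Your proof is correct and takes essentially the same approach as the paper's: both proceed by induction on $n$, reduce the claim to the total mass of the distribution component in $\Psub(1) \cong [0,1]$, and conclude from the equation $m = \tfrac{1}{2}(1+m)$ that $m = 1$. The only difference is presentational: the paper applies the induction hypothesis inside the fixed-point equation $\semec{ET} = F(\semec{ET})$ directly, whereas you re-derive that equation at the level of masses by passing suprema through the Kleene iterates via Scott continuity --- a more explicit justification of the same unfolding step.
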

\begin{proof}
    The proof follows by induction on $n$. When $n = 0$, $\semec{ET}(0) = (0, \delta_{()})$.
    For the inductive case, assume $n > 0$. Unfolding the recursive definition gives us 
    \[(\pi_2 \circ \semec{ET}(n+1)) = \frac{|\semec{ET}(n)|}2(\delta_{()} + (\pi_2 \circ \semec{ET}(n+1)),\] 
    by the induction hypothesis $|\semec{ET}(n)| = 1$, which gives us that $\pi_2 \circ \semec{ET}(n+1) = \delta_{()}$.
\end{proof}

By unfolding the semantics and using the lemma above, we get the following recurrence relation:
\begin{align*}
    T_1(0) &= 0\\
    T_1(n+1) &= 1 + T_1(n) + \frac{1}{2}T_1(n + 1) 
\end{align*}
Which, by inspection, has the closed-form solution $T_1(n) = 2(2^n - 1)$.

\subsection{Randomized Quicksort}

\definecolor{neworange}{RGB}{227,103,28}
\definecolor{purple}{RGB}{150, 71, 184}
\definecolor{blue}{RGB}{0, 161, 255}
\definecolor{green}{RGB}{45, 210, 58}
\definecolor{cornellred}{RGB}{196,18,48}
\definecolor{cornellred}{RGB}{196,18,48}

\begin{figure}
\addtocounter{subfigure}{4}
\begin{minipage}{.65\textwidth}
\captionsetup{type=figure} % -- This line added
\subcaptionbox{Randomized parametric quicksort}[.9\linewidth]
{\begin{align*}
& \fix f : \listty \tau \to F (\listty \tau) .\,\\
& \lambda l : \listty \tau.\,\\
& \lambda pred : \tau \times \tau \to F \nat.\,\\
& \mathsf{case} \, l \, \mathsf{of}\\
& \mid \mathsf{nil} \Rightarrow \\
& \quad \produce \mathsf{nil}\\
& \mid \mathsf{(hd, tl)} \Rightarrow \\
& \quad len \leftarrow \app \mathsf{length}\, l\\
& \quad r \leftarrow \app \rand len\\
& \quad (pivot, l') \leftarrow \mathsf{nthDrop}\, r\, l\\
& \quad (l_1, l_2) \leftarrow \app {\app{biFilter} {pred}} l'\\
& \quad less \leftarrow \app {\force f} {l_1}\\
& \quad greater \leftarrow \app {\force f} {l_2}\\
& \quad \produce (less \mdoubleplus pivot :: greater)
\end{align*}}
\end{minipage}%
\begin{minipage}{.35\textwidth}
    \captionsetup{type=figure}
    \subcaptionbox{Natural number quicksort}[.9\linewidth]
{\begin{align*}
&qck_\nat = \fix f : \nat \to F \nat.\, \lambda n : \nat.\,\\
&\mathsf{if} \, n \, \mathsf{then}\\
&\quad \produce 0\\
& \mathsf{else}\\
&\quad \charge {(n - 1)}\\
&\quad x \leftarrow \rand \, n\\
&\quad (\force \, f)\, x\\
&\quad (\force \, f)\, (n - x - 1)\\
&\quad \produce n\\
& ~\\
& ~\\
& ~\\
& ~\\
\end{align*}}
\end{minipage}%
\caption{Quicksort algorithms}
\label{fig:parametric_quicksort}
\end{figure}

In \Cref{fig:parametric_quicksort} we present a program that implements a randomized quicksort parametric
on a total order on the type $\tau$. For the purposes of this analysis, we assume that the input list only 
has distinct elements. 

If we simply interpret the expected cost of this program denotationally, it will be a 
function mapping lists to real numbers. This is not how such an analysis is done in practice,
where the cost is given as a function mapping list lengths to cost.

In our semantics, the denotation of the program is hiding the fact that its cost only depends on
the length of its argument. We make this precise by defining a measurable function $qck_\nat : \nat \to \R \times \Psub \nat$
using the program in \Cref{fig:qcknat} that corresponds to the quicksort structure
assuming that the input is a natural number.
\begin{lemma}[cf. \Cref{app:proofs}]
\label{lem:permInv}
Assuming that $p : \tau \times \tau \to F\nat$ is a total order and the the input list has
no repetitions, the following program equation holds. %$n \leftarrow \mathsf{length}\, l; \mathsf{qck}_\nat \, n = l' \leftarrow \mathsf{quicksort}\, l'; \mathsf{length}\, l'$.
\[
\begin{aligned}
    &n \leftarrow \mathsf{length}\, l\\
    & \mathsf{qck}_\nat \, n
\end{aligned}
=
\begin{aligned}
    \quad
    &l' \leftarrow \mathsf{quicksort}\, (\charge 1; p)\, l\\
    & \mathsf{length}\, l'
\end{aligned}
\]
%
% https://q.uiver.app/#q=WzAsNCxbMSwwLCJcXG5hdCJdLFsxLDEsIkYgXFxuYXQiXSxbMCwwLCJcXGxpc3R0eSBcXG5hdCJdLFswLDEsIkYgXFxsaXN0dHkgXFxuYXQiXSxbMiwzLCJxY2tfe1xcbGlzdHR5IFxcbmF0fSIsMl0sWzIsMCwibGVuIl0sWzAsMSwicWNrX3tcXG5hdH0iXSxbMywxLCJGKGxlbikiLDJdXQ==
% \[\begin{tikzcd}
% 	{\listty \nat} & \nat \\
% 	{F \listty \nat} & {F \nat}
% 	\arrow["{qck_{\listty \nat}}"', from=1-1, to=2-1]
% 	\arrow["len", from=1-1, to=1-2]
% 	\arrow["{qck_{\nat}}", from=1-2, to=2-2]
% 	\arrow["{F(len)}"', from=2-1, to=2-2]
% \end{tikzcd}\]
\end{lemma}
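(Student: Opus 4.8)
The plan is to reduce both sides to a statement purely about expected cost as a function of list length, and then prove that statement by strong induction on $|l|$. First I would observe that $\mathsf{length}$ is cost-free and deterministic, so $\semec{\mathsf{length}}(l) = (0, \delta_{|l|})$; using the bind of the expected cost monad, the left-hand side then denotes $\semec{qck_\nat}(|l|)$, while on the right-hand side, writing $\semec{\mathsf{quicksort}\,(\charge 1; p)}(l) = (c, \mu)$, the final $\mathsf{length}$ contributes no cost and the program denotes $(c, \Psub(\mathsf{length})(\mu))$. Thus it suffices to prove, for every list $l$ of distinct elements with $|l| = n$, that $c = c_n := \pi_1\bigl(\semec{qck_\nat}(n)\bigr)$ and that $\mu$ is a probability measure supported on lists of length $n$ (so that $\Psub(\mathsf{length})(\mu) = \delta_n$), together with the companion fact $\pi_2\bigl(\semec{qck_\nat}(n)\bigr) = \delta_n$, which is immediate since $qck_\nat$ always executes $\produce n$.

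The heart of the argument is a permutation-invariance observation obtained by unfolding the fixed point once, which is justified by the \textsc{Fix} equation and its soundness (\Cref{th:eqsound}). For $n \geq 1$ the program picks a position $r$ uniformly in $\{0,\dots,n-1\}$, takes $\mathsf{nthDrop}\,r\,l$ to obtain the pivot together with the remaining $n-1$ elements $l'$, and partitions $l'$ with $\mathsf{biFilter}\,(\charge 1; p)$. Because $p$ is a total order and $l$ has no repetitions, $\mathsf{biFilter}$ charges exactly $1$ per element, i.e. $n-1$ in total, and splits $l'$ into the sublist of elements below the pivot and the sublist above it, of respective lengths $k$ and $n-1-k$, where $k$ is the rank of the pivot. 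The key point is that the map sending a position $r$ to the rank of $l[r]$ is a bijection of $\{0,\dots,n-1\}$; hence choosing the pivot position uniformly induces a uniform distribution on its rank $k$. This is exactly the structure of $qck_\nat$ in \Cref{fig:parametric_quicksort}, whose $\charge{(n-1)}$ and uniform choice $\rand\,n$ mirror the $n-1$ comparisons and the uniform rank, with recursive calls on inputs $x$ and $n-x-1$ matching the recursion on the two sublists.

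With this in hand I would close the induction. The recursion tree of quicksort on a list of length $n$ has depth at most $n$, so the approximants $\bigsqcup_m F^m(\bot)$ stabilize and every distribution appearing is a genuine probability measure; in particular all normalizing masses equal $1$. Applying the induction hypothesis to the two sublists, both strictly shorter than $l$, gives that their recursive costs are $c_k$ and $c_{n-1-k}$ and that their outputs have lengths $k$ and $n-1-k$; concatenating with the pivot yields an output of length $k + 1 + (n-1-k) = n$, which settles the support claim for $\mu$. Accumulating cost through the monad bind and reindexing the uniform average by the rank bijection gives
\[ c = (n-1) + \frac{1}{n}\sum_{k=0}^{n-1}\bigl(c_k + c_{n-1-k}\bigr) = c_n, \]
where the final equality is the unfolding of $qck_\nat$. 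This matches the two sides and completes the induction.

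The main obstacle I anticipate is the bookkeeping around the fixed point rather than the probabilistic content: one must justify simultaneously that the fixed-point unfolding may be used, that the relevant subdistributions have total mass one (so that the $|T_2(\cdot)|$ factors in the raw recurrence collapse to $1$ and the clean averaged recurrence is valid), and that the induction on list length is well founded because both recursive calls act on strictly shorter lists. The genuinely mathematical step — that uniform pivot position yields uniform pivot rank and that $\mathsf{biFilter}$ realizes the rank-determined split with exactly $n-1$ comparisons — is where the distinct-elements and total-order hypotheses are essential, and it is the single place where the combinatorics of the equation lives.
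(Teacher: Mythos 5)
Your proposal is correct and follows essentially the same route as the paper: strong induction on the length of the input list, hoisting the $n-1$ comparison charges out of $\mathsf{biFilter}$, invoking the position-to-rank bijection so that a uniform pivot position yields a uniform rank, applying the induction hypothesis to the two strictly shorter sublists, and closing with one unfolding of $\mathsf{qck}_\nat$'s fixed point. The only difference is presentational — the paper argues via a chain of intermediate program equations (justifying the uniformity step denotationally), whereas you work directly with denotations in the expected cost monad — but the decomposition and the key steps are identical.
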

We now conclude our analysis by using the program equation above and the following lemma
which is proved by induction on $n$.
\begin{lemma}
    For every $n : \nat$, $(\pi_2 \circ \mathsf{qck_\nat})(n) = \delta_n$, i.e. it terminates with probability $1$.
\end{lemma}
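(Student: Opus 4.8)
The plan is to prove the claim by strong (course-of-values) induction on $n$, reasoning directly with the fixed-point equation satisfied by $\mathsf{qck}_\nat$, exactly as in the preceding analysis of $\mathsf{ET}$. Writing $T_2 = \pi_2 \circ \mathsf{qck}_\nat : \nat \to \Psub(\nat)$, the \textsc{Fix} rule (sound by \Cref{th:eqsound}) lets me work with the recurrence obtained from a single unfolding, rather than with the finite approximants of the least fixed point, whose masses only reach the answer in the limit.

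For the base case $n = 0$ the guard is zero, so the program takes the $\produce 0$ branch and $\mathsf{qck}_\nat(0) = (0, \delta_0)$; hence $T_2(0) = \delta_0$, which has mass $1$. For the inductive step, fix $n > 0$ and assume $T_2(m) = \delta_m$ for every $m < n$. Unfolding the sequencing of $\rand\,n$ with the two recursive calls and $\produce n$ in the expected cost monad gives
\[
T_2(n) = \sum_{i=0}^{n-1} \frac{1}{n}\, |T_2(i)|\, |T_2(n - i - 1)|\, \delta_n .
\]
The two recursive outputs are discarded --- they are never bound to a variable used downstream --- so only their total masses survive the bind of $\weight \times \Psub$, which is precisely the product appearing above.

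The observation that makes the induction go through is that every recursive call is on a strictly smaller argument: for each $i \in \{0, \dots, n-1\}$ both $i < n$ and $n - i - 1 < n$, so the induction is well-founded and the hypothesis yields $|T_2(i)| = |T_2(n - i - 1)| = 1$. Substituting, $T_2(n) = \sum_{i=0}^{n-1} \frac{1}{n}\, \delta_n = \delta_n$, which closes the step. I do not anticipate a genuine obstacle: the only points needing care are (i) justifying that the discarded recursive outputs contribute through their masses alone, which is immediate from the definition of $(-)^\#$ for $\weight \times \Psub$ together with the semantics of $\rand$, and (ii) using the exact fixed-point equation rather than the finite approximants, so that $T_2(n) = \delta_n$ holds on the nose at each stage and no separate continuity argument is required.
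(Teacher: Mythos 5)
Your proof is correct and follows essentially the same route as the paper, which simply states that the lemma ``is proved by induction on $n$'' (the recurrence you derive for $\pi_2 \circ \mathsf{qck}_\nat$ matches the paper's own unfolding of the semantics). Your elaboration --- using the sound \textsc{Fix} equation, noting that both recursive calls are on strictly smaller arguments so strong induction applies without any equation-solving or continuity argument --- is exactly the fleshed-out version of the paper's one-line argument.
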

By unfolding the definition of $\pi_1\circ \semec{qck_\nat}$, we obtain the following expression
$\fix (f \mapsto n \mapsto \ifthenelse n {0} {(n-1) + \sum_i \frac 1 n (f(i) + f(n-i))})$ which
can be further simplified to the recurrence relation:
\begin{align*}
    T(0) &= 0\\
    T(n) &= n - 1 + \frac{2}{n}\sum_{i = 0}^{i-1}T(i)
\end{align*}
This allows us to conclude that $\mathsf{quickSort}$ has an expected cost of $O(n \log (n))$, which
can be proved by induction and using the observation that for monotonic functions $f$, $\sum_{i = 1}^{n-1}f(i) \leq \int_1^n f(x)\diff x$.

\subsection{Stochastic Convex Hull}

We conclude this section by going over a stochastic convex hull algorithm \cite{golin1988analysis}.
To the best of our knowledge, this example is outside of reach of
other logic/PL approaches to expected cost analysis, due to its combination
of continuous distributions, modular cost-probability interaction and 
non-trivial probabilistic reasoning. This stochastic variant of the algorithm
has linear expected cost analysis.

In \Cref{fig:stochconvexhull} we go over the implementation
of this algorithm in $\meta$. Its main body $\mathsf{convexHull}$
has three components. First, a list of points is sampled independently
and uniformly from the square $[0,1]^2$, then a preprocessing
stage $\mathsf{sieve}$ removes points that ``obviously'' are not in
the convex hull and then we call the traditional Graham scan algorithm ---
cf. Section $3.3.2$ of Preparata and Shamos \cite{preparata2012computational}.
The output of $\mathsf{convexHull \, n}$ is the convex hull of a set of
$n$ points uniformly distributed in the $[0,1]^2$ square. Since in this
example we are only interested in reasoning about its expected cost, we will
not reason about its functional correctness.

For the sake of presentation, we make a few simplifying assumptions:
we assume that there is an operation $\mathsf{iQ}$ that checks to see
if a point is inside a given convex quadrilateral and we assume that there
is an operation $\mathsf{clockOrNot} \, p_1\, p_2\, p_3$ that checks whether 
the lines $\overline{p_1p_2}$ and $\overline{p_2p_3}$ turn counterclockwise or not. 
Finally, we assume the existence of a parametric
binary total order relation $\sqsubseteq : \R^2 \to \R^2 \to \R^2 \to F \nat$
such that $\sqsubseteq \, p$ orders two points comparing their angle between 
the x-axis and their respective straight line with $p$ at the origin.

\begin{figure}[]
    \centering
    \begin{minipage}{.33\textwidth}
    \begin{align*}
    &\mathsf{unifList} = \fix f.\lambda n.\\
    &\mathsf{ifZero}\, n\, \mathsf{then}\\
    &\quad \produce\, \mathsf{nil}\\
    &\mathsf{else}\\
    &\quad l \leftarrow \mathsf{unifList}\, (n-1)\\
    &\quad x \leftarrow \mathsf{uniform}\\
    &\quad y \leftarrow \mathsf{uniform}\\
    &\quad \produce (\mathsf{cons}\, (x,y)\, l)\\
    \\
    &\mathsf{sieve} = \lambda l.\,\\
    & p_1 \leftarrow \mathsf{min} (\lamb {(x, y)} {x + y})\, l\\
    & p_2 \leftarrow \mathsf{min} (\lamb {(x, y)} {x - y})\, l\\
    & p_3 \leftarrow \mathsf{min} (\lamb {(x, y)} {-x + y})\, l\\
    & p_4 \leftarrow \mathsf{min} (\lamb {(x, y)} {-x - y})\, l\\
    & \mathsf{filter} \, (\charge 1; \mathsf{iQ} \, p_1\, p_2 \, p_3\, p_4) \, l\\
    \end{align*}
    \end{minipage}%
    \vline
    \begin{minipage}{.33\textwidth}
    \begin{align*}
    &\mathsf{scan} = \fix f \, \lambda l .\,\lambda stk.\,\\
    & \charge 1\\
    &\mathsf{case} \, (l, stk) \, \mathsf{of}\\
    & \mid \mathsf{nil}, \_ \Rightarrow \produce stk\\
    & \mid (x :: xs), \mathsf{nil}\Rightarrow f\, xs \, [x]\\
    & \mid (x :: xs), [p] \Rightarrow f\, xs \, [x, p]\\
    & \mid (x :: xs), (p1 :: p2 :: ps) \Rightarrow \\
    & \quad \mathsf{if} \, \mathsf{clockOrNot} \, p_2 \, p_1\, x \, \mathsf{then}\\
    & \quad \quad f\, (x :: xs) \, (p2 :: ps)\\
    & \quad \mathsf{else}\\
    & \quad \quad f\, xs \, (x :: stk)
    \end{align*}
    \end{minipage}%
    \vline
    \begin{minipage}{0.33\textwidth}
    \begin{align*}
    &\mathsf{graham} = \lambda l.\\
    & p \leftarrow \mathsf{min\_xy}\, l\\
    & l' \leftarrow \mathsf{quicksort}\, (\charge 1; \sqsubseteq\, p) l\\
    & \mathsf{scan}\, l'\, \mathsf{nil}
    \\
    \\
    &\mathsf{convexHull} = \lambda n.\\
    &l \leftarrow \mathsf{unifList}\, n\\
    &l' \leftarrow \mathsf{sieve}\, l\\
    &\mathsf{graham}\, l'
    \end{align*}
    \end{minipage}
    \caption{Stochastic convex hull algorithm}
    \label{fig:stochconvexhull}
\end{figure}

\begin{theorem}(cf. \Cref{app:examples})
    The stochastic convex hull algorithm has linear expected run time. 
\end{theorem}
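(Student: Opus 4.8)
The plan is to exploit compositionality of the expected cost monad to split the cost coordinate of $\semec{\mathsf{convexHull}\,n}$ into the contributions of its three phases $\mathsf{unifList}$, $\mathsf{sieve}$ and $\mathsf{graham}$, and then bound each. Since the bind of $\weight \times \Psub$ adds the expected cost of each stage averaged over the distribution produced by the earlier stages (this is exactly the content of \Cref{lem:expected_linear}), the total expected cost factors as
\[
\pi_1\!\left(\semec{\mathsf{convexHull}\,n}\right) = \underbrace{c_{\mathsf{unif}}}_{0} \;+\; \underbrace{c_{\mathsf{sieve}}}_{n} \;+\; \mathbb{E}_{l'}\!\left[\,c_{\mathsf{graham}}(l')\,\right],
\]
where $l'$ ranges over the sieve output. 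The first term is $0$ because $\mathsf{unifList}$ issues no charges, and the second is exactly $n$ because the filter in $\mathsf{sieve}$ issues precisely one charge per point of the length-$n$ input, independently of the sample. So the whole theorem reduces to showing $\mathbb{E}_{l'}[c_{\mathsf{graham}}(l')] = o(n)$.

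For the $\mathsf{graham}$ phase, write $m = |l'|$ for the (random) number of points surviving the sieve. Its charged cost is the cost of $\mathsf{min\_xy}$ (at most $O(m)$, or $0$ if left uncharged), the number of comparisons performed by $\mathsf{quicksort}$ under $\sqsubseteq p$, and the cost of $\mathsf{scan}$. First I would observe that the surviving points are almost surely distinct, since they are drawn from a continuous distribution, so that $\sqsubseteq p$ is a genuine total order on $l'$; \Cref{lem:permInv} then lets me replace the quicksort comparison cost by $\pi_1(\mathsf{qck}_\nat\,m)$, which by the recurrence solved in the Randomized Quicksort subsection is $T(m) = O(m\log m)$. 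Independently of the geometry, $\mathsf{scan}$ charges once per call and pushes each point at most once while popping it at most once, so by the standard amortization argument it costs at most $2m$. Hence, pointwise in the sample, $c_{\mathsf{graham}}(l') \le C\,m\log m$ for a fixed constant $C$.

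The crux is therefore to bound $\mathbb{E}[m\log m]$. Here I would import the geometric-probability analysis of Golin and Sedgewick~\cite{golin1988analysis}: for $n$ points drawn i.i.d.\ uniformly from $[0,1]^2$, the four diagonal extreme points cut out an inscribed quadrilateral whose complement inside the square consists of four slivers, each of expected area $\Theta(1/\sqrt n)$, so that the expected number of surviving points satisfies $\mathbb{E}[m] = O(\sqrt n)$. This is exactly where the semantics being close to probability theory pays off, as $m$ is an honest random variable on a product of uniform spaces and the estimate is a statement purely about that distribution. Bounding $\log m \le \log n$ crudely then gives $\mathbb{E}[m\log m] \le (\log n)\,\mathbb{E}[m] = O(\sqrt n\,\log n) = o(n)$, whence $\mathbb{E}_{l'}[c_{\mathsf{graham}}(l')] = o(n)$ and the total expected cost is $n + o(n) = O(n)$.

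I expect the main obstacle to split in two. On the probabilistic side it is the estimate $\mathbb{E}[m] = O(\sqrt n)$, the "deep probabilistic reasoning'' that places this example beyond prior PL techniques; I would discharge it by citing Golin--Sedgewick rather than reproving it. On the semantic side the delicate point is justifying the decomposition rigorously: I must verify that the expected cost of $\mathsf{graham}$ depends on the sample \emph{only} through the count $m$ (which is precisely what \Cref{lem:permInv} secures, given almost-sure distinctness), and that the monadic bind genuinely contributes the sieve's $n$ charges additively and the $\mathsf{graham}$ cost in expectation over the intermediate distribution, with no hidden interaction between cost and the continuous sampling. The remaining arithmetic --- the closed form for $T$ and the $O(\sqrt n\,\log n)$ estimate --- is routine.
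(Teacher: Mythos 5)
Your proposal is correct and follows essentially the same route as the paper's appendix proof: decompose the expected cost through the monadic bind into the sieve's $O(n)$ charges plus the expectation of $\mathsf{graham}$'s $O(m\log m)$ cost over the sieve-output distribution, bound $\log m \leq \log n$ pointwise to pull the logarithm out of the integral, and invoke Golin's Theorem~2.2 to get $\mathbb{E}[m] = O(\sqrt n)$, yielding $O(n) + O(\sqrt n \log n) = O(n)$. The only cosmetic differences are that you make the $\mathsf{scan}$ amortization and the appeal to \Cref{lem:permInv} explicit (the paper cites the standard Graham-scan analysis and its quicksort case study), while the paper additionally records a probability-one termination lemma that your upper-bound argument does not strictly need.
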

\begin{comment}
\begin{proof}
    Due to space constraints, we have moved the careful analysis to the appendix.
    Something quite appealing about this semantics is that the analysis of this
    algorithms follows quite closely its "textbook" analysis.
\end{proof}
\end{comment}
\section{Related work}

\paragraph{Type Theories for Cost Analysis} Recent work \cite{niu2022,grodin2023} have developed (in)equational
theories for reasoning about costs of programs inside a modal dependently-typed CBPV metalanguage. Their framework
can reason about monadic effects by using the writer monad transformer, similarly to $\meta$'s cost semantics,
but, due to not being able to having access to fixpoint operators, it can only represent total programs and finitely supported
distributions--- i.e. it cannot represent the geometric distribution. Furthermore, it is
unclear how one would go about extending their framework with continuous distributions.

Other work has focused in designing type theories for doing relational reasoning of programs \cite{cciccek2017, rajani2021}.
Even though these approaches can reason about functional programs as well, they are limited to deterministic programs.

In work by Avanzini et al.~\cite{avanzini2019}, the authors define a graded, substructural type system for reasoning about
expected cost of functional programs, even using a randomized quicksort as an example. One of the main limitations of their system with respect to $\meta$ is that, due to the
substructural invariants of their type system, it can only type check a limited subset of the programs that $\meta$ can.
For instance, it cannot type check common functional idioms like fold and map functions over lists. Furthermore,
they have not addressed how feasible type checking in their system is or if it is even decidable, which in the context of
type-based reasoning is an important property to have.

In other work by Avanzini et al.~\cite{avanzini2021}, the authors describe a continuation passing style (CPS) transformation into
a metalanguage for reasoning about expected cost of programs. Compared to $\meta$, both metalanguages can handle functional programming,
though their language is restricted to a CBV semantics and does not validate useful program equations, such as \Cref{th:expectedcostcenter}.
Furthermore, using continuation-passing style to reason about programs
creates undesired gaps between the transformed and original programs which are avoided when using the expected cost monad's direct-style reasoning.
Recent work by Rajani et al.~\cite{rajani2024modal} uses a Kripke-style semantics for reasoning about expected cost and, as such, has looser
connections to probability theory.

\paragraph{Automatic Resource Analysis} One fruitful research direction has been the automatic amortized resource analysis (AARA) 
\cite{ngo2018, hoffmann2017, hoffmann2015}
which uses a type system to annotate programs with their
cost and automatically infer the cost of the program. These techniques have been extended to reason
about recursive types \cite{grosen2023}, probabilistic programs \cite{wang2020} and programs with local state \cite{lichtman2017}.

Something quite appealing about their approach is that it is completely automatic, whereas our approach requires solving a, possibly
hard, recurrence relation by hand. That being said, their system can only accommodate polynomial bounds, meaning that
they cannot infer the $n\log(n)$ bound for the probabilistic quicksort like we do. Recently, AARA has been extended to 
accommodate exponential bounds \cite{kahn2020exponential} in deterministic programs, though it is still unclear if the same
technique can be extended to the probabilistic setting, meaning that they cannot analyze the
behaviour of exponentially slow programs such as the expected coin tosses one. 

There have been other type-based approach to automatically reasoning about cost of programs, such as the language TiML \cite{wang2017}.
This language allows users to annotate type signatures with cost-bounds and the type checking algorithm will infer and check these bounds.
The main limitation of TiML in comparison to our work is that it cannot handle probabilistic programs. There has also been
work done on automated reasoning about cost for first-order probabilistic programs by Avanzini et al.~\cite{avanzini2020}. The main
limitation of this work when compared to $\meta$ is that it can only handle first-order imperative programs. 
%It is an interesting line of future work understanding to what extent solving recurrence relations can be automated in the context
%of $\meta$.
\paragraph{Recurrence for Expected Cost} There has been some work done in exploring languages for expressing
recurrence relations for expected cost. For example, Sun et al.~\cite{sun2023} provides a language for representing probabilistic
recurrence relations and a tool for analyzing their tail-bounds. The main drawback of these approaches is that the languages 
are not very expressive. In particular they do not have higher-order functions.

Leutgeb el al.~\cite{leutgeb2022automated} define a first-order probabilistic functional language for manipulating data structures and
automatically infer bounds on the expected cost of programs. The main limitation of their approach compared to ours is that
their language is first-order.

Reasoning about expected cost has also been explored for imperative languages. For instance, Batz et al.~\cite{batz2023calculus} develop a weakest pre-condition calculus for reasoning about the expected cost of programs. Again, they can only reason about first-order imperative programs.

\section{Conclusion} 

In this work we have presented $\meta$, a metalanguage for reasoning about expected cost of recursive probabilistic programs.
It extends the existing work of \cite{kavvos2019} to the probabilistic setting. We have proposed two different semantics, one 
based on the writer monad transformer while the second one uses a novel \emph{expected cost} monad. Furthermore,
we have showed that in the absence of unbounded recursion, these two semantics coincide, while when programming with subprobability distributions
we have proved that the expected cost semantics is an upper bound to the cost semantics.

We have justified the versatility of our expected cost semantics by presenting a few case-studies. In particular, the expected cost
semantics obtains, compositionally, the familiar recurrence cost relations for non-trivial programs. In particular, for the randomized 
quicksort algorithm, the semantic recurrence relation recovers the $O(n\log n)$ bound.

%We conjecture that the techniques presented in this paper can be extended in various other tantalizing directions. For instance, it would
%be interesting to generalize the construction of the expected cost monad so that it can reason about other kinds of effects. For instance, 
%when reasoning about non-deterministic programs it is useful to give bounds on the worse/best cost.
% scenarios.

% Going beyond probability, by adopting a more abstract approach on the expected cost monad, it might be possible to reason about other
% kinds of effects. For instance, when reasoning about non-deterministic programs it is useful to give bounds on the worse/best case
% scenarios.

% More generally, we are also interested in laying on firm categorical grounds our logical relations technique. Furthermore, the lack
% of a monad morphism in the subprobabilistic case suggests that there might extensions of this techniques based on $2$-category theory,
% where the monad morphism laws do not hold exactly, only up to a $2$-cell, or inequality in this particular case.

\begin{acks}
We would like to thank Christopher Lam, Lars Birkedal, Alex Kavvos, Philip Saville and Hugo Paquet for giving feadback on the paper; 
and Joe Tassarotti, Chris Barrett, Oliver Richardson, Sam Staton and Matthijs V\'{a}k\'{a}r for helpful technical discussions. This work
was supported by the ERC Consolidator Grant BLAST, and the ARIA programme on Safeguarded AI.
\end{acks}

\bibliographystyle{ACM-Reference-Format}
\bibliography{mybib}

% Thank 
\appendix
\onecolumn
\section{Monadic Semantics of CBPV}
\label{app:cbpvsemantics}

Let $\cat{C}$ be a Cartesian closed category and $T : \cat{C} \to \cat{C}$ a strong monad over it. An alternative
definition of monads is it being a triple $(T, \eta, \mu)$, where $T$ and $\eta$ are natural transformations as before,
but $\mu : T^2 \to T$, the multiplication, replaces the bind natural transformation. The monad laws under this
definition become:
% https://q.uiver.app/#q=WzAsOCxbMiwwLCJUIl0sWzMsMCwiVF4yIl0sWzMsMSwiVCJdLFs0LDAsIlQiXSxbMCwwLCJUXjMiXSxbMSwwLCJUXjIiXSxbMSwxLCJUIl0sWzAsMSwiVF4yIl0sWzAsMiwiMSIsMl0sWzAsMSwiVFxcZXRhIl0sWzEsMiwiXFxtdSJdLFszLDIsIjEiXSxbMywxLCJcXGV0YV9UIiwyXSxbNCw3LCJcXG11X1QiLDJdLFs3LDYsIlxcbXUiLDJdLFs1LDYsIlxcbXUiXSxbNCw1LCJUXFxtdSJdXQ==
\[\begin{tikzcd}
	{T^3} & {T^2} & T & {T^2} & T \\
	{T^2} & T && T
	\arrow["1"', from=1-3, to=2-4]
	\arrow["T\eta", from=1-3, to=1-4]
	\arrow["\mu", from=1-4, to=2-4]
	\arrow["1", from=1-5, to=2-4]
	\arrow["{\eta_T}"', from=1-5, to=1-4]
	\arrow["{\mu_T}"', from=1-1, to=2-1]
	\arrow["\mu"', from=2-1, to=2-2]
	\arrow["\mu", from=1-2, to=2-2]
	\arrow["T\mu", from=1-1, to=1-2]
\end{tikzcd}\]
It is possible to show that these definitions are equivalent: given bind $(-)^\#$, the multiplication can be defined as $\mu = id_{TA}^\#$. 
Conversely, given a multiplication, the bind is defined as $f^\# = T f; \mu$.
This alternative definition is a bit better suited for the original purposes of monads, where it was used as a unifying way of representing
concepts from universal algebra.

This alternative presentation lend itself quite well to the semantics of CBPV-based calculi where, given a monad $T$, computation types denote
$T$-algebras:

\begin{definition}
    A $T$-algebra is a pair $(A, \alpha)$, where $A$ is a $\cat{C}$ object and $\alpha : T A \to A$ is a morphism, such that
    % https://q.uiver.app/#q=WzAsNyxbMCwwLCJBIl0sWzEsMCwiVEEiXSxbMSwxLCJBIl0sWzIsMCwiVF4yQSJdLFszLDAsIlRBIl0sWzMsMSwiQSJdLFsyLDEsIlRBIl0sWzAsMSwiXFxldGFfQSJdLFsxLDIsIlxcYWxwaGEiXSxbMCwyLCJpZF9BIiwyXSxbMyw2LCJUXFxhbHBoYSIsMl0sWzYsNSwiXFxhbHBoYSIsMl0sWzMsNCwiXFxtdV9BIl0sWzQsNSwiXFxhbHBoYSJdXQ==
\[\begin{tikzcd}
	A & TA & {T^2A} & TA \\
	& A & TA & A
	\arrow["{\eta_A}", from=1-1, to=1-2]
	\arrow["\alpha", from=1-2, to=2-2]
	\arrow["{id_A}"', from=1-1, to=2-2]
	\arrow["T\alpha"', from=1-3, to=2-3]
	\arrow["\alpha"', from=2-3, to=2-4]
	\arrow["{\mu_A}", from=1-3, to=1-4]
	\arrow["\alpha", from=1-4, to=2-4]
\end{tikzcd}\]
\end{definition}

Given a $T$-algebra $(A, \alpha)$ we denote by $A_\bullet$ the object of the $T$-algebra.

\begin{example}
    Given an object $A$, the pair $(TA, \mu_A)$ is a $T$-algebra, where the algebra axioms
    follow from the monad laws.
\end{example}

\begin{example}
    Given a $T$-algebra $(A, \alpha)$ and an object $B$, we can equip $B \to A$ with the $T$-algebra structure $\alpha_{B \to A} = \varepsilon_B \Rightarrow (st; T(ev; \alpha))$,
    where $\varepsilon_A : A \to (B \Rightarrow (B \times A))$ is the unit of the Cartesian closed adjunction. This can
    be seen as a \emph{pointwise} algebra structure.
\end{example}

Algebras and their morphisms can be organized as a category, frequently denoted by $\cat{C}^T$. However, for the purposes of CBPV
a different category is used:

\begin{definition}
    The category $\widetilde{\cat{C}}^T$ is the full subcategory of $\cat{C}$ that contain $T$-algebras as objects. This category
    is also called the category of algebras and plain maps.
\end{definition}

The idea is that values are interpreted as objects in $\cat{C}$ while computation types are $T$-algebras. Assuming the only the base
type in the calculus to be $\nat$ and an object $\nat$ in the base category, The interpretation of values and
computations are as follows:
\begin{align*}
    \sem{\nat}^v &= \nat\\
    \sem{U \overline\tau}^v &= \sem{\overline\tau}^c_\bullet\\
    \sem{\tau_1 \times \tau_2}^v &= \sem{\tau_1}^v \times \sem{\tau_2}^v\\
    \\
    \sem{F \tau}^c &= (T \sem{\tau}^v, \mu_{\sem{\tau}^v})\\
    \sem{\tau \to \overline{\tau}}^c &= (\sem{\tau}^v \Rightarrow \sem{\overline \tau}^c, \alpha_{\sem{\tau}^v \to \sem{\overline \tau}^c})
\end{align*}
It is also possible to give semantics to the terms of the language as depicted in \Cref{fig:cbpvsemantics}. The semantics of if-statements
use the fact that $\nat \cong 1 + \nat$, so you can define its semantics by using the universal property of coproducts $[t, (!_\nat; u)]$, where
$!_A : A \to 1$ is the unique arrow into the terminal object. The abstraction and application rule use the adjoint structure $(\Lambda, ev)$,
of Cartesian closed categories, where $\Lambda$ and $ev$ are the unit and counit of the adjunction, respectively. The produce rule uses the
unit of the monad while the bind rule is the sequential composition of a free algebra with a non-free algebra and, therefore, requires applying
the functor $T$ and using the algebra structure of the output --- when the output map is a free algebra, this operation is equal to the bind
of the monad. Thunk and force are basically no-ops in this semantics, while the rules let and unpair are sequential compositions. Pair is
the universal property of products.

\begin{figure*}
    \begin{mathpar}
      \inferrule[var]{~}{\Gamma_1 \times (\tau  \times \Gamma_2) \xto{!\times \pi_1} \tau}
      \and
      \inferrule[if]{\Gamma \xto V \nat \quad \Gamma \xto t \overline{\tau} \quad \Gamma \xto u \overline{\tau}}{\Gamma \xto {\langle id;  V\rangle; [t, (!; u)]} \overline{\tau}}
      \and
      \inferrule[abstraction]{\Gamma \times \tau \xto t \overline{\tau}}{\Gamma \xto {\Lambda_\Gamma; \tau \Rightarrow t} \tau \Rightarrow \overline{\tau}}
      \and
      \inferrule[application]{\Gamma \xto V \tau \quad \Gamma \xto t : \tau \to \overline{\tau}}{\Gamma \xto {\langle t, V\rangle; ev} : \overline{\tau}}
      \\
      \inferrule[produce]{\Gamma \xto V \tau }{\Gamma \xto{V; \eta_\tau} T \tau}

      \and
      \inferrule[sequencing]{\Gamma \xto t T \tau' \\ \Gamma \times \tau' \xto u (\overline \tau, \alpha_{\overline \tau})}{\Gamma \xto{\langle id_\Gamma, t\rangle; st; T u; \alpha_{\overline\tau}} (\overline \tau, \alpha_{\overline \tau})}
      \and
      \inferrule[thunk]{\Gamma \xto t (\overline \tau, \alpha_{\overline \tau})}{\Gamma \xto{t} \overline\tau}
      \and
      \inferrule[force]{\Gamma \xto V \overline \tau}{\Gamma \xto V (\overline \tau, \alpha_{\overline{\tau}})}
      \and
      \inferrule[let]{\Gamma \xto V \tau' \\ \Gamma \times \tau' \xto t \overline \tau}{\Gamma \xto{\langle id_\Gamma, V\rangle; t} \overline \tau}
      \and
      \inferrule[pair]{\Gamma \xto {t_1} \tau_1 \quad \Gamma \xto {t_2} \tau_2}{\Gamma \xto {\langle t_1, t_2\rangle} \tau_1\times \tau_2}
      \and
      \inferrule[unpair]{\Gamma \xto V \tau_1 \times \tau_2 \quad \Gamma \times \tau_1 \times \tau_2 \xto t \overline{\tau}}{\Gamma \xto {\langle id, V\rangle; t} \overline{\tau}}
    \end{mathpar}
    \caption{CBPV monadic semantics}
    \label{fig:cbpvsemantics}
  \end{figure*} 

  This semantics validates the following compositionality properties. Note that we assume, without loss of generality, that the
  free variable $x$ is the first one in the context.

\begin{theorem}
    For every computation $x : \tau', \Gamma \vdashcomp t : \overline{\tau}$ and values $x : \tau', \Gamma \vdashval V : \tau $, $\Gamma \vdashval V' : \tau'$, $\sem{\subst t V x} = \sem{t} \circ \langle\sem{V}, id\rangle$ and $\sem{\subst {V'} V x} = \semcs{V'} \circ \langle\sem{V}, id\rangle$. 
\end{theorem}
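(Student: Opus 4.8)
The plan is to establish both equations by a single simultaneous structural induction on the computation term $t$ and the value term $V'$, reflecting the mutually recursive definition of computations and values in CBPV and the fact that the two semantic brackets $\sem{\cdot}$ of \Cref{fig:cbpvsemantics} are defined by mutual recursion. To make the induction go through under binders, I would first strengthen the statement so that the substituted variable may sit at an arbitrary position in the context rather than only at the front; equivalently, I would prove a small weakening/exchange-compatibility lemma showing that the denotation commutes with the structural isomorphisms that reassociate and permute the context product $\sem{\tau'} \times \sem{\Gamma}$. This generalized induction hypothesis is what lets me push the substitution map $\langle \sem{V}, id\rangle$ past a binder when the context grows.

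With the hypothesis in place, most cases are discharged by the naturality of the categorical operations appearing in \Cref{fig:cbpvsemantics}. For variables I use the projection laws: if the variable is the one being substituted, then $\langle \sem{V}, id\rangle$ followed by the first projection yields $\sem{V}$, matching $\sem{V}$ on the substituted term; otherwise the relevant projection factors through $id$ and is unchanged. Constants ($n$, $r$, $()$) denote maps that factor through the terminal object, so precomposition leaves them fixed. Pairing, application, and if-then-else follow from the universal properties of products and coproducts together with the identity $\langle f, g\rangle \circ h = \langle f\circ h, g\circ h\rangle$ and the naturality of $ev$ and of the copairing $[-,-]$. The $\produce$ case is the naturality of $\eta$, and thunk/force are semantic no-ops, so these are immediate.

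The substantive cases are the binders --- abstraction, sequencing, $\mathsf{let}$, unpairing and the list $\mathsf{case}$ --- where the continuation is typed in an enlarged context. For abstraction I rely on the naturality of currying $\Lambda$, which allows $\Lambda_\Gamma$ to commute with precomposition by $\langle \sem{V}, id\rangle \times id$; for sequencing I use the naturality of the strength $st$ together with functoriality of $T$, so that the substitution map slides through $st$ and $Tu$ before meeting the algebra structure $\alpha_{\overline\tau}$. In each such case the (strengthened) induction hypothesis is applied to the body under the extra binding, and the pieces are reassembled using the coherence isomorphisms of the Cartesian structure. I expect the main obstacle to be precisely this bookkeeping: ensuring that the substitution map and the context-extension map commute coherently through strength and currying, and that the reassociation isomorphisms introduced when the context grows are handled uniformly rather than rederived case-by-case. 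Once the statement is generalized so that these isomorphisms are absorbed into the induction hypothesis, the binder cases reduce to routine naturality chases and the theorem follows.
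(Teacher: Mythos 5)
Your proposal is correct and follows essentially the same route as the paper: a mutual structural induction on the typing derivations of $t$ and $V'$, discharging each case by the relevant universal property or naturality fact (currying for abstraction, strength plus functoriality of $T$ for sequencing, $\eta$ for $\produce$, projections for variables). The only difference is presentational --- the paper assumes without loss of generality that the substituted variable sits first in the context and treats the resulting symmetry isomorphisms informally, whereas you fold that bookkeeping into a strengthened induction hypothesis; both resolutions are fine.
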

\begin{proof}
    The proof follows by mutual structural induction on the typing derivations of $t$ and $V'$.
    \begin{description}
        \item[Variable]: This case follows on case analysis on whether the substituted variable
        is equal to the term or not.
        \item[Pair]: Using the equality $\subst{(V_1, V_2)}{V}{x} = (\subst{V_1}{V}{x}, \subst{V_2} V x)$,
        the induction hypothesis and the universal property of products, we can conclude.
        \item[Unpair]: Using the equality $\subst {\letin{(y_1, y_2)}{V'}{t}} V x = \letin{(y_1, y_2)}{\subst {V'} {V} {x}}{\subst {t} {V} {x}}$,
        the induction hypothesis and the bifunctoriality of the Cartesian product, we can conclude.
        \item[If]: Follows from the universal property of coproducts, the induction hypothesis,
        the naturality of of the diagonal morphism $A \to A \times A$ and the substitution definition.
        \item[Constants]: Follows by unfolding definitions and using the fact that constants
        do not have free variables.
        \item[Thunk]: Using the equality $\subst{\thunk t}{V}{x} = \thunk (\subst t V x)$, the
        definition $\sem{\thunk t}^v = \sem{t}^c$ and the inductive hypothesis we can conclude.
        \item[Force]: Using the equality $\subst{\force V'}{V}{x} = \force (\subst {V'} V x)$, the
        definition $\sem{\force V'}^v = \sem{V'}^c$ and the inductive hypothesis we can conclude.
        \item[Produce]: This case follows from the equality $\subst{(\produce V')} V x = \produce (\subst {V'} V x)$,
        the definition $\sem{\produce V'}^c = \eta \circ \sem{V'}^v$ and the inductive hypothesis.
        \item[Let]: Using the equation $\subst {(\letbe y {V'} t)} V x = \letbe y {(\subst {V'} V x)} {(\subst t V x)}$
        and the induction hypothesis we can write the following equations:
        \begin{align*}
        &\sem{\letbe y {(\subst {V'} V x)} {(\subst t V x)}}=\\
        &\sem{(\subst t V x)} \circ \langle \sem{(\subst {V'} V x)}, id \rangle=\\
        &(\sem{t} \circ \langle \sem{V}, id\rangle) \circ \langle \sem{V'} \circ \langle \sem{V}, id\rangle, id \rangle=\\
        &\sem{t} \circ \langle \sem{V'}, id\rangle \circ \langle \sem{V}, id\rangle\\
        \end{align*}
        Note that the last equation holds up to the symmetric natural isomorphisms $A \times B \cong B \times A$.
        \item[Abstraction]: This case follows by using the equality $\subst {\lamb y t} V x = \lamb y {(\subst t V x)}$,
        the induction hypothesis and the naturality of the Cartesian closed isomorphism $\cat{C}(A \times B, C) \cong \cat{C}(A, B \Rightarrow C)$.
        \item[Application]: This case follows from the equality $\subst {(\app t {V'})} V x = \app {(\subst t V x)} {(\subst {V'} V x)}$,
        the induction hypothesis and the universal property of Cartesian products.
        \item[Sequencing]: This case follows from the equation $\subst {y \leftarrow t; u} V x = y \leftarrow (\subst t V x); (\subst u V x)$,
        the induction hypothesis and the naturality of the monad strength. These are summarized by the commutative diagram below.
% https://q.uiver.app/#q=WzAsOSxbMCwwLCJcXEdhbW1hIl0sWzEsMCwiXFx0YXUnXFx0aW1lc1xcR2FtbWEiXSxbMiwwLCJcXHRhdSdcXHRpbWVzXFxHYW1tYVxcdGltZXNcXEdhbW1hIl0sWzMsMCwiVFxcdGF1XFx0aW1lc1xcR2FtbWEiXSxbNSwwLCJUXFx0YXUgXFx0aW1lcyhcXHRhdScgXFx0aW1lcyBcXEdhbW1hKSJdLFs1LDEsIlQoXFx0YXUgXFx0aW1lcyBcXHRhdSdcXHRpbWVzIFxcR2FtbWEpIl0sWzUsMiwiVFxcb3ZlcmxpbmVcXHRhdSJdLFszLDEsIlQoXFx0YXVcXHRpbWVzXFxHYW1tYSkiXSxbMSwxLCJUXFx0YXUgXFx0aW1lcyBcXEdhbW1hIl0sWzAsMSwiXFxsYW5nbGUgViwgaWRcXHJhbmdsZSJdLFsxLDIsImlkIFxcdGltZXMgXFxEZWx0YSJdLFsyLDMsIlxcc2VtIHQgXFx0aW1lcyBpZCJdLFszLDQsImlkIFxcdGltZXMgXFxsYW5nbGUgViwgaWRcXHJhbmdsZSJdLFszLDcsInN0Il0sWzQsNSwic3QiXSxbNSw2LCJUXFxzZW0gdSJdLFs3LDUsIlQoaWQgXFx0aW1lcyBcXGxhbmdsZSBWLCBpZFxccmFuZ2xlKSJdLFs4LDcsInN0Il0sWzAsOCwiXFxzZW17XFxzdWJzdCB0IHtWfSB4fSIsMix7ImN1cnZlIjoyfV0sWzcsNiwiVChcXHNlbXtcXHN1YnN0IHUge1Z9IHh9KSIsMl1d
\[\begin{tikzcd}
	\Gamma & {\tau'\times\Gamma} & {\tau'\times\Gamma\times\Gamma} & {T\tau\times\Gamma} && {T\tau \times(\tau' \times \Gamma)} \\
	& {T\tau \times \Gamma} && {T(\tau\times\Gamma)} && {T(\tau \times \tau'\times \Gamma)} \\
	&&&&& {T\overline\tau}
	\arrow["{\langle V, id\rangle}", from=1-1, to=1-2]
	\arrow["{\sem{\subst t {V} x}}"', curve={height=12pt}, from=1-1, to=2-2]
	\arrow["{id \times \Delta}", from=1-2, to=1-3]
	\arrow["{\sem t \times id}", from=1-3, to=1-4]
	\arrow["{id \times \langle V, id\rangle}", from=1-4, to=1-6]
	\arrow["st", from=1-4, to=2-4]
	\arrow["st", from=1-6, to=2-6]
	\arrow["st", from=2-2, to=2-4]
	\arrow["{T(id \times \langle V, id\rangle)}", from=2-4, to=2-6]
	\arrow["{T(\sem{\subst u {V} x})}"', from=2-4, to=3-6]
	\arrow["{T\sem u}", from=2-6, to=3-6]
\end{tikzcd}\]
    \end{description}
\end{proof}

\begin{theorem}
    For every computation context $C$, if $\sem{t} = \sem{u}$ then $\sem{C[t]} = \sem{C[u]}$.
\end{theorem}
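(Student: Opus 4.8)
The plan is to prove the statement by structural induction on the shape of the evaluation context $C$, exploiting the fact that the denotational semantics of \Cref{fig:cbpvsemantics} is compositional: the denotation of every compound term is obtained by applying a fixed categorical operation to the denotations of its immediate subterms, and that operation depends only on the term former, not on the subterms. Before starting the induction I would first strengthen the statement so that the hole of $C$ may carry an arbitrary ambient context and type, recording the hole type as $\Gamma' \vdash \overline\sigma$ and the overall type as $\Gamma \vdash \overline\tau$, and quantifying over all fillings with $\Gamma' \vdashcomp t, u : \overline\sigma$ and $\sem t = \sem u$. This generalization is what makes the binder cases go through, since descending under a binder enlarges the ambient context of the hole.

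The base case $C = [\,]$ is immediate, as $C[t] = t$ and $C[u] = u$, so $\sem{C[t]} = \sem t = \sem u = \sem{C[u]}$. For each inductive case I would read off from \Cref{fig:cbpvsemantics} the semantic operation attached to the outermost former and observe that it is a genuine function of the denotation of the sub-context carrying the hole. For instance, when $C = \app{C'}{V}$ the semantics factors as $\sem{C'[t]}$ paired with $\sem V$ followed by evaluation, so the induction hypothesis $\sem{C'[t]} = \sem{C'[u]}$ forces $\sem{\app{C'[t]}{V}} = \sem{\app{C'[u]}{V}}$; the case $(x \leftarrow C'); u'$ uses the sequencing clause (the pairing, the strength, the functorial action of the monad, and the target algebra structure), while the list and if-then-else cases use the coproduct copairing. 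In every case the subterms other than the one carrying the hole are literally identical in $C[t]$ and $C[u]$, so only the varying argument changes and the shared operation sends equal denotations to equal results.

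The main obstacle, such as it is, is bookkeeping around binders rather than any genuine difficulty: for $C = \lamb x C'$, $\letbe x V C'$, $\letin{(x,y)}V C'$, $(x \leftarrow t'); C'$, and the list branch $\caseList{x}{t'}{C'}$, the hole of $C'$ sits under a freshly bound variable, so the induction hypothesis must be invoked at the enlarged ambient context, which is exactly what the strengthened statement supplies; the outer operation (currying, or precomposition with a pairing into the extended context) is then applied to the equal denotations. I would note that the context grammar contains no former of the form $\fix x.\, C$, so no fixpoint clause arises and there is no need to argue that the relevant operations preserve suprema of chains, keeping the whole argument purely equational. Finally, since this theorem is stated for the generic monadic semantics $\sem{\cdot}$ parametric in a strong monad $T$ over a Cartesian closed category, the two concrete congruence statements in the main text, $\semcs{C[t]} = \semcs{C[u]}$ and $\semec{C[t]} = \semec{C[u]}$, follow at once by instantiating $T$ with $\Psub(\cmon \times -)$ and with $[0,\infty] \times \Psub$, respectively.
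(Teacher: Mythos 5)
Your proposal is correct and takes essentially the same route as the paper: the paper's own proof of this theorem is a one-line structural induction on the context $C$, which is exactly the induction you carry out, with the context-strengthening for binders and the observation that the context grammar contains no $\fix$-former being sound (and routine) elaborations of that same argument. Nothing further is needed.
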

\begin{proof}
    The proof follows by induction on the context $C$.
\end{proof}

\subsection{Equational presentation of $\meta$}
For the sake of simplicity of the equational theory, we will assume the barycentric operations $\oplus_p$.

In \Cref{fig:fulleqtheory} we present the non-structural equations of $\meta$. The left-hand side is present in every CBPV calculus
with natural numbers and recursion, where the recursion equation is the last one. The right-hand side is split in two blocks: the 
first block are the barycentric algebra equations, the second one are the monoid equations and the last one are the list equations.

\begin{figure}[]
    \centering
    \begin{minipage}{0.4\textwidth}
\begin{align*}
&\ifthenelse 0 t u \equiv t\\
&\ifthenelse {(n + 1)} t u \equiv u\\
&t \equiv \ifthenelse x t t\\
\\
&\app{(\lamb x t)} V \equiv \subst t V x\\
&\letbe x V t \equiv \subst t V x \\
& t \equiv \lamb x {\app t x}\\
& x \leftarrow t; (\lamb y u) \equiv \lamb y {(x \leftarrow t; u)}\\
& \force(\thunk\, t) \equiv t\\
& \thunk(\force\, V) \equiv V\\
& x \leftarrow (\produce V); t \equiv \subst t V x\\
& x \leftarrow t; \produce x \equiv t\\
& \fix x.\, t = \subst t {(\thunk (\fix x.\, t))} x
    \end{align*}
    \end{minipage}%
    \begin{minipage}{0.6\textwidth}
    \begin{align*}
&t \oplus_0 u \equiv t\\
&t \oplus_{p} u \equiv u \oplus_{1-p} t\\
&t \oplus_p t \equiv t\\
&t\oplus_p (u \oplus_q t') \equiv (t \oplus_{\frac{p(1-q)}{1 - pq}} u)  \oplus_{pq} t'\\
\\
&\charge n; \charge m \equiv \charge {(n + m)}\\
&\charge n; \charge m \equiv \charge m; \charge n\\
&\charge 0; t \equiv t\\
\\
&\caseList{\mathsf{nil}}{t}{u} \equiv t\\
&\caseList{(\mathsf{cons}\, V_1\, V_2)}{t}{u} \equiv \subst u {V_1, V_2}{x, xs}\\
&t \equiv \caseList{y}{\subst t {\mathsf{nil}} y}{\subst t {\mathsf{cons }\, x \, xs} y}\\
& \letin{(x_1, x_2)}{(V_1, V_2)}t \equiv \subst t {V_1, V_2} {x_1, x_2}\\
& t \equiv \letin {(x_1, x_2)} V {\subst t {(V_1, V_2)} z}
\end{align*}
    \end{minipage}
    \caption{$\meta$ equational theory}
    \label{fig:fulleqtheory}
\end{figure}

\section{Denotational Soundness Proof}
\label{app:denproof}

In order to prove the fundamental theorem of logical relations, we first show the following lemmas, where the first one
follows by induction:

\begin{lemma}
    For every $\tau$ (resp. $\overline{\tau})$, the relation $\VRel_\tau$ (resp. $\CRel_{\overline\tau}$) is an $\omega$CPO, 
    where the partial order structure is the same as the one from $\semcs{\tau}^v \times \semec{\tau}^v$ 
    (resp. $\semcs{\overline\tau}^c \times \semec{\overline\tau}^c$). Furthermore, the computation relations have
    a least element.
\end{lemma}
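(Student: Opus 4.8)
The partial order on each relation is simply the restriction of the product order on the ambient $\omega$CPO $\semcs{\tau} \times \semec{\tau}$ (resp. $\semcs{\overline\tau} \times \semec{\overline\tau}$), so reflexivity, transitivity and antisymmetry come for free; the only substantive content is (i) closure under suprema of ascending chains and (ii) existence of a least element for the computation relations. The plan is to prove both by a single mutual induction on the structure of types, which is well-founded because each clause defines $\V_\tau$ or $\C_{\overline\tau}$ in terms of strictly smaller types ($U\overline\tau$ to $\overline\tau$, $F\tau$ to $\tau$, $\tau\to\overline\tau$ to $\tau$ and $\overline\tau$, and the product and list clauses to their components). A point worth recording at the outset is that completeness must be threaded through the induction precisely because the clause for $\C_{F\tau}$ invokes the lifting $\V^\#_\tau$, which by definition is only meaningful when $\V_\tau$ is a complete relation; the inductive hypothesis supplies exactly this.

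The value cases and the function case are routine. For the base types, the carriers $\semcs{\nat}$, $\semcs{\R}$, $\semcs{1}$ and $\semcs{\cmon}$ are discretely ordered, so every ascending chain is eventually constant and its supremum already lies in the relation. For $\V_{\tau_1\times\tau_2}$ and $\V_{\listty\tau}$ suprema are computed componentwise (and, for lists, within a fixed length, since comparable lists have equal length), so completeness follows from the inductive hypotheses. The case $\V_{U\overline\tau}=\C_{\overline\tau}$ is literally the computation case. For $\C_{\tau\to\overline\tau}$, suprema of chains of morphisms in $\wqbs$ are taken pointwise; given an ascending chain $\{(f_1^n,f_2^n)\}_n$ and related arguments $x_1 \V_\tau x_2$, the evaluations $\{(f_1^n(x_1), f_2^n(x_2))\}_n$ form an ascending chain in $\C_{\overline\tau}$, whose supremum lies in $\C_{\overline\tau}$ by the inductive hypothesis and equals $((\bigsqcup_n f_1^n)(x_1),(\bigsqcup_n f_2^n)(x_2))$. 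The least element of $\C_{\tau\to\overline\tau}$ is the pair of constant functions returning the least element of $\C_{\overline\tau}$, which exists by the inductive hypothesis.

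The crux is $\C_{F\tau}$. Let $\{((r^n,\nu^n),\mu^n)\}_n$ be an ascending chain in $\C_{F\tau}$, so for every $n$ we have $\mathbb E(\mu^n_1)\le r^n$ and $\nu^n \V^\#_\tau \mu^n_2$, where $\mu_i = \Psub(\pi_i)(\mu)$. Its supremum is $((\bigsqcup_n r^n, \bigsqcup_n\nu^n),\bigsqcup_n\mu^n)$. For the cost inequality I would use that the marginal map $\Psub(\pi_1)$ is a $\wqbs$-morphism, hence Scott-continuous, and that $\mathbb E$ is Scott-continuous under \Cref{def:expected} (monotone convergence, which holds precisely because that definition does not penalise missing mass), so that $\mathbb E\big((\bigsqcup_n\mu^n)_1\big)=\bigsqcup_n\mathbb E(\mu^n_1)\le\bigsqcup_n r^n$. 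The least element is $(\bot,(0,\bot))$, which lies in the relation since $\mathbb E$ of the zero measure is $0$ and the zero coupling witnesses $\bot\V^\#_\tau\bot$.

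The remaining, and genuinely delicate, obligation is the coupling condition $\bigsqcup_n\nu^n \;\V^\#_\tau\;(\bigsqcup_n\mu^n)_2$, which amounts to the standalone lemma that the lifting of a complete relation is again complete. By the inductive hypothesis $\V_\tau$ is a sub-$\omega$CPO of $\semcs{\tau}\times\semec{\tau}$, so $\Psub(\V_\tau)$ is an $\omega$CPO and the pair-of-marginals map $\langle\Psub\pi_1,\Psub\pi_2\rangle:\Psub(\V_\tau)\to\Psub(\semcs{\tau})\times\Psub(\semec{\tau})$ is Scott-continuous, with $\V^\#_\tau$ its image. From witnessing couplings $\theta^n\in\Psub(\V_\tau)$ for each $\nu^n\V^\#_\tau\mu^n_2$ one must produce a single coupling for the suprema. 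The obstacle I expect to require real care is that arbitrary witnesses need not form an ascending chain, so one must either select them coherently or exploit the concrete description of $\Psub$ in $\wqbs$ (as a factor of $[0,1]\to(-)_\bot$) to realise $\bigsqcup_n\theta^n$ as an element of $\Psub(\V_\tau)$ whose marginals are $\bigsqcup_n\nu^n$ and $(\bigsqcup_n\mu^n)_2$, using continuity of the marginals. Establishing this admissibility of the coupling-based lifting is the one genuinely technical step; everything else is bookkeeping driven by the inductive hypotheses.
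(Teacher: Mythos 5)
Your overall strategy coincides with the paper's: the paper's entire proof of this lemma is the phrase ``follows by induction'', and your mutual induction on types, with completeness threaded through because the clause for $\C_{F\tau}$ only makes sense once $\V_\tau$ is known to be complete, is exactly the intended argument. Your routine cases are handled correctly: discrete base types, componentwise suprema for products and (fixed-length) lists, the identification $\V_{U\overline\tau}=\C_{\overline\tau}$, the pointwise argument at $\tau\to\overline\tau$, the cost inequality at $F\tau$ via Scott-continuity of marginalization and of $\mathbb{E}$ (and your parenthetical is right: that continuity holds precisely because the paper's definition of expected value does not charge for missing mass), and the least elements given by the zero measure, the pair $(0,\bot)$, and constant-$\bot$ functions.

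However, the proposal is not a complete proof, because the step you yourself isolate --- closure of the lifting $\V^\#_\tau$ under suprema of ascending chains --- is never discharged, and it does not follow from the ingredients you invoke. Scott-continuity of the pair-of-marginals map $\Psub(\V_\tau)\to\Psub(\semcs{\tau}^v)\times\Psub(\semec{\tau}^v)$ only shows that the supremum of an \emph{ascending chain of couplings} is a coupling of the limiting marginals; your hypothesis hands you arbitrary witnesses $\theta^n$, which need not be comparable, and neither of your two escape routes is an argument: ``selecting the witnesses coherently'' is a restatement of the problem, and passing to representatives $[0,1]\to(\V_\tau)_\bot$ of the $\theta^n$ reproduces the same coherence problem one level down. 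The difficulty is genuine rather than cosmetic: for discrete ground types one can extract a limit coupling by uniform tightness (the marginals of all $\theta^n$ are dominated by the fixed limits) plus a diagonal argument, but at a value type such as $U\overline\tau$ the relation $\V_\tau$ sits inside a function space, and $\wqbs$ provides no Prokhorov-style compactness with which to extract limits of couplings there. This admissibility question is the well-known weak point of proof-relevant (coupling-based) liftings in domain-theoretic settings, and it is precisely why the paper's adequacy proof switches to the proof-irrelevant lifting $\widetilde{\mathcal{R}}$, which is chain-closed for free by Scott-continuity of integration. So your write-up stops exactly where the real work begins; to finish one must either prove chain-closure of the coupling lifting outright or replace $\V^\#$ in the definition of $\C_{F\tau}$ by an inherently admissible lifting (at the price of reworking the coupling-based steps of the soundness proof). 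In fairness, the paper's one-line proof glosses over the identical point.
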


\begin{lemma}
    For every type $\tau$ (resp. $\overline \tau$), there is a set $M_\tau$ (resp. $M_{\overline{\tau}}$) and partial order $\leq$ 
    such that the triple $(\VRel_\tau, M_\tau, \leq)$ (resp. $(\CRel_{\overline\tau}, M_{\overline{\tau}}, \leq)$)
    is an $\omega$-quasi Borel space such that the injection function is a morphism in $\wqbs$. 
\end{lemma}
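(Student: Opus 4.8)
The plan is to realize each relation as a \emph{subspace} $\omega$-quasi Borel space of the ambient product $\semcs{\tau}^v \times \semec{\tau}^v$ (resp. $\semcs{\overline\tau}^c \times \semec{\overline\tau}^c$), so that the required injection is just the subspace inclusion. First I would take the order $\leq$ on $\VRel_\tau$ (resp. $\CRel_{\overline\tau}$) to be the one inherited from the product; by the previous lemma this is already an $\omega$CPO whose suprema are computed in the product. Then I would define the random elements to be exactly those that factor through the relation,
\[
M_\tau = \set{\alpha : \R \to \VRel_\tau}{\pi_1 \circ \alpha \in M_{\semcs{\tau}^v} \text{ and } \pi_2 \circ \alpha \in M_{\semec{\tau}^v}},
\]
and symmetrically for $M_{\overline\tau}$ using the computation semantics. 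With these definitions the inclusion preserves random elements by construction and is Scott-continuous because order and suprema are inherited, so it is a morphism in $\wqbs$ as soon as the four $\omega$qbs axioms hold for $M_\tau$ and $M_{\overline\tau}$.

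The first three axioms I expect to follow mechanically from the same axioms for the ambient spaces. A constant function into the relation projects to constant functions, which lie in $M_{\semcs{\tau}^v}$ and $M_{\semec{\tau}^v}$; precomposition with a measurable $f : \R \to \R$ commutes with both projections; and a countable Borel gluing of relation-valued random elements projects to the gluing of its projections. In each of these cases the resulting function is pointwise valued in the relation by construction, so the only thing to verify is membership of its two projections in the ambient random elements, which is exactly what the corresponding axioms for $\semcs{-}$ and $\semec{-}$ supply.

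The step I expect to be the crux is closure under pointwise suprema of ascending chains, since it is the only axiom that genuinely uses the shape of the relations. Given an ascending chain $\{\alpha_n\}_n \subseteq M_\tau$, its projections are ascending chains in $M_{\semcs{\tau}^v}$ and $M_{\semec{\tau}^v}$, whose pointwise suprema are random by the last ambient axiom; what remains is to show that $\bigsqcup_n \alpha_n$ is still pointwise valued in $\VRel_\tau$, i.e. that for each $x \in \R$ the supremum of the chain $\{\alpha_n(x)\}_n$ lies in the relation. This is precisely the completeness of $\VRel_\tau$ (resp. $\CRel_{\overline\tau}$) established in the previous lemma, which guarantees that suprema taken in the product land back in the relation. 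Concretely, for $\C_{F\tau}$ this records that the constraints $\mathbb E(\mu_1) \leq r$ and $\nu \V^\#_\tau \mu_2$ are preserved in the limit, and for the arrow relations that the universally quantified clause is closed under suprema. Applying this subspace construction at each type, with the previous lemma supplying completeness throughout, yields the desired $\omega$qbs structure with monic inclusion.
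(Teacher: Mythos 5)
Your proposal is correct and matches the paper's own proof: the paper likewise takes the inherited product order and the \emph{restricted} random elements (those ambient random elements landing in the relation, which is exactly your projection condition given how products are formed in $\wqbs$), and invokes the preceding completeness lemma to get closure under suprema of ascending chains. Your verification of the remaining $\omega$qbs axioms is just a more explicit spelling-out of what the paper dismisses as holding ``by construction.''
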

\begin{proof}
    We only make explicit the proof for value types, since the case of computation types is basically the same.
    We define the order $\leq$ to be the same as the one in $\semcs{\tau}^v \times \semec{\tau}^v$ and $M$ to be 
    the \emph{restricted} random elements $\{f \in M_{\tau_1} \smid f(\R) \subseteq \VRel_\tau\}$.
    
    Since by the lemma above the logical relations are $\omega$CPOs, $M$ is closed under suprema of ascending chains.
    and $(\VRel_\tau, M, \leq)$ is an $\omega$-quasi Borel space. The injection into $\semcs{\tau}^v \times \semec{\tau}^v$
    being a morphism follows by construction.
\end{proof}

\begin{lemma}
\label{lem:couplingineq}
    If $(r, \nu) \CRel_{F\tau} \mu$ then for every pair of functions $f_1 : \semec{\tau} \to \R$ and $f_2 : \semcs{\tau} \to \R$, such that for every
    $a_1 \VRel_\tau a_2$, $f_1(a_1) \leq f_2(a_2)$, $\int f_1 \diff \nu \leq \int f_2 \diff \mu_2$, where $\mu_2$ is the second marginal of $\mu$.
\end{lemma}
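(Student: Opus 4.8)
The plan is to extract the coupling witnessing the lifting hypothesis and collapse both integrals to a single integral over $\V_\tau$, where the assumption on $f_1$ and $f_2$ becomes a pointwise inequality that integration preserves.

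First, I would unfold $(r, \nu) \CRel_{F\tau} \mu$. By the definition of $\C_{F\tau}$ it entails in particular that $\nu$ and $\mu_2$ are related under the lifting of $\V_\tau$, so by the definition of the relational lifting there is a witnessing distribution $\theta \in \Psub(\V_\tau)$ whose marginals are $\mu_2$ and $\nu$. Writing $p : \V_\tau \to \semcs{\tau}^v$ and $q : \V_\tau \to \semec{\tau}^v$ for the two projections, whose targets are forced by the types, this means $\Psub(p)(\theta) = \mu_2$ and $\Psub(q)(\theta) = \nu$. The lemma established just above endows $\V_\tau$ with an $\wqbs$ structure for which $p$ and $q$ are morphisms, so $\theta$ is a genuine subprobability distribution on the space $\V_\tau$, and the change-of-variables identity $\int g \diff(\Psub(h)(\theta)) = \int (g \circ h) \diff \theta$ is at our disposal for each of the two projections $h$.

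Next, I would rewrite both sides of the goal as integrals against $\theta$. Since $\nu = \Psub(q)(\theta)$ we obtain $\int f_1 \diff \nu = \int_{\V_\tau} (f_1 \circ q) \diff \theta$, and since $\mu_2 = \Psub(p)(\theta)$ we obtain $\int f_2 \diff \mu_2 = \int_{\V_\tau} (f_2 \circ p) \diff \theta$. The hypothesis relating $f_1$ and $f_2$ states precisely that $f_1(q(w)) \leq f_2(p(w))$ for every $w \in \V_\tau$, i.e. $f_1 \circ q \leq f_2 \circ p$ pointwise on $\V_\tau$. Monotonicity of the integral against the single common measure $\theta$ then gives $\int_{\V_\tau}(f_1 \circ q)\diff\theta \leq \int_{\V_\tau}(f_2\circ p)\diff\theta$, and re-folding the two marginal identities yields $\int f_1 \diff \nu \leq \int f_2 \diff \mu_2$, as required.

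The only genuinely delicate point, and where I expect the real work to be, is the measure-theoretic bookkeeping around the coupling: one must verify that $f_1 \circ q$ and $f_2 \circ p$ are $\wqbs$-morphisms on $\V_\tau$ so that the integrals are well-defined, and that integration against a pushforward measure agrees with integration of the precomposition internally in $\wqbs$ rather than merely in $\cat{Meas}$. The former holds because $p$ and $q$ are morphisms by the lemma above and $f_1, f_2$ are assumed measurable; the latter is the naturality of the integration operator with respect to the functorial action of $\Psub$. Since $f_1$ and $f_2$ are $[0,\infty]$-valued in every application, integrability is automatic and the inequality is read off in $[0,\infty]$, so no finiteness side-conditions intervene.
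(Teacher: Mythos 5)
Your proof is correct and follows essentially the same route as the paper's: extract the coupling $\theta \in \Psub(\V_\tau)$ witnessing the lifting, rewrite both integrals as integrals against $\theta$ via the marginal/pushforward identities, and conclude by monotonicity of integration from the pointwise inequality $f_1 \circ q \leq f_2 \circ p$ on $\V_\tau$. The only cosmetic difference is that the paper interpolates through the intermediate quantity $\int \tfrac{1}{2}(f_1 + f_2)\diff\gamma$ rather than invoking change of variables explicitly, while your version also spells out the $\wqbs$-measurability bookkeeping that the paper leaves implicit.
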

\begin{proof}
Since by assumption $(r, \nu) \CRel_{F\tau} \mu$, there is a coupling $\gamma$ over the support of $\VRel_\tau$, which allows us to conclude:
\[\int f_1 \diff\nu \leq \int \frac 1 2 (f_1 + f_2) \diff \gamma \leq \int f_2 \diff \mu_2\]
The equalities above hold because, in the support of $\gamma$, $f_1(a_1) \leq f_2(a_2)$, making $f_1 \leq \frac 1 2 (f_1 + f_2) \leq f_2$ and since
$\gamma$ is a joint distribution with marginals $\nu$ and $\mu_2$, we have the (in)equality of integrals above.
\end{proof}

At first, it is reasonable to postulate that the expected cost semantics should coincide with the cost semantics.
Unfortunately, it does not hold in the subprobabilistic case, as alluded to in \Cref{lem:monmor}. 
\begin{example}
 Consider the programs 
 \begin{align*}
 t &= \charge 2; \produce 0\\
 u &= \lamb x {\ifthenelse x {(\bot \oplus (\charge 4; \produce 0))} {\bot}}
 \end{align*}
By unfolding definitions, we can show $E(\semcs{x \leftarrow t; \app u x}^c) = (3, \frac 1 2 \delta_0) \neq (4, \frac 1 2 \delta_0) = \semec{x \leftarrow t; \app u x}^c$.
\end{example}

\begin{comment}
Even though the counterexample above invalidates our soundness theorem, it is still possible to prove a weaker
variant of \Cref{cor:sound}. This is achieved by using essentially the same logical relations as before, with the
exception of $\CRel_{F\tau}$, which now becomes 
\[\CRel_{F \tau} = \set{((r, \nu), \mu)}{\mathbb E(\mu_1) \leq r \land \nu \V^\#_\tau \mu_2}\]
\end{comment}

In the probabilistic case we can prove stronger soundness theorems. Consider the recursion-free fragment of $\meta$
and the alternative logical relation for $F \tau$ types:
\[
\C_{F \tau} = \set{((r, \nu), \mu)}{\mathbb E(\mu_1) = r \land \nu \V^\#_\tau \mu_2}
\]

Since the subprobabilistic and probabilistic soundness proofs are nearly identical, we will only present the
subsprobabilistic one and explicitly mention where they differ.

The following lemma is the most technical aspect of the soundness proof and, intuitively, is saying that the logical relations for
computation types can be equipped with "algebra" structures. Furthermore, since we are proving two similar looking theorems for the 
probabilistic and subprobabilistic cases, and the soundness proof in both cases
is basically the same, we will only present the proof to the subprobabilistic case, and highlight in the proof what would differ for
the probabilistic case.

\begin{lemma}
\label{lemma:convexrel}
    Let $\tau_1$, $\tau_2$ and $\overline{\tau}$ be types and $f_1 : \semcs{\tau_1}^v \times \semcs{\tau_2}^v \to \semcs{\overline{\tau}}^c$, and
    $f_2 : \semec{\tau_1}^v \times \semec{\tau_2}^v \to \semec{\overline{\tau}}^c$ be $\wqbs$ morphisms such that $f_1 \times f_2$, when the
    input is restricted to $\VRel_{\tau_1} \times \VRel_{\tau_2}$, the output is restricts to $\CRel_{\overline\tau}$. It is true
    that $(st; T_1(f_1); \alpha_{\overline{\tau}}) \times (st; T_2(f_2); \alpha_{\overline{\tau}})$, when its input is restricted to
    $\VRel_{\tau_1} \times \CRel_{F\tau_2}$, has its output still be restricted to $\CRel_{\overline\tau}$.
\end{lemma}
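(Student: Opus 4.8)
The plan is to induct on the computation type $\overline{\tau}$, following the inductive definition of the algebra $\alpha_{\overline{\tau}}$ recalled in \Cref{app:cbpvsemantics}: when $\overline{\tau} = F\tau_3$ the algebra is the free one (the monad multiplication), and when $\overline{\tau} = \tau_3 \to \overline{\tau}'$ it is the pointwise structure. In either case $st; T_i(f_i); \alpha_{\overline{\tau}}$ is precisely the monadic sequencing semantics, so I would fix related inputs $a_1 \VRel_{\tau_1} a_2$ together with a pair related by $\CRel_{F\tau_2}$ — writing $\mu$ for the cost-semantics element and $(r, \nu)$ for the expected-cost one, so that $\mathbb{E}(\mu_1) \leq r$ and $\mu_2 \V^{\#}_{\tau_2} \nu$ — and show the two outputs are related by $\CRel_{\overline{\tau}}$.

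In the base case $\overline{\tau} = F\tau_3$ the strength pairs the context with the bound value and the free algebra collapses the composite to the Kleisli bind, giving cost-side output $f_1(a_1, -)^{\#}(\mu)$ and expected-cost output $\bigl(r + \int \pi_1(f_2(a_2, -))\diff\nu,\ (\pi_2 \circ f_2(a_2, -))^{\#}(\nu)\bigr)$. For the expected-cost clause of $\CRel_{F\tau_3}$ I would use \Cref{lem:expected_linear} to decompose $\mathbb{E}\bigl((f_1(a_1,-)^{\#}(\mu))_1\bigr) = \mathbb{E}(\mu_1) + \int \mathbb{E}\bigl((f_1(a_1,b))_1\bigr)\,\mu_2(\diff b)$; the first summand is bounded by $r$, and the second by $\int \pi_1(f_2(a_2,-))\diff\nu$ by integrating, over the coupling witnessing $\mu_2 \V^{\#}_{\tau_2}\nu$, the pointwise bound $\mathbb{E}((f_1(a_1,b))_1) \leq \pi_1(f_2(a_2,b'))$ that the hypothesis on $f_1\times f_2$ supplies whenever $b \VRel_{\tau_2} b'$ — this is exactly the coupling argument underlying \Cref{lem:couplingineq}. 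For the output-distribution clause I would invoke the stability of the lifting $(-)^{\#}$ under the bind of $\Psub$: the same coupling, pushed through the second-marginal components of $f_1 \times f_2$ (which are $\V^{\#}_{\tau_3}$-related by the second clause of $\CRel_{F\tau_3}$), witnesses that the two output distributions are $\V^{\#}_{\tau_3}$-related.

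For the inductive case $\overline{\tau} = \tau_3 \to \overline{\tau}'$ the algebra is pointwise, so I would use the coherence property that evaluating the sequenced computation at an argument commutes with performing the sequencing pointwise. Fixing $x_1 \VRel_{\tau_3} x_2$, the outputs applied to $x_1$, resp.\ $x_2$, coincide with the composites built from $f_i^{x_i} = \lambda (a,b).\, f_i(a,b)(x_i)$; by the definition of $\CRel_{\tau_3 \to \overline{\tau}'}$ these still send $\VRel_{\tau_1} \times \VRel_{\tau_2}$ into $\CRel_{\overline{\tau}'}$, so the induction hypothesis at the strictly smaller type $\overline{\tau}'$ applies and yields membership in $\CRel_{\overline{\tau}'}$. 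Since this holds for every related pair $x_1, x_2$, the outputs are related at $\tau_3 \to \overline{\tau}'$.

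The main obstacle is the expected-cost clause of the base case: it is the one place where the two semantics genuinely interact, and it hinges on combining the linearity decomposition of \Cref{lem:expected_linear} with the coupling inequality in the correct direction over the support of the witnessing coupling. The probabilistic variant of the lemma is obtained by the identical argument with every $\leq$ replaced by $=$, since \Cref{lem:expected_linear} is an equality and, for couplings whose marginals have total mass one, the coupling inequality becomes tight.
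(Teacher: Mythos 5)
Your proposal follows essentially the same route as the paper's proof: induction on $\overline{\tau}$; in the base case, split the expected cost of the composite into the input's contribution and the averaged continuation cost, bound the first by $r$ and the second by integrating the pointwise hypothesis over the witnessing coupling (this is exactly \Cref{lem:couplingineq}); get the output-distribution clause by pushing the witness coupling through a kernel of couplings; and in the arrow case, use the pointwise algebra to reduce to the induction hypothesis at $\overline{\tau}'$.

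Two caveats. First, your claimed identity
\begin{equation*}
\mathbb{E}\bigl((f_1(a_1,-)^{\#}(\mu))_1\bigr) \;=\; \mathbb{E}(\mu_1) + \int \mathbb{E}\bigl((f_1(a_1,b))_1\bigr)\,\mu_2(\diff b)
\end{equation*}
is false for subprobabilities: in the bind of $\Psub(\nat\times -)$ the input's cost only counts on runs where the continuation terminates, so the correct decomposition carries a mass factor, $\int n\,\norm{f_1(a_1,b)}\,\mu(\diff n,\diff b) + \int \mathbb{E}\bigl((f_1(a_1,b))_1\bigr)\,\mu_2(\diff b)$. (Take $\mu = \delta_{(5,b_0)}$ and $f_1(a_1,b_0)$ the zero measure: the left side is $0$, your right side is $5$.) This also means \Cref{lem:expected_linear} does not directly give your equation; it applies to the plain $\Psub$-bind, and using it here forces the $\norm{\cdot}$ factor to appear. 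Since $\norm{\cdot}\leq 1$, the inequality you actually need survives, and in the probabilistic case the mass is $1$ so your equality is restored — the paper keeps this factor explicit for precisely this reason. Second, in the arrow case the ``coherence property'' you invoke (evaluation at an argument commutes with pointwise sequencing) is the entire technical content of that case: the paper proves it by a diagram chase using naturality of the strength, the Cartesian closed adjunction, and naturality of $ev$. Asserting it without proof leaves the one nontrivial step of the inductive case unjustified, so you should either prove that diagram or cite it as a separate lemma.
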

\begin{proof}
    This can be proved by induction on the computation type $\overline{\tau}$:
    \begin{description}
        \item[$F\tau$:] In order to prove $f_1^\#(r, \nu) \CRel_{F\tau'} f^\#_2(\mu)$ we have to prove that their expected costs are
        related by the inequality given by the definition of $\CRel_{F\tau}$ 
        and show that there is a coupling over $\nu$ and $\mu_2$, where $\mu_2$ is the second marginal of $\mu$, such that it
        factors through the inclusion $\Psub (\VRel_{\tau'}) \hookrightarrow \Psub (\semcs{\tau'}^v \times \semec{\tau'}^v)$. 
            \newline\newline
        By unfolding the definitions, we get 
        \begin{align*}
        \pi_1(f_1^\#(r, \nu)) &= r + \int (\pi_1 \circ f_1) \diff \nu \\
        \mathbb E(f_2^\#(\mu)) &= \int\int n \norm{f_2(a)} \mu(\diff n, \diff a) + \int n \diff(f_2^\#(\mu)_1)
        \end{align*}
        In the second expression, the left hand side term being summed corresponds to the expected cost of the input
        while the second one corresponds to the cost of the continuation. As such, it is sensible that, in order to
        reason about their difference, we should reason individually about $r - \int\int n \norm{f_2(a)}$ and 
        $\int (\pi_1 \circ f_1) \diff \nu - \int n \diff(f_2^\#(\mu))$, and both should be greater than $0$.
        The first inequality is immediate:
        \begin{align*}
            & \int\int n \norm{f_2(a)} \diff \mu \leq \int\int n \diff \mu \leq r 
        \end{align*}
        For the second expression, assuming $\forall a' \VRel_{\tau'} a, \mathbb{E}(f_2(a)) \leq (\pi_1 \circ f_1)(a')$, we
        can apply \Cref{lem:couplingineq} and use the equality $\int n \diff(f_2^\#(\mu)_1) = \int \mathbb{E}(f_2(a)_1)\diff \mu_2$.
        % \begin{align*}
        %     & \int \mathbb{E}(f_2(a)_1) \gamma(\diff a, \diff a') \leq \int (\pi_1 \circ f_1)(a') \gamma(\diff a, \diff a') = \int (\pi_1 \circ f_1)(a') \nu(\diff a')\\
        %     %
        %     & \int n \diff(f_2^\#(\mu)_1) = \int \mathbb{E}(f_2(a)_1) \mu_2(\diff a) = \int \mathbb{E}(f_2(a)_1) \gamma(\diff a, \diff a')\\
        % \end{align*}
\newline\newline
        By adding these two inequalities we obtain exactly the first condition of the relation $\CRel_{F\tau}$. In the probabilistic
        case every inequality is an equality, since $\norm{f(a)} = 1$ and the inequalities in the definition of $\CRel_{F\tau}$ would 
        be equalities as well.
        The second condition follows from observing that when restricting the domain of $f_1\times f_2$
        to $\VRel_\tau$, we can extract from it a morphism $g : \VRel_\tau \to \Psub(\VRel_{\tau'})$ such that, given inputs $(v_1, v_2)$, 
        the marginals of $g(v_1, v_2)$ are equal to $\pi_2(f_1(v_1))$ and $f_2(v_2)_2$ since, by assumption, $f_1(v_1) \CRel_{F\tau'} f_2(v_2)$.
    \newline\newline
        Given this function, we define the coupling $g^\#(\mu')$, where $\mu'$ is the coupling given by the ``witness'' of $(r, \nu)\CRel_{F\tau} \mu$. Showing
        that it has the right marginals follows from linearity of the marginal function, concluding the proof. This part of the proof remains the 
        same in the probabilistic case
            \newline\newline
        \item[$\tau \to \overline{\tau}$:] This case relies more on notation and, therefore, in order to simplify the presentation, we will
        rely on the symmetry of $f_1$ and $f_2$ and work on the generic expression $st; T(f); \alpha_{\tau \to \overline{\tau}}$ that 
        can be instatiated to both $f_1$ and $f_2$.
            \newline\newline
        By definition of $\CRel_{\tau\to\overline{\tau}}$, in order to define a morphism
        $\VRel_{\tau_1} \times \VRel_{\tau_2} \to \CRel_{\tau \to \overline{\tau}}$, it suffices to defines its transpose 
        $\VRel_{\tau} \times (\VRel_{\tau_1} \times \VRel_{\tau_2}) \to \CRel_{\overline{\tau}}$.
        Since the algebra structure of $\alpha_{\tau \to \overline{\tau}}$ is defined as $\eta; id_\tau \Rightarrow (st; T(ev); \alpha_{\overline{\tau}})$,
        we want to show that the the map $id_\tau \times (st; T f; \eta; id_\tau \Rightarrow (st; T(ev); \alpha_{\overline{\tau}})); ev$, i.e.
        can be rewritten in the format $st; T(f'); \alpha_{\overline\tau}$, so that we can apply the induction hypothesis. This equation
        holds, up to isomorphism, by the following commutative diagram:
% https://q.uiver.app/#q=WzAsMTEsWzAsMCwiXFx0YXUgXFx0aW1lcyAoXFx0YXVfMSBcXHRpbWVzIFRcXHRhdV8yKSAiXSxbMCwyLCIoXFx0YXUgXFx0aW1lcyBcXHRhdV8xKSBcXHRpbWVzIFRcXHRhdV8yIl0sWzEsMCwiXFx0YXUgXFx0aW1lcyBUKFxcdGF1XzEgXFx0aW1lcyBcXHRhdV8yKSAiXSxbMiwwLCJcXHRhdSBcXHRpbWVzIFQoXFx0YXUgXFxSaWdodGFycm93XFxvdmVybGluZXtcXHRhdX0pIl0sWzQsMCwiXFx0YXUgXFx0aW1lcyAoXFx0YXUgXFxSaWdodGFycm93KFxcdGF1IFxcdGltZXMgVChcXHRhdSBcXFJpZ2h0YXJyb3dcXG92ZXJsaW5le1xcdGF1fSkpKSJdLFs0LDEsIlxcdGF1IFxcdGltZXMoXFx0YXUgXFxSaWdodGFycm93IFxcb3ZlcmxpbmV7XFx0YXV9KSJdLFs0LDIsIlxcb3ZlcmxpbmVcXHRhdSJdLFszLDIsIlRcXG92ZXJsaW5lXFx0YXUiXSxbMiwyLCJUKFxcdGF1IFxcdGltZXMgKFxcdGF1IFxcUmlnaHRhcnJvd1xcb3ZlcmxpbmVcXHRhdSkiXSxbMSwyLCJUKFxcdGF1IFxcdGltZXMgKFxcdGF1XzEgXFx0aW1lcyBcXHRhdV8yKSkiXSxbMiwxLCJcXHRhdVxcdGltZXMgVChcXHRhdVxcUmlnaHRhcnJvd1xcb3ZlcmxpbmVcXHRhdSkiXSxbMCwyLCJpZF9cXHRhdVxcdGltZXMgc3QiXSxbMSw5LCJzdDsgVChhXnstMX0pIiwyXSxbOSw4LCJUKFxcdGF1IFxcdGltZXMgZikiLDJdLFs4LDcsIlQgZXYiLDJdLFs3LDYsIlxcYWxwaGFfe1xcb3ZlcmxpbmVcXHRhdX0iLDJdLFswLDEsImEiLDJdLFsyLDksInN0IiwyXSxbMiwzLCJcXHRhdVxcdGltZXMgVGYiXSxbNSw2LCJldiJdLFs0LDUsImlkX1xcdGF1IFxcdGltZXMgKGlkX1xcdGF1IFxcUmlnaHRhcnJvdyAoc3Q7IFRldiA7IFxcYWxwaGFfe1xcb3ZlcmxpbmVcXHRhdX0pKSIsMV0sWzMsNCwiXFx0YXVcXHRpbWVzXFxldGEiXSxbNCwxMCwiZXYiXSxbMTAsOCwic3QiXSxbMywxMCwiIiwyLHsibGV2ZWwiOjIsInN0eWxlIjp7ImhlYWQiOnsibmFtZSI6Im5vbmUifX19XSxbMCwyXV0=
\[\begin{tikzcd}
	{\tau \times (\tau_1 \times T\tau_2) } & {\tau \times T(\tau_1 \times \tau_2) } & {\tau \times T(\tau \Rightarrow\overline{\tau})} && {\tau \times (\tau \Rightarrow(\tau \times T(\tau \Rightarrow\overline{\tau})))} \\
	&& {\tau\times T(\tau\Rightarrow\overline\tau)} && {\tau \times(\tau \Rightarrow \overline{\tau})} \\
	{(\tau \times \tau_1) \times T\tau_2} & {T(\tau \times (\tau_1 \times \tau_2))} & {T(\tau \times (\tau \Rightarrow\overline\tau)} & T\overline\tau & \overline\tau
	\arrow["{id_\tau\times st}", from=1-1, to=1-2]
	\arrow[from=1-1, to=1-2]
	\arrow["a"', from=1-1, to=3-1]
	\arrow["{\tau\times Tf}", from=1-2, to=1-3]
	\arrow["st"', from=1-2, to=3-2]
	\arrow["\tau\times\eta", from=1-3, to=1-5]
	\arrow[Rightarrow, no head, from=1-3, to=2-3]
	\arrow["ev", from=1-5, to=2-3]
	\arrow["{id_\tau \times (id_\tau \Rightarrow (st; Tev ; \alpha_{\overline\tau}))}"{description}, from=1-5, to=2-5]
	\arrow["st", from=2-3, to=3-3]
	\arrow["ev", from=2-5, to=3-5]
	\arrow["{st; T(a^{-1})}"', from=3-1, to=3-2]
	\arrow["{T(\tau \times f)}"', from=3-2, to=3-3]
	\arrow["{T ev}"', from=3-3, to=3-4]
	\arrow["{\alpha_{\overline\tau}}"', from=3-4, to=3-5]
\end{tikzcd}\]
From left to right, the first diagram commutes by definition of strong monad, the second commutes from naturality of the
strength of $T$, the triangular diagram commutes by the Cartesian closed adjunction and the final diagram commutes by
naturality of $ev$.\qedhere
\end{description}
\end{proof}

We are interested in the case where the type $\tau_1$ will be a context $\Gamma$. We now state the denotational soundness theorem:

\begin{theorem}
    For every $\Gamma = x_1 : \tau_1, \dots, x_n : \tau_n$, $\Gamma \vdashval V : \tau$, $\Gamma \vdashcomp t : \overline\tau$ and if for every $1 \leq i \leq n$, $\cdot \vdashval V_i : \tau_i$ and $\semcs{V_i} \VRel_{\tau_i} \semec{V_i}$, then 
    \begin{align*}
    &\semcs{\letin {\overline{x_i}} {\overline{V_i}} t}^c \CRel_{\overline \tau}^c \semec{\letin {\overline{x_i}} {\overline{V_i}} t}^c \text{ and }\\   
    &\semcs{\letin {\overline{x_i}} {\overline{V_i}} V}^v \VRel_\tau \semec{\letin {\overline{x_i}} {\overline{V_i}} V}^v,
    \end{align*}

    where the notation $\overline{x_i} = \overline{V_i}$ means a list of $n$ let-bindings or, in the case of values, 
    a list of substitutions.
\end{theorem}
\begin{proof}
    The proof follows from mutual induction on $\Gamma \vdash^v V : \tau$ and $\Gamma \vdash^c t : \overline{\tau}$. Many
    of the cases follow by just applying the induction hypothesis or by assumptions in the theorem statement. We go over 
    the most interesting cases:
    \begin{description}
        \item[Comp] This case follows from \Cref{lemma:convexrel}.
        \item[Fix] This theorem follows from the induction hypothesis and from the fact that the relations $\CRel^c_{\overline{\tau}}$
        are closed under suprema of ascending chains.
        \item[Produce]
        First apply the induction hypothesis to $V$ and assume that $\semcs{V} = v_1$ and $\semec{V} = v_2$. 
        By construction, $\eta^{T_1}(v_1) \CRel^v_{F\tau}\eta^{T_2}(v_2)$, since they both have the same expected
        value and the coupling is $\delta_{(v_1, v_2)}$.
        \item[Case]
        By applying the inductive hypothesis to $V$ we may do case analysis on it and if it is the empty list,
        we use the inductive hypothesis on $t$ and, otherwise, we use the inductive hypothesis on $u$. \qedhere
    \end{description}
\end{proof}

We now have a very precise sense in which the expected-cost semantics is related to the cost semantics:

\begin{corollary}
\label{cor:sound1}
    The expected-cost semantics is sound with respect to the cost semantics, i.e. for every program
    $\cdot \vdash_c t : F\tau$, the expected cost of the second marginal of $\semcs{t}^c$ greater than
    $\pi_1(\semec{t}^c)$.
\end{corollary}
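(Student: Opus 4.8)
The plan is to obtain the corollary as an immediate consequence of the fundamental theorem of logical relations proved just above, specialized to the empty context. First I would instantiate that theorem with $n = 0$. Since $t$ is closed there are no values $V_i$ to substitute and the list of let-bindings is empty, so $\letin{\overline{x_i}}{\overline{V_i}} t$ is literally $t$; the theorem therefore yields $\semcs{t}^c \CRel_{F\tau} \semec{t}^c$ with no side conditions left to discharge.

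Next I would simply unfold the definition of $\CRel_{F\tau}$. Writing $\mu = \semcs{t}^c \in \Psub(\cmon \times \tau)$ for the cost denotation and $(r, \nu) \in \weight \times \Psub\,\tau$ for the expected-cost denotation $\semec{t}^c$, the relatedness $\mu \CRel_{F\tau} (r, \nu)$ unpacks into exactly the two conjuncts $\mathbb{E}(\mu_1) \leq r$ and $\nu \,\V^\#_\tau\, \mu_2$, where $\mu_1$ and $\mu_2$ denote the cost and output marginals of $\mu$. The first conjunct is precisely the desired soundness inequality $\mathbb{E}(\mu_1) \leq \pi_1(\semec{t}^c)$: the expectation of the cost distribution extracted from the cost semantics is bounded by the real number that the expected-cost semantics tracks directly. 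The second conjunct comes for free and additionally certifies that the two semantics agree on the output distribution up to the relational lifting $\V^\#_\tau$.

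I do not expect a genuine obstacle at this stage: all the substantive work lives in the fundamental theorem and, beneath it, in \Cref{lemma:convexrel}, where the computation relations are equipped with algebra structures and the expected-cost inequality is propagated through sequencing and binding. The only points needing a moment's care are bookkeeping --- verifying that the empty-context instance is well typed (the $F\tau$ clause of the relation is the one carrying the $\mathbb{E}$-inequality) and keeping the two marginals straight so the bound comes out in the stated direction. Finally, to recover the recursion-free sharpening I would rerun the identical argument with the clause $\mathbb{E}(\mu_1) \leq r$ in $\CRel_{F\tau}$ replaced by the equation $\mathbb{E}(\mu_1) = r$; as already observed inside the proof of \Cref{lemma:convexrel}, in the probabilistic (total-mass-one) setting $\norm{f(a)} = 1$ forces every inequality there to be an equality, so the whole development goes through verbatim and upgrades the bound to an equality.
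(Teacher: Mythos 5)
Your proposal is correct and takes essentially the same route as the paper: the corollary is obtained exactly by instantiating the fundamental theorem of logical relations at the empty context (so the let-binding list is empty and the theorem gives $\semcs{t}^c \CRel_{F\tau} \semec{t}^c$ outright) and then unfolding the $F\tau$ clause of the relation, whose first conjunct $\mathbb{E}(\mu_1) \leq r$ is precisely the claimed bound against $\pi_1(\semec{t}^c)$. Your concluding remark about the recursion-free sharpening also coincides with the paper's treatment, which likewise replaces $\mathbb{E}(\mu_1) \leq r$ by $\mathbb{E}(\mu_1) = r$ in $\C_{F\tau}$ and propagates that change through the same proof.
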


In the recursion-free case, the definition of $\CRel_{F\tau}$ gives us a stronger soundness property.

\begin{corollary}
\label{cor:sound3}
    The recursion-free expected-cost semantics is sound with respect to the cost semantics, i.e. for every program
    $\cdot \vdash_c t : F\tau$, the expected cost of the second marginal of $\semcs{t}^c$ is equal to
    $\pi_1(\semec{t}^c)$.
\end{corollary}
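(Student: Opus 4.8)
The plan is to re-run the logical-relations argument behind the fundamental theorem, but with the relation at $F$-types strengthened to an equality,
\[
\C_{F\tau} = \set{((r,\nu),\mu)}{\mathbb E(\mu_1) = r \,\land\, \nu\,\V^\#_\tau\,\mu_2},
\]
while exploiting that the recursion-free fragment produces no divergence. First I would record an auxiliary \emph{mass lemma}: for every recursion-free closed computation $\cdot\vdashcomp t : F\tau$, the cost denotation is a genuine probability distribution, i.e. $\norm{\semcs{t}^c} = 1$. This follows by a routine induction on typing derivations, the point being that the only construct capable of losing mass is $\mathsf{fix}$ (through the bottom element of the underlying $\omega$CPO), which is absent here; $\mathsf{produce}$, $\mathsf{charge}$ and $\uniform$ all denote mass-$1$ distributions, and sequencing, application, $\mathsf{if}$ and $\mathsf{case}$ preserve mass because binding along a mass-$1$ kernel is mass-preserving ($\int \norm{f(a)}\,\mu(\diff a) = \int 1\,\mu(\diff a) = \norm{\mu}$).

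Next I would reprove the analogue of \Cref{lemma:convexrel} for this strengthened relation. Its structure is unchanged, but in the $F\tau$ case the two inequalities that appeared there now collapse to equalities precisely because of mass-$1$ --- this is exactly the simplification the original proof already anticipates for the probabilistic case, where it observes that $\norm{f_2(a)}=1$ forces the bounds in $\C_{F\tau}$ to be equalities. Concretely, the cost-bookkeeping bound $\int\!\int n\,\norm{f_2(a)}\,\diff\mu \leq r$ becomes $\int\!\int n\,\diff\mu = \mathbb E(\mu_1) = r$, while the continuation term is handled by applying \Cref{lem:couplingineq} with the equality hypothesis $\mathbb E(f_2(a)) = (\pi_1\circ f_1)(a')$ for $\V_\tau$-related $a',a$, yielding $\int(\pi_1\circ f_1)\,\diff\nu = \int\mathbb E(f_2(a))\,\diff\mu_2$. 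The arrow-type case is purely structural (strength, naturality, the Cartesian-closed adjunction) and carries over verbatim. With this in hand I would reprove the fundamental theorem itself: every case transfers unchanged, except that the \textbf{Fix} case is simply \emph{deleted}, so no appeal to closure of the relations under suprema of ascending chains --- nor the two $\omega$CPO lemmas supporting it --- is needed, and the \textbf{Produce} case remains an equality because both units carry expected cost $0$ and are coupled by $\delta_{(v_1,v_2)}$.

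Finally I would instantiate the fundamental theorem at a closed program $\cdot\vdashcomp t : F\tau$, obtaining $\semcs{t}^c\,\CRel_{F\tau}\,\semec{t}^c$; the equality clause of the strengthened relation then reads $\mathbb E((\semcs{t}^c)_1) = \pi_1(\semec{t}^c)$, i.e. the expected value of the cost marginal of $\semcs{t}^c$ equals $\pi_1(\semec{t}^c)$, as claimed. I expect the main obstacle to be the interaction of the mass lemma with the $F\tau$ case: one must propagate ``recursion-free'' carefully through the binding structure so that \emph{every} intermediate continuation kernel $f_2(a)$ is mass-$1$, since a single sub-probabilistic continuation would immediately reintroduce the strict inequality of \Cref{cor:sound1} and break equality. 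Once mass-$1$ is secured everywhere, each inequality of the original argument is witnessed by an equality and the remainder is routine.
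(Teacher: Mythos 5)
Your proposal is correct and takes essentially the same route as the paper: the paper likewise strengthens $\C_{F\tau}$ to require $\mathbb{E}(\mu_1) = r$, propagates this through the key algebra lemma and the fundamental theorem, and observes that without recursion the inequalities collapse to equalities because $\norm{f(a)} = 1$. Your explicit mass lemma merely makes precise what the paper leaves implicit in its appeal to "the probabilistic case."
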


\section{Operational Soundness Proof}
\label{app:opsound}
We now prove the soundness theorem.
\begin{proof}
    Proof by induction on $n$ and on $t$. The base case $n = 0$ is trivial because $\bot$ is the least element and
    such an element is preserved by the algebra structure. The inductive ones follow basically from the inductive 
    hypothesis and the CBPV equational theory.
    \begin{description}
        \item[Terminal]: The evaluation rules for terminal computations $T$ output the unit of the monad, which allows us to
        conclude $\sem{\DownarrowFun_n (T)} = (x \leftarrow (\delta_{(0, T)}); \semcs{x}) =  \semcs{T}$.
        \item[Charge]: $\sem{\DownarrowFun_n (\charge r)} = \delta_{(r, ())} = \semcs{\charge r}$.
        \item[Sampling]: $\sem{\DownarrowFun_n (\uniform)} = \delta_0\otimes \lambda = \semcs{\uniform}$.
        \item[Seq]:
        % \begin{align*}
        % &\sem{\Downarrow (x \leftarrow t; u)} = \\
        % &(\alpha \circ (v \mapsto \sem{\Downarrow(\subst u v x)})^\#)(\alpha(\sem{\Downarrow(t))}) \leq\\
        % &(\alpha \circ (v \mapsto \sem{\Downarrow(\subst u v x)})^\#)(\semcs{t}) \leq\\
        % &\semcs{x \leftarrow t; u}
        % \end{align*}
        \begin{align*}
        &\sem{\DownarrowFun_n (x \leftarrow t; u)} = \\
        & (\produce V) \leftarrow \DownarrowFun_n(t); y \leftarrow \DownarrowFun_{n-1}(\subst u V x); \semcs{y} \leq\\
        & (\produce V) \leftarrow \DownarrowFun_n(t); \semcs{\subst u V x} =\\
        & (\produce V) \leftarrow \DownarrowFun_n(t); x \leftarrow \semcs{\produce V}; \semcs{u} =\\
        & x \leftarrow (y \leftarrow \DownarrowFun_n(t); \semcs{y}); \semcs{u} \leq\\
        & x \leftarrow \semcs{t}; \semcs{u} = \semcs{x \leftarrow t; u}\\
        \end{align*}
        \item[App]: 
        \begin{align*}
        &\sem{\DownarrowFun_n (t\, V)} = (\lamb x t') \leftarrow \DownarrowFun_n(t); y \leftarrow \DownarrowFun_{n-1}(\subst {t'} V x); \semcs{y}\leq\\
        & (\lamb x t') \leftarrow \DownarrowFun_n(t); \semcs{(\lamb x t')\, V} = f \leftarrow \DownarrowFun_n(t); \semcs{f}(\semcs{V}) = \\
        & (f \leftarrow \DownarrowFun_n(t); \semcs{f})(\semcs{V}) \leq \semcs{t}(\semcs{V}) = \semcs{t\, V}
        \end{align*}
        \item[Fix]: By unfolding the operational semantics, 
        \begin{align*}
        &\sem{\DownarrowFun_n (\fix x.\, t)} =\\
        &\sem{\DownarrowFun_{n-1}(\subst t {\thunk \fix x.\, t} x )} \leq\\
        &\semcs{\subst t {\thunk \fix x.\, t} x} =\\
        &\semcs{\fix x.\, t}
        \end{align*}
        \item[IzZ0]: By unfolding the operational semantics, 
        \begin{align*}
        & \sem{\DownarrowFun_n (\ifthenelse{0}{t}{u})} =\\
        & \sem{\DownarrowFun_n (t)} \leq \semcs{t} =\\
        & \semcs{\ifthenelse{0}{t}{u}}
        \end{align*}
        \item[IzZS]: By unfolding the operational semantics, 
        \begin{align*}
        & \sem{\DownarrowFun_n (\ifthenelse{n+1}{t}{u})} =\\
        & \sem{\DownarrowFun_n (u)} \leq \semcs{u} =\\
        & \semcs{\ifthenelse{n+1}{t}{u}}
        \end{align*}
        \item[UnPair]: 
        \begin{align*}
        & \sem{\DownarrowFun_n (\letin{(x_1, x_2)}{(V_1, V_2)}{t})} =\\
        & \sem{\DownarrowFun_{n-1} (\subst t {V_1, V_2}{x_1, x_2})} \leq\\
        & \semcs{t}(\semcs{V_1}, \semcs{V_2}) =\\
        & \semcs{\letin{(x_1, x_2)}{(V_1, V_2)}{t})}
        \end{align*}
        \item[caseNil]: 
        \begin{align*}
        & \sem{\DownarrowFun_n (\caseList{\mathsf{nil}}{t}{u})} =\\
        & \sem{\DownarrowFun_n (t)} \leq \semcs t =\\
        & \semcs{\caseList{\mathsf{nil}}{t}{u})}
        \end{align*}
        
        \item[caseCons]: 
        \begin{align*}
        & \sem{\DownarrowFun_n (\caseList{\mathsf{cons} \, V_1\, V_2}{t}{u})} =\\
        & \sem{\DownarrowFun_{n-1} (\subst u {V_1, V_2} {x_1, x_2})} \leq\\
        & \semcs {\subst u {V_1, V_2} {x_1, x_2})} =\\
        & \semcs{\caseList{\mathsf{cons} \, V_1\, V_2}{t}{u}}
        \end{align*}
        \qedhere
    \end{description}
\end{proof}

\section{Operational Adequacy Proof}
\label{app:opproof}
As it is usually the case with adequacy proofs, it follows by a logical relations
argument. Before defining it, we define a relation lifting for the subprobability
cost monad.
Next, we define an extension of the logical relations $\vartriangleright_\tau$ to contexts.

\begin{definition}
    Let $\Gamma = x_1 : \tau_1, \dots, x_n :\tau_n$, $\cdot \vdashval V_i : \tau_1$ and
    $\gamma : \semcs{\Gamma}^v$. We say that $(V_1, \dots, V_n) \vartriangleright_\Gamma \gamma$
    if, and only if, $V_i \vartriangleright_{\tau_i} \pi_i(\gamma)$, for every $i \in \{1,\dots, n\}$.
    We will use the letter $G$ to denote the list of values of $(V_1, \dots, V_n)$.
\end{definition}

In order to prove the fundamental theorem of logical relations we require a couple of lemmas that are proved
by induction.

\begin{lemma}
\label{lem:adequacyLQconvex}
    Let $T$ be the subprobability cost monad $\Psub(\nat \times -)$, $\Gamma$ be a context, $\tau_1$ and $\overline{\tau}$ types, $\Gamma, x : \tau_1 \vdash t : \overline\tau$ a computation and $f : \semcs{\Gamma}^v \times \semcs{\tau_1}^v \to \semcs{\overline{\tau}}^c$ a $\wqbs$ morphism such that for every $\cdot \vdashval V : \tau_1$ with $V \vartriangleright_{\tau_1} v$,
    $\DownarrowFun(\subst t {G, V} {\Gamma, x}) \LHD_{\overline{\tau}} f(\gamma, v)$, then $\DownarrowFun(x \leftarrow u; \subst t {G} {\Gamma}) \LHD_{\overline{\tau}} (\alpha_{\overline{\tau}} \circ Tf(\gamma)) (\nu)$, whenever $\DownarrowFun(u) \LHD_{F\tau_1} \nu $ and $G \vartriangleright_{\Gamma} \gamma$.
\end{lemma}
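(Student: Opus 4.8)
The plan is to prove the statement by induction on the computation type $\overline\tau$, exactly as in \Cref{lemma:convexrel}: since $\LHD_{\overline\tau}$ is defined by recursion on $\overline\tau$, the two cases are $\overline\tau = F\tau'$ and $\overline\tau = \tau\to\overline\sigma$. Throughout I abbreviate $t' := \subst t G \Gamma$ and recall that the guarded operational kernel for sequencing unfolds as $\DownarrowFun(x\leftarrow u; t') = \DownarrowFun(u)\bind(\lambda(\produce V).\,\DownarrowFun(\subst{t'}{V}{x}))$, whereas $(\alpha_{\overline\tau}\circ Tf(\gamma))(\nu)$ is the monadic extension of the Kleisli arrow $f(\gamma,-)$ applied to $\nu$.

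For the base case $\overline\tau = F\tau'$ we have $\LHD_{F\tau'} = \widetilde{\vartriangleright_{\tau'}}$ and $(\alpha_{F\tau'}\circ Tf(\gamma))(\nu) = f(\gamma,-)^\#(\nu)$, the bind of the cost monad $T$. First I would fix test functions $p : \nat\times T^{\cdot\vdash F\tau'}\to[0,1]$ and $q : \nat\times\semcs{\tau'}^v\to[0,1]$ with $q(n,v')\leq p(n,\produce V')$ for all $n$ and all $V'\vartriangleright_{\tau'}v'$, aiming to show $\int q\diff(f(\gamma,-)^\#(\nu)) \leq \int p\diff\DownarrowFun(x\leftarrow u; t')$. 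Unfolding the costful kernel composition, in which costs add, both sides factor through the outer measure on the bound value once one introduces the partially applied tests
\[ q'(n_1,v) = \int q(n_1+n_2,v')\,f(\gamma,v)(\diff n_2,\diff v'), \qquad p'(n_1,V) = \int p(n_1+n_2,w)\,\DownarrowFun(\subst{t'}{V}{x})(\diff n_2,\diff w); \]
indeed $\int q\diff(f(\gamma,-)^\#(\nu)) = \int q'\diff\nu$ and $\int p\diff\DownarrowFun(x\leftarrow u; t') = \int p'\diff\DownarrowFun(u)$, and both $p',q'$ are valued in $[0,1]$ (since $\DownarrowFun$ and $f(\gamma,v)$ are subprobabilities and $p,q\leq 1$) and measurable (since $f$ is a $\wqbs$ morphism and the operational kernel is measurable). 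The crucial pointwise domination $q'(n_1,v)\leq p'(n_1,V)$ whenever $V\vartriangleright_{\tau_1}v$ is precisely $\DownarrowFun(\subst t{G,V}{\Gamma,x})\widetilde{\vartriangleright_{\tau'}}f(\gamma,v)$ --- the lemma's hypothesis for $\overline\tau = F\tau'$, using $\subst{t'}{V}{x} = \subst t{G,V}{\Gamma,x}$ --- evaluated at the cost-shifted tests $q(n_1+\cdot,\cdot)$ and $p(n_1+\cdot,\cdot)$. Applying the remaining hypothesis $\DownarrowFun(u)\widetilde{\vartriangleright_{\tau_1}}\nu$ to $p'$ and $q'$ then closes the case.

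For the inductive case $\overline\tau = \tau\to\overline\sigma$, the definition of $\LHD_{\tau\to\overline\sigma}$ reduces the goal to relating, for each $W\vartriangleright_\tau a$, the distribution $(\lamb y s)\leftarrow\DownarrowFun(x\leftarrow u; t');\DownarrowFun(\subst s W y)$ with $(\alpha_{\tau\to\overline\sigma}\circ Tf(\gamma))(\nu)(a)$. On the operational side I would use the commuting conversion $\DownarrowFun((x\leftarrow u; t')\,W) = \DownarrowFun(x\leftarrow u; (t'\,W))$ together with $t'\,W = \subst{(t\,W)}{G}{\Gamma}$ (valid since $W$ is closed) to rewrite the left-hand side as $\DownarrowFun(x\leftarrow u; \subst{\hat t}{G}{\Gamma})$ with $\hat t := t\,W$. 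On the denotational side, the pointwise $T$-algebra structure on the function space gives $(\alpha_{\tau\to\overline\sigma}\circ Tf(\gamma))(\nu)(a) = (\alpha_{\overline\sigma}\circ T(f_a)(\gamma))(\nu)$, where $f_a(\gamma,v) := f(\gamma,v)(a)$ is again a $\wqbs$ morphism. Unfolding the lemma's hypothesis through the definition of $\LHD_{\tau\to\overline\sigma}$ at $W\vartriangleright_\tau a$ yields exactly $\DownarrowFun(\subst{\hat t}{G,V}{\Gamma,x})\LHD_{\overline\sigma}f_a(\gamma,v)$ for all $V\vartriangleright_{\tau_1}v$, so the induction hypothesis applies to $\hat t$ and the continuation $f_a$ and discharges the goal.

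The hardest part will be the base case: juggling the costful kernel composition alongside the two nested, quantifier-alternating uses of the proof-irrelevant lifting $\widetilde{\vartriangleright}$. Concretely, one must verify that the partially applied tests $p'$ and $q'$ are admissible (valued in $[0,1]$ and measurable) and that shifting the cost index by $n_1$ preserves the required domination between them; by contrast, the commuting conversion and the pointwise-algebra computation of the arrow case are comparatively routine once these are in place.
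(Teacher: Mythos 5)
Your proof follows the paper's argument essentially verbatim: the same induction on $\overline{\tau}$, with your partially applied, cost-shifted tests $p', q'$ being exactly the paper's $h', g'$ in the $F\tau'$ case (dominated via the hypothesis on $t$ and then integrated against $\DownarrowFun(u)$ and $\nu$ via the hypothesis $\DownarrowFun(u) \LHD_{F\tau_1} \nu$), and the same rewriting of the guarded binds plus instantiation of the induction hypothesis at the applied term $t\,W$ and continuation $v \mapsto f(\gamma,v)(a)$ in the arrow case. Your explicit admissibility checks (measurability and $[0,1]$-valuedness of $p', q'$) and the explicit identification of terminal computations $\produce V'$ with values are details the paper elides but are fully consistent with its proof.
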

\begin{proof} 
    Fix $G \vartriangleright_\Gamma \gamma$. The proof follows by induction on the computation type $\overline{\tau}$.
    \begin{description}
        \item[$F\tau$]: This proof follows by unfolding the definitions.
        Given the definition of the relational lifting for the cost monad, let 
        $h : \nat \times T^{\cdot \vdashcomp F\tau} \to [0, 1]$ and $g : \nat \times \semcs{\tau} \to [0,1]$ be functions such that
        for every $V'' \vartriangleright_\tau v''$ and $n : \nat$, $g(n, v'') \leq h(n, v'')$. Therefore, we have to show that
        \[
        \int g(n + n', y) \diff((n', v) \leftarrow \nu; f(\gamma, v)) \leq \int f(n + n', y) \diff((n', V') \leftarrow \DownarrowFun(u); \DownarrowFun(\subst t {G, V'} {\Gamma, x}))
        \]
        Using the equational theory of CBPV and the commutativity equation, the expression above is equivalent to 
        \[
        (n', v) \leftarrow \nu; \int g(n+n',y) \diff(f(\gamma, v)) \leq (n', V') \leftarrow \DownarrowFun(u); \int f(n+n',y) \diff(\DownarrowFun(\subst t {G, V'} {\Gamma, x}))
        \]
        By assumption, for every $V \vartriangleright_{\tau_1} v$, $\int g \diff(f(v)) \leq \int f \diff(\DownarrowFun(\subst t {V} x))$.
        We conclude this case by using the assumption $\DownarrowFun(u) \LHD_{F \tau_1} \nu$ and the functions 
        $g'(n', v) = \int g(n + n', y) \diff(f(\gamma, v))$ and $h'(n', V) = \int h(n + n', y) \diff(\DownarrowFun(\subst t {G, V} {\Gamma, x}))$. Since, by construction, 
        $g'$ and $h'$ satisfy the property that whenever $V_1 \vartriangleright_{\tau_1} v_1$, $h'(n', v_1) \leq g'(n', V_1)$, for every 
        $n' : \nat$, we can show:
        \begin{align*}
        & (n', v) \leftarrow \nu; \int g(n + n', y) \diff(f(\gamma, v)) =\\    
        & \int g' \diff(\nu)\leq \\
        & \int h' \diff (\DownarrowFun(u)) =\\
        & (n', V') \leftarrow \DownarrowFun(u); \int f(n + n', y) \diff(\DownarrowFun(\subst t {G, V'} {\Gamma, x}))
        \end{align*}
        \item[$\tau \to \overline{\tau}$]: For this case, let $V' \vartriangleright_\tau v'$. We have to show that $((\lamb y {t'}) \leftarrow \DownarrowFun(x \leftarrow u; t); \DownarrowFun(\subst {t'} {G, V'} {\Gamma, y}) \LHD_{\overline{\tau}} x \leftarrow \nu; f(\gamma, x, v')$. Rewriting it, we obtain the following
        equivalent relation $V \leftarrow \DownarrowFun(u); (\lamb y {t'}) \leftarrow \DownarrowFun(t); \DownarrowFun(\subst {t'} {G, V'} {\Gamma, y}) \LHD_{\overline{\tau}} x \leftarrow \nu; f(\gamma, x, v')$. We prove this by using the induction hypothesis for the type $\overline{\tau}$. We choose $\Gamma, x : \tau_1 \vdash t \, V'$ and $v \mapsto f(\gamma, v, v')$ in the inductive hypothesis, which allows us to conclude.
    \end{description}
\end{proof}

\begin{comment}
\begin{lemma}
    If $t \LHD f$ (resp. $V \vartriangleright v$) then $\semcs{t} \leq f$ (resp. $\semcs{V} \leq v$).
\end{lemma}
\end{comment}

\begin{lemma}
\label{lem:bot}
    For every computation type $\overline \tau$ and for every distribution $\mu$, 
    $\mu \LHD_{\overline{\tau}} \bot$ and if $\mu \LHD_{\overline{\tau}} x_n$,
    for an ascending chain $x_0 \leq \cdots \leq x_n \leq \cdots$, then
    $\mu \LHD_{\overline{\tau}} \sup_n (x_n)$.
\end{lemma}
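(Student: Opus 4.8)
The plan is to prove both statements simultaneously by induction on the computation type $\overline{\tau}$, since the relation $\LHD_{\overline{\tau}}$ is itself defined by recursion on that type. Crucially, both claims concern only the \emph{second} (denotational) argument of $\LHD$, with the operational distribution $\mu$ held fixed throughout, so the induction threads cleanly through the two type formers $F\tau$ and $\tau \to \overline{\tau}$ without any interaction with $\mu$.

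For the base case $\overline{\tau} = F\tau$, I would unfold $\LHD_{F\tau}$ to the proof-irrelevant lifting $\widetilde{\vartriangleright_\tau}$. Thus, to establish $\mu \LHD_{F\tau} x$ for a denotational measure $x \in \semcs{F\tau}$, I must show $\int g \diff x \leq \int f \diff \mu$ for every pair of tests $f : \nat \times \val \to [0,1]$ and $g : \nat \times \semcs{\tau} \to [0,1]$ satisfying the domination condition on related arguments. For the bottom claim, $\bot$ is the zero measure, so $\int g \diff \bot = 0 \leq \int f \diff \mu$ holds simply because $f$ is non-negative. For the supremum claim, I would invoke Scott-continuity of integration in its measure argument, namely $\int g \diff (\sup_n x_n) = \sup_n \int g \diff x_n$; since each $\int g \diff x_n \leq \int f \diff \mu$ by hypothesis, the supremum is bounded by the same quantity, yielding $\mu \LHD_{F\tau} \sup_n x_n$.

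For the inductive step $\overline{\tau} = \tau \to \overline{\tau}'$, I would exploit that both the bottom element and suprema in the function $T$-algebra $\semcs{\tau \to \overline{\tau}'}$ are computed pointwise. Unfolding $\LHD_{\tau \to \overline{\tau}'}$, to show $\mu \LHD_{\tau \to \overline{\tau}'} x$ it suffices to show, for every $V \vartriangleright_\tau a$, that the fixed operational distribution $\mu' := ((\lamb x t) \leftarrow \mu; \DownarrowFun(\subst t V x))$ satisfies $\mu' \LHD_{\overline{\tau}'} x(a)$. For the bottom claim this is immediate, since $\bot(a) = \bot_{\overline{\tau}'}$ and the induction hypothesis gives $\mu' \LHD_{\overline{\tau}'} \bot_{\overline{\tau}'}$. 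For the supremum claim, $(\sup_n x_n)(a) = \sup_n (x_n(a))$, the sequence $\{x_n(a)\}_n$ is ascending in $\semcs{\overline{\tau}'}$ because application is monotone, and $\mu' \LHD_{\overline{\tau}'} x_n(a)$ holds for each $n$ by the assumption $\mu \LHD_{\tau \to \overline{\tau}'} x_n$; the induction hypothesis at $\overline{\tau}'$ then delivers $\mu' \LHD_{\overline{\tau}'} \sup_n x_n(a)$, as required.

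The main obstacle I anticipate is the supremum case at type $F\tau$, which hinges on the exchange $\int g \diff (\sup_n x_n) = \sup_n \int g \diff x_n$ --- exactly Scott-continuity of integration in the measure argument. I would justify this from the construction of the probabilistic powerdomain $\Psub$ in $\wqbs$, where the order on measures and the integration morphism are arranged so that this monotone-convergence identity holds; once it is in hand, both claims become routine. The function-type cases are purely structural, relying only on the pointwise nature of the algebra operations together with the induction hypothesis.
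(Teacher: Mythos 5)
Your proposal is correct and follows essentially the same route as the paper's proof: induction on the computation type, with the $F\tau$ case handled by the fact that $\bot$ is the zero measure (so the test integral is $0$) together with Scott-continuity of integration, and the arrow case handled by the pointwise order on the function algebra plus the induction hypothesis. The only difference is that you spell out the details more explicitly than the paper does.
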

\begin{proof}
    The proof follows by induction on $\overline{\tau}$. For the base case,
    let $f : T^{\cdot \vdashcomp F \tau} \to [0,1]$ and $g : \semcs{\tau} \to [0,1]$ be
    functions such that whenever $V \vartriangleright_\tau x$, $g(n, x) \leq f(V)$.
    Since $\bot$ in $F\tau$ is the $0$, measure, $\int g \diff 0 = 0 \leq \int f\diff(\DownarrowFun(t))$.
    The stability under suprema of ascending chains follows from Scott-continuity
    of integration.

    For the case $\tau \to \overline \tau$, we use the inductive hypothesis and the
    fact that the order of functions is given pointwise.
\end{proof}

\begin{theorem}[Fundamental Theorem of Logical Relations]
    If $\Gamma \vdashcomp t : \overline\tau$ (resp. $\Gamma \vdashval V : \tau$) and $G \vartriangleright_\Gamma \gamma$ then $\DownarrowFun(\subst t {G} {\Gamma}) \LHD_{\overline{\tau}} \semcs{t}(\gamma)$ (resp. $\subst V {G} {\Gamma} \vartriangleright_\tau \semcs{V}(\gamma)$).
\end{theorem}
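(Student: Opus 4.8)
The plan is to prove both statements simultaneously by mutual structural induction on the typing derivations of $\Gamma \vdashcomp t : \overline\tau$ and $\Gamma \vdashval V : \tau$, keeping the closing substitution $G \vartriangleright_\Gamma \gamma$ fixed throughout. The value cases are largely routine: for a variable the conclusion is exactly the hypothesis $G \vartriangleright_\Gamma \gamma$ read off at the appropriate coordinate; for the constants $n$, $r$, $()$ and $c$ the corresponding relations are the total relation, so there is nothing to check; pairs and list constructors follow by applying the induction hypothesis to their immediate subterms; and the thunk case $\thunk\, t$ unfolds, by definition of $\vartriangleright_{U\overline\tau}$, to precisely the computation statement for $t$, which is the inductive hypothesis.

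For the computation cases I would first dispatch the introduction and elimination forms. For $\produce V$ the operational semantics gives $\DownarrowFun(\produce V) = \delta_{(0, \produce V)}$ while the denotation is $\eta(\semcs{V}(\gamma))$; since the induction hypothesis yields $\subst V G \Gamma \vartriangleright_\tau \semcs{V}(\gamma)$, the claim $\DownarrowFun(\produce V) \LHD_{F\tau}\eta(\semcs{V}(\gamma))$ follows by unfolding $\widetilde{\vartriangleright_\tau}$ on point masses. For $\lamb x t$ I would use $\DownarrowFun(\lamb x t) = \delta_{(0,\lamb x t)}$, so that the $\LHD_{\tau\to\overline\tau}$ obligation collapses, for each $V \vartriangleright_\tau a$, to $\DownarrowFun(\subst t V x) \LHD_{\overline\tau}\semcs{t}(\gamma, a)$, which is the induction hypothesis for $t$ under the extended context. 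Application and $\force$ are then immediate by combining the hypotheses with the relevant operational rule; for $\force$, note that after substitution the argument is a closed value of type $U\overline\tau$, hence a thunk $\thunk\, u$, and the operational rule gives $\DownarrowFun(\force(\thunk\, u)) = \DownarrowFun(u)$. The conditionals, the pair-eliminator and the list $\mathsf{case}$ all reduce to a case analysis: the induction hypothesis forces the scrutinee into the expected shape (a numeral, a pair, or $\mathsf{nil}$/$\mathsf{cons}$), after which the relevant branch is handled by its own induction hypothesis together with the matching reduction rule. The sequencing case $x \leftarrow t; u$ is exactly the situation packaged by \Cref{lem:adequacyLQconvex}: the induction hypothesis on $t$ supplies $\DownarrowFun(\subst t G \Gamma)\LHD_{F\tau'}\semcs{t}(\gamma)$, the induction hypothesis on $u$ supplies the family $\DownarrowFun(\subst u {G,V}{\Gamma,x})\LHD_{\overline\tau}\semcs{u}(\gamma,v)$ for all $V \vartriangleright_{\tau'} v$, and the lemma delivers the conclusion for the composite.

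The main obstacle is the $\fix$ case, which I would treat by a nested induction. Writing $\Phi(y) = \semcs{t}(\gamma, y)$ for the functional whose least fixed point is $\semcs{\fix x.\,t}(\gamma) = \bigsqcup_n \Phi^n(\bot)$, I would first show, by induction on $n$, that $\DownarrowFun(\subst{\fix x.\,t}{G}{\Gamma}) \LHD_{\overline\tau} \Phi^n(\bot)$. The base case $n = 0$ is $\Phi^0(\bot) = \bot$, handled by the first clause of \Cref{lem:bot}. For $n+1$, the inner induction hypothesis gives $\DownarrowFun(\subst{\fix x.\,t}{G}{\Gamma}) \LHD_{\overline\tau} \Phi^n(\bot)$, which is by definition the statement $\thunk(\subst{\fix x.\,t}{G}{\Gamma}) \vartriangleright_{U\overline\tau}\Phi^n(\bot)$; feeding this into the main induction hypothesis for $t$ (whose context carries $x : U\overline\tau$) yields $\DownarrowFun(\subst{t}{G,\thunk(\fix x.\,t)}{\Gamma,x})\LHD_{\overline\tau}\semcs{t}(\gamma,\Phi^n(\bot)) = \Phi^{n+1}(\bot)$, and the operational unfolding rule $\DownarrowFun(\fix x.\,t) = \DownarrowFun(\subst t {\thunk\,\fix x.\,t} x)$ identifies the left-hand side with $\DownarrowFun(\subst{\fix x.\,t}{G}{\Gamma})$. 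Finally, since $\LHD_{\overline\tau}$ is closed under suprema of ascending chains in its second argument (the second clause of \Cref{lem:bot}), passing to the supremum over $n$ gives $\DownarrowFun(\subst{\fix x.\,t}{G}{\Gamma})\LHD_{\overline\tau}\bigsqcup_n\Phi^n(\bot) = \semcs{\fix x.\,t}(\gamma)$, as required. The delicate point throughout is keeping the two inductions distinct — the outer structural induction on typing and the inner numerical induction on the fixpoint approximants — and observing that the apparent circularity in the $\fix$ case is resolved precisely because the $U$-type relation is defined to mirror the computation relation one approximant at a time.
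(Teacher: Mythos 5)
Your proof is correct and takes essentially the same route as the paper's: the same mutual structural induction with the same handling of the value cases, the $\produce$/abstraction/application/$\force$/case analyses, sequencing discharged exactly by \Cref{lem:adequacyLQconvex}, and $\fix$ by an inner induction on the approximants $\Phi^n(\bot)$ threaded through the definition of $\vartriangleright_{U\overline\tau}$ and closed off by the chain-closure clause of \Cref{lem:bot}. The only cases you leave unmentioned are the primitive effects $\charge V$ and $\uniform$, which the paper dispatches by inspection, since their operational and denotational semantics are literally the same distributions ($\delta_{(c,())}$ and $\delta_0 \otimes \lambda$), hence related by the lifting of the diagonal relations at $1$ and $\R$.
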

\begin{proof}
    The proof follows by mutual induction on the typing derivations of $t$ and $V$.
        \begin{description}
        \item[Var]: Assuming that $\subst {x_i} {G} {\Gamma} = V_i$, where we assume that the $i$-th elements
        of $G$ and $\Gamma$ are, respectively, $V_i$ and $x_i$, which implies the conclusion by the assumption $V_i \vartriangleright v_i$.
        \item[Arithmetic constants]: Follows by inspection.
        \item[Pair]: Follows directly from the induction hypotheses.
        \item[Charge]: By unfolding the definitions, $\DownarrowFun (\charge {\subst V G \Gamma}) = \delta_{(\sem{\subst V G \Gamma}, ())} = \semcs{\charge {\subst V G \Gamma}}$. Therefore,
        since $\vartriangleright_1 = \{((), ())\}$ and the distributions $\DownarrowFun (\charge {\subst V G \Gamma})$ and $\semcs{\charge {\subst V G \Gamma}}$ are the same, we can conclude by definition of $\LHD_{F\tau}$.
        \item[Sample]: By unfolding the definitions, $\DownarrowFun (\uniform) = \delta_0 \otimes \lambda = \semcs{\uniform}$. Therefore,
        since $\vartriangleright_\R = \{(r, r) \mid r \in \R\}$ and the distributions $\DownarrowFun (\uniform)$ and $\semcs{\uniform}$ 
        are the same, we can conclude by definition of $\LHD_{F\tau}$
        \item[Fix]: Follows mostly from \Cref{lem:bot}. We begin by proving that $\DownarrowFun(\fix x. (\subst t G \Gamma)) \LHD_{\overline{\tau}} (\semcs{t}(\gamma))^n(\bot)$, for every $n : \nat$. The proof follows by induction on $n$ and, in order to avoid visual pollution, let $t' = \subst t G \Gamma$.
        \begin{description}
            \item[$0$]: In this case, $(\semcs{t}(\gamma))^0(\bot) = \bot$, so we can apply \Cref{lem:bot}.
            \item[$n + 1$]: For this case, we apply the induction hypothesis and get 
            \[\DownarrowFun(\fix x. t') \LHD_{\overline{\tau}} (\semcs{t}(\gamma))^n(\bot)\] 
            Next, using the definition of $\vartriangleright_{U\overline{\tau}}$, we can conclude that
            \[\thunk (\fix x. t') \vartriangleright_{U\overline{\tau}} (\semcs{t}(\gamma))^n(\bot)\]
            Now, we apply the global induction hypothesis and conclude 
            \[\DownarrowFun(\fix x. t') = \DownarrowFun(\subst {t'} {\thunk (\fix x. t')} x) \LHD_{\overline{\tau}} (\semcs{t}(\gamma))^{n+1}(\bot)\]
        \end{description}
        Therefore, by \Cref{lem:bot} $\DownarrowFun(\fix x. t') \LHD_{\overline\tau} \bigsqcup_n (\semcs{t}(\gamma))^n(\bot)$.
        \item[Abstraction]: Follows directly from the equation $\subst {(\lamb x t)} G \Gamma = \lamb x {(\subst t G \Gamma)}$,
        the equation $\DownarrowFun(\lamb x {\subst t G \Gamma}) = \delta_{(0, \lamb x t)}$ and the induction hypothesis. We will now
        show show that $\DownarrowFun(\lamb x {\subst t G \Gamma}) \LHD_{\tau \to \overline \tau} \semcs{t}(\gamma)$. Let $V \vartriangleright_\tau v$,
        we have to show 
        \[((\lamb x {t'}) \leftarrow \DownarrowFun(\lamb x {\subst t G \Gamma}); \DownarrowFun(\subst {t'} V x)) \LHD_{\overline{\tau}} \semcs{t}(\gamma, v)\]
        The LHS of that expression is equal to $\DownarrowFun(\subst t {G, V} {\Gamma, x})$. We conclude by applying the induction hypothesis
        to $\Gamma, x : \tau \vdashcomp t : \overline{\tau}$ and $V \vartriangleright_\tau v$.
        \item[Application]: Follows directly from equation $\subst {(\app t V)} G \Gamma = \app {(\subst t G \Gamma)} {\subst V G \Gamma}$,
        the induction hypothesis and the definition of $\LHD_{\tau \to \overline \tau}$.
        \item[Produce]: We conclude by the induction hypothesis for a value $V$, the equalities
        $\DownarrowFun(\subst V {G} {\Gamma}) = \delta_{(0, \subst V {G} {\Gamma})}$ and $\semcs{\produce \subst V {G} {\Gamma}} = \delta_{(0, \semcs{V}( {\semcs{G}}))}$, and the fact that for every measurable function $f : A \to [0, 1]$, $\int f \diff(\delta_x) = f(x)$.
        \item[Force]: Follows from the fact that the only well-typed closed programs of type $U \overline{\tau}$ are those of the form
        $\thunk t$, for some computation $t$, the equation $\DownarrowFun(\force \thunk t) = \DownarrowFun(t)$ and the induction hypothesis.
        \item[Thunk]: Follows directly from the substitution equality $\subst {(\thunk t)} G \Gamma = \thunk (\subst t G \Gamma)$, the 
        induction hypothesis and the definition of $\vartriangleright_{U\overline\tau}$.
        \item[List Case]: Using the distributivity of substitution, the fact that closed values of type list are either the
        empty list or the cons of a list, and the induction hypothesis, we can conclude.
        \item[Seq]: First, use the distributivity of substitution $\subst {(x \leftarrow t; u)} {G} {\Gamma} = x \leftarrow \subst t {G} {\Gamma}; \subst u {G}{\gamma}$. We conclude by applying \Cref{lem:adequacyLQconvex} and noting that its assumptions are exactly the induction hypotheses
        applied to $\subst t {G} {\Gamma}$ and $\subst u {G} {\Gamma}$.\qedhere
    \end{description}
\end{proof}

\section{Examples}
\label{app:examples}

\subsection{Random Walks}
For this example we are interested in
the symmetric random walk over the natural numbers. At every point $n$ the probability of moving to
$n-1$ or $n+1$ is $\frac{1}{2}$. Furthermore, we are assuming the variant where at $0$
you move to $1$ with probability $1$. We can write a program that simulates such a random walk 
with a point of departure $i : \nat$ and a point of arrival $j : \nat$:
\begin{align*}
 &\mathsf{randomWalk} = \mu f : \nat \to \nat \to F 1.\, \lambda i : \nat \, j : \nat.\\
 &\mathsf{if} \, i = j \, \mathsf{then}\\
 &  \quad \produce ()\\
 & \mathsf{else}\\
 &\quad \charge 1;\\
 &\quad\mathsf{if} \, i \, \mathsf{then}\\
 & \quad \quad (\force f) \, 1 \, j\\
 &\quad \mathsf{else}\\
 &\quad \quad ((\force f) \, (i-1) \, j) \oplus ((\force f) \, (i+1) \, j)
\end{align*}
The program receives the starting and end points, $i$ and $j$, respectively, as arguments,
and if they are equal, you stop the random walk. Otherwise, you take one step 
of the random walk, i.e. you take step to either $i - 1$ or $i + 1$ with equal probability,
with the  exception of when $i = 0$, in which case you go to $1$. This iterative behaviour
can be straightforwardly captured with recursion, as illustrated by the program above. This
is basically the mathematical specification of the one-dimensional random walk with barrier.
As such, we know that it terminates with probability $1$, cf Section 1.6 of \cite{norris1998}).

By unfolding the denotational semantics, we obtain that the cost expression is given by the
fixed point of the operator 
\begin{align*}
&\lambda F : \nat \to \nat \to \weight.\, \lamb {i : \nat j : \nat} ~\\
&\mathsf{if}\, i = j \,\mathsf{then}\\
& \quad 0\\
& \mathsf{elseif} \, i = 0\, \mathsf{then}\\
&\quad 1 + F(1, j)\\
&\mathsf{else}\\
&\quad 1 + \frac 1 2 (F(i-1, j) + F(i + 1, j))
\end{align*}
It is now possible to compute the expected value on the number of rounds that are necessary in order
to reach your target, which following the expression above, is given by the following two-argument 
recursive relation.
\begin{align*}
    T(i, i) &= 0\\
    T(0, j) &= 1 + T(1, j)\\
    T(i, j) &= 1 + \frac{1}{2} (T (i-1, j) + T(i + 1, j))
\end{align*}
This recurrence relation is well-known in the theory of Markov chains --- see \cite{norris1998} for an introduction.
Something interesting about it is that when $i > j$, this stochastic process reduces to the symmetric random walk
without an absorbing state, which is known to have $\infty$ expected cost.

\subsection{Stochastic Convex Hull}

In computational geometry, finding the convex hull of a set
of points in space is an important algorithm that has been
thoroughly studied. In this case study, we go over a variant
of this algorithm that has a linear expected runtime cost.

This algorithm can be divided into three separate components.
The first one generates a uniformly and independently sampled
list in the square $[0,1]^2$. The second one sieves the original 
list of points so that points that are ``obviously'' not a part 
of the convex hull are eliminated. The last part is any convex 
hull algorithm that runs in $O(n\log n)$. It is important to note 
that this algorithm only has this time complexity under the assumption
that the input list is uniformly distributed. Indeed, the
time complexity of this algorithm hinges on the following lemma.

\begin{lemma}(Th.~2.2 of Golin \cite{golin1988analysis})
\label{lem:filterconvexhull}
    Let $P \subseteq [0,1]^2$ be an independent and uniformly distributed 
    finite set of $n$ points. There is a square $R$ which is inside $P$'s 
    convex hull such that the size of the set of points in $P$ outside of 
    $R$ is $O(\sqrt n)$.
\end{lemma}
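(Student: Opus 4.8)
The plan is to build the square $R$ from the same four extreme points $p_1,\dots,p_4$ that the $\mathsf{sieve}$ routine already computes, namely the minimizers of $x+y$, $x-y$, $-x+y$ and $-x-y$. Each $p_i$ is a vertex of the convex hull, so the quadrilateral $Q$ they span satisfies $Q\subseteq\mathrm{conv}(P)$; I would then take $R$ to be an axis-aligned square inscribed in $Q$ (e.g. the largest square inside the rectangle $[\ell,r]\times[b,t]$ with $b=\max(y_{p_1},y_{p_3})$, $t=\min(y_{p_2},y_{p_4})$ and $\ell,r$ defined symmetrically), which guarantees $R\subseteq Q\subseteq\mathrm{conv}(P)$. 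The points of $P$ lying outside $R$ are then contained in four boundary strips of $[0,1]^2$, one along each side. Reading the $O(\sqrt n)$ bound as a bound on the \emph{expected} number of such points (which is what the linear expected-time analysis needs), it suffices by symmetry to bound the expected cardinality of one strip, say the bottom strip $\{y\le b\}$.

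The geometric heart of the argument is that the extreme points sit within distance $O(1/\sqrt n)$ of the corners, so the strips are thin. Concretely $b=\max(y_{p_1},y_{p_3})\le\max(a,a')$, where $a=\min_j(x_j+y_j)$ and $a'=\min_j((1-x_j)+y_j)$, since $y_{p_1}\le x_{p_1}+y_{p_1}=a$ and $y_{p_3}\le(1-x_{p_3})+y_{p_3}=a'$. I would first establish the extreme-value estimate $\mathbb{E}[a]=O(1/\sqrt n)$ by integrating the tail: for $t\in[0,1]$ a single uniform point satisfies $\Pr[x+y\le t]=t^2/2$, so $\Pr[a>t]=(1-t^2/2)^n\le e^{-nt^2/2}$, while the contribution of $t>1$ is exponentially small; hence $\mathbb{E}[a]\le\int_0^\infty e^{-nt^2/2}\,dt+o(1/\sqrt n)=\sqrt{\pi/(2n)}+o(1/\sqrt n)=O(1/\sqrt n)$, and the same for $a'$. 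Thus the bottom strip has expected height $O(1/\sqrt n)$.

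To convert an expected height $O(1/\sqrt n)$ into an expected count $O(\sqrt n)$, I would bound the count in the bottom strip by $\#\{q:y_q\le a\}+\#\{q:y_q\le a'\}$ and estimate each term by linearity. The subtle point — and the step I expect to be the main obstacle — is that the threshold $a$ is itself a function of the very points being counted, so the relevant events are correlated. The clean way around this is to separate one point: by exchangeability $\mathbb{E}[\#\{q:y_q\le a\}]=n\Pr[y_1\le a]$, and since $a=\min(x_1+y_1,\,a_{-1})$ with $a_{-1}=\min_{j\ge 2}(x_j+y_j)$, the event $y_1\le a$ is equivalent to $y_1\le a_{-1}$, where $a_{-1}$ is independent of the coordinate $y_1$. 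Therefore $\Pr[y_1\le a]=\mathbb{E}[\min(a_{-1},1)]\le\mathbb{E}[a_{-1}]=O(1/\sqrt n)$, giving $\mathbb{E}[\#\{q:y_q\le a\}]=O(\sqrt n)$. Summing the four symmetric strips yields the claimed bound. The remaining routine obligations — that the inscribed square is nonempty and genuinely inside $Q$, and that the four strips cover $[0,1]^2\setminus R$ with only constant overcounting at the corners — I would dispatch by elementary convexity, and the whole argument reproduces Golin's Theorem~2.2.
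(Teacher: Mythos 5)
You should know first that the paper contains no proof of this lemma: it is imported verbatim as Theorem~2.2 of Golin \cite{golin1988analysis} and used as a black box (its only role is to bound the expected length of the output of $\mathsf{sieve}$ in the analysis of $\mathsf{convexHull}$). So there is nothing internal to compare against, and your proposal has to be judged on its own merits.

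Its probabilistic core is sound and is indeed the heart of such a result: the tail estimate $\Pr[a>t]=(1-t^2/2)^n\le e^{-nt^2/2}$ giving $\mathbb{E}[a]=O(1/\sqrt n)$, and the leave-one-out decoupling $\{y_1\le a\}=\{y_1\le a_{-1}\}$ with $a_{-1}$ independent of point~$1$, which is exactly the right way to break the correlation between the random threshold and the points being counted; reading $O(\sqrt n)$ as an expected count is also the correct reading (it is what Golin proves and what the paper's linear expected-time analysis needs). The step that fails is geometric: with $R$ defined as the largest square inscribed in the rectangle $[\ell,r]\times[b,t]$, the complement $[0,1]^2\setminus R$ is \emph{not} contained in the four strips $\{y\le b\}$, $\{y\ge t\}$, $\{x\le\ell\}$, $\{x\ge r\}$. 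Those strips cover only the complement of the full rectangle; whenever the rectangle is not itself a square, the inscribed square leaves a slab \emph{inside} the rectangle that no strip touches. Concretely, if the rectangle happens to be $[0,1]\times[0,0.9]$ (so $\ell=b=0$, $r=1$, $t=0.9$), every placement of the inscribed square of side $0.9$ misses a vertical slab of width $0.1$ inside the rectangle, while all four of your strips are null sets. No convexity argument can dispatch this, because the covering claim is simply false for your choice of $R$.

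The repair is cheap and stays inside your toolkit. Put $m=\max(\ell,\,b,\,1-r,\,1-t)$ and, on the event $m<1/2$, take $R=[m,1-m]^2$: this is a square, it sits inside your rectangle (hence inside $Q\subseteq\mathrm{conv}(P)$), and its complement in $[0,1]^2$ is \emph{exactly} four strips of margin $m$. Since $m$ always equals one of the four statistics, a union bound gives, e.g., $\{y_q\le m\}\subseteq\{y_q\le\ell\}\cup\{y_q\le b\}\cup\{y_q\le 1-r\}\cup\{y_q\le 1-t\}$, and each of $\ell,b,1-r,1-t$ is in turn bounded (by your own inequalities) by a maximum of two min-type statistics such as $a$ and $a'$; after one more union bound every term is an event of the form $\{y_q\le S\}$ with $S$ a minimum over the points of a nonnegative function of each point, which is precisely the case your decoupling handles: drop point $q$ from the minimum to get $S\le S_{-q}$ with $S_{-q}$ independent of point $q$, then integrate the tail. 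On the exceptional event $m\ge 1/2$ you may count all $n$ points, but each of the four statistics exceeds $1/2$ with exponentially small probability (again by your tail bound), so this contributes $o(1)$ to the expectation. With that single modification your argument is complete and does recover Golin's Theorem~2.2.
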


We can also prove that $\mathsf{convexHull}$ terminates with probability $1$.
\begin{lemma}
    The total mass of $\pi_2(\semec{\mathsf{convexHull}})$ is $1$.
\end{lemma}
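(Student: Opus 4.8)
The plan is to exploit the compositionality of the expected-cost semantics. Writing $\mathsf{convexHull}\,n$ as the Kleisli composition of $\mathsf{unifList}\,n$, $\mathsf{sieve}$ and $\mathsf{graham}$ in the monad $\weight \times \Psub$, the distribution component $\pi_2 \circ \semec{\mathsf{convexHull}}$ is the $\Psub$-bind of the three distribution components. Since the total mass of $g^\#_{\Psub}(\mu)$ equals $\int \norm{g(a)}\,\mu(\diff a)$, the total mass of the composite is $1$ as soon as each constituent terminates with probability $1$ on every relevant input. It therefore suffices to treat each component separately, with $\mathsf{graham}$ further decomposed into $\mathsf{min\_xy}$, $\mathsf{quicksort}$ and $\mathsf{scan}$.

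For the structurally recursive components the argument is routine. The function $\mathsf{unifList}\,n$ terminates almost surely by induction on $n$: the base case is a $\produce$, and in the inductive step the two $\uniform$ samples each have distribution part the uniform measure $\lambda$, which has mass $1$, so the mass is preserved through the bind. The deterministic list folds $\mathsf{min}$, $\mathsf{filter}$ and $\mathsf{min\_xy}$ recurse structurally on their list argument and are total, hence almost surely terminating. For $\mathsf{quicksort}$ I would invoke the earlier permutation-invariance lemma (\Cref{lem:permInv}) together with the already-established fact that $(\pi_2 \circ \mathsf{qck}_\nat)(n) = \delta_n$, which yields that $\mathsf{quicksort}$ terminates with probability $1$ on lists without repetitions.

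The one genuinely delicate component is $\mathsf{scan}$, and this is where the main obstacle lies: although $\mathsf{scan}$ is \emph{deterministic} — $\mathsf{clockOrNot}$ performs no sampling — it is not structurally recursive, since in the branch where $\mathsf{clockOrNot}$ returns true the recursive call $f\,(x :: xs)\,(p_2 :: ps)$ leaves the input list unchanged and only pops the stack. I would therefore supply an explicit well-founded termination measure $m(l, stk) = 2\,|l| + |stk|$ and check that it strictly decreases along every recursive call: the pop branch keeps $|l|$ fixed and lowers $|stk|$ by one, while each of the remaining branches lowers $|l|$ by one and raises $|stk|$ by at most one, so $m$ drops by at least one in all cases. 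Proceeding by well-founded induction on $m$, and unfolding the fixpoint via the \textsc{Fix} equation of \Cref{fig:ineq}, I would show that $\pi_2 \circ \semec{\mathsf{scan}}(l, stk)$ is the point mass on the scan output, hence has total mass $1$.

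Combining the three almost-sure-termination facts through the mass-preservation property of $\Psub$-bind then yields that $\pi_2 \circ \semec{\mathsf{convexHull}}(n)$ has total mass $1$ for every $n$, as required. The only additional subtlety is to confirm that the list passed to $\mathsf{graham}$ has no repeated points with probability $1$, so that the hypotheses of \Cref{lem:permInv} are met; this holds because the coordinates are sampled continuously, so two points coincide only on a $\lambda$-null event.
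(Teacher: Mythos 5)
Your proof is correct and takes essentially the same approach as the paper: the paper likewise reduces the claim to almost-sure termination of the components and observes that the only delicate one is $\mathsf{scan}$, which it dispatches with a well-founded (lexicographic) termination argument on the list/stack pair. Your explicit measure $2\,|l| + |stk|$ is an interchangeable substitute for that lexicographic order, and your added details (mass preservation under $\Psub$-bind, the almost-sure distinctness of the sampled points needed for \Cref{lem:permInv}) only make the argument more complete than the paper's terse proof.
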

\begin{proof}
    The only component that might be problematic is the $\mathsf{scan}$ function,
    which terminates since it is structurally recursive on the lexicographic order
    on lists.
\end{proof}

\begin{figure}[]
    \centering
    \begin{minipage}{.33\textwidth}
    \begin{align*}
    &\mathsf{unifList} = \fix f.\lambda n.\\
    &\mathsf{ifZero}\, n\, \mathsf{then}\\
    &\quad \produce\, \mathsf{nil}\\
    &\mathsf{else}\\
    &\quad l \leftarrow \mathsf{unifList}\, (n-1)\\
    &\quad x \leftarrow \mathsf{uniform}\\
    &\quad y \leftarrow \mathsf{uniform}\\
    &\quad \produce (\mathsf{cons}\, (x,y)\, l)\\
    \\
    &\mathsf{sieve} = \lambda l : \nat.\,\\
    & p_1 \leftarrow \mathsf{min} (\lamb {(x, y)} {x + y})\, l\\
    & p_2 \leftarrow \mathsf{min} (\lamb {(x, y)} {x - y})\, l\\
    & p_3 \leftarrow \mathsf{min} (\lamb {(x, y)} {-x + y})\, l\\
    & p_4 \leftarrow \mathsf{min} (\lamb {(x, y)} {-x - y})\, l\\
    & \mathsf{filter} \, (\charge 1; \mathsf{iQ} \, p_1\, p_2 \, p_3\, p_4) \, l\\
    \end{align*}
    \end{minipage}%
    \vline
    \begin{minipage}{.33\textwidth}
    \begin{align*}
    &\mathsf{scan} = \fix f. \, \lambda l .\,\lambda stk.\,\\
    & \charge 1\\
    &\mathsf{case} \, (l, stk) \, \mathsf{of}\\
    & \mid \mathsf{nil}, \_ \Rightarrow \produce\, stk\\
    & \mid (x :: xs), \mathsf{nil}\Rightarrow f\, xs \, [x]\\
    & \mid (x :: xs), [p] \Rightarrow f\, xs \, [x, p]\\
    & \mid (x :: xs), (p1 :: p2 :: ps) \Rightarrow \\
    & \quad \mathsf{if} \, \mathsf{clockOrNot} \, p_2 \, p_1\, x \, \mathsf{then}\\
    & \quad \quad f\, (x :: xs) \, (p2 :: ps)\\
    & \quad \mathsf{else}\\
    & \quad \quad f\, xs \, (x :: stk)
    \end{align*}
    \end{minipage}%
    \vline
    \begin{minipage}{0.33\textwidth}
    \begin{align*}
    &\mathsf{graham} = \lambda l.\\
    & (x, y) \leftarrow \mathsf{min\_xy}\, l\\
    & l' \leftarrow \mathsf{quicksort}\, (\charge 1; \sqsubseteq\, p) l\\
    & \mathsf{scan}\, l'\, \mathsf{nil}
    \\
    \\
    &\mathsf{convexHull} = \lambda n : \nat.\\
    &l \leftarrow \mathsf{unifList}\, n\\
    &l' \leftarrow \mathsf{sieve}\, l\\
    &\mathsf{graham}\, l'
    \end{align*}
    \end{minipage}
    \caption{Stochastic convex hull algorithm}
    \label{fig:qcknat}
\end{figure}

\begin{lemma}
    The graham-scan function has runtime $O(n\log n)$, where $n$ is the
    length of the input.
\end{lemma}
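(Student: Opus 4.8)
The plan is to use the compositionality of the expected cost monad to split $\mathsf{graham}$ into its three sequential phases and bound each. Because the cost component of the Kleisli extension of $[0,\infty]\times\Psub$ adds the cost of a prefix to the $\Psub$-average of the continuation's cost, $\pi_1(\semec{\mathsf{graham}}(l))$ equals the deterministic cost of $\mathsf{min\_xy}(l)$, plus the expected cost of the randomized $\mathsf{quicksort}\,(\charge 1; \sqsubseteq p)$ call, plus the expectation, over the resulting sorted list $l'$, of $\pi_1(\semec{\mathsf{scan}}(l', \mathsf{nil}))$. Here every $l'$ has the same length $n = |l|$ since quicksort is length-preserving (the permutation-invariance argument underlying \Cref{lem:permInv}).

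The first two phases are dispatched by existing results: $\mathsf{min\_xy}$ is a single linear pass and costs $O(n)$, while the randomized quicksort, charging one unit per comparison, has expected cost $O(n\log n)$ by the recurrence analysis of the Randomized Quicksort subsection.

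The crux is bounding $\mathsf{scan}$. I would prove
\[
\pi_1(\semec{\mathsf{scan}}(l, stk)) \leq 2|l| + |stk| + 1
\]
by induction on the numeric measure $2|l| + |stk|$, unfolding the fixed point $\bigsqcup_n \semec{\cdot}^n(\bot)$ by one step. Since $\mathsf{clockOrNot}$ is deterministic, the cost of $\mathsf{scan}$ is deterministic and equals the number of recursive calls, each contributing the single $\charge 1$. Checking the five clauses: the $\mathsf{nil}$ clause costs $1$; the two initialization clauses and the ``push'' branch each move one point from the list onto the stack, decreasing $|l|$ by one; and the ``pop'' branch leaves $|l|$ fixed while decreasing $|stk|$ by one. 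In all four recursive clauses the measure $2|l| + |stk|$ strictly decreases, so the recursion is well-founded, the supremum is attained at a finite stage, and the linear bound is preserved by each clause. Instantiating at $stk = \mathsf{nil}$ gives $\pi_1(\semec{\mathsf{scan}}(l', \mathsf{nil})) \leq 2n + 1$, and averaging this uniform bound over $l'$ keeps it at $O(n)$. Summing the three phases yields $O(n) + O(n\log n) + O(n) = O(n\log n)$.

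The hard part will be the $\mathsf{scan}$ bound. The obvious measure $|l|$ does not decrease in the ``pop'' branch, where the algorithm backtracks along the stack without consuming any list element, so a naive structural induction on the list fails. The remedy is the amortized weighting $2|l| + |stk|$, which simultaneously witnesses termination (with probability one, as needed for $\mathsf{scan}$) and yields a constant-factor linear bound; the delicate point is that this single inequality must be preserved by every clause and must be tight enough to avoid the quadratic bound one would obtain by crudely charging each pop against the stack height.
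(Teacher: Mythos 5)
Your proposal is correct and follows the same decomposition as the paper: unfold $\semec{\mathsf{graham}}$ into the three sequential phases, charge $O(n)$ for $\mathsf{min\_xy}$ (an assumption in both proofs, since $\mathsf{min\_xy}$ is not given explicitly), invoke the randomized-quicksort recurrence for the $O(n\log n)$ middle phase, and bound $\mathsf{scan}$ linearly. The difference is where the work goes: the paper simply asserts that $\mathsf{scan}$ is linear and defers the argument to the standard analysis of Graham scan \cite{graham1972efficient}, whereas you carry out that analysis inside the denotational semantics, proving $\pi_1(\semec{\mathsf{scan}}(l, stk)) \leq 2\len{l} + \len{stk} + 1$ by induction on the potential $2\len{l} + \len{stk}$. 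Your clause-by-clause check is right: the measure drops by exactly one in each of the four recursive clauses (the two initialization clauses and the push branch trade a list element for a stack element, the pop branch shrinks the stack at fixed list), so the bound is preserved with equality in each case and the recursion is well-founded, which also gives termination of $\mathsf{scan}$ for free. Your observation that the resulting bound depends only on $\len{l'}$, which quicksort preserves, is exactly what licenses averaging it over the output distribution. What the paper's route buys is brevity and a pointer to a classical result; what yours buys is a self-contained proof that stays within the expected-cost monad, which is more in the spirit of the paper's program of doing cost analysis denotationally, and it makes explicit the amortized measure that a naive structural induction on the list (which fails at the pop branch, as you note) cannot supply.
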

\begin{proof}
    By unfolding the denotational semantics, we see that its cost is given
    by adding the cost of the $\mathsf{min_{xy}}$, $\mathsf{quicksort}$
    and $\mathsf{scan}$ functions. By the quicksort case study, its cost
    is $O(n\log n)$. We are assuming that the cost of $\mathsf{min_{xy}}$
    is linear on the length of the list. Finally, the cost of $\mathsf{scan}$
    is also linear, though the proof is a bit more involved, so we will not
    go over it and, instead, will point to a standard analysis of Graham scan 
    \cite{graham1972efficient}. Therefore, the overall cost is bounded above by $O(n\log n)$.
\end{proof}

\begin{theorem}
    The stochastic convex hull algorithm has linear expected run time. 
\end{theorem}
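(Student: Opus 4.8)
The plan is to exploit the compositionality of the expected cost semantics to split $\mathsf{convexHull}\,n$ into its three stages and bound each separately, showing that the linear preprocessing ($\mathsf{unifList}$ followed by $\mathsf{sieve}$) dominates, while the Graham scan stage contributes only a sublinear term thanks to Golin's sieving lemma (\Cref{lem:filterconvexhull}). First I would apply the substitution/compositionality theorem for $\semec{\cdot}$ together with the definition of the expected cost monad's bind to the sequential composition defining $\mathsf{convexHull}$, obtaining
\[
\pi_1\bigl(\semec{\mathsf{convexHull}}(n)\bigr)
= \pi_1(\semec{\mathsf{unifList}}(n))
+ \int \pi_1(\semec{\mathsf{sieve}}(l))\,\diff\mu_n(l)
+ \int \pi_1(\semec{\mathsf{graham}}(l'))\,\diff\nu_n(l'),
\]
where $\mu_n = \pi_2(\semec{\mathsf{unifList}}(n))$ is the distribution of the sampled list and $\nu_n$ is its $\Psub$-pushforward through $\mathsf{sieve}$. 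This decomposition is exactly what the expected cost monad buys us: the cost of each continuation is averaged over the output distribution of its predecessor. Since $\mathsf{convexHull}$ terminates with probability $1$ (and each stage individually does so, as already shown), every integral is finite and $\pi_1\circ\semec{\cdot}$ is the genuine expected cost.

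Next I would dispatch the two cheap stages. The $\mathsf{sieve}$ stage is a constant number of linear $\mathsf{min}$ passes plus one $\mathsf{filter}$ pass whose predicate charges $1$ per element, so $\pi_1(\semec{\mathsf{sieve}}(l)) = O(n)$ for every $l$ in the support of $\mu_n$ (all of which have length exactly $n$); $\mathsf{unifList}\,n$ contributes at most linearly, so the first two summands are $O(n)$. For the Graham scan stage I would combine the already-established bound that $\mathsf{graham}$ runs in $O(m\log m)$ on a list of length $m$ with \Cref{lem:filterconvexhull}. Since the sieved list $l'$ is always a sublist of the original input, $|l'|\le n$, so $\pi_1(\semec{\mathsf{graham}}(l')) \le c\,|l'|\log|l'| \le c\,|l'|\log n$; integrating against $\nu_n$ and using linearity of expectation gives a bound of $c\log n\cdot \mathbb{E}_{\nu_n}[\,|l'|\,]$, and Golin's lemma caps the expected number of surviving points at $O(\sqrt n)$. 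Hence the third summand is $O(\sqrt n\log n) = o(n)$, and summing the three bounds yields $\pi_1(\semec{\mathsf{convexHull}}(n)) = O(n)$.

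The main obstacle is the geometric bridge between the sieve and \Cref{lem:filterconvexhull}. Golin's theorem produces an axis-aligned square $R$ contained in the convex hull of the $n$ uniform points such that only $O(\sqrt n)$ points lie outside $R$, whereas the program actually discards the points lying inside the quadrilateral $\mathsf{iQ}\,p_1\,p_2\,p_3\,p_4$ spanned by the four diagonal-extremal points. I would argue that this quadrilateral dominates $R$: the extremal points in the $\pm x \pm y$ directions are themselves hull vertices, so the quadrilateral they span is inscribed in the convex hull and can be taken to contain $R$ up to the constant hidden in Golin's bound. Consequently the set of points kept by $\mathsf{sieve}$ is contained in the set of points outside $R$, hence of expected size $O(\sqrt n)$, which feeds $\mathbb{E}_{\nu_n}[\,|l'|\,]$ above. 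Making this containment precise and measurable---so that the expectation manipulations are legitimate morphisms in $\wqbs$---is the technically delicate part; the cost bookkeeping on either side of it is routine given compositionality of $\semec{\cdot}$.
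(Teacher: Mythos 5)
Your proposal follows essentially the same route as the paper's proof: decompose the expected cost using compositionality of the expected cost monad, bound the sieve stage by $O(n)$, bound the Graham stage by $O(m \log m)$ on a list of length $m$, pull the $\log n$ factor out of the integral, and invoke \Cref{lem:filterconvexhull} to cap the expected length of the sieved list at $O(\sqrt n)$, yielding $O(n) + O(\sqrt n \log n) = O(n)$. The ``geometric bridge'' you flag between the quadrilateral used by $\mathsf{sieve}$ and the square in \Cref{lem:filterconvexhull} is a real subtlety, but the paper simply absorbs it into its citation of Golin's Theorem~2.2, so your treatment is, if anything, more explicit than the paper's own.
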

\begin{proof}
    The cost analysis is given by the expected cost of the sieve function
    plus the average of the graham function. The cost of the sieve function is
    $O(n)$, since it iterates over the input list $5$ times. The graham function has its
    cost bounded by the sorting function, which costs $O(n' \log n')$,
    where $n'$ is the length of the output from the sieve function.
    By \Cref{lem:filterconvexhull}, the output list has an expected length
    of $O(\sqrt n)$. Thus, assuming that the $n$ points in $l$ have been sampled 
    uniformly and independently from $[0,1]^2$ and $\mu = (\pi_2 \circ \semec{\mathsf{sieve}})(l)$, 
    the cost structure becomes:
    \begin{align*}
    &(\pi_1 \circ \semec{\mathsf{sieve}}^\#)((n)) + \int (\pi_1\circ\semec{\mathsf{graham}})(l') \diff\mu(l') \leq \\
    &O(n) + \int (\pi_1\circ\semec{\mathsf{graham}})(l') \diff\mu(l') \leq \\
    &O(n) + \int \log(\mathsf{length}(l'))\mathsf{length}(l') \diff\mu(l') \leq \\
    &O(n) + \log(n)\int \mathsf{length}(l')\diff\mu(l') \leq \\
    &O(n) + O(\sqrt{n})\log n  \leq O(n) \qedhere
    \end{align*}
\end{proof}

\section{Proofs of miscellaneous lemmas and theorems}
\label{app:proofs}
\subsection{Proof of \Cref{th:expectmonad}}
\begin{proof}
    Since $\Psub$ is a monad, and the second component of the monad operations of $[0,\infty] \times \Psub -$
    are identical to the ones of $\Psub$, we only need to prove the monad laws for the first component.
    The unit laws follow from:
    \begin{align*}
        &\pi_1(\eta^\#(r ,\mu)) = r + 0 = r\\
        &\pi_1((f^\# \circ \eta)(x)) = \pi_1(f^\#(0, \delta_x)) = 0 + \pi_1(f(x))
    \end{align*}
    While the last law requires a bit more work:
    \begin{align*}
        &\pi_1((f^\# \circ g^\#)(r, \mu)) = \pi_1(f^\#(r + \int (\pi_1 \circ g)\diff\mu, (\pi_2 \circ g)^\#_{\Psub}(\mu))) = \\
        &r + \int (\pi_1 \circ g)\diff\mu + \int (\pi_1 \circ f) \diff((\pi_2 \circ g)^\#_{\Psub}(\mu)) = \pi_1((f^\# \circ g)^\#(r, \mu))
    \end{align*}
The last equation follows from the monad laws of $\Psub$.
\end{proof}

\subsection{Proof of \Cref{th:eqsound}}

\begin{proof}
    Since both semantics are the monadic semantics of CBPV, we know by Proposition~121 
    of \cite{levy2001call} that they satisfy the equations that do not reference the
    sampling and cost operations. We only have to prove that the equations that are
    specific to them are satisfied by both semantics.
    \begin{description}
        \item[$\oplus_0$ unit]: $\semcs{t \oplus_0 u} = \semcs{t} + 0 \semcs{u} = \semcs t$ and
        $\semec{t \oplus_0 u} = \semec{t} + 0 \semec{u} = \semec t$, where addition and scalar
        multiplication for the expected cost semantics are defined componentwise.
        \item[$\oplus_p$ symmetry]: $\semcs{t \oplus_p u} = (1-p)\semcs{t} + p\semcs{u} = \semcs {u \oplus_{1-p} t}$ and
        $\semec{t \oplus_p u} = (1-p)\semec{t} + p\semec{u} = \semec {u \oplus_{1-p} t}$.
        \item[$\oplus_{p}$ idempotent]: $\semcs{t \oplus_p t} = (1-p)\semcs{t} + p\semcs{t} = \semcs {t}$ and
        $\semec{t \oplus_p t} = (1-p)\semec{t} + p\semec{t} = \semec {t}$
        \item[$\oplus_p \oplus_q$ associativity]: 
        \begin{align*}
            &\semcs{t \oplus_p (u \oplus_q t')} = \\
            &(1-p) \semcs{t} + p((1-q)\semcs{u} + q \semcs{t'})=\\
            &(1-p) \semcs{t} + p(1-q)\semcs{u} + pq\sem{t'} = \\
            &(\semcs{t} \oplus_{\frac{p(1-q)}{1 - pq}} \semcs{u}) \oplus_{pq} \semcs{t'}
        \end{align*}
        The reasoning for the expected cost semantics is analog.
        \item[$\charge n$ monoid action]: $\semcs{\charge n; \charge m} = \delta_{(n+m, ())} = \semcs{\charge {(n+m)}}$ and
        for the expected cost semantics $\semec{\charge n; \charge m} = (n + m, \delta_{()}) = \semec{\charge {(n+m)}}$.
        \item[$\charge n$ commutativity]: $\semcs{\charge n; \charge m} = \delta_{(n+m, ())} = \semcs{\charge m; \charge n}$ and
        $\semec{\charge n; \charge m} = (n + m, \delta_{()}) = \semec{\charge m; \charge n}$.
        \item[$\charge 0$ unit]: $\semcs{\charge 0; t} = \semcs{t}^\#(\delta_{(0, ())} = \semcs{t}$ and
        $\semec{\charge 0; t} = (0 + (\pi_1 \circ \semec{t})( () ), (\pi_2\circ \semec{t})(()) ) = ((\pi_1\circ \semec{t})(()), (\pi_2\circ \semec{t})(())) = \semec{t}$ \qedhere
    \end{description}
\end{proof}

\subsection{Proof of \Cref{th:commutativity}}
\begin{proof}
The proof follows basically by commutativity of $\Psub$:
   \begin{align*}
    & \semcs{x \leftarrow t; y \leftarrow u; t'} = \\
    &\int_{\nat \times A} \int_{\nat \times B} \Psub(f)(\semcs{t'}(a, b)) \semcs u(\diff n_1 \diff a)\semcs t(\diff n_2 \diff b) = \\       
    &\int_{\nat \times B} \int_{\nat \times A} \Psub(f)(\semcs{t'}(a, b)) \semcs t(\diff n_2 \diff b)\semcs u(\diff n_1 \diff a) = \\       
    & \semcs{y \leftarrow u; x \leftarrow t; t'}, \text{ where $f(n, c) = (n + n_1 + n_2, c)$}
    \tag*{\qedhere}
   \end{align*}
\end{proof}

\subsection{Proof of \Cref{th:expectedcostcenter}}

We will formalize the maximality of the equation using the concept of the center of a monad,
which we now start to define.

\begin{definition}[\cite{carette2023central}]
    Let $X : \cat{C}$ be an object in a Cartesian category and $T : \cat{C} \to \cat{C}$ a monad. 
    A central cone at $X$ is a pair $(Z, \iota)$, where $\iota : Z \to T X$ is a morphism making 
    the following diagram commute for every $Y$:
    % https://q.uiver.app/#q=WzAsOCxbMCwwLCJaXFx0aW1lcyBUWSJdLFswLDEsIlRYIFxcdGltZXMgVFkiXSxbMCwyLCJUKFQgWCBcXHRpbWVzIFkpIl0sWzIsMiwiVF4yKFhcXHRpbWVzIFkpIl0sWzQsMiwiVChYXFx0aW1lcyBZKSJdLFs0LDAsIlQoWCBcXHRpbWVzIFRZKSJdLFsyLDAsIlRYIFxcdGltZXMgVFkiXSxbNCwxLCJUXjIoWFxcdGltZXMgWSkiXSxbNyw0LCJcXG11X3tYXFx0aW1lcyBZfSJdLFszLDQsIlxcbXVfe1ggXFx0aW1lcyBZfSIsMl0sWzIsMywiVChzdCdfe1gsWX0pIiwyXSxbMSwyLCJzdF97VFgsIFl9IiwyXSxbMCwxLCJcXGlvdGEiLDJdLFswLDYsIlxcaW90YSJdLFs2LDUsInN0J197WCwgVFl9Il0sWzUsNywiVChzdF97WCxZfSkiXV0=
\[\begin{tikzcd}
	{Z\times TY} && {TX \times TY} && {T(X \times TY)} \\
	{TX \times TY} &&&& {T^2(X\times Y)} \\
	{T(T X \times Y)} && {T^2(X\times Y)} && {T(X\times Y)}
	\arrow["\iota", from=1-1, to=1-3]
	\arrow["\iota"', from=1-1, to=2-1]
	\arrow["{st'_{X, TY}}", from=1-3, to=1-5]
	\arrow["{T(st_{X,Y})}", from=1-5, to=2-5]
	\arrow["{st_{TX, Y}}"', from=2-1, to=3-1]
	\arrow["{\mu_{X\times Y}}", from=2-5, to=3-5]
	\arrow["{T(st'_{X,Y})}"', from=3-1, to=3-3]
	\arrow["{\mu_{X \times Y}}"', from=3-3, to=3-5]
\end{tikzcd}\]
\end{definition}

This definition is the categorification of choosing elements of $TX$
that commute over every element of $TY$, for every $Y$. These cones
can be naturally organized as a category, by defining it as the appropriate
subcategory of the slice category $\cat{C}/TX$.

\begin{definition}[\cite{carette2023central}]
    A monad $T : \cat{C} \to \cat{C}$ is centralizable if for every object $X : \cat{C}$,
    there exists a terminal central cone on $X$.
\end{definition}

Something quite nice about this definition is that if for every $X$, there is a terminal
central cone $\mathcal Z(X)$, this assignment $X \mapsto \mathcal Z(X)$ extends to a commutative submonad of 
$T$ \cite{carette2023central}. With this definition, we can now restate \Cref{th:expectedcostcenter}
as follows.

\begin{theorem}
    The center of $\weight \times \Psub$ is the probability monad $P$.
\end{theorem}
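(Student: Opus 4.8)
The plan is to show the two inclusions that together characterize the center as a terminal central cone at each object $X$. First I would identify, for each $X$, the central cone $(PX, \iota_X)$ where $\iota_X : PX \to \weight \times \Psub X$ sends a probability distribution $\mu$ to $(0, \mu)$ — that is, it embeds a genuine probability distribution as a computation with zero expected cost and total mass one. I would verify that this is indeed a central cone by unfolding the two sides of the commutativity diagram: computing the strength and Kleisli-extension of $\weight \times \Psub$ on a pair $(0,\mu) \in \weight \times \Psub X$ against an arbitrary $(s, \nu) \in \weight \times \Psub Y$, and checking that the order of integration can be swapped because $\mu$ has total mass one and contributes no cost. Concretely, the cost component of one composite is $0 + s$ while the other is $s + 0$ by \Cref{lem:expected_linear} and the fact that $\int_X s\, \mu(\diff x) = s$ when $|\mu| = 1$; the distribution components agree by commutativity of $\Psub$.

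Second, and this is the main obstacle, I would prove maximality: any central cone at $X$ factors uniquely through $(PX, \iota_X)$. This amounts to showing that an element $(r,\nu) \in \weight \times \Psub X$ that commutes with every $(s,\mu) \in \weight \times \Psub Y$ for every $Y$ must have $r = 0$ and $|\nu| = 1$. The two counterexample programs from \Cref{sec:eqsound}, namely $t = \bot; \charge c; \produce ()$ and $u = \charge c; \bot; \produce ()$, are exactly the obstructions: an element with nonzero cost fails to commute (giving $(0,0) \neq (c,0)$ in the relevant composite), and an element of mass strictly below one fails to commute for the same reason that $E$ is not a monad morphism on subdistributions (\Cref{lem:monmor}). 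So I would take the commutativity diagram, instantiate $Y$ and the second argument $(s,\mu)$ with a judicious choice — a nonzero-cost point such as $(1, \delta_\star)$ to force $r = 0$, and a strict subprobability such as $\tfrac12\delta_\star$ to force $|\nu| = 1$ — and read off the required equalities from the cost components of the two composites.

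Having established both directions, the factorization $(r,\nu) = \iota_X(\nu)$ with $\nu \in PX$ is then unique because $\iota_X$ is componentwise monic (it is the identity on the distribution component and constantly $0$ on the cost component). By the remark following the definition of centralizability, the assignment $X \mapsto PX$ assembles into the commutative submonad that is the center of $\weight \times \Psub$, and this submonad is precisely the probability monad $P$ with its standard unit $\delta_{(-)}$ and bind. I expect the delicate bookkeeping to be in matching the abstract strength-and-multiplication presentation of the central-cone diagram with the concrete formula $f^\#(r,\mu) = (r + \int(\pi_1 \circ f)\diff\mu,\,(\pi_2 \circ f)^\#_{\Psub}(\mu))$ for the expected cost monad, so I would carry out that one diagram chase explicitly and let the rest follow by the universal property.
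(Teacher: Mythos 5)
Your proposal is correct and follows the paper's proof essentially step for step: the same cone $\iota_X(\mu) = (0,\mu)$, the same verification that the two legs of the central-cone diagram agree (zero cost and total mass one make both cost components equal to the test's cost, and commutativity of $\Psub$ handles the distribution components), and the same terminality argument equating the costs $r + s|\nu|$ and $s + r|\mu|$ of the two composites over all tests $(s,\mu)$. One bookkeeping slip: your attributions are swapped --- the costly test $(1,\delta_\star)$ gives $r + |\nu| = 1 + r$ and hence forces $|\nu| = 1$, while the zero-cost strict subprobability $(0,\tfrac12\delta_\star)$ gives $r = \tfrac12 r$ and hence forces $r = 0$, since the test's cost is weighted by the candidate's mass and vice versa. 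Both required conclusions still follow from your two tests, so the argument goes through once the labels are exchanged.
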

\begin{proof}
    For every $X : \wqbs$, we define $\iota(\mu) = (0, \mu)$. To show
    that this is indeed a central cone, we observe that the upper leg
    of the central cone commutative diagram is the function $f(\mu, (r, \nu)) = (r, \mu \otimes \nu)$,
    where $\mu \otimes \nu$ is the product distribution on $\mu$ and $\nu$. A direct calculation
    shows that the lower leg is equal to $(r, \mu \otimes \nu)$ as well.

    In order to show that it is the terminal object, let $\iota : Z \to \weight \times \Psub X$ be a central cone.
    Let $z : Z$ and assume that $\iota(z) = (r', \mu)$. The upper leg then
    becomes $(r + r'\nu(Y), \mu \otimes \nu)$ while the lower leg is $(r' + r\mu(X), \mu \otimes \nu)$.
    Since this equation has to hold for every $r$ and $\nu$, it holds if, and only if, $r' = 0$ and
    $\mu(X) = 1$, i.e. its total mass is $1$. Therefore, $\iota : Z \to \weight \times \Psub X$ factors 
    through the inclusion $P X \hookrightarrow \weight \times \Psub X$, concluding the proof.
\end{proof}

\subsection{Proof of \Cref{th:premorphism}}
\begin{proof}
        The proof follows by showing that there are monad structure morphisms $\weight \times ([0,1] \to -) \twoheadrightarrow \weight \times \Psub$
    and $\varphi : \weight \times \Psub \hookrightarrow K_{\weight}$. The proof is concluded by the uniqueness of 
    factorizations.
    
    The first monad morphism is $(id \times \psi)$, where $\psi$ is the monad structure morphism $([0,1] \to -_\bot) \twoheadrightarrow \Psub$, and the proof that this transformation is component-wise image-dense follows from
    the observation that the product order in $\wqbs$ is given pairwise and $\psi$ is, by assumption, dense in its image. The monad
    morphisms axioms follow from:

    \begin{align*}
        &(id \times \psi)(\eta(a)) = (id \times \psi)(0, \lamb r a) = (0, \delta_a) = \eta(a)\\
        \\
        &(((id\times \psi) \circ f)^\# \circ (id \times \psi))(r, g) = ((id\times \psi) \circ f)^\#(r, \psi(g), (\psi \circ \pi_2 \circ f)^\#(g)) =\\
        &(r + \int (\pi_1 \circ f)\diff (\psi(g)), (\psi \circ (\pi_2 \circ f)^\#))(g) = \\
        &(r + \int (\pi_1 \circ f \circ g)\diff (\lambda), (\psi \circ (\pi_2 \circ f)^\#)(g))= ((id \times \psi) \circ f^\#)(r, g)
    \end{align*}
    
    The second monad morphism has components $\varphi(r, \mu) = \lamb f {r + \int f \diff \mu}$. The
    proof that this is indeed order reflecting is a direct consequence of $\Psub \hookrightarrow K_{\weight}$
    being order reflecting.
    
    The proof that this is indeed a monad morphism follows, once again, from a series of direct calculations.\qedhere 
\end{proof}

\subsection{Proof of \Cref{lem:permInv}}

\begin{proof}
    This can be proved by strong induction on the length of the input list.
    If the list is empty, then its length is $0$ and the diagram commutes.
    Next, assume that the length is greater than $0$. We can prove the 
    following program equality:
    \[
    \begin{aligned}
    & len \leftarrow \app \mathsf{length}\, l\\
    & r \leftarrow \app \rand len\\
    & pivot \leftarrow l[r]\\
    \Aboxed{&(l_1, l_2) \leftarrow \app {\app{biFilter} {(\lamb n {\charge 1; n \leq pivot})}} (\mathsf{drop}\, l)}\\
    & l_1' \leftarrow \mathsf{quicksort}\, l_1\\
    & l_2' \leftarrow \mathsf{quicksort}\, l_2\\
    & \mathsf{length}\, (l_1' \mdoubleplus pivot :: l_2')
    \end{aligned}
    =
    \begin{aligned}
    \quad & len \leftarrow \app \mathsf{length}\, l\\
    & r \leftarrow \app \rand len\\
    & pivot \leftarrow l[r]\\
    \Aboxed{& {\begin{aligned}[t]&\charge (len - 1)\\
    & (l_1, l_2) \leftarrow \app {\app{biFilter} {(\lamb n {n \leq pivot})}} (\mathsf{drop}\, l)\end{aligned}}}\\
    & l_1' \leftarrow \mathsf{quicksort}\, l_1\\
    & l_2' \leftarrow \mathsf{quicksort}\, l_2\\
    & \mathsf{length}\, (l_1' \mdoubleplus pivot :: l_2')
    \end{aligned}
    \]

    Next, we want to consider the interaction of the regular quicksort algorithm and applying
    the length function to it. Furthermore, by using the additive properties of the
    length function and the strong induction hypotheses, we get the following program equation:
\[  \begin{aligned}
    \quad & len \leftarrow \app \mathsf{length}\, l\\
    & r \leftarrow \app \rand len\\
    & pivot \leftarrow l[r]\\
    &\charge (len - 1)\\
    & (l_1, l_2) \leftarrow \app {\app{biFilter} {(\lamb n {n \leq pivot})}} (\mathsf{drop}\, l)\\
    \Aboxed{& {\begin{aligned}[t]&l_1' \leftarrow \mathsf{quicksort}\, l_1\\
    & l_2' \leftarrow \mathsf{quicksort}\, l_2\\
    & \mathsf{length}\, (l_1' \mdoubleplus pivot :: l_2')\end{aligned}}}
    \end{aligned}
    =
    \begin{aligned}
    \quad & len \leftarrow \app \mathsf{length}\, l\\
    & r \leftarrow \app \rand len\\
    & pivot \leftarrow l[r]\\
    & \charge (len - 1)\\
    & (l_1, l_2) \leftarrow \app {\app{biFilter} {(\lamb n {n \leq pivot})}} (\mathsf{drop}\, l)\\
    \Aboxed{& {\begin{aligned}[t]&n_1 \leftarrow (\mathsf{length}\, l_1)\\
    & n_2 \leftarrow (\mathsf{length}\, l_2)\\
    & n_1' \leftarrow \mathsf{qck}_\nat \, n_1\\
    & n_2' \leftarrow \mathsf{qck}_\nat \, n_2\\
    & \produce (n_1' + 1 + n_2')\end{aligned}}}   
    \end{aligned}
    \]

    Since the pivot is chosen uniformly at random, we can deduct denotationally that the following equations hold:
  \[
    \begin{aligned}
    &len \leftarrow \app \mathsf{length}\, l\\
    & r \leftarrow \app \rand len\\
    & pivot \leftarrow l[r]\\
    & \charge (len - 1)\\
    & (l_1, l_2) \leftarrow \app {\app{biFilter} {(\lamb n {n \leq pivot})}} (\mathsf{drop}\, l)\\
    &n_1 \leftarrow (\mathsf{length}\, l_1)\\
    & n_2 \leftarrow (\mathsf{length}\, l_2)\\
    & n_1' \leftarrow \mathsf{qck}_\nat \, n_1\\
    & n_2' \leftarrow \mathsf{qck}_\nat \, n_2\\
    & \produce (n_1' + 1 + n_2')
    \end{aligned}
    =
    \begin{aligned}
    \quad
    & len \leftarrow \mathsf{length}\, l\\
    & r \leftarrow \rand \, len \\
    & \charge (len - 1)\\
    & n_1'' \leftarrow \mathsf{qck}_\nat \, r\\
    & n_2'' \leftarrow \mathsf{qck}_\nat \, (len - r - 1)\\
    & \produce (n_1'' + 1 + n_2'')
    \end{aligned}
    =
    \begin{aligned}
    \quad
    & len \leftarrow \mathsf{length}\, l\\
    & \mathsf{qck}_\nat\, len
    \end{aligned}
    \]
    The last equation holds under the inductive hypothesis that the list $l$ is non-empty. This concludes the inductive case and the proof.
    \end{proof}

\subsection{Proof of \Cref{th:excostmin}}

    \begin{lemma}
        The canonical monad morphism $e : F \to \weight \times \Psub$ is densely strong epic. 
    \end{lemma}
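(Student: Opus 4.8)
The plan is to follow the strategy of Lemma~4.4 of V\'ak\'ar et al.~\cite{wqbs}, extending their density argument for $\Psub$ to the product functor $\weight \times \Psub$. Recall that, by the universal property of the free monad $F$, the morphism $e$ is the unique monad morphism with $e_1(\charge n) = (n, \delta_{()})$ and $e_\R(\uniform) = (0, \lambda)$, where $\lambda$ is the Lebesgue uniform measure on $[0,1]$. To show that each component $e_A : F A \to \weight \times \Psub A$ is densely strong epic, I must show that for every random element $g \in M_{\weight \times \Psub A}$ there is an ascending chain $\{h_m\}_m \subseteq M_{FA}$ whose pointwise supremum satisfies $\bigsqcup_m (e_A \circ h_m) = g$. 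Since $M$ of a product is the product of the $M$'s and the order is computed componentwise, it suffices to approximate the distribution component and the cost component separately and then combine them into a single family of $F$-terms.

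First I would treat the distribution component. Writing the random element as a pair $g = \langle g_c, g_\mu \rangle$ with $g_\mu \in M_{\Psub A}$, the density of the free monad on $\uniform$ into $\Psub$---exactly the content of Lemma~4.4 of \cite{wqbs}---provides an ascending chain of $\uniform$-terms whose $e$-images converge pointwise to $g_\mu$; concretely, each approximant is a finite measurable branching over samples from $\uniform$, i.e.\ a pushforward of $\lambda$ through a measurable map, and the mass below $1$ is approximated from the least element $(0, \mathbf 0)$, which seeds every chain from below.

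Next I would account for the cost. Because $\charge$ only charges natural numbers, an arbitrary expected cost $r \in \weight$ is not realized directly but obtained as a directed supremum. Using that $e(\uniform \bind k) = \bigl(\int_{[0,1]} \pi_1(e \circ k)\,\diff\lambda,\ \ldots\bigr)$, inserting $\charge$ nodes underneath a $\uniform$-sample realizes, under the integral, any $\nat$-valued step function of that sample, so any simple $\weight$-valued integrand is attainable. Approximating the cost density of $g_c$ from below by an increasing sequence of such simple integrands yields, by monotone convergence, an ascending chain of costs converging to $g_c$. Interleaving the two families---sharing the same underlying $\uniform$-samples, so that cost and output stay correctly correlated---produces the required single ascending chain $\{h_m\}$ in $M_{FA}$ with $\bigsqcup_m e_A \circ h_m = g$.

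The main obstacle will be exactly this joint realization in the cost component: one must realize a correlated pair $(r, \mu)$ by a single term while $\charge$ offers only natural-number granularity, so the argument hinges on reducing the two-dimensional density to V\'ak\'ar et al.'s one-dimensional density for $\Psub$ together with a monotone approximation of $\weight$-valued integrands by $\nat$-valued step functions sampled through $\uniform$. Throughout, the delicate bookkeeping is to verify that each approximant is a genuine random element of $FA$---measurability of the branching maps---and that the constructed families are ascending, so that Scott-continuity of integration lets the suprema pass through $e$.
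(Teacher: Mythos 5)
Your proposal is correct and follows essentially the same route as the paper's proof: both reduce to the two components of the product via the componentwise order, both invoke (a variation of) Lemma~4.4 of V\'ak\'ar et al.\ for the $\Psub$ component, and both realize non-integer expected costs by combining natural-number $\mathsf{charge}$ operations with $\uniform$ sampling (your $\nat$-valued step functions under the integral are exactly the paper's $\charge{\floor{f(r)}}$ followed by a biased coin between $\charge 0$ and $\charge 1$). The only cosmetic differences are that the paper realizes every everywhere-finite cost element exactly and appeals to Scott-density of those, where you build ascending chains of simple approximants, and that your worry about correlating cost with output is moot, since in $\weight \times \Psub$ the cost is a scalar marginal and a charge prefix composes with any distribution-realizing term without interaction.
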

    \begin{proof}
        Since the product order is given componentwise, it suffices to show that the components $\pi_1 \circ e : F \to \weight$
        and $\pi_2 \circ e : F \to \Psub$ are densely strong epic. We start by reasoning about $\pi_1 \circ e$. Assume without loss of
        generality that $X$ is non-empty and that $x_0 \in X$. Let $f : R^+ \to \weight$ 
        be a (restricted) element of $M_{\weight}$ which is everywhere finite; this subset is Scott-dense in $M_{\weight}$. We define the function
        $g(r) = \charge {\floor{f(r)}}; (\charge 0) \oplus_{f(r) - \floor{f(r)}} (\charge 1); \produce x_0 : \R^+ \to F X$,
        where $\floor{\cdot}$ is the floor function. By construction,
        and using the fact that $e$ is a monad morphism that preserves the sampling and cost operations, 
        $\pi_1 \circ e \circ g = f$ and we can conclude that $\pi_1 \circ e$ is densely strong epic.

        For the function $\pi_2 \circ e$, it suffices to use a variation of the argument made in Lemma~4.4 by V\'ak\'ar et al.~\cite{wqbs},
        where they prove a similar property but for the monad of measures which are not-necessarily bounded and an operation for ``sampling''
        from the Lebesgue measure over the real line.
    \end{proof}
    We can now conclude our proof of \Cref{th:excostmin}.
    \begin{proof}
    Let $T \hookrightarrow K_{\weight}$ be a submonad of $K_{\weight}$ containing the operations $\mathsf{charge}$
    and $\uniform$. By assumptions and the lemma above, we have the following commutative diagram for every $X$.
    % https://q.uiver.app/#q=WzAsNCxbMCwwLCJGIl0sWzEsMSwiS197XFx3ZWlnaHR9Il0sWzEsMCwiVCJdLFswLDEsIlxcd2VpZ2h0IFxcdGltZXMgXFxQc3ViIl0sWzAsMywibSIsMix7InN0eWxlIjp7ImhlYWQiOnsibmFtZSI6ImVwaSJ9fX1dLFszLDEsIiIsMCx7InN0eWxlIjp7InRhaWwiOnsibmFtZSI6Imhvb2siLCJzaWRlIjoidG9wIn19fV0sWzAsMl0sWzIsMSwiIiwyLHsic3R5bGUiOnsidGFpbCI6eyJuYW1lIjoiaG9vayIsInNpZGUiOiJ0b3AifX19XSxbMywyLCIiLDEseyJzdHlsZSI6eyJib2R5Ijp7Im5hbWUiOiJkYXNoZWQifX19XV0=
    \[\begin{tikzcd}
	   F X & T X \\
	   {\weight \times \Psub X} & {K_{\weight} X} 
	   \arrow[from=1-1, to=1-2]
	   \arrow["e_X"', two heads, from=1-1, to=2-1]
	   \arrow[hook, from=1-2, to=2-2]
	   \arrow[dashed, from=2-1, to=1-2]
	   \arrow[hook, from=2-1, to=2-2]
    \end{tikzcd}\]
The dashed arrow is defined by the universal property of factorization systems. Since the morphism $\weight \times \Psub X \to K_{\weight} X$ is
monic, so is $\weight\times \Psub X \to T X$.
    \end{proof}
\end{document}